\numberwithin{equation}{section}
\begin{document}

\newtheorem{theorem}{Theorem}[section]
\newtheorem{corollary}[theorem]{Corollary}
\newtheorem{lemma}[theorem]{Lemma}
\newtheorem{proposition}[theorem]{Proposition}

\newcommand{\adiffop}{A$\Delta$O}
\newcommand{\adiffops}{A$\Delta$Os}

\newcommand{\beq}{\begin{equation}}
\newcommand{\eeq}{\end{equation}}
\newcommand{\bea}{\begin{eqnarray}}
\newcommand{\eea}{\end{eqnarray}}
\newcommand{\sh}{{\rm sh}}
\newcommand{\ch}{{\rm ch}}
\newcommand{\einde}{$\ \ \ \Box$ \vspace{5mm}}
\newcommand{\De}{\Delta}
\newcommand{\de}{\delta}
\newcommand{\Z}{{\mathbb Z}}
\newcommand{\N}{{\mathbb N}}
\newcommand{\C}{{\mathbb C}}
\newcommand{\Cs}{{\mathbb C}^{*}}
\newcommand{\R}{{\mathbb R}}
\newcommand{\Q}{{\mathbb Q}}
\newcommand{\T}{{\mathbb T}}
\newcommand{\re}{{\rm Re}\, }
\newcommand{\im}{{\rm Im}\, }
\newcommand{\cW}{{\cal W}}
\newcommand{\cJ}{{\cal J}}
\newcommand{\cE}{{\cal E}}
\newcommand{\cA}{{\cal A}}
\newcommand{\cR}{{\cal R}}
\newcommand{\cP}{{\cal P}}
\newcommand{\cM}{{\cal M}}
\newcommand{\cN}{{\cal N}}
\newcommand{\cI}{{\cal I}}
\newcommand{\cMs}{{\cal M}^{*}}
\newcommand{\cB}{{\cal B}}
\newcommand{\cD}{{\cal D}}
\newcommand{\cC}{{\cal C}}
\newcommand{\cL}{{\cal L}}
\newcommand{\cF}{{\cal F}}
\newcommand{\cH}{{\cal H}}
\newcommand{\cS}{{\cal S}}
\newcommand{\cT}{{\cal T}}
\newcommand{\cU}{{\cal U}}
\newcommand{\cQ}{{\cal Q}}
\newcommand{\cV}{{\cal V}}
\newcommand{\cK}{{\cal K}}
\newcommand{\intR}{\int_{-\infty}^{\infty}}
\newcommand{\intI}{\int_{0}^{\pi/2r}}
\newcommand{\limp}{\lim_{\re x \to \infty}}
\newcommand{\limn}{\lim_{\re x \to -\infty}}
\newcommand{\limpn}{\lim_{|\re x| \to \infty}}
\newcommand{\diag}{{\rm diag}}
\newcommand{\Ln}{{\rm Ln}}
\newcommand{\Arg}{{\rm Arg}}

\title{Kernel functions and B\"acklund transformations for relativistic Calogero-Moser and Toda systems}
\author{Martin Halln\"as \\School of Mathematics, \\ Loughborough University, Loughborough LE11 3TU, UK \\ and \\Simon Ruijsenaars \\ School of Mathematics, \\ University of Leeds, Leeds LS2 9JT, UK}

\date{7 October 2012}

\maketitle

\begin{abstract}
We obtain kernel functions associated with the quantum relativistic Toda systems, both for the periodic version and for the nonperiodic version with its dual. This involves taking limits of previously known results concerning kernel functions for the elliptic and hyperbolic relativistic Calogero-Moser systems. We show that the special kernel functions at issue admit a limit that yields generating functions of B\"acklund transformations for the classical relativistic Calogero-Moser and Toda systems. We also obtain the nonrelativistic counterparts of our results, which tie in with previous results in the literature. 
\end{abstract}

\tableofcontents

\section{Introduction}
This paper is primarily concerned with the relativistic generalizations of the $N$-particle Calogero-Moser and Toda systems, both on the quantum and on the classical level. A survey of these systems can be found in~\cite{Rui94}. In addition, we briefly discuss the specializations  of our results to the nonrelativistic systems,
surveyed in~\cite{OP81} and in~\cite{OP83} on the classical and quantum level, resp.

The classical relativistic Calogero-Moser and Toda systems can be defined by Poisson commuting Hamiltonians of the form
\begin{equation}\label{Hamiltonians}
	S_k(x,p) = \sum_{\substack{I\subset\lbrace 1,\ldots,N\rbrace\\ |I|=k}}V_I(x)\prod_{l\in I}\exp(\beta p_l),\quad k=1,\ldots,N,
\end{equation}
with $\beta =1/mc$, where $m>0$ is the particle rest mass and $c>0$ the speed of light. The elliptic version of the Calogero-Moser system describes $N$ interacting particles on a line or ring, with $V_I$ given by 
\begin{equation}
	V_I(x) = \prod_{\substack{m\in I\\ n\notin I}}f(x_m-x_n).
\end{equation}
Here, the function $f$ encoding the interaction is defined by
\beq\label{f}
	f(z) = \left( \frac{s(z+\rho)s(z-\rho)}{s^2(z)}\right)^{1/2},
\eeq
where $s(z)$ is essentially the Weierstrass sigma function (see \eqref{ssigrel} in Appendix \ref{gammaFuncsAppendix} for the precise relation). Throughout this paper we shall work with a fixed positive half-period
\beq
\omega =\pi/(2r),\ \ \ r>0,
\eeq
and with two purely imaginary half-periods $ia_{+},ia_{-}$ in the relativistic quantum regime (cf.~\eqref{aa} below), whereas in the classical ($\hbar=0$) and nonrelativistic ($\beta=0$) regimes we need only one purely imaginary half-period, parametrized as
\beq
\omega'=i\alpha/2,\ \ \ \alpha>0.
\eeq 
 With the `coupling constant' $\rho$ constrained by
 \beq\label{mures}
 \rho\in i(0,\alpha),
 \eeq
 it follows that $f(z)^2$ is positive on the period interval $z\in(0,\pi/r)$. By taking the positive square root, we thus obtain well-defined positive coefficients $V_I(x)$ and Hamiltonians $S_k(x,p)$ on the phase space
 \beq\label{Om}
 \Omega =\{(x,p)\in\R^{2N}\mid x\in G\},
 \eeq
 where $G$ is the configuration space
 \beq\label{config}
 G= \{ x\in \R^N\mid  x_N<\cdots <x_1, x_1-x_N\in(0,\pi/r)\}. 
 \eeq
 
 The classical relativistic version of the Calogero-Moser systems dates back to~\cite{RS86}. A quantization preserving commutativity was found in~\cite{Rui87}.
For the elliptic systems it is given by the commuting analytic difference operators (henceforth  \adiffops)
\beq\label{qSk}
		\hat{S}_k(x) = \sum_{\substack{I\subset\lbrace 1,\ldots,N\rbrace\\ |I|=k}}\prod_{\substack{m\in I\\ n\notin I}}f_-(x_m-x_n)\prod_{l\in I}\exp(-i\hbar \beta\partial_{x_l})\prod_{\substack{m\in I\\ n\notin I}}f_+(x_m-x_n),\ \ \ k=1,\ldots,N,
\eeq
where $\hbar>0$ is Planck's constant, and
\begin{equation}\label{fpm}
	f_\pm(z) = \big(s(z\pm\rho)/s(z)\big)^{1/2}.
\end{equation}
With the above constraints on the parameters, they are formally self-adjoint. To promote them to commuting self-adjoint operators on the Hilbert space $L^2(G,dx)$, however, is an open problem. 

The unexpected existence of quite special kernel functions yields a novel perspective for solving this long-standing problem. This is explained in some detail in~\cite{Rui04} and~\cite{Rui09}, but in order to render our account more self-contained, we now digress to make the notion of `kernel function' at issue in this paper more precise. 

Often, the term is loosely used for a function~$\Psi(v,w)$ depending on variables $v$ and $w$ that may vary over spaces of different dimensions. Here, however, it is used in the restricted sense that the function at hand connects a pair of operators $H_1(v)$ and $H_2(w)$ acting on it via an equation of the form
\beq\label{H1H2Id}
(H_1(v)-H_2(w))\Psi(v,w) = 0,
\eeq
which we refer to as a `kernel identity'. Therefore, $\Psi(v,w)$ can be viewed as a zero-eigenvalue eigenfunction of the difference operator~$H_1(v)-H_2(w)$.
In our setting, the operators~$H_1(v)$ and $H_2(w)$ will be either \adiffops~or PDOs.  

For the one-variable case~$v,w\in\R$, such kernel functions have a long history. For example, when~$H_1(v)$ and $H_2(w)$ are Mathieu operators (which can be viewed as Hamiltonians for the reduced two-variable nonrelativistic periodic Toda system), a host of kernel functions can be found in Section~4.1 of Arscott's monograph~\cite{Ars64}. Going yet further back, for certain pairs of Lam\'e operators  (which can be viewed as Hamiltonians for the reduced two-variable nonrelativistic elliptic Calogero-Moser system), kernel functions occur in a paper by Whittaker \cite{Whi15}, see also Section~23.6 of~\cite{WW35}. 

The appearance of kernel functions in a multi-variable setting is far more recent. For the nonrelativistic periodic Toda systems they can be found in a paper by Pasquier and Gaudin~\cite{PG92}, while for the nonrelativistic elliptic Calogero-Moser systems kernel functions emerged from Langmann's work on anyonic quantum field theory~\cite{Lan00}. We note that in the former paper explicit kernel identities of the above form do not yet appear (they are present implicitly), and in the latter only a pair of defining Hamiltonians was considered. 

Returning to the present paper, further work related to kernel functions will be cited in due course. Our starting point consists of
the elliptic kernel functions from Ref.~\cite{Rui06}. They satisfy~$N$ kernel identities of the form
\beq\label{SSPsi}
(\hat{S}_k(x)-\hat{S}_k(-y))\Psi(x,y)=0,\ \ \ \ k=1,\ldots, N,
\eeq
and we shall recall their precise definition in Subsection~2.1.

On the other hand, one key feature of the kernel functions should already be mentioned now: Their building block is the elliptic gamma function (introduced and studied in~\cite{Rui97}), and this function is symmetric under the interchange of the positive parameters $\alpha$ and $\hbar\beta$ featuring in the A$\De$Os $\hat{S}_k$. This property is now often called `modular invariance', and it entails that~\eqref{SSPsi} also holds for the A$\De$Os obtained by interchanging $\alpha$ and $\hbar\beta$. It is not hard to check that the latter A$\De$Os commute with the previous ones, and so it is natural to insist on a joint diagonalization on $L^2(G,dx)$.

In this paper, however, we are not directly concerned with joint eigenfunctions of the elliptic A$\De$Os and their hyperbolic counterparts. Rather, we obtain some new insights concerning the hyperbolic kernel functions (as detailed in Subsection~2.2), and extend the theory of kernel functions to the relativistic Toda systems (in Subsections~2.3--2.5 and Appendix~B). 

On the other hand, we do expect that the results of the present paper will be of pivotal importance to solve open problems concerning the existence and properties of joint eigenfunctions of the A$\De$Os of hyperbolic and Toda type. As a first example along these lines, we have shown in  a recent paper how the hyperbolic kernel functions can be used for a recursive construction of modular invariant joint eigenfunctions of the commuting hyperbolic A$\De$Os~\cite{HR12}. The idea that such a recursive construction might exist came from papers by Kharchev, Lebedev and Semenov-Tian-Shansky~\cite{KLS02}  and by Gerasimov, Kharchev and Lebedev~\cite{GKL04}, which in turn were inspired by the pioneering work of~Gutzwiller~\cite{Gut81}. As it turns out, the recursive constructions in~\cite{KLS02} and~\cite{GKL04} can be simplified and generalized in terms of kernel functions for the nonperiodic Toda systems and hyperbolic Calogero-Moser systems, all of which arise naturally within the context of the present paper. 

Accordingly, using the kernel functions detailed below as a starting point, we hope to arrive at a unified picture for the joint eigenfunctions of the various regimes. This will involve in particular detailed estimates on the kernel functions and their duals, which are necessary to control various analytical difficulties arising in the recursive constructions of the joint eigenfunctions and the study of their duality and Hilbert space properties. 

In order to sketch our results on hyperbolic kernel functions, let us note first that the functions $f$~\eqref{f} and $f_{\pm}$~\eqref{fpm} reduce to
\beq\label{fh}
f(z)=\big(1-\sinh^2(\pi \rho/\alpha)/\sinh^2(\pi z/\alpha)\big)^{1/2},\ \ \ \rho\in i(0,\alpha),
\eeq
\beq\label{fhpm}
f_{\pm}(z)= \big(\sinh(\pi (z\pm\rho)/\alpha)/\sinh(\pi z/\alpha)\big)^{1/2}.
\eeq
It is also immediate that the elliptic kernel functions satisfying~\eqref{SSPsi} have hyperbolic analogs: We need only replace the elliptic gamma function by its hyperbolic counterpart.
The latter is modular invariant, like its elliptic generalization. (We review the relevant features of these gamma functions in Appendix~A.) However, by contrast to the elliptic case, it is possible to obtain kernel functions relating the $N$-particle A$\De$Os $\hat{S}_k(x)$, $k=1,\ldots,N$, to sums of $(N-\ell)$-particle A$\De$Os, with $\ell=1,\ldots,N$. This is because we can take $y_N,\ldots,y_{N-\ell+1}$ to infinity, a procedure that has no elliptic analog. (For the trigonometric regime---which we do not consider---and for the case $k=1$, a similar result was obtained first by~Komori, Noumi and Shiraishi~\cite{KNS09}, among a host of other ones.)  The kernel function for the case~$\ell=1$ is the key building block for the recursive construction of modular invariant joint eigenfunctions~\cite{HR12}.

Turning to our Toda results, we first recall some general features.
For the periodic and nonperiodic Toda systems the coefficients $V_I$ in~\eqref{Hamiltonians} are given by
\begin{equation}\label{VIT}
	V_I(x) = \prod_{\substack{m\in I\\ m+1\notin I}}f_T(x_{m+1}-x_m)\prod_{\substack{m\in I\\ m-1\notin I}}f_T(x_m-x_{m-1}).
\end{equation}
The pair potential reads
\begin{equation}\label{fT}
	f_T(z) = \left(1 + \gamma^2\exp(2\pi z/\alpha)\right)^{1/2},\quad \gamma\in\mathbb{R},\quad \alpha>0,
\end{equation}
and the periodic and nonperiodic versions are encoded via the convention
\beq\label{pconv}
x_0=x_N,\ \ \ x_{N+1}=x_1, \ \ \ \ (\mathrm{periodic\  Toda}),
\eeq
\beq\label{npconv}
x_0=\infty,\ \ \ x_{N+1}=-\infty, \ \ \ \ (\mathrm{nonperiodic \ Toda}).
\eeq
In both cases the functions $S_k(x,p)$ yield positive Poisson commuting Hamiltonians on the Toda phase space
\beq
\Omega_T =\{(x,p)\in \R^{2N} \}.
\eeq

The classical relativistic Toda systems were introduced in~\cite{Rui90}, together with a quantization preserving commutativity.
The latter is given by the commuting \adiffops
\begin{multline}\label{TodaOps}
	\hat{S}_k(x) =  \sum_{\substack{I\subset\lbrace 1,\ldots,N\rbrace\\ |I|=k}}\prod_{\substack{m\in I\\ m+1\notin I}}f_T(x_{m+1}-x_m)\prod_{l\in I}\exp(-i\hbar\beta\partial_{x_l})\prod_{\substack{m\in I\\ m-1\notin I}}f_T(x_m-x_{m-1}),\\ \ k=1,\ldots,N.
\end{multline}
Since $f_T$ has period $i\alpha$, the A$\De$Os obtained by interchanging $\alpha$ and $\hbar\beta$ commute with the above ones, and so the question whether modular invariant eigenfunctions exist arises once again. We proceed with some remarks concerning this modular invariance feature, which are also meant to elucidate the context of various $q$-Toda papers where modular symmetry is not at issue, cf.~the paragraph below~\eqref{qq}.

When the relativistic Toda systems were last surveyed in~\cite{Rui94}, modular invariance was not mentioned. Indeed, at that time no eigenfunctions were known at all, so it was not clear that one should be looking for eigenfunctions with this symmetry property.  Moreover,  the Toda A$\De$Os $\hat{S}_k$ given by~\eqref{TodaOps} are not even formally self-adjoint on $L^2(\R^N,dx)$ (as pointed out in~Subsection~6.1 of~\cite{Rui94}), so that their Hilbert space status seemed quite opaque.  
 
It transpired from the above-mentioned work by Kharchev, Lebedev and Semenov-Tian-Shansky~\cite{KLS02} in the context of quantum group representation theory  that a slight modification of the quantum coupling dependence remedies the lack of formal self-adjointness of the Toda A$\De$Os. Moreover, the modular invariance property showed up in the eigenfunctions presented in~\cite{KLS02} and was tied in with Faddeev's notion of modular double of a quantum group~\cite{Fad99}. Likewise, van de Bult has shown that the modular invariance  of the `relativistic' hypergeometric function (introduced in Subsection~6.3 of~\cite{Rui94}) can be understood from this quantum group perspective~\cite{vdB06}.

Both formal self-adjointness  and  modular symmetry emerge naturally from our new results on Toda kernel functions. We expect that these results will be crucial to solve open problems concerning the modular invariant joint eigenfunctions obtained in~\cite{KLS02}, including duality properties, orthogonality and completeness.

The change in coupling dependence entailing formal self-adjointness   consists in replacing the classical pair potential  in~\eqref{TodaOps} by the quantum counterpart
\beq\label{qfT}
\hat{f}_T(z) = \left(1 + \exp\Big(\frac{\pi}{\alpha} [2z+2\eta +i\hbar\beta]\Big)\right)^{1/2}.
\eeq
This amounts to the replacement
\beq\label{gameta}
\gamma^2\to \exp(\pi(2\eta +i\hbar\beta)/\alpha)
\eeq
in~\eqref{fT}.
Clearly, this substitution does not change the A$\De$O-commutativity features mentioned above. The parameter $\eta\in\R$ plays the role of coupling constant, with the limit $\eta\to -\infty$ yielding the free theory. Moreover, in the classical limit $\hbar\to 0$ the shift into the complex plane disappears. Note in this connection that there is no classical analog of modular invariance. (More precisely, the Hamiltonians obtained from the functions $S_k(x,p)$ by interchanging $\alpha$ and $\beta$ do not Poisson commute with $S_1(x,p),\ldots,S_{N-1}(x,p)$ for $1\le k<N$ and $\alpha\ne\beta$.)

The new kernel functions $\Psi(x,y)$ for the relativistic Toda systems obtained in Subsections~2.3 and~2.4 for the periodic and nonperiodic case, resp., have the hyperbolic gamma function as their building block. Just as in the quantum elliptic and hyperbolic cases, we switch in the quantum Toda case to positive parameters
\beq\label{aa}
a_{+}=\alpha,\ \ \ a_{-}=\hbar \beta,
\eeq
in terms of which modular-invariant formulas involving the hyperbolic and elliptic gamma functions are more readily expressed. 

We obtain the Toda kernel functions in a somewhat tortuous way. Basically, we exploit the previous elliptic and hyperbolic Calogero-Moser results to arrive at them. The connection between the hyperbolic Calogero-Moser Hamiltonians and their nonperiodic Toda counterparts was already observed and used by the second-named author in 1985, which led him to the relativistic Toda systems via the relativistic  hyperbolic Calogero-Moser systems~\cite{Rui90}. The relation between the defining Hamiltonian of the nonrelativistic elliptic Calogero-Moser system and its periodic Toda counterpart was first pointed out by Inozemtsev~\cite{Ino89}, and then generalized  to all of the commuting relativistic Hamiltonians, cf.~the survey\cite{Rui94}. 

Here we need these results as well, but to control the pertinent limit for the kernel functions seems not feasible on the elliptic level. Instead, we determine the limits of the functional equations expressing the kernel function property to obtain Toda functional equations. These can then be viewed as corresponding to kernel functions for periodic Toda A$\De$Os related to the above ones by a similarity transformation. The details can be found in~Subsection~2.3.

In Subsection~2.4 we first show how nonperiodic Toda kernel functions can be obtained as a limit of  the periodic ones.  Just as in Subsection~2.2, we can also obtain kernel functions connecting $N$-particle to $M$-particle A$\De$Os, but here this seems only feasible for the case $|N-M|\le 1$. At the end of this subsection we detail how the previous nonperiodic results and a few new ones follow directly from their hyperbolic Calogero-Moser counterparts. Some readers might prefer this avenue, since it does not involve the periodic Toda and elliptic regimes. 

At this point we would like to mention that the limit transitions from Calogero-Moser to Toda type systems have recently become important from the viewpoint of quantum groups and Cherednik algebras, cf.~the lecture notes~\cite{CM09} and various references given there. By contrast to our perspective and that of the paper~\cite{KLS02} cited earlier, this work involves a single deformation parameter $q$ not on the unit circle. Here we are dealing with two parameters
\beq\label{qq}
q_{+}=\exp(i\pi a_{+}/a_{-}),\ \ \ q_{-}=\exp(i\pi a_{-}/a_{+}),
\eeq
and for Hilbert space/quantum mechanical purposes it is of pivotal importance that $a_{+}$ and $a_{-}$ be positive, so that $|q_{\pm}|=1$. Moreover, the kernel functions and eigenfunctions are invariant under the interchange of $a_{+}$ and $a_{-}$, which arises from the hyperbolic gamma function $G(a_{+},a_{-};z)$ featuring as a building block. 

For the case where the building block is the trigonometric gamma function (better known as the $q$-gamma function), the Hilbert space status of the $q$-Toda A$\De$Os and their joint eigenfunctions is opaque, but in this case there are intimate connections to various issues in representation theory and algebraic geometry. Some early references include ~\cite{Eti99}, \cite{Sev00}, \cite{OR02}, \cite{GL03}. In particular, in a series of papers by Olshanetsky and Rogov (which can be traced from~\cite{OR02}), the shift in the imaginary direction featuring in \eqref{TodaOps} was for the  first time traded for a shift by $\hbar$ in the real direction, hence yielding a parameter $q=\exp(-\hbar)<1$. Their work concerns the rank-1 (2-particle) case, whereas Etingof's paper~\cite{Eti99} appears to be the first where the arbitrary-rank case is dealt with. 

The notion of `dual relativistic Toda systems' at issue in Subsection~2.5 is not widely known. On the classical level these systems emerged from the explicit construction of an action-angle map for the nonperiodic Toda systems~\cite{Rui90}. They are integrable systems for which the actions $\hat{p}$  and angles $\hat{x}$ play the role of the positions and momenta in the original system, respectively. Specifically, the Poisson commuting Hamiltonians can be chosen as
\beq\label{Hdual}
H_k(\hat{p},\hat{x})= \sum_{\substack{I\subset\lbrace 1,\ldots,N\rbrace\\ |I|=k}}
 \prod_{\substack{m\in I\\ n\notin I}}\frac{\beta/2}{|\sinh(\beta(\hat{p}_m-\hat{p}_n)/2)|}
\prod_{l\in I}\exp(2\pi \hat{x}_l/\alpha),\quad k=1,\ldots,N.
\eeq

The classical hyperbolic relativistic Calogero-Moser systems are self-dual, since the action-angle map is essentially an involution~\cite{Rui88}. More specifically, the symmetric functions of the \lq dual Lax matrix\rq
\beq
A(x)=\diag (\exp(2\pi x_1/\alpha),\ldots,\exp(2\pi x_N/\alpha)),
\eeq
turn into the dual Hamiltonians
\beq\label{Sdual}
S_k(\hat{p},\hat{x})= \sum_{\substack{I\subset\lbrace 1,\ldots,N\rbrace\\ |I|=k}}
 \prod_{\substack{m\in I\\ n\notin I}}
 \big(1-\sinh^2(\pi \rho/\alpha)/\sinh^2(\beta(\hat{p}_m-\hat{p}_n)/2)\big)^{1/2}
 \prod_{l\in I}\exp(2\pi \hat{x}_l/\alpha).
\eeq
The limit transition from the hyperbolic Calogero-Moser to the nonperiodic Toda systems has a counterpart for the classical and quantum duals.
Just as for the original system, one needs to quantize the Hamiltonians  \eqref{Sdual} via the analog of the reordering in \eqref{qSk} (cf.~\eqref{fh}--\eqref{fhpm}) before substituting $\hat{x}_l\to-i\hbar \partial_{\hat{p}_l}$. This yields again commuting hyperbolic A$\De$Os, and then the desired commuting A$\De$Os  for the dual Toda case follow from the pertinent limit.
 Thus we obtain  quantum versions $\hat{H}_k$ of~\eqref{Hdual} and
corresponding kernel functions that are once again built from the hyperbolic gamma function. 

Without a change in notation, however, the procedure just sketched would lead to awkward formulas. Indeed, when we switch to the parameters $a_{\pm}$ via \eqref{aa}, then the factor $\sinh (\beta(\hat{p}_n-\hat{p}_m)/2)$ (for example) becomes
$\sinh (a_{-}(\hat{p}_n-\hat{p}_m)/2\hbar)$. Physically speaking, the unpleasant occurrence of $\hbar$ can be understood from the parameters $a_{+}$ and $a_{-}$ having the dimension [position], whereas $\hat{p}$ has the dimension [momentum]. To obtain the desired dual modular symmetry, we should trade $\hat{p}$ for a dual variable $\alpha\beta \hat{p}/2\pi$ with dimension [position]. We denote this new position by $v$, so that we need the substitution
\beq\label{phsub}
\hat{p}=2\pi v/(\alpha\beta).
\eeq
Using this variable, the dual hyperbolic A$\De$Os again take the form \eqref{qSk} with $f_{\pm}$ given by \eqref{fhpm} and $x$ replaced by $v$.

With this change of notation in place, the dual Toda kernel functions are symmetric under the interchange of $a_{+}$ and $a_{-}$, so that they are also kernel functions for the modular transforms of the $\hat{H}_k$, obtained by  interchanging $a_{+}$ and $a_{-}$. Somewhat surprisingly, for the dual nonperiodic Toda case we easily obtain kernel functions connecting the dual $N$-particle A$\De$Os to their $M$-particle versions for any $M<N$, whereas we can only handle the $M=N-1$ case in Subsection~2.4.

Our results for the dual Toda case are collected in Subsection~2.5. A close relative of the kernel function connecting the $N$-particle and $(N-1)$-particle dual A$\De$Os has appeared in the above-mentioned work by Kharchev et al.~\cite{KLS02}. It is used in a recursive construction of joint eigenfunctions for the nonperiodic Toda A$\De$Os, without a consideration of duality and Hilbert space properties. 

Before sketching the results of Section~3, we add an important remark concerning the A$\De$Os and kernel functions at issue in Section~2. The A$\De$Os are invariant when all of their coordinates~$x_n$ are shifted to $x_n+\xi$. This entails that the kernel function property is preserved under such coordinate shifts. We make use of this freedom to choose convenient kernel functions. Another common feature is invariance of the kernel function property under multiplication  by any function of the form
\begin{equation}\label{ambig}
	\phi\left(\sum_{n=1}^N(x_n-y_n)\right),\quad \phi~\text{meromorphic}.
\end{equation}
Hence, once we have identified one kernel function we immediately obtain an infinite-dimensional family of kernel functions.

Section~3 is concerned with so-called B\"acklund transformations for the classical relativistic Calogero-Moser and Toda systems. These are canonical transformations $(x,p)\mapsto (y,q)$   that preserve the Poisson commuting Hamiltonians, derived from a generating function $F(x,y)$ via
\beq\label{Fgen}
p_j= -\frac{\partial F}{\partial x_j},\ \ \ \  q_j= \frac{\partial F}{\partial y_j},\ \ \ j=1,\ldots,N.
\eeq
For the nonrelativistic Calogero-Moser systems such transformations appear to date back to work by Wojchiechowski~\cite{Woj82}. For the nonrelativistic infinite Toda chain a B\"acklund transformation can already be found in Toda's monograph~\cite{Tod81}. It seems Gaudin was the first to realize that it can also be applied to the finite Toda systems, and that it can be tied in with the classical limit of a kernel function for their quantum versions, cf.~Ch.~14 in his monograph~\cite{Gau83}.  Pasquier and Gaudin~\cite{PG92} then used the kernel function to study eigenvalues and eigenfunctions.

For the nonrelativistic rational Calogero-Moser system a B\"acklund transformation was obtained via special solutions of the KP equation by Nijhoff and Pang~\cite{NP94}, \cite{NP96}. They reinterpreted the generating function as a Lagrangian for a discrete map, which they viewed as a time-discretization of the defining Hamiltonian. In the same spirit, in Nijhoff/Ragnisco/Kuznetsov~\cite{NRK96} B\"acklund transformations (alias \lq time-discretizations\rq) for the relativistic Calogero-Moser systems were introduced and studied.

Later on, Kuznetsov and Sklyanin elaborated on the general theory of B\"acklund transformations~\cite{KS98}. In particular, they reconsidered the relation between kernel functions on the quantum level and generating functions on the classical level. Gaudin already pointed out this relation in the special case of the nonrelativistic periodic Toda system~\cite{Gau83}, but in~\cite{KS98} it was suggested more generally that the semi-classical behavior of a kernel function $\Psi(x,y)$ of the type we consider should be given by a formula of the form
\beq\label{PsiF}
\Psi(\hbar;x,y)\sim \exp(-iF(x,y)/\hbar),\ \ \ \ \hbar\to 0,
\eeq
where $F(x,y)$ generates a B\"acklund transformation for the classical version via~\eqref{Fgen}.

Now it seems quite unlikely that this is generally true, as kernel functions exist in profusion. Indeed, assuming one has found an orthonormal base $\{ \phi_n(x)\}_{n=0}^{\infty}$ of joint eigenfunctions for the commuting elliptic Hamiltonians (say), a function of the form
\beq
K((a_0,a_1,\ldots);x,y)=\sum_{n=0}^{\infty}a_n\phi_n(x)\overline{\phi_n(y)},
\eeq
is a Hilbert-Schmidt kernel function for any $(a_0,a_1,\ldots)\in \ell^2(\N)$. Since the numbers $a_n$  are arbitrary, it is not even clear what one would mean by the semi-classical behavior of such a general kernel function.

We are, however, dealing with very special kernel functions, which can be expressed in terms of the elliptic gamma function and its specializations. In particular, there is a notion of \lq classical limit\rq\ of the hyperbolic gamma function, which is tied to its appearance in the quantum scattering of the relativistic hyperbolic Calogero-Moser system. Indeed, in order to obtain the classical scattering (position shift) for $\hbar \to 0$ via a coherent state correspondence (which goes back to Hepp's fundamental paper~\cite{Hep74}), a quite special limit is required. This is detailed in Eq.~(4.73) of~\cite{Rui97}, and in terms of the hyperbolic gamma function it amounts to a certain zero step size limit, cf.~Prop.~III.7 in~\cite{Rui97}. The elliptic counterparts of these limits are Eq.~(4.98) and Prop.~III.13 in~\cite{Rui97}. (In Appendix~A we have recalled the two pertinent  limits, cf.~\eqref{Ghypcl} and~\eqref{Gellcl}.)

The point is now that with the associated $\hbar$-dependence in force, the asymptotic behavior encoded in~\eqref{PsiF} does yield the generating function of a B\"acklund transformation, as we shall show in Section~3 for each of the different cases at issue. More is true: For the relativistic Calogero-Moser case these generating functions are basically the ones arrived at in~\cite{NRK96}. Also, the B\"acklund transformations for the relativistic Toda regimes can be tied in with results by Suris~\cite{Sur96}, and their nonrelativistic limits yield the ones already known from the papers cited earlier.

There is however an unsettling phenomenon associated with these B\"acklund transformations, which seems not to have been pointed out before: They correspond to Calogero-Moser and Toda systems of an unphysical nature, inasmuch as there seems to be no choice of parameters that yields complete flows and phase space coordinates that stay real for all times. For the Toda regimes this disease can be remedied by an analytic continuation, a state of affairs that was already noted and used by Gaudin in the nonrelativistic case~\cite{Gau83}. (More precisely, he starts from the B\"acklund transformation with the physical positive coupling, and then finds an associated quantum kernel with the `wrong' coupling; this can then be remedied by analytic continuation of positions. Since we start with a positive coupling on the quantum level, we need to reverse this procedure.) 

For the Calogero-Moser case, however, this is no option. Indeed, even for the very simplest degeneration, namely, the nonrelativistic rational $N=2$ Calogero-Moser system, it seems impossible to avoid the \lq negative coupling\rq\ behavior. For the discrete map at issue it shows up in real initial positions becoming complex after a number of discrete time steps.
Even so, the circumstance that the pertinent quantum kernel functions give rise to generating functions of B\"acklund transformations  is highly remarkable and deserves a further scrutiny from the viewpoint of global analysis.

The relevant limits and their B\"acklund features yield somewhat unwieldy formulas, which is why we shall not detail them here. In Section~3 we reconsider successively the same regimes as in Section~2, omitting details whenever there is considerable similarity to previous cases.

Section~4 is concerned with the nonrelativistic version of our results. Here, too, we leave out details when they can be readily supplied by specialization. Subsection~4.1 deals with the nonrelativistic counterparts of the kernel functions of Section~2, whereas Subsection~4.2 is concerned with the nonrelativistic limits of the B\"acklund transformations of Section~3. As already mentioned, the kernel functions and B\"acklund transformations we arrive at in Section~4 are not new. On the other hand, we arrive at the relevant features in a novel way. 

In Appendix~A we review some properties of the hyperbolic and elliptic gamma functions from~\cite{Rui97} we have occasion to use. In Appendix~B we prove that the center-of-mass relativistic periodic Toda kernel functions that connect the $N$-particle A$\De$Os give rise to Hilbert-Schmidt operators. We anticipate that this result will be the key to proving that the commuting A$\De$Os can be promoted to commuting self-adjoint operators on~$L^2(\R^N)$, a problem that has been wide open for several decades.

To conclude this Introduction, a summary of our notation changes is in order. Indeed, as transpires from the above, we use several notational conventions that depend on the regime at issue. Our choices not only simplify formulas, but also reflect physical distinctions.  Specifically, in all regimes we have a length scale $\alpha$ coming from the interaction in the defining Hamiltonian. In the quantum relativistic cases, however, we have an additional length scale, namely $\hbar \beta$ (physically speaking, the Compton wave length of the particles under consideration). As explained above, modular symmetry interchanges these two parameters, which is why it is convenient to work with two equivalent length scales~$a_{\pm}$ in Section~2, cf.~\eqref{aa}. Now in the hyperbolic and nonperiodic Toda regimes we also have a notion of dual system, with the `spectral variables' $\hat{p}_1,\ldots,\hat{p}_N$ denoting asymptotic momenta. The self-duality of the relativistic  hyperbolic regime, however, makes it more natural to work with the dual position $v$ defined by \eqref{phsub}. (This has in particular the consequence that asymptotic plane waves do not have the usual dimensionless combination $x\cdot \hat{p}/\hbar$ in the exponent, but  $x\cdot v/a_{+}a_{-}$ instead.)

By contrast to Section~2, we study in Section~3 and Section~4 the classical ($\hbar=0$) and nonrelativistic ($\beta=0$) settings, so that the length scale $\hbar\beta$ disappears. In the elliptic regime we therefore revert to the parameter $\alpha$, whereas in the classical and in the nonrelativistic hyperbolic and Toda cases we trade $a_{+}=\alpha$ for a parameter
\beq\label{mu}
\mu = 2\pi/\alpha,
\eeq
with dimension [position]$^{-1}$. This change not only avoids a plethora of factors $\pi$, but is also in accord with the self-duality of the classical relativistic hyperbolic regime. Indeed, $\mu$ is the parameter naturally dual to $\beta$, as can already be gleaned by comparing the defining Hamiltonians \eqref{Hamiltonians} and the dual ones \eqref{Sdual}. (Cf.~also the Lax matrix~\eqref{Ltau} and dual Lax matrix~\eqref{Ldtau} to appreciate  this self-duality feature.)

\section{Kernel functions}\label{kernelFuncsSection}

\subsection{The elliptic case}\label{Sec21}
In this subsection we review various elliptic quantities that play a role in our study of the periodic Toda case. As explained above, it is convenient to use notation that encodes modular invariance, cf.~\eqref{aa}. To start with, we switch from the $N$ commuting  A$\De$Os $\hat{S}_k$ given by~\eqref{qSk} to the $2N$ commuting Hamiltonians 
\beq\label{Hkp}
	H_{k,\delta}(x) = \sum_{\substack{I\subset\lbrace 1,\ldots,N\rbrace\\ |I|=k}}\prod_{\substack{m\in I\\ n\notin I}}f_{\delta,-}(x_m-x_n)\prod_{m\in I}\exp(-ia_{-\delta}\partial_{x_m})\prod_{\substack{m\in I\\ n\notin I}}f_{\delta,+}(x_m-x_n),
\eeq
where $k=1,\ldots,N$, $ \de=+,-$, and
\begin{equation}
	f_{\delta,\pm}(z) = \left(\frac{s_\delta(z\pm\rho)}{s_\delta(z)}\right)^{1/2},\ \ \ s_\delta(z)= s(r,a_{\de};z).
\end{equation}
(See Appendix~A for the definition and properties of the functions~$s_{\pm}(z)$.) Next, we introduce $2N$ additional A$\De$Os by setting
\beq\label{Hextra}
H_{-k,\de}(x)=H_{k,\de}(-x),\ \ \ \ k=1,\ldots,N,\ \ \ \de=+,-.
\eeq
Thus we have
\beq\label{Hkm}
	H_{-k,\delta}(x) = \sum_{\substack{I\subset\lbrace 1,\ldots,N\rbrace\\ |I|=k}}\prod_{\substack{m\in I\\ n\notin I}}f_{\delta,+}(x_m-x_n)\prod_{m\in I}\exp(ia_{-\delta}\partial_{x_m})\prod_{\substack{m\in I\\ n\notin I}}f_{\delta,-}(x_m-x_n),
\eeq
and in particular
\beq
H_{-N,\de}(x)=H_{N,\de}(-x)=\prod_{m=1}^N\exp(ia_{-\de}\partial_{x_m}),\ \ \ \de=+,-.
\eeq
It is readily verified that the new A$\De$Os are also related to the previous ones via
\beq
H_{-k,\de}(x)=H_{N-k,\de}(x)H_{-N,\de}(x),\ \ \ \ k=1,\ldots,N-1,
\eeq
and when we set
\beq
H_{0,\de}={\bf 1},
\eeq
then this relation holds for $k=N$, too.

The elliptic kernel function $\Psi$ is now of the form
\beq\label{Psi}
\Psi(x,y)=W(x)^{1/2}W(y)^{1/2}\cS(x,y).
\eeq
Here, the weight function is given by
\begin{equation}\label{W}
	W(x) = \frac{1}{C(x)C(-x)},
\end{equation}
with $C$ the generalized Harish-Chandra function
\begin{equation}\label{HCell}
	C(x) = \prod_{1\leq j<k\leq N} \frac{G(x_j-x_k-\rho+ia)}{G(x_j-x_k+ia)}.
\end{equation}
The function $G(z)\equiv G(r,a_+,a_-;z)$ is the elliptic gamma function reviewed in Appendix~\ref{gammaFuncsAppendix}, and the notation
\beq
a=(a_++a_-)/2
\eeq
is used throughout Section~2. Also, the special function $\cS$ is defined by
\begin{equation}\label{defcS}
	\cS(x,y) = \prod_{j,k=1}^N\frac{G(x_j-y_k-\rho/2)}{G(x_j-y_k+\rho/2)}.
\end{equation}
Note that it satisfies
\beq
\cS(x,y)=\cS(\sigma(x),\tau(y)), \ \ \ \forall \sigma,\tau\in S_N.
\eeq
Moreover, from the reflection equation~\eqref{refl} it follows that
\beq\label{cSsymm}
\cS(x,y)=\cS(y,x)=\cS(-x,-y).
\eeq

We are now prepared to recall the kernel identities. They are given by
\begin{equation}\label{reducedEllipticIds}
	\big(H_{l,\delta}(x) - H_{-l,\delta}(y)\big)\Psi(x,y) = 0,\quad \pm l=1,\ldots,N,\quad \delta=+,-.
\end{equation}
Equivalently, the $4N$ commuting A$\De$Os
\beq
A_{l,\delta}(x) = W(x)^{-1/2}H_{l,\delta}(x)W(x)^{1/2},\ \ \ \pm l=1,\ldots,N,\quad \delta=+,-,
\eeq
 satisfy
 \begin{equation}\label{AS}
	\big(A_{l,\delta}(x) - A_{-l,\delta}(y)\big)\cS(x,y) = 0,\quad \pm l=1,\ldots,N,\quad \delta=+,-.
\end{equation}
Using the analytic difference equations \eqref{ellipticGDiffEq} obeyed by the elliptic gamma function and the formula \eqref{sR} relating $R_{\de}$ and $s_{\de}$, it follows that these A$\De$Os have meromorphic coefficients. Specifically, one readily obtains the explicit formulas
\begin{equation}\label{MeromorphEllipticOps}
	A_{\pm k,\delta}(x) =\sum_{\substack{I\subset\lbrace 1,\ldots,N\rbrace\\ |I|=k}}\prod_{\substack{m\in I\\ n\notin I}}f_{\delta,\mp}(x_m-x_n)^2\prod_{m\in I}\exp(\mp ia_{-\delta}\partial_{x_m}),\ \ \ k=1,\ldots,N, \ \ \de=+,-.
\eeq

For our purposes, it is crucial that the kernel identities~\eqref{AS} are equivalent to the following identities for the functions $s_\delta(z)$:
\begin{multline}\label{EllipticFunctionalEqs}
	\sum_{\substack{I\subset\lbrace 1,\ldots,N\rbrace\\ |I|=k}}\prod_{\substack{m\in I\\ n\notin I}}\frac{s_\delta(x_m-x_n-\rho)}{s_\delta(x_m-x_n)}\prod_{\substack{m\in I\\ n\in\lbrace 1,\ldots,N\rbrace}}\frac{s_\delta(x_m-y_n+\rho)}{s_\delta(x_m-y_n)}\\ = \sum_{\substack{I\subset\lbrace 1,\ldots,N\rbrace\\ |I|=k}}\prod_{\substack{m\in I\\ n\notin I}}\frac{s_\delta(y_m-y_n+\rho)}{s_\delta(y_m-y_n)}\prod_{\substack{m\in I\\ n\in\lbrace 1,\ldots,N\rbrace}}\frac{s_\delta(y_m-x_n-\rho)}{s_\delta(y_m-x_n)}.
\end{multline}
This equivalence can be verified by using once more the equations \eqref{ellipticGDiffEq} and  \eqref{sR}. Further details, as well as a proof of~\eqref{EllipticFunctionalEqs}, can be found in Section 2 of \cite{Rui06}.

As a preparation for our account of the relativistic Toda regimes we introduce additional avatars of the $4N$ commuting A$\De$Os. They can be defined by
\beq\label{cAOps}
\cA_{l,\de}^{\pm}(x)=C(\mp x)^{-1}A_{l,\de}(x)C(\mp x),\ \ \ l\in \{\pm 1,\ldots,\pm N\},\ \ \ \de\in\{+,-\},
\eeq
with $C$ the Harish-Chandra function~\eqref{HCell}, so they have meromorphic coefficients as well.
Alternatively, introducing the elliptic scattering function
\beq\label{ellU}
U(x)=C(x)/C(-x),
\eeq
they are given by 
\beq\label{cAH}
\cA_{l,\de}^{\pm}(x)=U(x)^{\pm 1/2}H_{l,\de}(x)U(x)^{\mp 1/2},\ \ \ l=\pm 1,\ldots,\pm N,\ \ \ \de=+,-.
\eeq
Since we have
\beq\label{Uunit}
|U(x)|=1,\ \ \ x\in\R^N,
\eeq
these operators inherit the formal self-adjointness of the A$\De$Os $H_{l,\de}(x)$. Note that
for $l=\pm N$ the four sets of commuting operators
\beq
\{ H_{l,\de}\},\ \ \{ A_{l,\de}\},\ \ \{ \cA^{+}_{l,\de}\},\ \ \{\cA^{-}_{l,\de}\},\ \ \ l=\pm 1,\ldots,\pm N,\ \ \ \de=+,-,
\eeq
yield the same A$\De$O $\exp(\mp ia_{-\de}\sum_j \partial_j)$.

\subsection{The hyperbolic case}\label{Sec22}

In the hyperbolic limit $r\downarrow 0$ the \adiffops~$H_{\pm k,\delta}$ remain of the same form, but now with $s_\delta(z)\equiv \sinh(\pi z/a_{\delta})$, cf. \eqref{sToSinhLim}. In addition, the limit \eqref{ellTohypgammaLim} from the elliptic to the hyperbolic gamma function implies that the kernel identities \eqref{reducedEllipticIds} hold true if we take $G(z)$ to be the hyperbolic gamma function. All other quantities and relations in the previous subsection have immediate hyperbolic counterparts as well, so we shall not spell them out.

In the hyperbolic case, however, we are also able to obtain kernel identities relating the \adiffops
\beq
A_{\pm k,\delta}(x),\ \ \ \ k=1,\ldots,N,\ \ \ \de=+,-,
\eeq
 in $N$ variables $x=(x_1,\ldots,x_N)$ to the following \adiffops\  in $N-\ell$ variables $y=(y_1,\ldots,y_{N-\ell})$, $\ell=0,\ldots,N$:
 \beq
A_{\mp (k-j),\delta}(y),\ \ \ \ j=0,\ldots,\ell,\ \  0\le k-j\le N-\ell,\ \ \ \ A_{0,\delta}\equiv 1.
\eeq

In the trigonometric case (which we do not consider), the two hyperbolic periods $ia_{+},ia_{-}$ are replaced by one imaginary period $ia$ and a real period $\pi/r$, and accordingly it suffices to consider \adiffops\  $A_1,\ldots, A_N$, with the coefficient building blocks $s_{\de}$ replaced by the sine function. For this case  
  Komori et al.~\cite{KNS09} first arrived at the analogs of the extra kernel relations for $A_1$, by using corresponding functional identities.  Our reasoning below yields relations for arbitrary $k$, whose trigonometric analogs (with the hyperbolic gamma function replaced by the trigonometric one) can be obtained by adapting our hyperbolic arguments.

The relations involve coefficients $c_{\ell,j}^\delta$ with $\ell\in\N, j\in\Z,\de=+,-$, given by
\beq\label{side}
c_{0,0}^\delta = 1,\ \ \ \ c_{\ell,j}^\delta =0,\ \ \ j>\ell,\ \ j<0,
\eeq
and
\beq\label{cS}
c_{\ell,j}^\delta = S_j(e_{\de}((\ell-1)\rho),e_{\de}((\ell-3)\rho),\ldots,e_{\de}(-(\ell-1)\rho)),\ \ \ j=0,\ldots,\ell,
\eeq
where $S_j(a_1,\ldots,a_\ell)$ denotes the $j$th elementary symmetric function of $a_1,\ldots,a_\ell$; also, here and below we use the abbreviation
\beq
e_{\de}(z)\equiv \exp(\pi z/a_{\de}),\ \ \ \ \de=+,-.
\eeq
 Notice that the coefficients are even in $\rho$ and satisfy
\beq
c_{\ell,0}^\delta=c_{\ell,\ell}^\delta =1,\ \ \  c_{\ell,j}^\delta=c_{\ell,\ell-j}^\delta,\ \ \ j=0,\ldots, \ell.
\eeq
Moreover, it is not hard to verify that the coefficients obey a recurrence relation
\begin{equation}\label{cRecursRel}
	c_{\ell+1,j}^\delta = e_\delta(j\rho)c_{\ell,j}^\delta + e_\delta\big((j-1-\ell)\rho\big)c_{\ell,j-1}^\delta,
\end{equation}
and that they are uniquely determined by this recurrence together with the side conditions~\eqref{side}.

With $G(z)$ denoting the hyperbolic gamma function and $A_{\pm k,\delta}$ the hyperbolic version of the elliptic \adiffops~\eqref{MeromorphEllipticOps}, we are now prepared to state and prove the pertinent relations.

\begin{theorem}\label{HypKernelIdProp}
For $\ell=0,1,\ldots,N$, let
\begin{equation}\label{cSl}
	\cS_\ell(x,y) \equiv \prod_{m=1}^N\prod_{n=1}^{N-\ell}\frac{G(x_m-y_n-\rho/2)}{G(x_m-y_n+\rho/2)}, \ \ \ \ \cS_N\equiv 1.
\end{equation}
For any $k\in \{ 1,\ldots,N\}$ and $\tau,\de\in\{ +,-\}$  we have
\beq\label{hyperbolicKernelIds}
	A_{\tau k,\delta}(x_1,\ldots,x_N)\cS_\ell(x,y) = \sum_{j=0}^{\min (k,\ell)} c_{\ell,j}^\delta A_{-\tau (k-j),\delta}(y_1,\ldots,y_{N-\ell})\cS_\ell(x,y),
\eeq
where 
\beq
A_{\pm m,\delta}(y_1,\ldots,y_{N-\ell})\equiv 0,\ \ \ m>N-\ell,\ \ \ A_{0,\delta}\equiv 1,
\eeq
 and where the coefficients $c_{\ell,j}^\delta$ are given by~\eqref{side} and~\eqref{cS}.
\end{theorem}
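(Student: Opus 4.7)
The plan is to prove the identities by induction on $\ell$, with base case $\ell=0$ given by the hyperbolic specialization of the kernel identity \eqref{AS} together with $c_{0,0}^\delta=1$. For the inductive step I would start from the $\ell$-level identity, replace $y_{N-\ell}$ by $y_{N-\ell}-t$ (the identity persists for finite $t$ by meromorphicity), and take $t\to+\infty$ along the real axis. The analytic input is the large-argument asymptotics of the hyperbolic gamma function recalled in Appendix~A, which gives
\begin{equation*}
\frac{G(z-\rho/2)}{G(z+\rho/2)} \sim \chi(\rho)\exp\bigl(\lambda(\rho)\,z\bigr), \qquad \re z\to+\infty,
\end{equation*}
with $\lambda(\rho)=\pi i\rho/(a_+a_-)$. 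Applied to the $N$ factors of $\cS_\ell$ indexed by $n=N-\ell$, this produces a common prefactor $P(t)=\chi(\rho)^N\exp\bigl(\lambda(\rho)(\sum_m x_m-Ny_{N-\ell}+Nt)\bigr)$ on both sides, while the remainder of $\cS_\ell$ tends to $\cS_{\ell+1}(x,y^{(\ell+1)})$, where $y^{(\ell+1)}=(y_1,\ldots,y_{N-\ell-1})$.

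On the left, each of the $k$ shifts $\exp(-\tau i a_{-\delta}\partial_{x_m})$ in $A_{\tau k,\delta}(x)$ extracts the scalar $e^{-\tau i a_{-\delta}\lambda(\rho)}=e_\delta(\tau\rho)$ from $e^{\lambda x_m}$, so that $A_{\tau k,\delta}(x)[P(t)\cS_{\ell+1}]=e_\delta(\tau\rho)^k P(t)\,A_{\tau k,\delta}(x)\cS_{\ell+1}$. On the right, each summand $A_{-\tau(k-j),\delta}(y^{(\ell)})$ is split according to whether $N-\ell\in I$. If $N-\ell\notin I$, the $k-j$ coefficient factors $f_{\delta,\tau}(y_m-y_{N-\ell}+t)^2$ tend to $e_\delta(\tau\rho)^{k-j}$, leaving $A_{-\tau(k-j),\delta}(y^{(\ell+1)})\cS_{\ell+1}$. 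If $N-\ell\in I$, the $N-\ell-(k-j)$ factors $f_{\delta,\tau}(y_{N-\ell}-t-y_n)^2$ tend to $e_\delta(-\tau\rho)^{N-\ell-(k-j)}$, the shift $\exp(\tau i a_{-\delta}\partial_{y_{N-\ell}})$ extracts $e_\delta(\tau\rho)^N$ from $P(t)$, and what remains is $A_{-\tau(k-j-1),\delta}(y^{(\ell+1)})\cS_{\ell+1}$ with combined scalar $e_\delta(\tau\rho)^{\ell+k-j}$.

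Dividing both sides by $P(t)\,e_\delta(\tau\rho)^k$ and reindexing the second contribution by $j\mapsto j+1$, the coefficient of $A_{-\tau(k-j),\delta}(y^{(\ell+1)})\cS_{\ell+1}$ on the right becomes $c_{\ell,j}^\delta e_\delta(-\tau j\rho)+c_{\ell,j-1}^\delta e_\delta(-\tau(j-1-\ell)\rho)$. For $\tau=-$ this is verbatim the recurrence \eqref{cRecursRel} for $c_{\ell+1,j}^\delta$; for $\tau=+$ it is the same recurrence evaluated at $-\rho$, which produces the same numerical values because the $c_{\ell,j}^\delta$ are manifestly even in $\rho$ by the symmetric-function definition \eqref{cS}. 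The side conditions \eqref{side} are preserved throughout, closing the induction.

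The principal obstacle will be the careful bookkeeping of the scalar factors produced when the various shift operators interact with $P(t)$. The pivotal identity to verify is $e^{-\tau i a_{-\delta}\lambda(\rho)}=e_\delta(\tau\rho)$, which links the asymptotic exponent $\lambda(\rho)=\pi i\rho/(a_+a_-)$ to the coefficients of the recurrence. Once this is in place, the scalar contributions $e_\delta(\tau\rho)^k$ (from the $x$-shifts), $e_\delta(\tau\rho)^{k-j}$ (from $N-\ell\notin I$), and $e_\delta(\tau\rho)^{\ell+k-j}$ (from $N-\ell\in I$) cancel cleanly, leaving exactly the recurrence \eqref{cRecursRel} and so the coefficients $c_{\ell+1,j}^\delta$.
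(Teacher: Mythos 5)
Your proposal is correct and follows essentially the same route as the paper's proof: induction on $\ell$, sending the last $y$-variable to infinity, renormalizing by a plane-wave prefactor commuted through the shifts, splitting the $y$-side A$\Delta$O according to whether $N-\ell$ lies in the index set, and recognizing the recurrence \eqref{cRecursRel}. The only differences are cosmetic: you take $y_{N-\ell}\to-\infty$ instead of $+\infty$ and treat both signs $\tau$ at once via evenness of $c_{\ell,j}^\delta$ in $\rho$, whereas the paper first reduces to $\tau=+$ using $\cS_\ell(x,y)=\cS_\ell(-x,-y)$.
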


\begin{proof}
Since we have
\beq\label{cSeven}
\cS_\ell(x,y)=\cS_\ell(-x,-y),
\eeq
it suffices to show~\eqref{hyperbolicKernelIds} for $\tau=+$.
Our proof proceeds by induction on $\ell$. The case $\ell=0$ amounts to the hyperbolic version of~\eqref{AS}, so we now assume~\eqref{hyperbolicKernelIds} for $\ell\ge 0$ and show its validity for $\ell\to\ell+1$.

To this end we begin by deducing from the asymptotics of the hyperbolic gamma function (cf.~\eqref{GLDef} and \eqref{GRLAsA}) that we have
\begin{equation}\label{Slim}
	\lim_{\Lambda\to\infty}\phi(x,y_{N-\ell}+\Lambda)\cS_\ell(x,y_1,\ldots,y_{N-\ell-1},y_{N-\ell}+\Lambda) = \cS_{\ell+1}(x,y),
\end{equation}
where
\begin{equation}
	\phi(x,z) \equiv \exp\left(\frac{i\pi\rho}{a_+a_-}\sum_{m=1}^N(x_m-z)\right).
\end{equation}
In order to exploit this limit, we note
\begin{equation}\label{phiA}
\phi(x,y_{N-\ell})A_{k,\delta}(x) = e_\delta(-k\rho)A_{k,\delta}(x)\phi(x,y_{N-\ell}).
\end{equation}
Furthermore, we split the \adiffops~$A_{-m,\de}(y)$ into two parts, depending on whether the index set $I$ contains $N-\ell$ or not:
\begin{equation}
	A_{-m,\delta}(y) = \sum_{\substack{I\subset\lbrace 1,\ldots,N-\ell-1\rbrace\\ |I|=m}}\Big(\cdots\Big) + \sum_{\substack{I=J\cup\lbrace N-\ell\rbrace\\ J\subset\lbrace 1,\ldots,N-\ell-1\rbrace\\ |J|=m-1}}\Big(\cdots\Big).
\end{equation}
Denoting the first and second sum by $B_{m,\delta}(y)$ and $C_{m,\delta}(y)$, respectively, we then observe that
\bea\label{phiBC}
	\phi(x,y_{N-\ell})B_{m,\delta}(y) & = & B_{m,\delta}(y) \phi(x,y_{N-\ell}), \nonumber \\
	\phi(x,y_{N-\ell})C_{m,\delta}(y) &=  & e_\delta(-N\rho)C_{m,\delta}(y)\phi(x,y_{N-\ell}).
\eea

Next, we multiply both sides of \eqref{hyperbolicKernelIds} by the function $e_{\de}(k\rho)\phi(x,y_{N-\ell})$ and use the commutation relations~\eqref{phiA} and~\eqref{phiBC}. We then take $y_{N-\ell}\to y_{N-\ell}+\Lambda$, and use the readily verified limits
\begin{equation}
\lim_{\Lambda\to\infty}	B_{m,\delta}(y_1,\ldots,y_{N-\ell-1},y_{N-\ell}+\Lambda)= e_\delta(-m\rho)A_{-m,\delta}(y_1,\ldots,y_{N-\ell-1}),
\end{equation}
\bea
\lim_{\Lambda\to\infty}	C_{m,\delta}(y_1,\ldots,y_{N-\ell}+\Lambda) & =  & e_\delta((N-\ell-m)\rho) \exp(ia_{-\delta}\partial_{y_{N-\ell}})
\nonumber \\
  &  & \times A_{-m+1,\de}(y_1,\ldots,y_{N-\ell-1}),
\eea
together with the limit~\eqref{Slim}.
If we now take $j\to j-1$ in the sum coming from $C_{k-j,\de}(y)$, then it becomes clear that the coefficients of the A$\De$Os $A_{-(k-j),\de}$ on the right-hand side satisfy~\eqref{cRecursRel}, so that they are given by~\eqref{cS} with $\ell\to\ell+1$.
\end{proof}

We proceed to detail three specializations of Theorem \ref{HypKernelIdProp}. For the first we fix $k=1$, but impose no restrictions on $\ell$. From~\eqref{cS} we obtain
\beq
c^\delta_{\ell,0}=1,\ \ \ \ c^\delta_{\ell,1}=s_\delta(\ell\rho)/s_\delta(\rho),
\eeq
yielding the following special cases.
\begin{corollary}
For $\ell=0,\ldots,N$, we have
\begin{equation}\label{cor1}
	\big(A_{\pm 1,\delta}(x_1,\ldots,x_N) - A_{\mp 1,\delta}(y_1,\ldots,y_{N-\ell})\big)\cS_\ell(x,y) = \frac{s_\delta(\ell\rho)}{s_\delta(\rho)}\cS_\ell(x,y).
\end{equation}
\end{corollary}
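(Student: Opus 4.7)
The corollary is a direct specialization of Theorem~\ref{HypKernelIdProp} to $k=1$, so the plan is essentially to evaluate the two coefficients $c^{\delta}_{\ell,0}$ and $c^{\delta}_{\ell,1}$ that survive on the right-hand side and then reorganize.

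First I would set $k=1$ in~\eqref{hyperbolicKernelIds}. Then $\min(k,\ell)=\min(1,\ell)$, so for $\ell\ge 1$ the sum on the right runs over $j=0,1$ and reads
\beq
c^{\delta}_{\ell,0}A_{-\tau,\delta}(y_1,\dots,y_{N-\ell})\cS_\ell(x,y)+c^{\delta}_{\ell,1}A_{0,\delta}\cS_\ell(x,y),
\eeq
with $A_{0,\delta}\equiv 1$. The side conditions~\eqref{side} give $c^{\delta}_{\ell,0}=1$ immediately, so once $c^{\delta}_{\ell,1}$ is identified as $s_\delta(\ell\rho)/s_\delta(\rho)$ the corollary follows after transposing the $A_{-\tau,\delta}(y)\cS_\ell$ term. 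For $\ell=0$ the formula is trivial since $s_\delta(0)=0$ and $\cS_0=\cS$, recovering the base case~\eqref{AS}.

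Thus the only real computation is the evaluation of $c^{\delta}_{\ell,1}$. By definition~\eqref{cS},
\beq
c^{\delta}_{\ell,1}=S_1\bigl(e_\delta((\ell-1)\rho),e_\delta((\ell-3)\rho),\dots,e_\delta(-(\ell-1)\rho)\bigr)=\sum_{j=0}^{\ell-1}e_\delta\bigl((\ell-1-2j)\rho\bigr),
\eeq
which is a finite geometric sum with ratio $e_\delta(-2\rho)$. Summing it gives
\beq
c^{\delta}_{\ell,1}=\frac{e_\delta(\ell\rho)-e_\delta(-\ell\rho)}{e_\delta(\rho)-e_\delta(-\rho)}=\frac{\sinh(\pi\ell\rho/a_\delta)}{\sinh(\pi\rho/a_\delta)}=\frac{s_\delta(\ell\rho)}{s_\delta(\rho)},
\eeq
using the hyperbolic definition $s_\delta(z)=\sinh(\pi z/a_\delta)$ in force throughout Subsection~\ref{Sec22}. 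Alternatively, one can obtain the same closed form by inducting on $\ell$ via the recurrence~\eqref{cRecursRel}: the identity
\beq
\frac{s_\delta((\ell+1)\rho)}{s_\delta(\rho)}=e_\delta(\rho)\frac{s_\delta(\ell\rho)}{s_\delta(\rho)}+e_\delta(-\ell\rho)
\eeq
is the $j=1$ instance of~\eqref{cRecursRel} for $c^{\delta}_{\ell,1}$, and it is straightforward from the addition formula for $\sinh$. No obstacle is anticipated: the argument is purely a matter of identifying the two relevant coefficients, and the main (trivial) step is recognizing that $S_1$ of the symmetric string $\{e_\delta((\ell-1-2j)\rho)\}_{j=0}^{\ell-1}$ telescopes into the ratio $s_\delta(\ell\rho)/s_\delta(\rho)$.
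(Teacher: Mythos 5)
Your proposal is correct and follows essentially the same route as the paper: specialize Theorem~\ref{HypKernelIdProp} to $k=1$ and read off $c^{\delta}_{\ell,0}=1$ and $c^{\delta}_{\ell,1}=s_\delta(\ell\rho)/s_\delta(\rho)$ from \eqref{cS}, the geometric-sum evaluation of $S_1$ being the only computation. The one cosmetic slip is attributing $c^{\delta}_{\ell,0}=1$ for $\ell\ge 1$ to the side conditions \eqref{side}, which only fix $c^{\delta}_{0,0}$ and the vanishing outside $0\le j\le\ell$; the correct source is \eqref{cS} with $S_0=1$ (equivalently the property $c^{\delta}_{\ell,0}=c^{\delta}_{\ell,\ell}=1$ noted right after \eqref{cS}).
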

We note that this corollary is the hyperbolic analog of Statement (1) in Theorem 2.2 of \cite{KNS09}. 

Next, we require $\ell=1$, but do not restrict $k$.  From~\eqref{cS} we have $c_{1,0}^\delta = c_{1,1}^\delta = 1$, and hence the following specialization results.

\begin{corollary}
For $k=1,\ldots,N$, we have
\begin{multline}\label{cor2}
	 A_{\pm k,\delta}(x_1,\ldots,x_N)\cS_1(x,y)\\ = \big(A_{\mp k,\delta}(y_1,\ldots,y_{N-1})+A_{\mp (k-1),\delta}(y_1,\ldots,y_{N-1})\big)\cS_1(x,y).
\end{multline}
\end{corollary}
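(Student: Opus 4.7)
The plan is to deduce this corollary directly from Theorem \ref{HypKernelIdProp} by specializing to $\ell = 1$. The proof is essentially a bookkeeping exercise: I only need to identify the coefficients $c_{1,j}^\delta$ that appear in the theorem for this value of $\ell$, and then rewrite the resulting identity in the form stated.

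First I would specialize the identity \eqref{hyperbolicKernelIds} to $\ell = 1$. With $k \in \{1, \ldots, N\}$, we have $\min(k, \ell) = \min(k, 1) = 1$, so the right-hand side of~\eqref{hyperbolicKernelIds} is a sum over $j = 0, 1$:
\begin{equation*}
  A_{\tau k,\delta}(x_1,\ldots,x_N)\cS_1(x,y) = \bigl( c_{1,0}^\delta A_{-\tau k,\delta}(y_1,\ldots,y_{N-1}) + c_{1,1}^\delta A_{-\tau(k-1),\delta}(y_1,\ldots,y_{N-1}) \bigr)\cS_1(x,y).
\end{equation*}

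Next I would evaluate the coefficients $c_{1,0}^\delta$ and $c_{1,1}^\delta$ using the definition~\eqref{cS}. For $\ell = 1$ we have a single argument $e_\delta(0) = 1$, so $c_{1,0}^\delta = S_0(1) = 1$ (the empty elementary symmetric function) and $c_{1,1}^\delta = S_1(1) = 1$. Substituting these values and taking $\tau = +$ and $\tau = -$ respectively yields the two identities in~\eqref{cor2} simultaneously.

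There is no real obstacle here; the content lies entirely in Theorem \ref{HypKernelIdProp}, and Corollary 2 merely extracts the $\ell = 1$ slice. The only thing worth double-checking is the convention $A_{0,\delta} \equiv 1$ stated in the theorem, which ensures that the term $A_{\mp(k-1),\delta}(y_1,\ldots,y_{N-1})$ reduces to the identity operator when $k = 1$, so the stated identity is consistent in that edge case.
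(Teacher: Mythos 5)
Your proposal is correct and coincides with the paper's own argument: the corollary is obtained by setting $\ell=1$ in Theorem \ref{HypKernelIdProp} and observing from \eqref{cS} that $c_{1,0}^\delta=c_{1,1}^\delta=1$. The check of the convention $A_{0,\delta}\equiv 1$ for the $k=1$ case is a sensible extra remark but involves nothing beyond what the theorem already states.
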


Finally, we choose $\ell=N$, recalling $\cS_N=1$.
\begin{corollary}
The following functional identities hold true:
\beq\label{cor3}
\sum_{\substack{I\subset\lbrace 1,\ldots,N\rbrace\\ |I|=k}}\prod_{\substack{m\in I\\ n\notin I}}\frac{s_\delta(x_m-x_n\pm\rho)}{s_\delta(x_m-x_n)}=c_{N,k}^{\de}.
\eeq
\end{corollary}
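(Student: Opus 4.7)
The plan is to specialize Theorem \ref{HypKernelIdProp} at $\ell = N$ and observe that both sides collapse to closed-form expressions, with no analysis beyond bookkeeping required.

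First, I would note that with $\ell = N$ we have $N - \ell = 0$, so the right-hand side of \eqref{hyperbolicKernelIds} involves the A$\Delta$Os $A_{-\tau(k-j),\delta}$ acting on the empty tuple of $y$-variables. By the conventions stated just after \eqref{hyperbolicKernelIds}, $A_{\pm m,\delta}$ on zero variables vanishes for $m \ge 1$ and equals $1$ for $m = 0$. Hence only the term $j = k$ survives, and the right-hand side of \eqref{hyperbolicKernelIds} reduces to $c_{N,k}^{\delta} \cdot \cS_N(x,y) = c_{N,k}^\delta$.

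Next, I would compute the left-hand side directly. Since $\cS_N \equiv 1$, the shift operators $\exp(\mp i a_{-\delta}\partial_{x_m})$ inside $A_{\tau k,\delta}(x)$ act trivially, and using the explicit form \eqref{MeromorphEllipticOps} together with $f_{\delta,\mp}(z)^2 = s_\delta(z\mp\rho)/s_\delta(z)$ gives
\beq
A_{\tau k,\delta}(x)\cdot 1 = \sum_{\substack{I\subset\{1,\ldots,N\}\\ |I|=k}}\prod_{\substack{m\in I\\ n\notin I}}\frac{s_\delta(x_m-x_n-\tau\rho)}{s_\delta(x_m-x_n)}.
\eeq
Equating this with $c_{N,k}^\delta$ and letting $\tau$ range over $\{+,-\}$ yields the $\pm$ identities of \eqref{cor3}.

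There is no real obstacle here beyond verifying that the two conventions $A_{0,\delta} \equiv 1$ and $A_{\pm m,\delta}(\emptyset) \equiv 0$ for $m \ge 1$ correctly truncate the sum on the right of \eqref{hyperbolicKernelIds} to the single term $j = k$. As a consistency check, one may note that the identity for $\tau = -$ can also be obtained from that for $\tau = +$ via $\rho \to -\rho$, which is consistent with the evenness of $c_{N,k}^\delta$ in $\rho$ observed right after \eqref{cS}.
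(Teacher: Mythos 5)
Your proposal is correct and is essentially the paper's own argument: the paper obtains \eqref{cor3} precisely by setting $\ell=N$ in Theorem~\ref{HypKernelIdProp}, noting $\cS_N\equiv 1$ so that the shifts act trivially and only the $j=k$ term $c_{N,k}^{\de}$ survives on the right, with the two signs coming from $\tau=\pm$ (equivalently $\rho\to-\rho$, consistent with evenness of $c_{N,k}^{\de}$ in $\rho$).
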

These identities were obtained before in the proof of Lemma~A.5 in~\cite{Rui95}.

\subsection{The periodic Toda case}\label{Sec23}
As explained in~\cite{Rui94}, the periodic Toda \adiffops~\eqref{TodaOps} can be obtained as limits of the elliptic \adiffops~\eqref{qSk}. In this section we shall in particular recover this result as a corollary of somewhat more general limit formulas, detailed in~Lemma~2.5. More precisely, these formulas are primarily derived to obtain Toda kernel functions, but they can also be used to show that the $4N$ commuting A$\De$Os $H_{\pm k,\de}(x)$ given by~\eqref{Hkp}--\eqref{Hextra} give rise to $4N$ commuting periodic Toda counterparts, denoted by the same symbols. As it shall transpire, however, the A$\De$Os $A_{\pm k,\de}$~\eqref{MeromorphEllipticOps} have no sensible limits (the $N=2$ case being a curious exception). 

This makes it all the more surprising that  the elliptic functional identities \eqref{EllipticFunctionalEqs} corresponding to the relation between $\cS(x,y)$ and $A_{l,\de}$  (as expressed in~\eqref{AS}) do have Toda limits. Once this limit is obtained, we can easily  identify kernel functions $S^{\pm}(x,y)$ for the periodic Toda system. But here  these functions correspond to A$\De$Os that are the periodic Toda counterparts of the elliptic  operators $\cA^{\pm}_{l,\de}$ given by~\eqref{cAH}. Indeed, as in the elliptic case, they are the similarity transforms of the Toda A$\De$Os $H_{l,\de}$ with a function $U(x)$.  For a suitable choice of parameters this Toda $U$-function is unitary. Moreover, this parameter choice entails that all of the Toda A$\De$Os $H_{l,\de}$ and   $\cA^{\pm}_{l,\de}$ are formally self-adjoint.

Turning to the details, our starting point consists in making the  substitutions
\begin{equation}\label{subxy}
	x_n\to x_n - \frac{n\pi}{Nr}
	,\quad y_n\to y_n -  \frac{n\pi}{Nr}
	,\quad n=1,\ldots,N,
\end{equation}
and
\begin{equation}\label{submu}
	\rho\to\rho +  \frac{\pi}{Nr},
\end{equation}
in the elliptic quantities occurring in~Subsection~2.1. Next we consider the limit $r\to 0$. (Recall the real elliptic period $2\omega$ is parametrized as $\pi/r$.) It seems intractable to control this limit  for the quantities expressed in elliptic gamma functions, and in fact it appears likely that none of them can be renormalized so that this limit exists. Rather, we concentrate on the  functional equations~\eqref{EllipticFunctionalEqs}, which only involve the functions~$s_{\pm }(z)$. 

We first note that the product representation \eqref{sProductRep} for the function $s_{\de}(z)$
 contains an exponential factor $e_\delta(-rz^2/\pi)$. Rather than directly performing the substitutions \eqref{subxy} and~\eqref{submu} in~\eqref{EllipticFunctionalEqs},  we may and shall eliminate these factors  from the start. The point is that the identity \eqref{EllipticFunctionalEqs} still holds true if we switch from $s_\delta$ to the function
\begin{equation}\label{tildesProductRep}
\begin{split}
	\tilde{s}_\delta(z) &:= e_\delta(rz^2/\pi)s_{\de}(z)\\ &= \frac{a_\delta}{\pi}\sinh(\pi z/a_\delta)\prod_{l=1}^\infty\frac{\big(1-e_\delta(2z-2\pi l/r)\big)\big(z\to -z\big)}{\big(1-e_\delta(-2\pi l/r)\big)^2}.
\end{split}
\end{equation}
Indeed, a straightforward computation shows that the exponential factors combine to yield the same overall factor in the left-hand and right-hand side.

Accordingly, we substitute \eqref{subxy} and~\eqref{submu} in \eqref{EllipticFunctionalEqs} with $s_\delta$ replaced by $\tilde{s}_\delta$, and proceed to study the asymptotic behavior as $r\to 0$ of the resulting identity. To this end we focus on the factors $\tilde{s}_\delta(y_m-x_n-\rho)/\tilde{s}_\delta(y_m-x_n)$. Their asymptotics is given by the following lemma, which involves an auxiliary Toda building block
\beq\label{ttoda}
t_{\de}(z)=1-e_{\de}(2z+2\rho).
\eeq

\begin{lemma}\label{tildesLemma}
Let $m,n=1,\ldots,N$, and $m\neq n$. Then we have, as $r\to 0$,
\begin{multline}\label{tildesAsymptotics}
	\frac{\tilde{s}_\delta\left(y_m-x_n-\rho+\frac{\pi}{Nr}(n-m-1)\right)}{\tilde{s}_\delta\left(y_m-x_n+\frac{\pi}{Nr}(n-m)\right)}\\ \sim e_\delta\left(\frac{m-n}{|m-n|}\left(\rho+\frac{\pi}{Nr}\right)\right)\left\lbrace \begin{array}{ll}t_{\de}(x_{m+1}-y_m), & n=m+1~({\rm mod}~N),\\ 1, &{ \rm otherwise}.\end{array}\right.
\end{multline}
Moreover,
\begin{equation}\label{mmasym}
	\frac{\tilde{s}_\delta\left(y_m-x_m-\rho-\frac{\pi}{Nr}\right)}{\tilde{s}_\delta(y_m-x_m)}\sim \frac{e_\delta\left(\rho+\frac{\pi}{Nr}\right)}{t_{\delta}(y_m-x_m-\rho)},\quad r\to 0.
\end{equation}
\end{lemma}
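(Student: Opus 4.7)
The plan is to derive a master asymptotic formula for $\tilde{s}_\delta(w + \pi k/(Nr))$ as $r\to 0$, valid for fixed $w$ and integer $k$ with $|k|\le N$, and then specialize. Setting $q \equiv e_\delta(-2\pi/r)$, which tends to $0$ exponentially, the substitution $z = w + \pi k/(Nr)$ in the product representation \eqref{tildesProductRep} turns the factors into $(1-e_\delta(\pm 2w)\,q^{l\mp k/N})$. For $|k|<N$ the exponents $l \mp k/N$ are strictly positive for every $l\ge 1$, so a geometric-series bound on the tail shows the full product tends to $1$ uniformly on compacta in $w$; meanwhile the $\sinh$ prefactor contributes the single dominant exponential $\tfrac{1}{2}\mathrm{sgn}(k)\,e_\delta(\mathrm{sgn}(k)\,w)\,q^{-|k|/(2N)}$ when $k\neq 0$, and the full hyperbolic limit $(a_\delta/\pi)\sinh(\pi w/a_\delta)$ when $k=0$. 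For the boundary value $k=-N$ the $l=1$ factor $(1-e_\delta(-2w))$ survives, giving $\tilde{s}_\delta(w-\pi/r)\sim -\tfrac{a_\delta}{2\pi}\,e_\delta(-w)\,q^{-1/2}\,(1-e_\delta(-2w))$.

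For the first claim, writing $k_n = n-m-1$ and $k_d = n-m$, I would split the proof into four sub-cases of~\eqref{tildesAsymptotics}. (i) If $n=m+1$, then $k_n=0$, $k_d=1$; the numerator takes its hyperbolic limit and the denominator contributes $q^{-1/(2N)}$, and the elementary identity $2\sinh(\pi(y_m-x_{m+1}-\rho)/a_\delta)/e_\delta(y_m-x_{m+1}) = e_\delta(-\rho)\,t_\delta(x_{m+1}-y_m)$ produces the $t_\delta$ factor together with the exponential $e_\delta(-\rho-\pi/(Nr))$. (ii) If $n>m+1$, then $1\le k_n<k_d\le N-1$ with $k_d-k_n=1$; both asymptotics are of the $k\ne 0$ type, the $e_\delta(\pm w)$ terms combine to $e_\delta(-\rho)$ and the $q$-powers to $q^{1/(2N)} = e_\delta(-\pi/(Nr))$. (iii) If $n<m$ with $(m,n)\ne(N,1)$, then $-(N-1)\le k_n<k_d\le -1$; the parallel computation with opposite signs yields $e_\delta(\rho+\pi/(Nr))$. (iv) The periodic boundary sub-case $m=N$, $n=1$ has $k_n=-N$ and $k_d=1-N$; the boundary asymptotic supplies the surviving factor $1-e_\delta(-2(y_N-x_1-\rho)) = t_\delta(x_1-y_N)$, consistent with the convention $x_{N+1}=x_1$, while the combination of $q$-powers and $e_\delta(\pm w)$ terms produces the exponential $e_\delta(\rho+\pi/(Nr))$.

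For the second claim, inserting the $k=-1$ asymptotic in the numerator and the hyperbolic limit in the denominator gives
\begin{equation*}
\frac{-e_\delta(x_m-y_m+\rho)\,e_\delta(\pi/(Nr))}{2\sinh(\pi(y_m-x_m)/a_\delta)} = \frac{e_\delta(\rho+\pi/(Nr))}{1-e_\delta(2(y_m-x_m))} = \frac{e_\delta(\rho+\pi/(Nr))}{t_\delta(y_m-x_m-\rho)},
\end{equation*}
which is precisely~\eqref{mmasym}.

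The main technical obstacle is justifying the interchange of the $r\to 0$ limit with the infinite product and, more subtly, controlling this convergence against the blowing-up $\sinh$ prefactor. This is handled by bounding the tail $\sum_{l\ge 1}|e_\delta(\pm 2w)|\,q^{l-|k|/N}$, which is geometric with ratio $q\to 0$ and contributes an $O(q^{1-|k|/N})$ relative correction, far smaller than the dominant $q^{-|k|/(2N)}$ scale. The one conceptually subtle point is that the $t_\delta$ factor emerges from two entirely different mechanisms---from the hyperbolic $\sinh$ of the numerator in sub-case (i) and in the second claim, and from a surviving infinite-product factor $1-e_\delta(-2w)$ in the periodic boundary sub-case (iv)---yet both mechanisms deliver precisely $t_\delta(x_{m+1}-y_m)$ under the convention $x_{N+1}=x_1$, which is the compatibility needed to extract periodic Toda functional equations in the next subsection.
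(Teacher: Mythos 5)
Your proposal is correct and follows essentially the same route as the paper: both proofs rest on the product representation \eqref{tildesProductRep}, extract the dominant exponential from the $\sinh$ prefactor, show the infinite product tends to $1$ except for the single surviving $l=1$ factor in the wrap-around case $(m,n)=(N,1)$ (your $k=-N$ boundary value), and obtain the $t_\delta$ factor for $n=m+1$ from the $\sinh$ itself. Your "master formula" for $\tilde{s}_\delta(w+\pi k/(Nr))$ is just a slightly more systematic packaging of the paper's case-by-case asymptotics, and the algebra in all four sub-cases and in \eqref{mmasym} checks out.
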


\begin{proof}
We shall infer the statement from the product representation \eqref{tildesProductRep} for the function~$\tilde{s}_\delta$. Dealing first with the sinh-prefactor, we readily find
\begin{multline}\label{shlim}
	\frac{\sinh\frac{\pi}{a_\delta}\left(y_m-x_n-\rho+\frac{\pi}{Nr}(n-m-1)\right)}{\sinh\frac{\pi}{a_\delta}\left(y_m-x_n+\frac{\pi}{Nr}(n-m)\right)}\\ \sim e_\delta\left(\frac{m-n}{|m-n|}\left(\rho+\frac{\pi}{Nr}\right)\right)\left\lbrace\begin{array}{ll} t_{\de}(x_{m+1}-y_m), & n=m+1,\\ 1, & \text{otherwise},\end{array}\right.
\end{multline}
as $r\to 0$.
Next, we consider the terms arising from the infinite product in \eqref{tildesProductRep}. Since $|n-m|<N$ and $l\geq 1$, we have
\begin{equation}
	\lim_{r\to 0}\frac{1-e_\delta\left(2(y_m-x_n-\rho)+\frac{2\pi}{r}\left(\frac{n-m-1}{N}-l\right)\right)}{1-e_\delta\left(2(y_m-x_n)+\frac{2\pi}{r}\left(\frac{n-m}{N}-l\right)\right)} = 1, 
\end{equation}
and
\begin{multline}
	\lim_{r\to 0}\frac{1-e_\delta\left(-2(y_m-x_n-\rho)+\frac{2\pi}{r}\left(-\frac{n-m-1}{N}-l\right)\right)}{1-e_\delta\left(-2(y_m-x_n)+\frac{2\pi}{r}\left(-\frac{n-m}{N}-l\right)\right)}\\ = \left\lbrace\begin{array}{ll} t_{\de}(x_1-y_N), & m=N, n=1, l=1,\\ 1, & \text{otherwise}.\end{array}\right.
\end{multline}
Putting the pieces together, we arrive at the statement for $m\neq n$. The remaining case $m=n$ now follows easily, noting all factors in the infinite product in~\eqref{tildesProductRep} then converge to one.
\end{proof}

It is clear from \eqref{tildesProductRep} that $\tilde{s}_\delta(z)$ is an odd function. As a corollary of~\eqref{tildesAsymptotics}, we thus have 
\begin{multline}\label{asym}
	\frac{\tilde{s}_\delta\left(x_m-y_n+\rho+\frac{\pi}{Nr}(n-m+1)\right)}{\tilde{s}_\delta\left(x_m-y_n+\frac{\pi}{Nr}(n-m)\right)}\\ \sim e_\delta\left(\frac{n-m}{|n-m|}\left(\rho+\frac{\pi}{Nr}\right)\right)\left\lbrace \begin{array}{ll}t_{\de}(x_m-y_{m-1}), & n=m-1~(\text{mod}~N),\\ 1, & \text{otherwise}.\end{array}\right.
\end{multline}
In addition, if we replace $y_m $ by~$x_m$ in~ \eqref{tildesAsymptotics}, and $x_m$ by~$y_m$ in~\eqref{asym}, then we clearly obtain the contribution to the asymptotics due to the remaining factors in~\eqref{EllipticFunctionalEqs}. 

With these asymptotic formulas at our disposal, we are in the position to obtain the following Toda counterpart of the elliptic functional identities \eqref{EllipticFunctionalEqs}.

\begin{lemma}\label{TFE}
For $k=1,\ldots,N$, we have
\begin{multline}\label{TodaFunctionalEqs}
	\sum_{\substack{I\subset\lbrace 1,\ldots,N\rbrace\\ |I|=k}}\prod_{\substack{m\in I\\ m+1\notin I}}t_{\de}(x_{m+1}-x_m)\prod_{m\in I}\frac{t_{\de}(x_m-y_{m-1})}{t_{\de}(y_m-x_m-\rho)}\\ = \sum_{\substack{I\subset\lbrace 1,\ldots,N\rbrace\\ |I|=k}}\prod_{\substack{m\in I\\ m-1\notin I}}t_{\de}(y_m-y_{m-1})\prod_{m\in I}\frac{t_{\de}(x_{m+1}-y_m)}{t_{\de}(y_m-x_m-\rho)},
\end{multline}
with $t_{\de}(z)$ given by~\eqref{ttoda}.
\end{lemma}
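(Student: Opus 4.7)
The plan is to derive \eqref{TodaFunctionalEqs} by specializing the elliptic identities \eqref{EllipticFunctionalEqs} via the shifts \eqref{subxy}--\eqref{submu} and then passing to the limit $r\to 0$. As already noted before the lemma, on either side of \eqref{EllipticFunctionalEqs} the quadratic exponential prefactors $e_\delta(rz^2/\pi)$ combine to the same overall factor, so we may replace $s_\delta$ by $\tilde{s}_\delta$ throughout. After performing the substitutions, every ratio appearing in the three products falls under one of the asymptotic regimes of Lemma \ref{tildesLemma}: the formula \eqref{tildesAsymptotics} (for $y$-$x$ and $x$-$x$ ratios with $m\neq n$), its oddness consequence \eqref{asym} (for $x$-$y$ and $y$-$y$ ratios with $m\neq n$), or the diagonal formula \eqref{mmasym} (for $n=m$ in the cross products, after applying oddness on the LHS). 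Hence every ratio is asymptotic, as $r\to 0$, to a diverging exponential $e_\delta(\epsilon(\rho+\pi/(Nr)))$ with $\epsilon\in\{-1,0,+1\}$ times a finite factor that is either $1$ or one of the Toda building blocks $t_\delta(x_{m+1}-x_m)$, $t_\delta(y_m-y_{m-1})$, $t_\delta(x_m-y_{m-1})$, $t_\delta(x_{m+1}-y_m)$, or $1/t_\delta(y_m-x_m-\rho)$.

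The second step is to read off, for each summand indexed by $I$, which finite factors survive. Inspecting the three products on the LHS of \eqref{EllipticFunctionalEqs}, one finds that the $x$-$x$ product contributes $t_\delta(x_{m+1}-x_m)$ precisely when $m\in I$ and $m+1\notin I$; the off-diagonal term $n\equiv m-1\pmod N$ in the $x$-$y$ product contributes $t_\delta(x_m-y_{m-1})$ for every $m\in I$; and the diagonal term $n=m$ contributes $1/t_\delta(y_m-x_m-\rho)$ for every $m\in I$. These are exactly the finite factors on the LHS of \eqref{TodaFunctionalEqs}. The parallel analysis of the RHS produces the corresponding $t_\delta(y_m-y_{m-1})$ (for $m\in I$, $m-1\notin I$), $t_\delta(x_{m+1}-y_m)$ (for $m\in I$), and $1/t_\delta(y_m-x_m-\rho)$ (for $m\in I$) factors.

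The heart of the argument is then to show that the diverging exponentials assemble, for each summand, into one and the same constant on both sides of the identity. For a term on the LHS labelled by $I$ with $|I|=k$, the total exponent of $\rho+\pi/(Nr)$ is
\begin{equation*}
E(I) \;=\; \sum_{\substack{m\in I \\ n\notin I}}\mathrm{sign}(m-n) \;+\; \sum_{\substack{m\in I \\ n\neq m}}\mathrm{sign}(n-m) \;+\; k,
\end{equation*}
where the final $k$ comes from the $n=m$ contributions dictated by \eqref{mmasym}. Splitting the middle sum according to whether $n\in I$ or $n\notin I$, the $n\in I$ part vanishes by antisymmetry of $\mathrm{sign}$ on $I\times I$, so it equals $-\sum_{m\in I,\, n\notin I}\mathrm{sign}(m-n)$; hence $E(I)=k$. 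The analogous computation on the RHS yields the same value $k$, and the common divergent factor $e_\delta(k(\rho+\pi/(Nr)))$ therefore cancels between the two sides, leaving exactly \eqref{TodaFunctionalEqs}.

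The main obstacle is purely combinatorial bookkeeping: one must track both the modular conditions $n\equiv m\pm 1\pmod N$ that govern which $t_\delta$ factors survive and the signs of the diverging exponentials, with some care at the boundary indices $m=1$ and $m=N$ to respect the periodic convention. The cancellation of the diverging factors via $E(I)=k$ is the only non-routine ingredient; once it is in place, the Toda identity drops out by a direct comparison of the surviving finite factors.
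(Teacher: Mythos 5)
Your proposal is correct and follows essentially the same route as the paper: substitute \eqref{subxy}--\eqref{submu} into \eqref{EllipticFunctionalEqs} with $s_\delta\to\tilde{s}_\delta$, apply the asymptotics \eqref{tildesAsymptotics}, \eqref{asym} and \eqref{mmasym} factor by factor, and cancel a common divergent factor $e_\delta(k(\rho+\pi/Nr))$ before taking $r\to 0$. The only (harmless) organizational difference is that the paper cancels the divergent exponentials pairwise inside the combinations $P_{mn}$ and $T_{mn}$, leaving just the diagonal contribution $e_\delta(\rho+\pi/Nr)^k$ to renormalize, whereas you verify the same net exponent $E(I)=k$ by a global sign-sum over all index pairs.
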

\begin{proof}
Just as for~\eqref{EllipticFunctionalEqs}, the special case~$k=N$ of~\eqref{TodaFunctionalEqs} is obvious. Fixing~$k<N$, we substitute \eqref{subxy} and~\eqref{submu} in the identity \eqref{EllipticFunctionalEqs} with $s_{\de}\to\tilde{s}_{\de}$, and then exploit the above asymptotic formulas in the following way. We focus on a term in the sum on the left associated with a fixed subset~$I$.  First, consider a pair of indices $m\in I$ and $n\notin I$. This gives rise to a product of ratios
\beq
P_{mn}=\frac{\tilde{s}_{\de}(x_m-x_n-\rho)}{\tilde{s}_{\de}(x_m-x_n)}\frac{\tilde{s}_{\de}(x_m-y_n+\rho)}{\tilde{s}_{\de}(x_m-y_n)}.
\eeq
With the substitutions in place, we can now use~\eqref{tildesAsymptotics} for the first and~\eqref{asym} for the second ratio to deduce
\beq
\lim_{r\to 0}P_{mn}= 
\left\lbrace \begin{array}{ll}t_{\de}(x_{m+1}-x_m), & n=m+1~(\text{mod}~N),\\ t_{\de}(x_{m}-y_{m-1}), & n=m-1~(\text{mod}~N),\\1, & \text{otherwise}.\end{array}\right.
\eeq

The remaining pairs of indices $m,n\in I$ yield the product
\beq
\prod_{m,n\in I}\frac{\tilde{s}_\delta(x_m-y_n+\rho)}{\tilde{s}_\delta(x_m-y_n)}=\prod_{m\in I}T_m\prod_{\substack{m,n\in I\\ m>n}}T_{mn},
\eeq
where we have introduced the ratios
\beq
T_m =\frac{\tilde{s}_{\de}(x_m-y_m+\rho)}{\tilde{s}_{\de}(x_m-y_m)},
\eeq
and the product of ratios
\beq
T_{mn} = \frac{\tilde{s}_{\de}(x_m-y_n+\rho)}{\tilde{s}_{\de}(x_m-y_n)}\frac{\tilde{s}_{\de}(x_n-y_m+\rho)}{\tilde{s}_{\de}(x_n-y_m)}.
\eeq
In view of~\eqref{mmasym} we have
\beq
\lim_{r\to 0}e_\delta\left(-\rho-\frac{\pi}{Nr}\right)T_m=1/t_{\delta}(y_m-x_m-\rho).
\eeq
Moreover, using oddness of $\tilde{s}_{\de}$ in the second ratio of $T_{mn}$, we can use~\eqref{tildesAsymptotics} for the second and~\eqref{asym} for the first ratio to get
\beq
\lim_{r\to 0}T_{mn}= 
\left\lbrace \begin{array}{ll} t_{\de}(x_{m}-y_{m-1}), & n=m-1,\\ t_{\de}(x_1-y_N), & m=N,n=1,\\1, & \text{otherwise}.\end{array}\right.
\eeq

A moment's thought now shows that when we multiply the left-hand side of  \eqref{EllipticFunctionalEqs} with $s_{\de}\to\tilde{s}_{\de}$ by the renormalizing factor
\beq
e_\delta\left(-\rho-\frac{\pi}{Nr}\right)^k,
\eeq
then its $r\to 0$ limit yields the left-hand side of~\eqref{TodaFunctionalEqs}. Proceeding in the same way for the right-hand side of  \eqref{EllipticFunctionalEqs} with $s_{\de}\to\tilde{s}_{\de}$,  
we then deduce~\eqref{TodaFunctionalEqs}.
\end{proof}

At this point we invoke the modified hyperbolic gamma functions $G_R$ and $G_L$, cf.~Appendix \ref{gammaFuncsAppendix}. Indeed, the identities \eqref{TodaFunctionalEqs}, combined with the difference equations \eqref{GRDE} and~\eqref{GLDE} satisfied by $G_R$ and $G_L$, resp., can now be used to obtain kernel functions for the periodic Toda system. Specifically, it is readily deduced from these formulas that we have
\begin{equation}\label{idAp}
	\prod_{m\in I}\frac{t_{\delta}(x_m-y_{m-1})}{t_{\delta}(y_m-x_m-\rho)} = A^{\pm}(x,y)^{-1}\prod_{m\in I}\exp(\mp ia_{-\delta}\partial_{x_m})A^{\pm}(x,y),
\end{equation}
and 
\begin{equation}\label{idAm}
	\prod_{m\in I}\frac{t_{\delta}(x_{m+1}-y_m)}{t_{\delta}(y_m-x_m-\rho)} = A^{\pm}(x,y)^{-1}\prod_{m\in I}\exp(\pm ia_{-\delta}\partial_{y_m})A^{\pm}(x,y),
\end{equation}
where we have introduced the auxiliary functions
\begin{equation}\label{Auxp}
	A^{+}(x,y) = \prod_{m=1}^N\frac{G_R(y_{m}-x_{m+1}-ia-\rho)}{G_L(y_m-x_m-ia)},
\end{equation}
\begin{equation}\label{Auxm}
	A^{-}(x,y) = \prod_{m=1}^N\frac{G_L(y_{m}-x_m+ia)}{G_R(y_{m}-x_{m+1}+ia-\rho)}.
\end{equation}
These functions are basically the kernel functions we need. To detail this, we should first introduce various additional Toda quantities.

To begin with, we have thus far retained the coupling parameter~$\rho$ of the elliptic and hyperbolic regimes, since this yields the simplest auxiliary quantities. At this stage, however, we need to switch to parameters that are more appropriate for the Toda regimes. First, we introduce four Toda  interaction functions
\beq\label{Tfn}
T^{\pm}_{\de}(z)=1+e_{\de}(2z\pm ia_{-\de}+2\eta),\ \ \ \de=+,-,\ \ \ \eta\in\R.
\eeq
They are obtained from the four functions~$t_{\de}(z)$ and $t_{\de}(z-ia_{-\de})$ when $\rho$ is replaced by $ ia +\eta$.
(The function~$T^{+}_{+}(z)$ amounts to the interaction function~\eqref{qfT} of the Introduction, cf.~\eqref{aa}.)
The real parameter~$\eta$ plays the role of coupling constant.  We also introduce the Toda $U$-function
\beq\label{TU}
U(x)=\prod_{m=1}^N\frac{1}{G_L(x_{m+1}-x_m+\eta)}.
\eeq
It has the unitarity property
\beq\label{Uun}
|U(x)|=1,\ \ \ \ x\in\R^N,\ \ \ \eta\in\R,
\eeq
and it is invariant under cyclic permutations of $x_1,\ldots,x_N$.

Next, we define a set of $4N$ formally self-adjoint \adiffops
\beq\label{cApp}
	\cA^{+}_{k,\delta}(x) = \sum_{|I|=k}\prod_{\substack{m\in I\\ m+1\notin I}}T^{+}_{\de}(x_{m+1}-x_m)\prod_{m\in I}\exp(-ia_{-\delta}\partial_{x_m}),
	\eeq
\beq\label{cApm}
	\cA^{+}_{-k,\delta}(x) = \sum_{|I|=k}\prod_{\substack{m\in I\\ m-1\notin I}}T^{+}_{\delta}(x_m-x_{m- 1})\prod_{m\in I}\exp(ia_{-\delta}\partial_{x_m}),
\eeq
and a second set of $4N$ formally self-adjoint \adiffops
\beq\label{cAmp}
	\cA^{-}_{k,\delta}(x) = \sum_{|I|=k}\prod_{\substack{m\in I\\ m-1\notin I}}T^{-}_{\de}(x_{m}-x_{m-1})\prod_{m\in I}\exp(-ia_{-\delta}\partial_{x_m}),
	\eeq
\beq\label{cAmm}
	\cA^{-}_{-k,\delta}(x) = \sum_{|I|=k}\prod_{\substack{m\in I\\ m+1\notin I}}T^{-}_{\delta}(x_{m+1}-x_{m})\prod_{m\in I}\exp(ia_{-\delta}\partial_{x_m}).
\eeq
It is easy to verify that these definitions
  entail the relations
\beq\label{cAN}
\cA^{\tau}_{-N,\de}(x)=\cA^{\tau}_{N,\de}(x)^{-1}=\prod_{m=1}^N\exp(ia_{-\delta}\partial_{x_m}),
\eeq
\beq\label{cArel}
\cA^{\tau}_{-k,\de}(x)=\cA^{\tau}_{N-k,\de}(x)\cA^{\tau}_{-N,\de}(x),\ \ \ k=1,\ldots,N-1,
\eeq
where $\tau,\de=+,-$. Moreover, from the difference equations~\eqref{GLDE} obeyed by~$G_L$ we obtain
\beq\label{cAU}
\cA^{+}_{l,\de}(x) =U(x)\cA^{-}_{l,\de}(x)U(x)^{-1},\ \ \ \ \pm l= 1,\ldots, N,\ \ \ \de=+,-.
\eeq

Finally, we introduce a third set of $4N$ formally self-adjoint \adiffops, namely,
\beq\label{TH}
H_{l,\de}(x)=U(x)^{1/2}\cA^{-}_{l,\de}(x)U(x)^{-1/2}=U(x)^{-1/2}\cA^{+}_{l,\de}(x)U(x)^{1/2},\ \ \pm l= 1,\ldots, N,\ \ \de=+,-.
\eeq
These operators are the quantum counterparts of the classical Hamiltonians given by~\eqref{Hamiltonians}, \eqref{VIT} and ~\eqref{fT}, with modular invariance and formal self-adjointness taken into account. Explicitly, letting $k=1,\ldots,N,$ and $\de=+,-$, they read
\beq\label{HTp}
H_{k,\delta}(x) = \sum_{|I|=k}\prod_{\substack{m\in I\\ m+1\notin I}}T^{+}_{\de}(x_{m+1}-x_m)^{1/2}\prod_{m\in I}e^{-ia_{-\delta}\partial_{x_m}}\prod_{\substack{m\in I\\ m-1\notin I}}T^{+}_{\de}(x_{m}-x_{m-1})^{1/2},
	\eeq
\beq\label{HTm}
	H_{-k,\delta}(x) = \sum_{|I|=k}\prod_{\substack{m\in I\\ m+1\notin I}}T^{-}_{\delta}(x_{m+1}-x_{m})^{1/2}\prod_{m\in I}e^{ia_{-\delta}\partial_{x_m}}\prod_{\substack{m\in I\\ m-1\notin I}}T^{-}_{\de}(x_{m}-x_{m-1})^{1/2},
\eeq
as is readily checked by using~\eqref{GLDE} once again. It is also clear that they satisfy~\eqref{cAN} and \eqref{cArel} with $\cA^{\tau}$ replaced by $H$. Furthermore, like the Toda A$\De$Os $\cA^{\tau}_{l,\de}$, they are invariant under cyclic permutations.

Last but not least, the $4N$ operators $H_{l,\de}$ mutually commute, so that the sets of operators $\{ \cA^{+}_{l,\de}\}$ and $\{ \cA^{-}_{l,\de}\}$ consist of mutually commuting operators as well. As explained in the Introduction, this assertion follows from~\cite{Rui90}, where a direct proof of commutativity can be found. 

On the other hand, taking commutativity of the elliptic Hamiltonians $H_{l,\de}$ in~Subsection~2.1 for granted, the commutativity of the Toda Hamiltonians $H_{l,\de}$ also follows from the latter being limits of the former.
A quick way to check these limits within the present context is as follows. First, push the third product in~\eqref{Hkp} through the shifts. Then replace $s_{\de}$ by $\tilde{s}_{\de}$, which amounts to a multiplicative renormalization. Now substitute~\eqref{subxy} and~\eqref{submu} and use \eqref{tildesAsymptotics} and \eqref{asym} (with $y$ replaced by $x$) to see that the $r\to 0$ limit yields \eqref{HTp}.

The reader who has verified these steps will easily see why this procedure fails for the A$\De$Os $A_{k,\de}(x)$ with $k<N$, unless $N=2$. The point is that  different subsets $I$ in \eqref{MeromorphEllipticOps} give rise to different powers of the factor $e_{\de}(\pi/Nr)$, so that  no nontrivial $r\to 0$ limit can be obtained by a multiplicative renormalization.

Possibly,  the substitutions \eqref{subxy} and \eqref{submu}  in the elliptic $\cA^{\tau}_{l,\de}$, along with a suitable renormalization and similarity transformation, yield A$\De$Os that converge to  the periodic Toda $\cA^{\tau}_{l,\de}$ for $r\to 0$. At any rate, the analogous substitutions in the hyperbolic $\cA^{\tau}_{l,\de}$ do yield the nonperiodic Toda $\cA^{\tau}_{l,\de}$ as limits, cf.~the end of the next subsection. We have anticipated this state of affairs in the notation we have adopted above. 

We are now prepared for the main result of this subsection.

\begin{theorem}\label{ToTheor}
Let $l\in\{ \pm 1,\ldots,\pm N\}$, and $\de\in\{ +,-\}$. We have kernel function identities 
\begin{equation}\label{periodicTidp}
	\left(\cA^{+}_{l,\delta}(x) - \cA^{+}_{-l,\delta}(y)\right)S^{+}(x,y) = 0,
\end{equation}
\begin{equation}\label{periodicTidm}
	\left(\cA^{-}_{l,\delta}(x) - \cA^{-}_{-l,\delta}(y)\right)S^{-}(x,y) = 0,
\end{equation}
\begin{equation}\label{periodicTHidp}
	\left(H_{l,\delta}(x) - H_{-l,\delta}(y)\right)U(x)^{-1/2}U(y)^{-1/2}S^{+}(x,y) = 0,
\end{equation}
\begin{equation}\label{periodicTHidm}
	\left(H_{l,\delta}(x) - H_{-l,\delta}(y)\right)U(x)^{1/2}U(y)^{1/2}S^{-}(x,y) = 0,
\end{equation}
where
\beq\label{Sp}
S^{+}(x,y)= \prod_{m=1}^N\frac{G_R(y_{m}-x_{m+1}-ia/2-\eta/2)}{G_L(y_m-x_m+ia/2+\eta/2)},
\eeq
\beq\label{Sm}
S^{-}(x,y)= \prod_{m=1}^N\frac{G_L(y_{m}-x_m-ia/2+\eta/2)}{G_R(y_{m}-x_{m+1}+ia/2-\eta/2)},
\eeq
and $U(x)$ is given by \eqref{TU}. Furthermore, the identities \eqref{periodicTidp}--\eqref{periodicTHidm} still hold when the functions $S^{\pm}(x,y)$ are replaced by $S^{\pm}(y,x)$ or by $S^{\pm}(\sigma(x),y)$, with $\sigma$ any cyclic permutation.
\end{theorem}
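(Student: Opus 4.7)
The plan is to deduce the kernel identities from the Toda functional equations (Lemma~\ref{TFE}) by converting them into operator identities via~\eqref{idAp}--\eqref{idAm}. I will establish~\eqref{periodicTidp} for $l=k\in\{1,\ldots,N\}$ first; the remaining cases will follow by algebraic manipulation and symmetry.

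In Lemma~\ref{TFE} I substitute $\rho=ia+\eta$, which (using $e_\delta(ia_\delta)=-1$) converts $t_\delta(z)=1-e_\delta(2z+2\rho)$ into $T^{+}_\delta(z)$ and makes the denominator factors $t_\delta(y_m-x_m-\rho)$ equal to $1-e_\delta(2y_m-2x_m)$. The resulting identity has the same formal structure as~\eqref{TodaFunctionalEqs} but with $T^{+}_\delta$ in place of $t_\delta$. I then invoke~\eqref{idAp} and~\eqref{idAm} with the upper ($A^+$) sign at $\rho=ia+\eta$, reinterpreting the $T^{+}_\delta$-ratios on each side as conjugations of commuting shift products by $A^+(x,y)|_{\rho=ia+\eta}$. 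Multiplying the functional identity through by this auxiliary function, the LHS becomes $\cA^{+}_{k,\delta}(x)\,A^+(x,y)|_{\rho=ia+\eta}$ -- the coefficient structure $T^{+}_\delta(x_{m+1}-x_m)$ with $m+1\notin I$ and $-ia_{-\delta}$ shifts in $x$ matches~\eqref{cApp} exactly -- and similarly the RHS becomes $\cA^{+}_{-k,\delta}(y)\,A^+(x,y)|_{\rho=ia+\eta}$.

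To pass from $A^+|_{\rho=ia+\eta}$ to $S^+$, comparing~\eqref{Auxp} with~\eqref{Sp} shows that the arguments of $G_R$ and $G_L$ in the former differ from those in the latter by the same constant $-(3ia/2+\eta/2)$, so translating every $y_m$ by $3ia/2+\eta/2$ converts one into the other. Since $\cA^+_{k,\delta}(x)$ commutes with overall $y$-translations, while $\cA^+_{-k,\delta}(y)$ depends on $y$ only through the cyclic differences $y_m-y_{m-1}$ and through translation-invariant shifts, the kernel identity is preserved under this translation, establishing~\eqref{periodicTidp} for $l>0$. Extension to $l<0$ is algebraic: applying~\eqref{cAN}--\eqref{cArel} to write $\cA^+_{-k,\delta}(x)=\cA^+_{N-k,\delta}(x)\cA^+_{-N,\delta}(x)$, and combining the $l=N$ identity (which implies $\cA^+_{-N,\delta}(x)S^+=\cA^+_{N,\delta}(y)S^+$) with the $l=N-k$ identity and commutativity of the $\cA^+_{l,\delta}(y)$, one arrives at $\cA^+_{-k,\delta}(x)S^+=\cA^+_{k,\delta}(y)S^+$.

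The proof of~\eqref{periodicTidm} is entirely analogous with the substitution $\rho=-ia+\eta$ (yielding $T^{-}_\delta$) paired with the $A^-$-version of~\eqref{idAp}--\eqref{idAm}; of the four possible combinations of $\rho$-substitution and $A^\pm$-choice, only these two pairings produce coefficient structures and shift directions that simultaneously match one of the operator pairs $(\cA^+_{k,\delta}(x),\cA^+_{-k,\delta}(y))$ or $(\cA^-_{-k,\delta}(x),\cA^-_{k,\delta}(y))$, which is the main bookkeeping obstacle. The identities~\eqref{periodicTHidp}--\eqref{periodicTHidm} follow from~\eqref{periodicTidp}--\eqref{periodicTidm} by conjugating with $U(x)^{\mp 1/2}U(y)^{\mp 1/2}$ and invoking~\eqref{TH}, using that $U(x)$ commutes with operators acting on $y$ alone. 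Finally, swapping $x\leftrightarrow y$ in an already-established identity and relabeling $l\to -l$ gives the identity for $S^\pm(y,x)$, while applying a cyclic permutation $\sigma$ throughout, combined with the manifest cyclic invariance of the $\cA^\pm_{l,\delta}$, handles $S^\pm(\sigma(x),y)$.
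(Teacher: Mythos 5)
Your proposal is correct and follows essentially the same route as the paper's proof: you convert the Toda functional equations of Lemma~\ref{TFE} into operator identities via~\eqref{idAp}--\eqref{idAm}, specialize $\rho=ia+\eta$ (resp.\ $\rho=-ia+\eta$) and shift $y$ to turn $A^{\pm}$ into $S^{\pm}$, handle the opposite sign of $l$ with \eqref{cAN}--\eqref{cArel}, pass to the $H_{l,\de}$ identities via \eqref{TH}, and settle the $S^{\pm}(y,x)$ and $S^{\pm}(\sigma(x),y)$ cases by the $l\to-l$ swap and cyclic invariance. The only difference -- substituting $\rho$ into the functional equations before rather than after invoking \eqref{idAp}--\eqref{idAm} -- is immaterial.
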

\begin{proof} Combining the functional equations \eqref{TodaFunctionalEqs}
with~\eqref{idAp} and \eqref{idAm}, we obtain
\begin{multline}
\sum_{\substack{I\subset\lbrace 1,\ldots,N\rbrace\\ |I|=k}}\prod_{\substack{m\in I\\ m+1\notin I}}t_{\de}(x_{m+1}-x_m)\prod_{m\in I}\exp(-ia_{-\delta}\partial_{x_m})A^{+}(x,y)\\
=\sum_{\substack{I\subset\lbrace 1,\ldots,N\rbrace\\ |I|=k}}
\prod_{\substack{m\in I\\ m-1\notin I}}t_{\de}(y_m-y_{m-1})
\prod_{m\in I}\exp(ia_{-\delta}\partial_{y_m})A^{+}(x,y).
\end{multline}
If we now replace $\rho$ by $ia+\eta$ and then shift $y_1,\ldots,y_N$ by $3ia/2+\eta/2$, then we obtain \eqref{periodicTidp} for $l=k$.
For $l=-k$, we can use~\eqref{cAN}--\eqref{cArel}, together with the relation
\beq
\cA^{+}_{-N,\de}(x)S^{+}(x,y) =\cA^{+}_{N,\de}(y)S^{+}(x,y),
\eeq
to complete the proof of \eqref{periodicTidp}.

Likewise, from the identities
\begin{multline}
\sum_{\substack{I\subset\lbrace 1,\ldots,N\rbrace\\ |I|=k}}\prod_{\substack{m\in I\\ m+1\notin I}}t_{\de}(x_{m+1}-x_m)\prod_{m\in I}\exp(ia_{-\delta}\partial_{x_m})A^{-}(x,y)\\
=\sum_{\substack{I\subset\lbrace 1,\ldots,N\rbrace\\ |I|=k}}
\prod_{\substack{m\in I\\ m-1\notin I}}t_{\de}(y_m-y_{m-1})
\prod_{m\in I}\exp(-ia_{-\delta}\partial_{y_m})A^{-}(x,y),
\end{multline}
we obtain \eqref{periodicTidm} with $l=-k$ upon replacing $\rho$ by $-ia+\eta$ and shifting $y_1,\ldots,y_N$ by $-3ia/2+\eta/2$. Then  \eqref{periodicTidm} with $l=k$ follows as before.

Recalling \eqref{TH}, we now obtain \eqref{periodicTHidp}--\eqref{periodicTHidm} from \eqref{periodicTidp}--\eqref{periodicTidm}. Also, the last statement follows from invariance of the A$\De$Os and $U(x)$ under cyclic permutations.
\end{proof}

In the Introduction we have already pointed out that from a given kernel function we can obtain an infinity of other ones, cf.~the paragraph containing \eqref{ambig}. The kernel functions in Theorem~\ref{ToTheor}, however, are not related to each other by a coordinate translation or multiplication by a factor~\eqref{ambig}. 

It is worth pointing out that limits of translations lead to elementary kernel functions, due to the simple asymptotics of $G_R(z)$ and $G_L(z)$ for $\Re (z)\to \pm \infty$, cf.~\eqref{GRLAsA} and\eqref{GRLAsB}. To be specific, consider the substitution
\beq\label{subK}
y_m \to y_m -ia/2-\eta/2+\Lambda,\ \ \ m=1,\ldots,N,
\eeq
in $S^{+}(x,y)$. For $\Lambda\to\infty$, the dominant asymptotics is given by
\beq
\exp(2iN\chi)\prod_{m=1}^N\exp\left(\frac{i\pi}{a_{+}a_{-}}(y_m-x_m+\Lambda)^2\right).
\eeq
Thus, if we multiply by
a factor
\beq
\exp(-2iN\chi-i\pi N\Lambda^2/a_{+}a_{-})\exp\left(\frac{2i\pi\Lambda}{a_{+}a_{-}}\sum_{m=1}^N(x_m-y_m)\right),
\eeq
(which is of the form~\eqref{ambig}), then we can take $\Lambda\to\infty$ and conclude that the function
\beq\label{KR}
K_R^{+}(x,y)=\exp\left(\frac{i\pi}{a_{+}a_{-}}\sum_{m=1}^N(x_m-y_m)^2\right),
\eeq
is a kernel function for the A$\De$Os $\cA^{+}_{l,\de}$. Likewise, the left asymptotics yields an elementary kernel function
\beq\label{KL}
K_L^{+}(x,y)=\exp\left(\frac{i\pi}{a_{+}a_{-}}\sum_{m=1}^N(x_m-y_{m-1})^2\right).
\eeq
Notice that the latter results from $K_R^{+}$ by a cyclic permutation.

In the same way we obtain elementary kernel functions
\beq\label{Km}
K_R^{-}(x,y)=\exp\left(\frac{-i\pi}{a_{+}a_{-}}\sum_{m=1}^N(x_m-y_m)^2\right),
K_L^{-}(x,y)=\exp\left(\frac{-i\pi}{a_{+}a_{-}}\sum_{m=1}^N(x_m-y_{m-1})^2\right),
\eeq
for the A$\De$Os $\cA^{-}_{l,\de}$.

\subsection{The nonperiodic  Toda case}\label{Sec24}
In this subsection we first deduce kernel functions for the nonperiodic Toda system by limit transitions from the periodic case. We then take further limits to obtain kernel identities that relate \adiffops~whose number of variables differs by one. Finally, we discuss  the direct limit from the hyperbolic  quantities
to their nonperiodic Toda counterparts, which is quite easily understood for the A$\De$Os $\cA^{\tau}_{l,\de}(x)$ as well.
As a bonus, we obtain the nonperiodic Toda $U$-function as a limit of the hyperbolic one.
  Throughout this subsection we use the same symbols for the nonperiodic quantities as for their periodic counterparts.

We start by observing that if we perform the substitutions
\begin{equation}\label{subpT}
	x_n\to x_n + n\Lambda,\ \ \ n=1,\ldots,N,\quad \eta\to\eta - \Lambda,
\end{equation}
in the \adiffops~$\cA^\tau_{l,\delta}(x)$,  then only the factor $T^{\tau}_{\delta}(x_1-x_N)$ is  affected, and it converges to $1$ in the limit $\Lambda\to\infty$, cf.~\eqref{Tfn}. Since we retain the notation $\cA^\tau_{l,\delta}(x)$ for the \adiffops~thus obtained, the equations \eqref{cApp}--\eqref{cAmm} still hold true, but now with the nonperiodic Toda convention $x_0 = -x_{N+1} = \infty$. Likewise, on account of the $G_L$-asymptotics \eqref{GRLAsA}, we have the relation \eqref{cAU} between $\cA^+_{l,\delta}(x)$ and $\cA^{-}_{l,\delta}(x)$, but now with
\beq\label{TUnp}
U(x)=\prod_{m=1}^{N-1}\frac{1}{G_L(x_{m+1}-x_m+\eta)}.
\eeq
Finally,  we have nonperiodic Hamiltonians $H_{l,\de}(x)$ given by \eqref{TH}--\eqref{HTm}. It should be noted that the nonperiodic A$\De$Os are no longer invariant under cyclic permutations of the coordinates $x_1,\ldots,x_N$.

Next we discuss the effect of substituting first
\beq\label{suby}
y_n\to y_n-\eta/2 + n\Lambda,\ \ \ n=1,\ldots,N,
\eeq
and then \eqref{subpT}, on the kernel functions $S^{\pm}(x,y)$ given by \eqref{Sp}--\eqref{Sm}.
 Clearly, the resulting functions depend on $\Lambda$ only via the factors $G_R(y_N-x_1\mp ia/2-\eta+N\Lambda)$. It follows from \eqref{GRLAsA} that if we take $\Lambda\to\infty$, then these factors converge to 1. Shifting next $y_n$ back by $\eta/2$, we wind up with kernel functions
\beq\label{Spl}
S^{+}(x,y)=\frac{1}{G_L(y_N-x_N+ia/2+\eta/2)} \prod_{m=1}^{N-1}\frac{G_R(y_{m}-x_{m+1}-ia/2-\eta/2)}{G_L(y_m-x_m+ia/2+\eta/2)},
\eeq
\beq\label{Sml}
S^{-}(x,y)=G_L(y_N-x_N-ia/2+\eta/2) \prod_{m=1}^{N-1}\frac{G_L(y_{m}-x_m-ia/2+\eta/2)}{G_R(y_{m}-x_{m+1}+ia/2-\eta/2)}.
\eeq
We thus arrive at the following analog of Theorem \ref{ToTheor} for the nonperiodic Toda system.

\begin{theorem}\label{openTodaKernelProp}
With the convention
\beq\label{npTconv}
x_0=y_0 =\infty,\ \ \  x_{N+1} =y_{N+1}=- \infty,
\eeq
in force, the identities \eqref{periodicTidp}--\eqref{periodicTHidm} hold true for $S^{\pm}$ given by \eqref{Spl}--\eqref{Sml} and $U$ by~\eqref{TUnp}.
\end{theorem}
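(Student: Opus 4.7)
My plan is to apply the substitutions \eqref{suby} and \eqref{subpT} to each of the periodic identities \eqref{periodicTidp}--\eqref{periodicTHidm} of Theorem~\ref{ToTheor}, and then pass to the limit $\Lambda\to\infty$ termwise. Since the substitutions are pure reparametrizations, the identities remain valid at every finite $\Lambda$, so the theorem reduces to the existence of the limit on both sides together with the identification of its form.

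First I would verify the convergence of the A$\De$Os $\cA^{\tau}_{\pm k,\de}(x)$ and $\cA^{\tau}_{\mp l,\de}(y)$ to their nonperiodic counterparts. The only substitution-sensitive coefficient is the boundary factor $T^{\tau}_{\de}(x_1-x_N)$ on the $x$-side, respectively $T^{\tau}_{\de}(y_1-y_N)$ on the $y$-side; a direct computation shows that the $e_{\de}$-argument inside \eqref{Tfn} acquires an overall shift by $-2N\Lambda$, hence tends to $-\infty$ and the exponential to $0$, leaving $T^{\tau}_{\de}\to 1$. The convention $x_0=y_0=\infty$, $x_{N+1}=y_{N+1}=-\infty$ then exactly encodes the resulting disappearance of the boundary couplings. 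A parallel bookkeeping handles the kernel data. For $m\in\{1,\ldots,N-1\}$ the arguments of the $m$th $G_R$ and $G_L$ blocks of $S^{\pm}$ in \eqref{Sp}--\eqref{Sm} are $\Lambda$-independent once the cosmetic back-shift $y_n\mapsto y_n+\eta/2$ is performed, while the $m=N$ block $G_R(y_N-x_1\mp ia/2-\eta/2)$ acquires an argument whose real part tends to $+\infty$; by the $G_R$-asymptotics \eqref{GRLAsA} it converges to $1$ and drops out, leaving precisely \eqref{Spl}--\eqref{Sml}. Likewise, the $m=N$ factor $1/G_L(x_1-x_N+\eta)$ of $U(x)$ in \eqref{TU} has its argument driven to $-\infty$ and, by the $G_L$-asymptotics \eqref{GRLAsA}, converges to $1$, producing \eqref{TUnp}.

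The one point that truly requires attention, and where I expect the main obstacle, is the interchange of the $\Lambda\to\infty$ limit with the A$\De$O action on the kernel. Each A$\De$O is a finite sum of compositions of multiplication by a meromorphic coefficient with a pure shift operator $\exp(\mp ia_{-\de}\partial_{x_m})$, and the $\Lambda$-dependence of both the coefficients and the kernel resides entirely in $G_{R/L}$-factors whose arguments have real parts going to $\pm\infty$. On the complement of the (shift-independent) pole loci these factors converge locally uniformly to their limits, so shifts and multiplications commute with $\lim_{\Lambda\to\infty}$. Applying this limit to both sides of each of \eqref{periodicTidp}--\eqref{periodicTHidm} then yields the corresponding nonperiodic identity, completing the proof.
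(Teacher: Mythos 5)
Your proposal is correct and follows essentially the same route as the paper: substitute \eqref{suby} and \eqref{subpT} into the periodic identities of Theorem~\ref{ToTheor}, observe that after the compensating shift $\eta\to\eta-\Lambda$ only the boundary factors ($T^{\tau}_{\de}(x_1-x_N)$, $T^{\tau}_{\de}(y_1-y_N)$, the $m=N$ factor $G_R(y_N-x_1\mp ia/2-\eta+N\Lambda)$ of $S^{\pm}$, and the $m=N$ factor of $U$) carry $\Lambda$-dependence and tend to $1$ by \eqref{GRLAsA}, and then let $\Lambda\to\infty$ in the identities. Your explicit justification of the limit interchange via locally uniform convergence off the pole loci is a point the paper leaves implicit, but it does not change the argument.
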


Just as in the periodic Toda case, the substitution \eqref{subK} and subsequent limit $\Lambda\to\infty$ lead to the elementary kernel identities
\beq\label{KRid}
\left(\cA^{\tau}_{l,\delta}(x) - \cA^{\tau}_{-l,\delta}(y)\right)K_R^{\tau}(x,y) = 0,\ \ \pm l=1,\ldots,N,\ \ \ \de,\tau=+,-,
\eeq
where
\beq\label{KRtau}
K_R^{\pm}(x,y)=\exp\left(\frac{\pm i\pi}{a_{+}a_{-}}\sum_{m=1}^N(x_m-y_m)^2\right).
\eeq

By contrast, in this case the limit $\Lambda\to -\infty$ does not yield the kernel functions $K_L^{\pm}$ defined by \eqref{KL} and \eqref{Km}, due to  the `missing' $G_R$-factor. In fact, since the nonperiodic A$\De$Os are not invariant under cyclic permutations, one should not expect that \eqref{KRid} also holds for $K_L^{\tau}$.

We proceed to a more significant difference between the periodic and nonperiodic Toda systems: from Theorem \ref{openTodaKernelProp} we are able to deduce kernel identities that relate the \adiffops~$A^{\tau}_{l,\delta}$ in $N$ variables $x = (x_1,\ldots,x_N)$ to  \adiffops~$A^{\tau}_{l',\delta}$ in $N-1$ variables $y = (y_1,\ldots,y_{N-1})$.

\begin{corollary}
Letting $k \in\{  1,\ldots, N\}$, and $\de,\sigma,\tau \in\{ +,-\}$, we have 
\begin{multline}\label{AS1}
	\cA_{\sigma k,\delta}^{\tau}(x_1,\ldots,x_N)S_1^{\tau}(x,y)\\ = \left(\cA_{-\sigma k,\delta}^{\tau}(y_1,\ldots,y_{N-1}) + \cA^{\tau}_{-\sigma (k-1),\delta}(y_1,\ldots,y_{N-1})\right)S_1^{\tau}(x,y),
\end{multline}
where 
\beq
\cA^{\tau}_{ \pm N,\delta}(y_1,\ldots,y_{N-1})\equiv 0,\ \ \ \cA^{\tau}_{0,\delta}\equiv 1,
\eeq
and
\beq\label{SpN}
S_1^{+}(x,y)= \prod_{m=1}^{N-1}\frac{G_R(y_{m}-x_{m+1}-ia/2-\eta/2)}{G_L(y_m-x_m+ia/2+\eta/2)},
\eeq
\beq\label{SmN}
S_1^{-}(x,y)= \prod_{m=1}^{N-1}\frac{G_L(y_{m}-x_m-ia/2+\eta/2)}{G_R(y_{m}-x_{m+1}+ia/2-\eta/2)}.
\eeq

\end{corollary}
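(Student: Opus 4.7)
The plan is to derive the $(N-1)$-variable identities \eqref{AS1} as $y_N\to-\infty$ limits of the $N$-variable kernel identities of Theorem~\ref{openTodaKernelProp}. Comparing \eqref{Spl}--\eqref{Sml} with \eqref{SpN}--\eqref{SmN}, the full and reduced kernel functions differ only through a single $y_N$-dependent prefactor:
\begin{equation*}
S^{+}(x,y) = \frac{S_1^{+}(x,y)}{G_L(y_N - x_N + ia/2 + \eta/2)}, \qquad S^{-}(x,y) = G_L(y_N - x_N - ia/2 + \eta/2)\, S_1^{-}(x,y).
\end{equation*}
By the $G_L$-asymptotics \eqref{GRLAsA}, both prefactors tend to $1$ as $y_N\to-\infty$, and the same is true after translating $y_N$ by $\pm ia_{-\delta}$.

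Fix signs $\tau,\sigma\in\{+,-\}$ and consider the kernel identity $\big(\cA^\tau_{\sigma k,\delta}(x)-\cA^\tau_{-\sigma k,\delta}(y_1,\ldots,y_N)\big)S^\tau(x,y)=0$ provided by Theorem~\ref{openTodaKernelProp}. The operator $\cA^\tau_{\sigma k,\delta}(x)$ does not act on the $y$-variables, so the LHS converges to $\cA^\tau_{\sigma k,\delta}(x)S_1^\tau(x,y)$. For the RHS, I split the sum defining $\cA^\tau_{-\sigma k,\delta}(y_1,\ldots,y_N)$ according to whether $N\in I$ or $N\notin I$. In the first case no shift acts on $y_N$, and the only coefficient possibly involving $y_N$ is a single factor of the form $T^{\pm}_\delta(y_N-y_{N-1})$ or $T^{\pm}_\delta(y_{N-1}-y_N)$ (the sign pattern depending on $\tau$), which tends to $1$ as $y_N\to-\infty$ (or is already set to $1$ by the nonperiodic convention when the neighbour is $y_{N+1}=-\infty$); consequently this contribution collapses to $\cA^\tau_{-\sigma k,\delta}(y_1,\ldots,y_{N-1})S_1^\tau(x,y)$. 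In the second case, writing $I=J\cup\{N\}$ with $|J|=k-1$, the shift $\exp(\pm ia_{-\delta}\partial_{y_N})$ acts on the $y_N$-dependent prefactor of $S^\tau$, producing the same prefactor shifted by $\pm ia_{-\delta}$, which still tends to $1$ in the limit, while the residual coefficients and shifts on $y_1,\ldots,y_{N-1}$ assemble into $\cA^\tau_{-\sigma(k-1),\delta}(y_1,\ldots,y_{N-1})S_1^\tau(x,y)$.

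The main obstacle is the combinatorial bookkeeping in the second case: one must verify that, after the boundary factor $T^{\pm}_\delta$ containing $y_N$ has been discarded, the surviving coefficients and shifts in $y_1,\ldots,y_{N-1}$ match the definitions \eqref{cApp}--\eqref{cAmm} of the $(N-1)$-variable A$\De$O with the nonperiodic convention $y_N=-\infty$ (under which the relevant boundary $T^{\pm}_\delta$ factors are themselves set to $1$). This requires a case-by-case check for each $(\tau,\sigma)\in\{+,-\}^2$, paying attention to which neighbour of $m\in I$ must not belong to $I$, since this determines whether the discarded boundary factor has the form $T^{\pm}_\delta(y_N-y_{N-1})$ or $T^{\pm}_\delta(y_{N+1}-y_N)$. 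The degenerate conventions $\cA^\tau_{\pm N,\delta}(y_1,\ldots,y_{N-1})\equiv 0$ and $\cA^\tau_{0,\delta}\equiv 1$ then come out automatically from the extremal cases $k=N$ (in which only $N\in I$ contributes) and $k=1$ (in which the $N\in I$ contribution comes from $I=\{N\}$ with empty coefficient product).
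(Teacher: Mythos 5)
Your argument is correct and is essentially the paper's own proof: both take the $y_N\to-\infty$ limit of the $N$-variable nonperiodic identities of Theorem~\ref{openTodaKernelProp}, use the asymptotics \eqref{GRLAsA} to reduce $S^{\pm}$ to $S_1^{\pm}$, and split the $y$-side A$\De$O into subsets with $N\in I$ and $N\notin I$, the two pieces yielding the terms $\cA^{\tau}_{-\sigma(k-1),\delta}$ and $\cA^{\tau}_{-\sigma k,\delta}$ in $N-1$ variables (with the shift in $y_N$ acting trivially since the limit kernel no longer depends on $y_N$). The only blemish is a labeling slip: your phrase ``in the first case no shift acts on $y_N$'' in fact describes the case $N\notin I$, i.e.\ the second alternative in the split as you stated it.
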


\begin{proof}
Clearly, \eqref{GRLAsA} entails that if we substitute $y_N\to y_N-\Lambda$ in \eqref{Spl} and \eqref{Sml} and let $\Lambda\to\infty$, then we obtain $S_1^{\pm}(x,y)$. In order to determine the same limit for the \adiffop
\beq
\cA_{-\sigma k,\delta}^{\tau}(y_1,\ldots,y_N),
\eeq
we split the sum in the pertinent formula among \eqref{cApp}--\eqref{cAmm} into a sum over subsets $I$ containing the index $N$ and a second sum over $I$ not containing $N$. Now there are two cases to consider.

(1)( $\sigma=\tau$) If $k=N$, then the second sum is empty, whereas for $k<N$ it equals $\cA_{-\tau k,\delta}^{\tau}(y_1,\ldots,y_{N-1})$, cf.~\eqref{cApm}--\eqref{cAmp}. This \adiffop~is independent of $y_N$, so it is invariant under the limit at issue. On the other hand, if we perform the substitution in the first sum and let $\Lambda\to\infty$, then we obtain the A$\De$O
\beq
 \cA_{-\tau (k-1),\delta}^{\tau}(y_1,\ldots,y_{N-1})\exp(\tau ia_{- \delta}\partial_{y_N}).
 \eeq
Since $S_1^{\tau}(x,y)$ is independent of $y_N$, the \adiffop~$\exp (\tau ia_{-\delta}\partial_{y_N})$ acts as the identity on $S_1^{\tau}(x,y)$, and so we arrive at \eqref{AS1}.

(2) ($\sigma=-\tau$) Arguing as before, we see from \eqref{cApp} and \eqref{cAmm} that in this case the roles of the first and second sum are reversed, i.e., they now yield the first and second A$\De$O on the right-hand side of \eqref{AS1}, respectively.
\end{proof}

It is worth pointing out that we cannot repeat the above limit procedure for $S_1^{\tau}(x,y)$ and the variable $y_{N-1}$, so as to obtain a kernel function $S_2^{\tau}(x,y)$ that relates \adiffops~in $N$ and $N-2$ variables. Indeed, the asymptotics \eqref{GRLAsB} of $G_R$ and $G_L$ implies that if we were to renormalize $S_1^{\tau}(x,y)$ so as to obtain a finite limit, then the resulting function would contain an exponential factor that still depends on $y_{N-1}$. Likewise, it seems impossible to obtain analogs of $S_1^{\pm}(x,y)$ for the periodic Toda case via a limit of the kernel functions $S^{\pm}(x,y)$ given by \eqref{Sp}--\eqref{Sm}.

To conclude this subsection, we discuss an alternative way to arrive at the above nonperiodic Toda results, namely, via the hyperbolic quantities. The pertinent limits are far more easily controled than the elliptic to periodic Toda limits, and yield additional insights. First, we substitute
\beq\label{hypsub}
x_n\to x_n-n\Lambda,\ \ \  y_n\to y_n-n\Lambda,\ \ \ \rho\to\rho +\Lambda,
\eeq
in the hyperbolic version of the functional equations \eqref{EllipticFunctionalEqs} (obtained upon replacing $s_{\de}(z)$ by $\sinh(\pi z/a_{\de})$). The asymptotic behavior  is then given by
\beq
	\frac{\sinh\frac{\pi}{a_\delta}\left(y_m-x_m-\rho-\Lambda\right)}{\sinh\frac{\pi}{a_\delta}(y_m-x_m)}\sim \frac{e_\delta\left(\rho+\Lambda\right)}{t_{\delta}(y_m-x_m-\rho)},\quad \Lambda\to \infty,
\eeq
and by  \eqref{shlim} with $\pi/Nr$ replaced by $\Lambda$ (recall $t_{\de}$ is defined by \eqref{ttoda}).

Next, we use these formulas in the same way as in Subsection~2.3 to arrive at the functional equations \eqref{TodaFunctionalEqs} with the nonperiodic Toda convention \eqref{npTconv} in effect. More specifically, in the  proof of Lemma~\ref{TFE} we need only replace $\tilde{s}_{\de}(z)$ by $\sinh(\pi z/a_{\de})$, and the $r\to 0$ limit by the $\Lambda \to\infty$ limit. 

The developments below Lemma~\ref{TFE} can now be followed verbatim, with the convention \eqref{npTconv} ensuring that we get the same quantities as obtained via limits of the periodic Toda regime. But now we can also control the limits of the hyperbolic A$\De$Os and $U$-function.

Indeed, using the difference equations \eqref{hyperbolicgammaDiffEq} obeyed by the hyperbolic gamma function, we can readily calculate the hyperbolic A$\De$Os $\cA^{\tau}_{l,\de}$ explicitly from
\eqref{HCell}, \eqref{MeromorphEllipticOps} and \eqref{cAOps}. This yields
\beq\label{cAhyp}
\cA^+_{ k,\delta}(x) =\sum_{ |I|=k}\prod_{\substack{m\in I, n\notin I\\n>m}}f_{\delta,-}(x_m-x_n)^2f_{\delta,+}(x_m-x_n-ia_{-\de})^2\prod_{m\in I}\exp(- ia_{-\delta}\partial_{x_m}),
\eeq
\beq
\cA^+_{ -k,\delta}(x) =\sum_{ |I|=k}\prod_{\substack{m\in I, n\notin I\\n<m}}f_{\delta,+}(x_m-x_n)^2f_{\delta,-}(x_m-x_n+ia_{-\de})^2\prod_{m\in I}\exp(ia_{-\delta}\partial_{x_m}),
\eeq
\beq
\cA^{-}_{ k,\delta}(x) =\sum_{ |I|=k}\prod_{\substack{m\in I, n\notin I\\n<m}}f_{\delta,-}(x_m-x_n)^2f_{\delta,+}(x_m-x_n-ia_{-\de})^2\prod_{m\in I}\exp(- ia_{-\delta}\partial_{x_m}),
\eeq
\beq
\cA^{-}_{ -k,\delta}(x) =\sum_{ |I|=k}\prod_{\substack{m\in I, n\notin I\\n>m}}f_{\delta,+}(x_m-x_n)^2f_{\delta,-}(x_m-x_n+ia_{-\de})^2\prod_{m\in I}\exp( ia_{-\delta}\partial_{x_m}),
\eeq
where $k=1,\ldots,N$,  $\de=+,-$, and
\beq\label{fhyp}
f_{\de,\pm}(z)^2=\frac{\sinh(\pi(z\pm \rho)/a_{\de})}{\sinh(\pi z/a_{\de})}.
\eeq
Now we substitute \eqref{hypsub} and take $\Lambda$ to $\infty$. Denoting the limits by the same symbols, this yields
\beq\label{cAppT}
	\cA^{+}_{k,\delta}(x) = \sum_{|I|=k}\prod_{\substack{m\in I\\ m+1\notin I}}t_{\de}(x_{m+1}-x_m)\prod_{m\in I}\exp(-ia_{-\delta}\partial_{x_m}),
	\eeq
\beq\label{cApmT}
	\cA^{+}_{-k,\delta}(x) = \sum_{|I|=k}\prod_{\substack{m\in I\\ m-1\notin I}}t_{\delta}(x_m-x_{m- 1})\prod_{m\in I}\exp(ia_{-\delta}\partial_{x_m}),
\eeq
\beq\label{cAmpT}
	\cA^{-}_{k,\delta}(x) = \sum_{|I|=k}\prod_{\substack{m\in I\\ m-1\notin I}}t_{\de}(x_{m}-x_{m-1}-ia_{-\de})\prod_{m\in I}\exp(-ia_{-\delta}\partial_{x_m}),
	\eeq
\beq\label{cAmmT}
	\cA^{-}_{-k,\delta}(x) = \sum_{|I|=k}\prod_{\substack{m\in I\\ m+1\notin I}}t_{\delta}(x_{m+1}-x_{m}-ia_{-\de})\prod_{m\in I}\exp(ia_{-\delta}\partial_{x_m}),
\eeq
with $t_{\de}(z)$ given by \eqref{ttoda} and the convention \eqref{npTconv} in effect.

If we now set
\beq\label{muT}
\rho =ia +\eta,
\eeq
then the A$\De$Os \eqref{cAppT}--\eqref{cAmmT} turn into the nonperiodic Toda A$\De$Os already obtained via the periodic regime. In the same way, the hyperbolic Hamiltonians $H_{\pm k,\de}$ converge to their nonperiodic Toda counterparts, whereas the hyperbolic A$\De$Os $A_{\pm k,\de}$ have no sensible limit, a feature shared by the hyperbolic weight function $W(x)$.

The hyperbolic scattering function $U(x)$, however, does have a limit, provided a suitable renormalization is performed. To be specific, when we substitute  \eqref{hypsub} in $U(x)$ (given by \eqref{ellU}
and \eqref{HCell} with $G$ the hyperbolic gamma function), then we obtain via the $G$-asymptotics (which can be gleaned from \eqref{GRDef}, \eqref{GLDef} and \eqref{GRLAsA})
\beq\label{Ulim}
\lim_{\Lambda\to\infty}\cN^{N(N-1)/2}U(x)=\prod_{m=1}^{N-1}\frac{1}{G_L(x_{m+1}-x_m +\rho -ia)},
\eeq
where $\cN$  is the renormalizing constant
\beq
\cN=\exp\left( \frac{i\pi}{a_{+}a_{-}}\big( (\rho -ia +\Lambda)^2+a^2\big) \right).
\eeq
If we now again replace $\rho$ by $ia+\eta$, then the limit function on the right-hand side of \eqref{Ulim} turns into the nonperiodic Toda $U$-function given by \eqref{TUnp}, as announced.

\subsection{The dual nonperiodic Toda case}\label{Sec25}
As recalled  in the Introduction, the hyperbolic relativistic Calogero-Moser system is self-dual. In the limit leading to the nonperiodic Toda dynamics, which we have just discussed, this self-duality property is not preserved. But the dual Toda quantities can be obtained by a similar, but simpler scaling limit. Turning to the details, we substitute
\beq\label{submud}
\rho \to \rho +\Lambda,
\eeq
and study the limit $\Lambda\to \infty$. As explained in the paragraph containing~\eqref{phsub}, this should be done for the hyperbolic quantities expressed
  in the \lq spectral variables\rq\  $v_1,\ldots,v_N$ instead of the 
\lq geometric variables\rq\ $x_1,\ldots,x_N$. 

For the resulting dual hyperbolic A$\De$Os 
\beq\label{Adhyp}
A_{\pm k,\delta}(v) =\sum_{|I|=k}\prod_{\substack{m\in I\\ n\notin I}}\frac{\sinh(\pi(v_m-v_n\mp \rho)/a_{\de})}{\sinh(\pi(v_m-v_n)/a_{\de})}
\prod_{m\in I}\exp(\mp ia_{-\delta}\partial_{v_m}),
\eeq
 we get finite limits by a renormalization. Specifically, we readily deduce
 \begin{multline}\label{Alim}
\lim_{\Lambda\to\infty}e_{\de}(-k(N-k)\Lambda)A_{\pm k,\delta}(v)=\\ e_{\de}(k(N-k)\rho)
\sum_{|I|=k}\prod_{\substack{m\in I\\ n\notin I}}\frac{e_{\de}(\mp (v_m-v_n))}{2s_{\de}(\mp (v_m-v_n))}
\prod_{m\in I}\exp(\mp ia_{-\delta}\partial_{v_m}).
\end{multline}
Here and from now on we use the notation
\beq
s_{\de}(z)=\sinh(\pi z/a_{\de}),\ \ \ \ \de=+,-,
\eeq
which already occurred in Subsection~2.2.

Next, we introduce the product
\beq
P(v)=\prod_{1\le j<k\le N}\exp \left( \frac{i\pi}{2a_{+}a_{-}}(v_j-v_k)^2\right).
\eeq
Using 
\beq
P(v)^{-1}\exp (\mp ia_{-\de}\partial_{v_m})P(v)=\prod_{n\ne m}e_{\de}(\pm (v_m-v_n)-ia_{-\de}/2),
\eeq
it becomes clear that, up to a $k$-dependent multiplicative constant, the limit A$\De$Os on the right-hand side of \eqref{Alim} are the similarity transforms
under $P(v)$ of dual A$\De$Os defined by
\beq\label{defDl}
\hat{A}_{\pm k,\de}(v)=(\mp i)^{k(N-k)}\sum_{ |I|=k}\prod_{\substack{m\in I\\ n\notin I}}\frac{1}{2s_{\de}(v_m-v_n)}
\prod_{m\in I}\exp(\mp ia_{-\delta}\partial_{v_m}).
\eeq
Here we have $k=1,\ldots,N$, $\de=+,-$, and the phase choice will be clarified shortly.

Consider now the kernel function $\cS(v,w)$ given by \eqref{defcS}, with $G$ the hyperbolic gamma function. Shifting $v_j$ by $\rho/2$, substituting \eqref{submud}, and letting $\Lambda\to\infty$, it follows from the $G$-asymptotics that the dominant term is a product of
\beq\label{cSd}
\hat{\cS}(v,w)=\prod_{j,k=1}^NG(v_j-w_k),
\eeq
and $\Lambda$-dependent quadratic exponentials. Likewise, when we shift $v_j$ by $-\rho/2$, substitute \eqref{submud}, and let $\Lambda\to\infty$, we obtain $1/\hat{\cS}(v,w)$ times $\Lambda$-dependent exponentials. Remarkably, when we omit the exponentials, we wind up with kernel functions for the dual A$\De$Os, as shown in the next theorem.

\begin{theorem}\label{dualT}
Letting $l\in\{ \pm 1,\ldots,\pm N\}$, $\de\in\{ +,-\}$ and $\sigma\in\{ \pm 1\}$, we have the dual kernel function identities
\begin{equation}\label{DcS}
	\big(\hat{A}_{l,\delta}(v) - \hat{A}_{-l,\delta}(w)\big)\hat{\cS}(v,w)^{\sigma} = 0.
\end{equation}
\end{theorem}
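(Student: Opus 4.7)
The identity \eqref{DcS} will be obtained as a renormalized $\Lambda\to\infty$ limit of the hyperbolic Calogero-Moser kernel identity $(A_{l,\de}(v)-A_{-l,\de}(w))\cS(v,w)=0$ (i.e., the hyperbolic version of~\eqref{AS}) under the substitution~\eqref{submud}, following the scheme leading to $\hat{A}_{\pm l,\de}$ and $\hat{\cS}$ in the paragraphs preceding the theorem.

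First, I shift $v_j\to v_j+\sigma\rho/2$; this leaves the hyperbolic A$\De$Os unchanged (they depend only on $v_m-v_n$) but converts $\cS(v,w)$ into $\prod_{j,k}G(v_j-w_k+(\sigma-1)\rho/2)/G(v_j-w_k+(\sigma+1)\rho/2)$. Applying \eqref{submud} and using the $G$-asymptotics \eqref{GRLAsA}---directly for $\sigma=+1$, and via the reflection $G(z)G(-z)=1$ (cf.~\eqref{refl}) for $\sigma=-1$---the shifted kernel factors as $\Lambda\to\infty$ into $E_\sigma(\Lambda)\,P(v)^{-1}P(w)^{-1}\,\phi_\sigma(v,w;\Lambda)\,\hat{\cS}(v,w)^\sigma$, with $E_\sigma$ depending only on $\Lambda$ (through $\rho+\Lambda$) and $\phi_\sigma$ an exponential depending on $v$ and $w$ only through $\sum_j v_j-\sum_k w_k$. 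The decomposition of the quadratic exponent rests on the elementary identities $\sum_{j,k}(v_j-w_k)^2=N\sum_j v_j^2+N\sum_k w_k^2-2(\sum_j v_j)(\sum_k w_k)$ and $\sum_{j<k}(v_j-v_k)^2=N\sum_j v_j^2-(\sum_j v_j)^2$.

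On the A$\De$O side, I multiply the shifted identity by $e_\de(-|l|(N-|l|)\Lambda)/E_\sigma(\Lambda)$ and invoke \eqref{Alim}, so that the $\Lambda\to\infty$ limit of $e_\de(-|l|(N-|l|)\Lambda)A_{l,\de}(v)$ equals $e_\de(|l|(N-|l|)\rho)P(v)^{-1}\hat{A}_{l,\de}(v)P(v)$, and likewise on the $w$-side; the phase $(\mp i)^{|l|(N-|l|)}$ built into \eqref{defDl} is precisely the one needed to absorb the residual phase from the $P$-conjugation. Since $\hat{A}_{l,\de}(v)$ commutes with $P(w)^{-1}$ and $\hat{A}_{-l,\de}(w)$ commutes with $P(v)^{-1}$, the $P$-similarity factors pull through the common kernel prefactor $P(v)^{-1}P(w)^{-1}$; after cancelling the common constant $e_\de(|l|(N-|l|)\rho)$, the limit identity reduces to $\bigl[\hat{A}_{l,\de}(v)-\hat{A}_{-l,\de}(w)\bigr]\phi_\sigma\,\hat{\cS}(v,w)^\sigma=0$. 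Finally, for any $I$ with $|I|=|l|$ the shift $\prod_{m\in I}\exp(\mp ia_{-\de}\partial_{v_m})$ in $\hat{A}_{l,\de}(v)$ moves $\sum_j v_j-\sum_k w_k$ by the same amount as $\prod_{m\in I}\exp(\pm ia_{-\de}\partial_{w_m})$ in $\hat{A}_{-l,\de}(w)$, uniformly in $I$, so $\phi_\sigma$ produces an identical multiplicative phase on both terms of the bracket and can be divided out, yielding~\eqref{DcS}.

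The main obstacle is the bookkeeping in the kernel factorization above: showing that the quadratic exponent from the $G$-asymptotics assembles exactly into $P(v)^{-1}P(w)^{-1}$ times an exponential of $\sum_j v_j-\sum_k w_k$, and that the residual phase arising from the $P$-conjugation in~\eqref{Alim} is precisely the $(\mp i)^{|l|(N-|l|)}$ engineered into~\eqref{defDl}. Once these matchings are verified, the remaining steps---the $G$-asymptotics, the compatibility with the kernel ambiguity~\eqref{ambig}, and the passage to the limit---are routine.
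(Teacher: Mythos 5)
Your strategy is viable, but it is genuinely different from the paper's proof. The paper does not take the $\Lambda\to\infty$ limit of the full kernel identity at all: it first reduces \eqref{DcS} to the case $\sigma=1$, $l=k>0$ using $\hat{A}_{l,\de}(-v)=\hat{A}_{-l,\de}(v)$, $\hat{\cS}(-v,-w)=1/\hat{\cS}(v,w)$ and $\hat{A}_{-k,\de}=\hat{A}_{N-k,\de}\hat{A}_{-N,\de}$, then uses the difference equations \eqref{hyperbolicgammaDiffEq} to convert the kernel identity into the elementary functional identities \eqref{sId}, and finally proves \eqref{sId} by substituting \eqref{submud} into the hyperbolic version of \eqref{EllipticFunctionalEqs}, letting $\Lambda\to\infty$, and cancelling the resulting $e_\de$-factors, which are $I$-independent. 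The advantage of that route is that the limit is taken only on identities built from $\sinh$'s, so no asymptotic control of the hyperbolic gamma function at complex-shifted arguments is required; your route instead needs the $G$-asymptotics \eqref{GRLAsA}--\eqref{GRLAsB} uniformly on strips (so that they may be applied inside every term of the shifted identity) together with the observation that all terms on both sides carry the same $\Lambda$-power, but in exchange it delivers all $l$ and both $\sigma$ in one stroke and explains conceptually why $\hat{\cS}^{\pm 1}$ appear. One detail to fix when you carry out the bookkeeping you flag as the main obstacle: the directions of your two matchings are both inverted. A direct computation with \eqref{GRDef} and \eqref{GRLAsA} shows the shifted kernel factorizes asymptotically as $E_\sigma(\Lambda)\,P(v)P(w)\,\phi_\sigma\,\hat{\cS}(v,w)^{\sigma}$ (not $P(v)^{-1}P(w)^{-1}$), and correspondingly the renormalized limit of $A_{l,\de}(v)$ is a constant times $P(v)\hat{A}_{l,\de}(v)P(v)^{-1}$ (not $P^{-1}\hat{A}P$); since you inverted both consistently, the cancellation mechanism you describe survives unchanged, and the constant arising from the $P$-conjugation is indeed the same for the $v$- and $w$-operators (thanks to the conjugate phases $(\mp i)^{k(N-k)}$ in \eqref{defDl}), so it drops out of the difference as you claim.
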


\begin{proof}
From \eqref{defDl}  we see that
\beq\label{ltominl}
\hat{A}_{l,\de}(-v)=\hat{A}_{-l,\de}(v),
\eeq
whereas the reflection equation for the $G$-function entails
\beq
\hat{\cS}(-v,-w)=1/\hat{\cS}(v,w).
\eeq
Therefore, it suffices to show \eqref{DcS} for $\sigma=1$.

Now it also follows from \eqref{defDl} that
\beq\label{minktoNmink}
\hat{A}_{-k,\de}(v)=\hat{A}_{N-k,\de}(v)\hat{A}_{-N,\de}(v),\ \ \ k=1,\ldots,N.
\eeq
Hence we need only show \eqref{DcS} for $\sigma=1$ and $l=k>0$. To this end, we invoke 
 the difference equations \eqref{hyperbolicgammaDiffEq} for the hyperbolic gamma function to obtain
\begin{align}
	\hat{\cS}(v,w)^{-1}\exp(-ia_{-\delta}\partial_{v_m})\hat{\cS}(v,w) &= \frac{1}{(2i)^N}\prod_{n=1}^N\frac{1}{s_\delta(v_m-w_n-ia)},\\
	\hat{\cS}(v,w)^{-1}\exp(ia_{-\delta}\partial_{w_m})\hat{\cS}(v,w) &= \frac{1}{(2i)^N}\prod_{n=1}^N\frac{1}{s_\delta(v_n-w_m-ia)}.
\end{align}
From this we deduce that we are done if we can prove the functional identities
\begin{multline}\label{sId}
	\sum_{\substack{I\subset\lbrace 1,\ldots,N\rbrace\\ |I|=k}}\prod_{\substack{m\in I\\ n\notin I}}\frac{1}{s_\delta(v_n-v_m)}\prod_{\substack{m\in I\\ n\in\lbrace 1,\ldots,N\rbrace}}\frac{1}{s_\delta(v_m-w_n)}\\ = \sum_{\substack{I\subset\lbrace 1,\ldots,N\rbrace\\ |I|=k}}\prod_{\substack{m\in I\\ n\notin I}}\frac{1}{s_\delta(w_m-w_n)}\prod_{\substack{m\in I\\ n\in\lbrace 1,\ldots,N\rbrace}}\frac{1}{s_\delta(v_n-w_m)}.
\end{multline}

In order to show that these identities are valid, we substitute \eqref{submud} and $x=v,y=w$, in (the hyperbolic version of) \eqref{EllipticFunctionalEqs} and let $\Lambda\to\infty$. Then we obtain from equality of the dominant asymptotics the identities
\begin{multline}\label{sIdaux}
	\sum_{\substack{I\subset\lbrace 1,\ldots,N\rbrace\\ |I|=k}}\prod_{\substack{m\in I\\ n\notin I}}\frac{e_{\de}(v_n-v_m)}{s_\delta(v_n-v_m)}\prod_{\substack{m\in I\\ n\in\lbrace 1,\ldots,N\rbrace}}\frac{e_{\de}(v_m-w_n)}{s_\delta(v_m-w_n)}\\ = \sum_{\substack{I\subset\lbrace 1,\ldots,N\rbrace\\ |I|=k}}\prod_{\substack{m\in I\\ n\notin I}}\frac{e_{\de}(w_m-w_n)}{s_\delta(w_m-w_n)}\prod_{\substack{m\in I\\ n\in\lbrace 1,\ldots,N\rbrace}}\frac{e_{\de}(v_n-w_m)}{s_\delta(v_n-w_m)}.
\end{multline}
Now it is not hard to see that \eqref{sIdaux} is equivalent to \eqref{sId}. Indeed, consider the product of all $e_{\de}$-factors for a given $I$. Taking $I$ equal to $\{ 1,\ldots,k\}$, we get on both sides a factor
\beq
e_{\de}\left(k\sum_{j=1}^N(v_j-w_j)\right).
\eeq
Since this factor is permutation invariant, it does not depend on $I$. Hence we can cancel all exponentials and obtain \eqref{sId}.
\end{proof}

When we make the substitution \eqref{submud} in the dual hyperbolic A$\De$Os $\cA^{\tau}_{l,\de}(v)$ (given by \eqref{cAhyp}--\eqref{fhyp}), then the power of $e_{\de}(\Lambda)$ in the dominant asymptotics of each term in the sum depends on $I$. Hence we cannot renormalize these operators so as to obtain nontrivial finite limits for $\Lambda\to\infty$. 
By contrast, for the dual hyperbolic Hamiltonians $H_{\pm k,\de}(v)$ we obtain
 \begin{multline}\label{Hlim}
\lim_{\Lambda\to\infty}e_{\de}(-k(N-k)\Lambda)H_{\pm k,\delta}(v)=e_{\de}(k(N-k)\rho)\\ \times
\sum_{|I|=k}\prod_{\substack{m\in I\\ n\notin I}}\left(\frac{e_{\de}(\mp (v_m-v_n))}{2s_{\de}(\mp (v_m-v_n))}\right)^{1/2}
\prod_{m\in I}\exp(\mp ia_{-\delta}\partial_{v_m})
\prod_{\substack{m\in I\\ n\notin I}}\left(\frac{e_{\de}(\pm (v_m-v_n))}{2s_{\de}(\pm (v_m-v_n))}\right)^{1/2}.
\end{multline}
Pushing the exponentials through the shifts, we infer that these limits are constant multiples of  dual Toda Hamiltonians formally given by
\beq\label{Hd}
\hat{H}_{\pm k,\de}(v)=
\sum_{|I|=k}\prod_{\substack{m\in I\\ n\notin I}}\left(\frac{1}{2s_{\de}(\mp (v_m-v_n))}\right)^{1/2}
\prod_{m\in I}\exp(\mp ia_{-\delta}\partial_{v_m})
\prod_{\substack{m\in I\\ n\notin I}}\left(\frac{1}{2s_{\de}(\pm (v_m-v_n))}\right)^{1/2}.
\eeq

Due to the square root ambiguity, the phases of the individual terms in the sum are not well defined. To remedy this, we first note that the relevant Hilbert space is $L^2(\hat{G},dv)$, where the dual Toda configuration space is defined by
\beq\label{dTconf}
\hat{G}=\{ v\in\R^N\mid v_N<\cdots <v_1\}.
\eeq
Now we fix the phase ambiguities by defining
\beq\label{Hd2}
\hat{H}_{\pm k,\de}(v)=
\sum_{|I|=k}\prod_{\substack{m\in I\\ n\notin I}}\left|\frac{1}{2s_{\de}(v_m-v_n)}\right|^{1/2}
\prod_{m\in I}\exp(\mp ia_{-\delta}\partial_{v_m})
\prod_{\substack{m\in I\\ n\notin I}}\left|\frac{1}{2s_{\de}(v_m-v_n)}\right|^{1/2}.
\eeq
Then the coefficients are positive and real-analytic on $\hat{G}$.

Next, we define a dual Toda weight function
\beq
\hat{W}(v)=\prod_{1\le j<k\le N}G(v_j-v_k+ia)G(-v_j+v_k+ia).
\eeq
Using the reflection and difference equations for the $G$-function, we obtain
\beq\label{Wd}
\hat{W}(v)=\prod_{1\le j<k\le N}4s_{+}(v_j-v_k)s_{-}(v_j-v_k).
\eeq
Thus $\hat{W}$ is entire in $v$ and positive on $\hat{G}$. Taking positive square roots, it is now not hard to verify that on $\hat{G}$ we have 
\beq\label{DWH}
\hat{A}_{l,\de}(v)=\hat{W}(v)^{-1/2}\hat{H}_{l,\de}(v)\hat{W}(v)^{1/2},\ \ \ \pm l= 1,\ldots,  N,\ \ \ \de=+,-.
\eeq
Indeed, we have chosen the phases in \eqref{defDl} such that these relations hold true. Note that the A$\De$Os $\hat{H}_{l,\de}(v)$~\eqref{Hd2} are formally positive operators on $L^2(\hat{G},dv)$, so that the same is true for the operators $\hat{A}_{l,\de}(v)$ on $L^2(\hat{G},\hat{W}(v)dv)$. Furthermore, we have the following obvious corollary of Theorem~\ref{dualT}.

\begin{corollary}
For any $l\in\{ \pm 1,\ldots,\pm N\}$, $\de\in\{ +,-\}$ and $\sigma\in\{ \pm 1\}$, we have the dual kernel function identities
\begin{equation}\label{HcS}
	\big(\hat{H}_{l,\delta}(v) - \hat{H}_{-l,\delta}(w)\big)\hat{W}(v)^{1/2}\hat{W}(w)^{1/2}\hat{\cS}(v,w)^{\sigma} = 0.
\end{equation}
\end{corollary}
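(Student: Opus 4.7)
The corollary follows directly from Theorem~\ref{dualT} by conjugating the A$\De$O identity with the weight function. The plan is as follows.

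First, I would observe that the similarity relation \eqref{DWH} can be rewritten as
\begin{equation*}
\hat{W}(v)^{1/2}\hat{A}_{l,\delta}(v) = \hat{H}_{l,\delta}(v)\hat{W}(v)^{1/2},\qquad \pm l=1,\ldots,N,\ \ \de=+,-,
\end{equation*}
on the configuration space $\hat{G}$, where $\hat{W}(v)^{1/2}$ is unambiguously defined via positive square roots. The analogous identity holds with $v$ replaced by $w$.

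Next, I would multiply the dual kernel identity \eqref{DcS} of Theorem~\ref{dualT} on the left by $\hat{W}(v)^{1/2}\hat{W}(w)^{1/2}$, a multiplication operator which is a product of two factors depending on disjoint sets of variables. Since $\hat{A}_{l,\delta}(v)$ is a shift operator in the $v$ variables only, it commutes with multiplication by $\hat{W}(w)^{1/2}$, and likewise $\hat{W}(v)^{1/2}$ commutes with $\hat{A}_{-l,\delta}(w)$ and with $\hat{W}(w)^{1/2}$. Using these commutations and the similarity relation above, I would compute
\begin{align*}
\hat{W}(v)^{1/2}\hat{W}(w)^{1/2}\hat{A}_{l,\delta}(v) &= \hat{W}(w)^{1/2}\hat{H}_{l,\delta}(v)\hat{W}(v)^{1/2} \\
&= \hat{H}_{l,\delta}(v)\hat{W}(v)^{1/2}\hat{W}(w)^{1/2},
\end{align*}
and symmetrically
\begin{equation*}
\hat{W}(v)^{1/2}\hat{W}(w)^{1/2}\hat{A}_{-l,\delta}(w) = \hat{H}_{-l,\delta}(w)\hat{W}(v)^{1/2}\hat{W}(w)^{1/2}.
\end{equation*}

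Subtracting these two equalities and applying the result to $\hat{\cS}(v,w)^{\sigma}$, the identity \eqref{DcS} yields
\begin{equation*}
\big(\hat{H}_{l,\delta}(v) - \hat{H}_{-l,\delta}(w)\big)\hat{W}(v)^{1/2}\hat{W}(w)^{1/2}\hat{\cS}(v,w)^{\sigma} = 0,
\end{equation*}
which is the claim. There is no real obstacle here; the only point requiring care is the branch of the square root in $\hat{W}^{1/2}$, which is handled by restricting to $\hat{G}\times\hat{G}$ where the factors $s_{\pm}(v_j-v_k)$ and $s_{\pm}(w_j-w_k)$ in \eqref{Wd} are positive, so that positive square roots are well defined. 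The identity can then be extended meromorphically by the same argument, if desired.
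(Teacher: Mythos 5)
Your proposal is correct and follows essentially the same route the paper intends: the paper states the corollary as an immediate consequence of Theorem~\ref{dualT} via the similarity relation \eqref{DWH}, i.e.\ conjugating the identity \eqref{DcS} by the multiplication operator $\hat{W}(v)^{1/2}\hat{W}(w)^{1/2}$, exactly as you do. Your added remarks on the disjointness of the $v$- and $w$-variables and on fixing positive square roots on $\hat{G}\times\hat{G}$ just make explicit what the paper leaves tacit.
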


Just as for the hyperbolic case, we have found kernel functions relating dual Toda \adiffops~$\hat{A}_{l,\de}(v)$ in $N$ variables $v_1,\ldots,v_N$ to dual Toda A$\De$Os in $N-\ell$ variables $w_1,\ldots,w_{N-\ell}$ for any $\ell\in\{1,\ldots,N-1\}$. However, in this case there appear to be no kernel functions  for both signs of $\ell$ at once. Another difference with the hyperbolic case is that the kernel identities in the following theorem and its corollary only involve two \adiffops.

\begin{theorem}\label{dTell}
Define kernel functions
\beq
\hat{\cS}^{\tau}_{\ell}(v,w) = \exp \left(\frac{\tau i\pi\ell}{2a_+a_-}\left(\sum_{m=1}^Nv_m^2-\sum_{n=1}^{N-\ell}w_n^2\right)\right)\prod_{m=1}^N\prod_{n=1}^{N-\ell} G(v_m-w_n-ia),
\eeq
where  $\tau=+,-$, and $\ell =0,1,\ldots,N-1$. For any $k\in \{ 1,\ldots,N-\ell\}$ and $ \de\in\{ +,-\}$, we have
\beq\label{dToda2}
\hat{A}_{k,\delta}(v_1,\ldots,v_N)\hat{\cS}^{\tau}_{\ell}(v,w) =\hat{A}_{ -k,\delta}(w_1,\ldots,w_{N-\ell})\hat{\cS}^{\tau}_{\ell}(v,w).
\eeq
\end{theorem}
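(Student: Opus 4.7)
The plan is to adapt the proof of Theorem~\ref{dualT}: compute the action of the shift operators on $\hat{\cS}^{\tau}_{\ell}$ via the difference equations for the hyperbolic gamma function, reduce the kernel identity~\eqref{dToda2} to a sinh functional identity, and then derive that identity from~\eqref{sIdaux} by an asymptotic limit on the excess $w$-variables.

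First I would write $\hat{\cS}^{\tau}_{\ell}(v,w)=E(v,w)\Pi(v,w)$, where $E$ is the Gaussian prefactor and $\Pi(v,w)=\prod_{m=1}^N\prod_{n=1}^{N-\ell}G(v_m-w_n-ia)$. Applying~\eqref{hyperbolicgammaDiffEq} as in the proof of Theorem~\ref{dualT} yields, for each $m$,
\[
\Pi(v,w)^{-1}\,e^{-ia_{-\delta}\partial_{v_m}}\,\Pi(v,w)=\frac{c_1}{(2i)^{N-\ell}}\prod_{n=1}^{N-\ell}\frac{1}{s_\delta(v_m-w_n+\kappa)},
\]
for explicit constants $c_1,\kappa$, together with an analogous formula for $\Pi^{-1}e^{+ia_{-\delta}\partial_{w_m}}\Pi$ involving $N$ sinh-reciprocals in the $v_j$. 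Differentiating the quadratic exponent of $E$ then contributes an additional multiplicative factor $e_\delta(\tau\ell v_m)$ (resp. $e_\delta(\tau\ell w_m)$) under $e^{-ia_{-\delta}\partial_{v_m}}$ (resp. $e^{+ia_{-\delta}\partial_{w_m}}$), modulo $I$-independent constants.

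Substituting these formulas into the definitions~\eqref{defDl} of $\hat{A}_{k,\delta}(v)$ and of $\hat{A}_{-k,\delta}(w_1,\ldots,w_{N-\ell})$, and absorbing $\kappa$ by a uniform translation of~$w$ (which leaves all A$\De$Os invariant), the identity~\eqref{dToda2} reduces to the sinh identity
\begin{multline*}
\sum_{\substack{I\subset\{1,\ldots,N\}\\|I|=k}}e_\delta\bigl(\tau\ell\textstyle\sum_{m\in I}v_m\bigr)\prod_{\substack{m\in I\\n\notin I}}\frac{1}{s_\delta(v_m-v_n)}\prod_{\substack{m\in I\\n\in\{1,\ldots,N-\ell\}}}\frac{1}{s_\delta(v_m-w_n)}\\
=C\sum_{\substack{J\subset\{1,\ldots,N-\ell\}\\|J|=k}}e_\delta\bigl(\tau\ell\textstyle\sum_{m\in J}w_m\bigr)\prod_{\substack{m\in J\\n\in\{1,\ldots,N-\ell\}\setminus J}}\frac{1}{s_\delta(w_m-w_n)}\prod_{\substack{m\in J\\n\in\{1,\ldots,N\}}}\frac{1}{s_\delta(v_n-w_m)},
\end{multline*}
where $C$ depends only on $(k,\ell,N,\delta,\tau)$. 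For $\tau=-$ I would derive this identity from~\eqref{sIdaux} by sending $w_{N-\ell+1},\ldots,w_N\to-\infty$, nested if necessary. On the left side of~\eqref{sIdaux} the factors $e_\delta(v_m-w_n)/s_\delta(v_m-w_n)$ whose $w_n$ is in the outgoing block each tend to $2$; on the right side, any $J$ meeting $\{N-\ell+1,\ldots,N\}$ forces a vanishing factor $e_\delta(w_m-w_n)/s_\delta(w_m-w_n)\to 0$, so the sum collapses to $J\subset\{1,\ldots,N-\ell\}$. Collecting the surviving $e_\delta$-exponents and separating an $I$-independent piece (which cancels across the two sides) from the $I$-dependent piece $-\ell\sum_{m\in I}v_m$ on the left and $-\ell\sum_{m\in J}w_m$ on the right reproduces precisely the sinh identity above for $\tau=-$. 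The case $\tau=+$ then follows by applying $v\mapsto-v$, $w\mapsto-w$, which reverses the sign of both Gaussian exponents while multiplying each product of sinh-reciprocals by the same overall sign $(-1)^{k(k+\ell)}$, so the relative signs cancel.

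The main obstacle is the bookkeeping of signs and $I$-independent constants through all three steps. The critical point is that the $I$-dependent exponent $\tau\ell\sum_{m\in I}v_m$ produced by the Gaussian shift in the first step must match, modulo $I$-independent terms, the $I$-dependent exponent produced by the limit of~\eqref{sIdaux} in the third step. This match is precisely what forces the shift $-ia$ inside $\Pi$ and the coefficient $\tau i\pi\ell/(2a_+a_-)$ in $E$, and is the algebraic heart of the theorem.
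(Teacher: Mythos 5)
Your proposal is correct and follows essentially the same route as the paper: your conjugation formulas are precisely the paper's \eqref{Ssm}--\eqref{Ssp}, and your reduction of \eqref{dToda2} to a sinh identity carrying the factors $e_\delta\big(\tau\ell\sum_{m\in I}v_m\big)$ is exactly the paper's reduction to \eqref{sId2} after the shifts $v_m\to v_m+ia_{-\delta}/2$, $w_n\to w_n-ia_{-\delta}/2$. The only (minor) difference is in how that functional identity is established: you take the surplus $w$-variables to $-\infty$ directly in \eqref{sIdaux}, which indeed works (all factors $e_\delta(z)/s_\delta(z)$ are bounded, the unwanted terms vanish, the $I$-independent exponential $e_\delta\big(k\sum_j v_j-k\sum_{n\le N-\ell}w_n\big)$ cancels, and the constant comes out as $C=(-1)^{k(N-k)}$, matching the phases in \eqref{defDl}) and yields the $\tau=-$ case first, whereas the paper proves \eqref{sId2} (the $\tau=+$ case) by induction on $\ell$ from \eqref{sId} via $w_{N-\ell}\to+\infty$, both arguments finishing with the reflection $v,w\to-v,-w$.
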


\begin{proof}
Starting from the $\ell =0$ case \eqref{sId}, induction on $\ell$ readily yields
\begin{multline}\label{sId2}
	\sum_{\substack{I\subset\lbrace 1,\ldots,N\rbrace\\ |I|=k}}\prod_{\substack{m\in I\\ n\notin I}}\frac{1}{s_\delta(v_n-v_m)}\prod_{m\in I}e_\delta(\ell v_m)\prod_{\substack{m\in I\\ n\in\lbrace1,\ldots,N-\ell\rbrace}}\frac{1}{s_\delta(v_m-w_n)}\\ = \sum_{\substack{I\subset\lbrace 1,\ldots,N-\ell\rbrace\\ |I|=k}}\prod_{\substack{m\in I\\ n\notin I}}\frac{1}{s_\delta(w_m-w_n)}\prod_{m\in I}e_\delta(\ell w_m)\prod_{\substack{m\in I\\ n\in\lbrace 1,\ldots,N\rbrace}}\frac{1}{s_\delta(v_n-w_m)}.
\end{multline}
Indeed, assuming \eqref{sId2} for some $\ell\in\{0,1,\ldots,N-2\}$, the case $\ell+1$ follows upon taking $w_{N-\ell}$ to $\infty$.

The crux is now that the difference equations \eqref{hyperbolicgammaDiffEq} for the hyperbolic gamma function imply
\beq\label{Ssm}
	\hat{\cS}^{\tau}_\ell(v,w)^{-1}\exp(-ia_{-\delta}\partial_{v_m})\hat{\cS}^{\tau}_\ell(v,w) = e_\delta\big(\tau\ell (v_m-ia_{-\delta}/2)\big)\left(\frac{i}{2}\right)^{N-\ell}\prod_{n=1}^{N-\ell}\frac{1}{s_\delta(v_m-w_n-ia_{-\delta})},
	\eeq
	\beq\label{Ssp}
	\hat{\cS}^{\tau}_\ell(v,w)^{-1}\exp(ia_{-\delta}\partial_{w_m})\hat{\cS}^{\tau}_\ell(v,w) = e_\delta\big(\tau\ell( w_m+ia_{-\delta}/2)\big)\left(\frac{i}{2}\right)^N\prod_{n=1}^N\frac{1}{s_\delta(v_n-w_m-ia_{-\delta})}.
\eeq
Hence, after shifting the variables $v_m$ to $v_m+ia_{-\delta}/2$ and the variables $w_n$ to $w_n-ia_{-\delta}/2$, it is readily seen that the kernel identity \eqref{dToda2} is equivalent to the functional identity \eqref{sId2} for the case $\tau=+$. Taking $v,w\to -v,-w$ in~\eqref{sId2}, we can pull out the signs from the $s_{\de}$'s so as to obtain the functional identity equivalent to the kernel identity \eqref{dToda2} with $\tau=-$.
\end{proof}

As a corollary, we shall now obtain two more kernel functions~$\tilde{\cS}^{\pm}_\ell(v,w)$. First, since \eqref{dToda2} only involves \adiffops~of the same order $k$, the identities remain valid if we multiply $\hat{\cS}^{\pm}_\ell(v,w)$ by a factor
\beq
\exp\left(\mp\frac{ i\pi}{2a_+a_-}\left(\sum_{m=1}^Nv_m-\sum_{n=1}^{N-\ell}w_n\right)^2\right).
\eeq
By using \eqref{ltominl} we thus obtain
\beq\label{middId}
\hat{A}_{-k,\delta}(v_1,\ldots,v_N)\tilde{\cS}^{\tau}_\ell(v,w) = \hat{A}_{k,\delta}(w_1,\ldots,w_{N-\ell})\tilde{\cS}^{\tau}_\ell(v,w),\ \ \ k=1,\ldots,N-\ell,
\eeq
where
\beq
\begin{split}
\tilde{\cS}^{\tau}_\ell(v,w)\equiv \exp\left(-\frac{\tau i\pi}{2a_+a_-}\left(\sum_{m=1}^Nv_m-\sum_{n=1}^{N-\ell}w_n\right)^2\right)\hat{\cS}^{\tau}_\ell(-v,-w).
\end{split}
\eeq
Moreover, it is readily verified that the additional exponential factor in these kernel functions entails the identity
\beq\label{AtS}
\hat{A}_{N,\delta}(v_1,\ldots,v_N)\tilde{\cS}^{\tau}_{\ell}(v,w) =\hat{A}_{-(N-\ell),\delta}(w_1,\ldots,w_{N-\ell})\tilde{\cS}^{\tau}_{\ell}(v,w).
\eeq
We now act with the \adiffop~$\hat{A}_{N,\delta}(v)$ on \eqref{middId}, and then use~\eqref{AtS} and \eqref{minktoNmink}. Finally, taking $N-k\to k$, we obtain the following corollary of Theorem \ref{dTell}.

\begin{corollary}
We have eigenfunction identities
\beq
\hat{A}_{\ell,\delta}(v_1,\ldots,v_N)\tilde{\cS}^{\tau}_\ell(v,w) = \tilde{\cS}^{\tau}_\ell(v,w),
\eeq
and kernel identities
\beq
\hat{A}_{k,\delta}(v_1,\ldots,v_N)\tilde{\cS}^{\tau}_\ell(v,w) = \hat{A}_{-(k-\ell),\de}(w_1,\ldots,w_{N-\ell})\tilde{\cS}^{\tau}_\ell(v,w),\ \ \ k=\ell+1,\ldots,N.
\eeq
\end{corollary}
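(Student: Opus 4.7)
The plan is to follow the recipe outlined in the paragraph preceding the corollary, making the bookkeeping explicit. The starting point is \eqref{middId}, which for each $k\in\{1,\ldots,N-\ell\}$ reads
\begin{equation*}
\hat{A}_{-k,\delta}(v_1,\ldots,v_N)\,\tilde{\cS}^{\tau}_\ell(v,w)
=\hat{A}_{k,\delta}(w_1,\ldots,w_{N-\ell})\,\tilde{\cS}^{\tau}_\ell(v,w),
\end{equation*}
together with the additional identity \eqref{AtS}. I would apply the top A$\De$O $\hat{A}_{N,\delta}(v)$ to both sides of \eqref{middId} and simplify each side using the abelian-group-like structure of the family $\{\hat{A}_{l,\delta}\}_{l=-N}^{N}$ recorded in \eqref{minktoNmink}, together with the self-evident fact that A$\De$Os in the $v$-variables commute with those in the $w$-variables.

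First I would simplify the left-hand side. Since the dual A$\De$Os in $v$ mutually commute, and since \eqref{minktoNmink} (together with $\hat{A}_{-N,\delta}(v)\hat{A}_{N,\delta}(v)=1$) gives
\begin{equation*}
\hat{A}_{N,\delta}(v)\hat{A}_{-k,\delta}(v)=\hat{A}_{N-k,\delta}(v),
\end{equation*}
the left-hand side becomes $\hat{A}_{N-k,\delta}(v)\tilde{\cS}^{\tau}_\ell(v,w)$. Next, for the right-hand side I would push $\hat{A}_{N,\delta}(v)$ past $\hat{A}_{k,\delta}(w)$ and then apply \eqref{AtS} to obtain
\begin{equation*}
\hat{A}_{k,\delta}(w)\hat{A}_{-(N-\ell),\delta}(w)\,\tilde{\cS}^{\tau}_\ell(v,w).
\end{equation*}
Applying \eqref{minktoNmink} to the $w$-system (with $N$ replaced by $N-\ell$) collapses the product of dual A$\De$Os: for $k\in\{1,\ldots,N-\ell-1\}$ it yields $\hat{A}_{-(N-\ell-k),\delta}(w)$, while for $k=N-\ell$ the two factors cancel to give the identity operator. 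Equating left- and right-hand sides and then substituting $k':=N-k$ turns the range $k\in\{1,\ldots,N-\ell-1\}$ into $k'\in\{\ell+1,\ldots,N-1\}$, producing the kernel identities in that range; the boundary case $k=N-\ell$ yields the eigenfunction identity ($k'=\ell$), and the case $k'=N$ is exactly \eqref{AtS} itself.

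There is no genuine obstacle here: everything is algebraic manipulation inside the commuting family, and the two non-trivial inputs (\eqref{middId} and \eqref{AtS}) have already been established. The only point requiring care is the case analysis between $k<N-\ell$ (where \eqref{minktoNmink} applies in the $w$-system) and $k=N-\ell$ (where one uses instead $\hat{A}_{N-\ell,\delta}(w)\hat{A}_{-(N-\ell),\delta}(w)=1$); handling both uniformly gives the full range $k'\in\{\ell,\ell+1,\ldots,N\}$ claimed in the corollary, with $k'=\ell$ producing the eigenfunction identity and $k'\in\{\ell+1,\ldots,N\}$ producing the kernel identities.
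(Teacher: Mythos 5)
Your proposal is correct and follows essentially the same route as the paper: acting with $\hat{A}_{N,\delta}(v)$ on \eqref{middId}, using \eqref{AtS} and \eqref{minktoNmink} (in both the $v$- and $w$-systems), and relabelling $N-k\to k$. Your explicit case split between $k<N-\ell$ and $k=N-\ell$, with the latter yielding the eigenfunction identity and $k'=N$ being \eqref{AtS} itself, just spells out the bookkeeping the paper leaves implicit.
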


We note that the four kernel functions $\hat{\cS}^{\pm}_1(v,w)$ and $\tilde{\cS}^{\pm}_1(v,w)$ are closely related to the function $\mathcal{Q}$ in \cite{KLS02}, which plays an important role in a construction of eigenfunctions of the relativistic periodic and nonperiodic Toda systems, cf.~Eqs.~(3.46), (2.8) and (A.23) in~\cite{KLS02}.

\section{B\"acklund transformations}\label{BacklundSec}

\subsection{The elliptic case}
As explained in Section~1, we are going  to study the \lq expected\rq\ classical asymptotics~\eqref{PsiF} of the elliptic kernel function $\Psi(x,y)$~\eqref{Psi} by introducing dependence on $\hbar$ in a quite special way. It is in fact easy to see how this should be done, because we have started our account in Section~1 with a description of the quantization of the classical systems that involves $\hbar$ explicitly. Specifically, we need only revert from our parametrization of the two positive step sizes $a_{+}$ and $a_{-}$ in the elliptic gamma function to the parameters $\alpha$ and $\hbar\beta$, cf.~\eqref{aa}. Taking $\hbar$ to 0 then amounts to taking $a_{-}$ to 0. (We keep $\beta=1/mc$ fixed, since we wish to stay in the relativistic setting.) Therefore we can study \eqref{PsiF} via the limit~\eqref{Gellcl}. 
Recalling the definitions of the functions $W(z)$ and $\cS(x,y)$ featuring in $\Psi(x,y)$ (cf.~\eqref{W}--\eqref{defcS}), the following lemma easily follows from~\eqref{Gg} and \eqref{Gellcl}.

\begin{lemma}\label{LimLemma}
Let $\rho\in i(0,\alpha)$. For $x$ and $y$ in the elliptic configuration space $G$~\eqref{config}, we have classical limits
\begin{equation}\label{cSlim}
	\lim_{\hbar\downarrow 0}i\hbar\ln \cS(r,\alpha,\hbar\beta;x,y) = \frac{1}{\beta}\sum_{j,k=1}^N \int_{x_j-y_k+\rho/2}^{x_j-y_k-\rho/2}dw\ln R(r,\alpha;w),
\end{equation}
\begin{equation}\label{Wlim}
	\lim_{\hbar\downarrow 0}i\hbar\ln W(r,\alpha,\hbar\beta;x) = \frac{1}{\beta} \sum_{j\neq k}\int^{x_j-x_k+i\alpha/2}_{x_j-x_k+i\alpha/2-\rho}dw\ln R(r,\alpha;w),
\end{equation}
where the integration paths stay away from the cuts given by~\eqref{cuts}.
\end{lemma}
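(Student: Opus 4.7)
The strategy is to reduce both limits to the scalar classical limit \eqref{Gellcl} for the elliptic gamma function applied factor by factor, and then reorganize the resulting sums of antiderivatives into the single integrals stated.

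First I treat $\cS$. Take logarithms in \eqref{defcS} to get
\beq
\ln \cS(x,y)=\sum_{j,k=1}^N\bigl[\ln G(x_j-y_k-\rho/2)-\ln G(x_j-y_k+\rho/2)\bigr].
\eeq
Multiplying by $i\hbar$ and using \eqref{aa} so that $a_{-}=\hbar\beta\downarrow 0$ as $\hbar\downarrow 0$ (with $a_{+}=\alpha$ fixed), I apply the classical limit of the elliptic gamma function (\eqref{Gellcl}, together with the relation \eqref{Gg} of $G$ to its \lq classical\rq\ primitive involving $\ln R(r,\alpha;\cdot)$) to each term separately. This replaces $i\hbar\ln G(\cdots\pm\rho/2)$ in the limit by $(1/\beta)$ times an antiderivative of $\ln R$ evaluated at $x_j-y_k\pm\rho/2$. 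The difference of the two antiderivatives is exactly $(1/\beta)\int_{x_j-y_k+\rho/2}^{x_j-y_k-\rho/2}\ln R(r,\alpha;w)\,dw$ along any contour avoiding the cuts of $\ln R$. Summing over $j,k$ yields \eqref{cSlim}.

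For $W$, I use the definition \eqref{W}--\eqref{HCell} to write $\ln W(x)=-\ln C(x)-\ln C(-x)$, i.e.
\beq
\ln W(x) = \sum_{j<k}\bigl[\ln G(x_j-x_k+ia)-\ln G(x_j-x_k-\rho+ia)\bigr]+\bigl[ x\to -x\bigr].
\eeq
As $\hbar\downarrow 0$, $ia=i(a_{+}+a_{-})/2\to i\alpha/2$, so applying the same scalar limit \eqref{Gellcl} to each $G$-factor and taking the difference produces, for the first bracket, the integral $(1/\beta)\int^{x_j-x_k+i\alpha/2}_{x_j-x_k-\rho+i\alpha/2}\ln R(r,\alpha;w)\,dw$ (note the sign is opposite to the $\cS$ case because $C$ appears in the denominator of $W$). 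The second bracket, obtained by $x\to -x$, gives the same integrand with $j,k$ swapped. Combining the two $\sum_{j<k}$ sums into a single sum $\sum_{j\ne k}$ yields the claimed formula \eqref{Wlim}.

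The routine part is the term-by-term application of \eqref{Gellcl}; the genuine issue is the contour choice. The limit \eqref{Gellcl} gives an antiderivative of $\ln R$ only modulo its periods around the cuts \eqref{cuts}, so I have to check that for $(x,y)\in G\times G$ and $\rho\in i(0,\alpha)$, straight-line paths between the endpoints $x_j-y_k\mp\rho/2$ (for $\cS$) and between $x_j-x_k+i\alpha/2-\rho$ and $x_j-x_k+i\alpha/2$ (for $W$) can be chosen in a common tube avoiding these cuts. Since $x_j-y_k$ and $x_j-x_k$ are real with $|x_j-x_k|<\pi/r$, and the shifts $\pm\rho/2$ and $i\alpha/2$ are purely imaginary with $|\im|\le \alpha/2$, a vertical segment connects the endpoints inside a strip that misses the cuts. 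This justifies the claim. The main subtlety, beyond this geometric check, is to control the limit uniformly on compact subsets of $G\times G$ so that one may commute $\lim_{\hbar\downarrow 0}$ with the finite sums; this is standard once \eqref{Gellcl} is invoked on the relevant compact regions.
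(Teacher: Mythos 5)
Your argument is correct and is essentially the paper's own proof (which merely asserts that the lemma follows from \eqref{Gg} and \eqref{Gellcl}): you apply the scalar classical limit factor by factor, use $i\hbar\ln G(z)=-(a_-/\beta)g(z)$, combine the antiderivatives of $\ln R$, and check that the connecting contours avoid the cuts \eqref{cuts}. The only detail worth tightening is the $W$-case, where the arguments $x_j-x_k\pm\,\cdots+ia$ have imaginary part $a$ (on the boundary of the strip in \eqref{Gg}) and avoidance of the cuts rests on $0<|x_j-x_k|<\pi/r$ rather than on $|\Im|<\alpha/2$; this is covered by the uniform-on-compacts form of \eqref{Gellcl} that you invoke.
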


On account of the restrictions $x,y\in G$  and $\rho\in i(0,\alpha)$, we can actually use the representation \eqref{Rrep} for $\ln R$ on the right-hand side of \eqref{cSlim} and \eqref{Wlim}.

To begin with, we now analyze  whether the function $F(x,y)$ resulting from the above limits according to \eqref{PsiF}  gives rise to a B\"acklund transformation. Thus, we study a transformation $B$ from the canonical coordinates $(x,p)\in\Omega$ (with the elliptic phase space $\Omega$ given by~\eqref{Om}--\eqref{config}) to new canonical coordinates $(y,q)$, by taking as generating function
\begin{equation}\label{defF}
	F(x,y) = \frac{1}{\beta}\big(F_W(x) + F_W(y) + F_{\cS}(x,y)\big),
\end{equation}
where
\begin{equation}\label{FW}
	F_W(x) = \frac{1}{2}\sum_{j\neq k}\int^{x_j-x_k+i\alpha/2}_{x_j-x_k+i\alpha/2-\rho}dw\ln R(r,\alpha;w),
\end{equation}
\begin{equation}\label{FS}
	F_{\cS}(x,y) = \sum_{j,k=1}^N \int^{x_j-y_k-\rho/2}_{x_j-y_k+\rho/2}dw\ln R(r,\alpha;w).
\end{equation}
By definition, this means that $y(x,p)$ is to be determined from the equations
\begin{equation}\label{pj}
\begin{split}
	p_j &=- \frac{\partial F}{\partial x_j}\\ &
	= \frac{1}{2\beta}\sum_{k\neq j}\ln\left(\frac{R(x_j-x_k-i\alpha/2)R(x_j-x_k-\rho+i\alpha/2)}{R(x_j-x_k+i\alpha/2)R(x_j-x_k+\rho-i\alpha/2)}\right)\\ &
	\quad + \frac{1}{\beta}\sum_{k=1}^N\ln\left(\frac{R(x_j-y_k+\rho/2)}{R(x_j-y_k-\rho/2)}\right),\ \ \ j=1,\ldots,N,
\end{split}
\end{equation}
and then $q(x,p)$ is given by
\begin{equation}\label{qj}
\begin{split}
	q_j &= \frac{\partial F}{\partial y_j}\\ &
	= \frac{1}{2\beta}\sum_{k\neq j}\ln\left(\frac{R(y_j-y_k+i\alpha/2)R(y_j-y_k+\rho-i\alpha/2)}{R(y_j-y_k-i\alpha/2)R(y_j-y_k-\rho+i\alpha/2)}\right)\\ &\quad + \frac{1}{\beta}\sum_{k=1}^N\ln\left(\frac{R(x_k-y_j+\rho/2)}{R(x_k-y_j-\rho/2)}\right),\ \ \ j=1,\ldots,N.
\end{split}
\end{equation}
(We used evenness of $\ln R(z)$ in these formulas, cf.~\eqref{Rrep}.) 

We have now arrived at the point where we can elaborate on the problems alluded to in the Introduction. Ideally, we would like $B$ to define a bijection on the elliptic phase space $\Omega$. Now whenever an arbitrary real-valued function $F(x,y)$ is used as a generating function for a map on the phase space $\Omega$, local canonicity is clear (at least, when one takes for granted that there exists a local solution $y$ to the implicit equations \eqref{pj}), but in general it will not yield a global symplectomorphism of $\Omega$. For a special choice of $F(x,y)$, therefore, it may be intractable to prove that it does. In the case at hand, however, this question  is easily decided negatively when one retains the parameter restrictions we have imposed: Assuming that for a given $(x,p)\in\Omega$ there exists  a solution $y$ to the system of equations \eqref{pj}, this solution cannot belong to the elliptic configuration space $G$ \eqref{config}. Indeed, assuming $y\in G$, it is immediate from \eqref{pj} that the numbers $p_1,\ldots,p_N$  are purely imaginary, a contradiction. 

To try and get around this snag,  it appears inevitable to require that $\beta$ be purely imaginary instead of positive. (The requirement that $\rho$ be real instead of purely imaginary still leads to momenta that are not real; cf.~also \eqref{Hamiltonians}--\eqref{f} to see why a real $\rho$ is troublesome.) Before analyzing this change in some detail, it is expedient to study first in what sense the map might be a B\"acklund transformation. Reasoning formally (in particular, assuming its existence at least for unspecified parameters and phase space variables), this is readily answered by using \eqref{pj} and \eqref{qj}: These equations do imply the B\"acklund property
\beq\label{SS}
S_k(x,p)=S_k(y,q), \ \ \ \ k=1,\ldots,N,
\eeq
where the Hamiltonians $S_k$ are given by \eqref{Hamiltonians}--\eqref{f}. We proceed to explain this.

First, we note that in view of \eqref{sR} and \eqref{RDiffEq} the interaction function \eqref{f} can also be written
\beq\label{falt}
f(z)=\exp(-ir\rho)\left( \frac{R(z+\rho -i\alpha/2)R(z-\rho +i\alpha/2)}{R(z-i\alpha/2)R(z+i\alpha/2)}\right)^{1/2}.
\eeq
Hence \eqref{SS} is equivalent to 
\begin{multline}\label{BR}
\sum_{\substack{I\subset\lbrace 1,\ldots,N\rbrace\\ |I|=k}}\prod_{m\in I}\exp(\beta p_m)
\prod_{\substack{m\in I\\ n\notin I}}
\left( \frac{R(x_m-x_n+\rho -i\alpha/2)R(x_m-x_n-\rho +i\alpha/2)}{R(x_m-x_n-i\alpha/2)R(x_m-x_n+i\alpha/2)}\right)^{1/2}\\
=
\sum_{\substack{I\subset\lbrace 1,\ldots,N\rbrace\\ |I|=k}}\prod_{m\in I}\exp(\beta q_m)
\prod_{\substack{m\in I\\ n\notin I}}
\left( \frac{R(y_m-y_n+\rho -i\alpha/2)R(y_m-y_n-\rho +i\alpha/2)}{R(y_m-y_n-i\alpha/2)R(y_m-y_n+i\alpha/2)}\right)^{1/2}.
\end{multline}
Next, from \eqref{pj} and \eqref{qj} we have
\beq\label{epm}
\exp(\beta p_m)=\prod_{n\ne m}
\left(\frac{R(x_m-x_n-i\alpha/2)R(x_m-x_n-\rho+i\alpha/2)}{R(x_m-x_n+i\alpha/2)R(x_m-x_n+\rho-i\alpha/2)}\right)^{1/2}
\prod_{n=1}^N \frac{R(x_m-y_n+\rho/2)}{R(x_m-y_n-\rho/2)},
\eeq
\beq\label{eqm}
\exp(\beta q_m)=\prod_{n\ne m}
\left(\frac{R(y_m-y_n+i\alpha/2)R(y_m-y_n+\rho-i\alpha/2)}{R(y_m-y_n-i\alpha/2)R(y_m-y_n-\rho+i\alpha/2)}\right)^{1/2}
\prod_{n=1}^N \frac{R(x_n-y_m+\rho/2)}{R(x_n-y_m-\rho/2)}.
\eeq

Consider now the product of the quantities $\exp(\beta p_m)$ for $m$ in a fixed index set $I$. For a pair $m_1,m_2\in I$, the two corresponding radicand terms coming from the first product in \eqref{epm} cancel, since $R(z)$ is even. For pairs $m\in I$, $n\notin I$, we can cancel two of the four radicand factors and combine the remaining two to rewrite the left-hand side of \eqref{BR}. Likewise, the right-hand side of \eqref{BR} can be simplified. The upshot is that the B\"acklund property \eqref{SS} holds, provided the following identities are valid for $k=1,\ldots,N$:
\begin{multline}
	\sum_{\substack{I\subset\lbrace 1,\ldots,N\rbrace\\ |I|=k}}\prod_{\substack{m\in I\\ n\notin I}}\frac{R(x_m-x_n-\rho+i\alpha/2)}{R(x_m-x_n+i\alpha/2)}\prod_{\substack{m\in I\\ n=1,\ldots,N}}\frac{R(x_m-y_n+\rho/2)}{R(x_m-y_n-\rho/2)}\\ = \sum_{\substack{I\subset\lbrace 1,\ldots,N\rbrace\\ |I|=k}}\prod_{\substack{m\in I\\ n\notin I}}\frac{R(y_m-y_n+\rho-i\alpha/2)}{R(y_m-y_n-i\alpha/2)}\prod_{\substack{m\in I\\ n=1,\ldots,N}}\frac{R(x_n-y_m+\rho/2)}{R(x_n-y_m-\rho/2)}.
\end{multline}
These functional equations can be reduced to \eqref{EllipticFunctionalEqs} by using \eqref{sR}; cf.~also Eqs.~(2.8)--(2.9) in~\cite{Rui06}. Therefore, we have now demonstrated \eqref{SS}.

Next, recall that the kernel property is not spoiled when all coordinates $x_1,\ldots,x_N$ are translated by $\xi$ and when $\Psi(x,y)$ is multiplied by a function of the form \eqref{ambig}. In particular, we can allow a multiplier
\beq\label{psi}
\exp(i\psi(s(x,y))/\hbar),\ \ \ \ \ \ s(x,y)\equiv\sum_{j=1}^N (x_j-y_j),
\eeq
where $\psi(z)$ is any entire function. Taking the classical limit as before, we obtain a generating function that  yields an extra term
$\psi'(s(x,y))$
on the right-hand sides of \eqref{pj}--\eqref{qj}. Since these extra terms are equal and do not depend on the index $j$, they do not spoil the argument leading to the validity of \eqref{SS}. Hence a large family of B\"acklund transformations arises.

Returning to the non-rigorous status of these developments, we first repeat that we must switch to a parameter $\beta$ that is purely imaginary to ensure that for real positions and momenta $x,p$ there might exist vectors $y,q$  that not only satisfy \eqref{pj}--\eqref{qj}, but are also real. As they stand, the Hamiltonians $S_k(x,p)$ \eqref{Hamiltonians} are then not real-valued on $\Omega$, but this is easily remedied by switching to
\beq\label{HS}
H_k(x,p) =S_k(x,p)+S_k(x,-p),\ \ \ \ k=1,\ldots,N.
\eeq
Now this looks satisfactory at face value, but in fact problems remain. The point is that it seems very unlikely that the commuting local flows generated by $H_1,\ldots,H_{N-1}$ extend to global flows (the $H_N$-flow is of course global). For $H_1$, for example, the dependence on $p_j$ is now via a factor $\cos(|\beta|p_j)$. Hence, conservation of $H_1$  is compatible with collisions after a finite time, and constant-$H_1$ hypersurfaces are not compact, in contrast to the positive-$\beta$ case. 

Put differently, one should not expect to obtain Liouville tori in $\Omega$ for $\beta$ purely imaginary. Moreover, a proof of existence and uniqueness of a solution $y$ in the elliptic configuration space $G$~\eqref{config} to the equations \eqref{pj} for a given $(x,p)\in\Omega$ is not in sight. Finally, one has to deal with the Poisson commuting Hamiltonians~\eqref{HS} on $\Omega$, which can yield singularities in finite time, hence an unphysical behavior.

Ignoring these problems, we continue by tying in the above generating functions with the ones that can be found in~\cite{NRK96}. To this end we first view the generating function $F(x,y)$ given by \eqref{defF}--\eqref{FS} as the Lagrangian of a \lq discrete-time\rq\  map. 
(In a general setting, the relation between these two viewpoints has been clarified by Veselov~\cite{Ves91}.)
For a given sequence of vectors $z(n)\in\C^N$, $n\in\Z$, and
a fixed value of $n$, we write $z=z(n)$, $\tilde{z}=z(n+1)$ and $\utilde{z}=z(n-1)$, in accord with the notation used in~\cite{NRK96}. To make the connection to~\cite{NRK96}, we shall actually start from a slightly more general generating function, namely,
\beq\label{Fxi}
F_{\xi,\gamma}(x,y)=F(x_1+\xi,\ldots,x_N+\xi,y_1,\ldots,y_N)+\gamma s(x,y)^2/\beta,
\eeq
cf.~the paragraph containing~\eqref{psi}.
Then from the discrete Euler-Lagrange equations
\begin{equation}\label{ELEqs}
	\frac{\partial F_{\xi,\gamma}}{\partial y_j}\big(\utilde{z},z\big) + \frac{\partial F_{\xi,\gamma}}{\partial x_j}\big(z,\tilde{z}\big)=0,\quad j=1,\ldots,N,
\end{equation}
and \eqref{pj}--\eqref{qj} we deduce
\begin{multline}\label{RNewt}
	P(\gamma)\prod_{m\neq j}\frac{R(z_j-z_m-\rho +i\alpha/2)R(z_j-z_m-i\alpha/2)}{R(z_j-z_m+\rho -i\alpha/2)R(z_j-z_m+i\alpha/2)} \\ = \prod_{m=1}^N\frac{R(\utilde{z}_m-z_j+\rho/2+\xi)R(z_j-\tilde{z}_m-\rho/2+\xi)}{R(\utilde{z}_m-z_j-\rho/2+\xi)R(z_j-\tilde{z}_m+\rho/2+\xi)},
	\end{multline}
where the prefactor reads	
\beq
P(\gamma)=\prod_{m=1}^N\exp(2\gamma (\utilde{z}_m+\tilde{z}_m-2z_m)).
\eeq
Next, we choose 
\beq\label{xisp}
\xi=\rho/2+i\alpha/2,
\eeq
 and use \eqref{RDiffEq} and \eqref{sR} to rewrite the result as
\beq\label{sdisc}
P(\gamma)\prod_{m\neq j}\frac{s(z_j-z_m-\rho)}{s(z_j-z_m+\rho)}=\prod_{m=1}^N\frac{s(z_j-\utilde{z}_m-\rho)s(z_j-\tilde{z}_m)}{s(z_j-\utilde{z}_m)s(z_j-\tilde{z}_m+\rho)},\ \ \ \ s(z)=s(r,\alpha;z).
\eeq

We are now in the position to compare \eqref{sdisc} to Eq.~(2.14) in~\cite{NRK96}. To this end we substitute
\beq
\rho=-\lambda,
\eeq
and use~\eqref{ssigrel} to switch from the $s$-function to the Weierstrass $\sigma$-function. Then \eqref{sdisc} becomes
\beq
P(\gamma-\lambda \eta r/\pi)\prod_{m\neq j}\frac{\sigma (z_j-z_m+\lambda)}{\sigma (z_j-z_m-\lambda)}=\prod_{m=1}^N\frac{\sigma (z_j-\tilde{z}_m)\sigma(z_j-\utilde{z}_m+\lambda)}{\sigma (z_j-\utilde{z}_m)\sigma (z_j-\tilde{z}_m-\lambda)},\ \ \ \ j=1,\ldots,N.
\eeq
This coincides with Eq.~(2.14) in~\cite{NRK96}, provided $p/\utilde{p}$ is given by Eq.~(6.20) with $\theta=0$, and $\gamma$ is suitably specialized.

Even though \eqref{sdisc} can be made to coincide with the \lq discrete-time Newton equations\rq\ (2.14) in~\cite{NRK96}, the generating functions and Lagrangians employed in~\cite{NRK96} differ from the above ones. This is due to a different choice of phase space variables implicit in~\cite{NRK96}, which leads to their functions having an asymmetric dependence on $x$ and $y$.  (The choice amounts to a canonical map of the form $(x,p)\mapsto (x,p+f(x))$.)

We should add that the existence of sequences of vectors $z(n)\in\C^N$ satisfying \eqref{sdisc} is  left open. In any case, it seems unlikely that for given  initial values $z(0), z(1)$ in the elliptic configuration space $G$~\eqref{config} there exists a solution sequence that stays in $G$.

\subsection{The hyperbolic case and its dual}

From now on we reparametrize the two length scales $a_{\pm}$ as
\beq\label{aT}
a_{+}=2\pi/\mu,\ \ \ \ a_{-}=\hbar\beta,
\eeq
except in the elliptic cases considered in Section~4, where we retain an imaginary period $i\alpha$.
(Recall we motivated this change at the end of the Introduction, cf.~the paragraph containing \eqref{mu}.)
 It is also convenient to trade the parameter $\rho$ (which has dimension [position]) for a (dimensionless) parameter
\beq\label{tau}
\tau=-i\mu \rho/2.
\eeq
With these changes, we obtain from \eqref{Ghypcl} the following counterpart of Lemma~3.1.

\begin{lemma}
Let $\tau\in (0,\pi)$. For $x$ and $y$ in the hyperbolic configuration space 
 \beq\label{confighyp}
 G_{\rm hyp}= \{ x\in \R^N\mid  x_N<\cdots <x_1\},
 \eeq
  we have classical limits
\begin{equation}\label{cSlimh}
	\lim_{\hbar\downarrow 0}i\hbar\ln \cS(2\pi/\mu,\hbar\beta;x,y) = \frac{1}{\beta\mu}\sum_{j,k=1}^N \int_{\mu(x_j-y_k)+i\tau}^{\mu(x_j-y_k)-i\tau}dw\ln (2\cosh (w/2)),
\end{equation}
\begin{equation}\label{Wlimh}
	\lim_{\hbar\downarrow 0}i\hbar\ln W(2\pi/\mu,\hbar\beta;x) = \frac{1}{\beta\mu} \sum_{j\neq k}\int^{\mu(x_j-x_k)+i\pi}_{\mu(x_j-x_k)+i\pi -2i\tau}dw\ln (2\cosh (w/2)),
\end{equation}
where the integration paths stay away from the cuts $\pm i[\pi,\infty)$.
\end{lemma}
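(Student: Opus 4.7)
The plan is to adapt the proof strategy of Lemma~3.1 to the hyperbolic regime, replacing the elliptic classical limit \eqref{Gellcl} by its hyperbolic counterpart \eqref{Ghypcl}. Specifically, the starting point is to take logarithms of the product representations \eqref{defcS} for $\cS$ and \eqref{W}--\eqref{HCell} for $W$, which produces sums of differences of the form $\ln G(z-\rho/2)-\ln G(z+\rho/2)$ and $\ln G(z+ia-\rho)-\ln G(z+ia)$. Multiplying by $i\hbar$ and invoking \eqref{Ghypcl} should then convert each such difference into an integral along a straight segment in the complex plane of the relevant hyperbolic integrand.

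Next, I would carry out the change of parametrization dictated by \eqref{aT} and \eqref{tau}. Under $a_+=2\pi/\mu$ and $a_-=\hbar\beta$, the variable $\rho\in i(0,\alpha)$ corresponds to $\tau=-i\mu\rho/2\in(0,\pi)$, so that $\rho/2=i\tau/\mu$ and $ia=i\pi/\mu+O(\hbar)$. In the elliptic Lemma~3.1 the integrand that emerges from \eqref{Gellcl} is $\ln R(r,\alpha;w)$; in the hyperbolic degeneration the analogous building block is $2\ch(\pi z/a_+)$ in the natural variable $z$, which after the rescaling $w=\mu z$ becomes $2\ch(w/2)$. Thus for $\cS$ the segment $[x_j-y_k+\rho/2,x_j-y_k-\rho/2]$ of the original variable rescales to $[\mu(x_j-y_k)+i\tau,\mu(x_j-y_k)-i\tau]$, and the overall prefactor $1/\beta$ picks up the Jacobian $1/\mu$, yielding exactly~\eqref{cSlimh}. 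The same rescaling for $W$ sends $[x_j-x_k+i\alpha/2-\rho,x_j-x_k+i\alpha/2]$ to $[\mu(x_j-x_k)+i\pi-2i\tau,\mu(x_j-x_k)+i\pi]$, which produces \eqref{Wlimh}.

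The last thing I would verify is that the contours are admissible, i.e.\ stay away from the logarithmic branch points of $\ln(2\ch(w/2))$ at $w\in i\pi+2\pi i\Z$. For~\eqref{cSlimh}, with $x,y\in G_{\rm hyp}$ the real parts are generic and $\tau\in(0,\pi)$, so the vertical segment from $\mu(x_j-y_k)+i\tau$ to $\mu(x_j-y_k)-i\tau$ stays in the open strip $|\im w|<\pi$ and avoids the cuts $\pm i[\pi,\infty)$. For~\eqref{Wlimh}, the segment has imaginary part in $(\pi-2\tau,\pi)\subset(-\pi,\pi)$ since $\tau\in(0,\pi)$, and again avoids the cuts. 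This justifies choosing the paths as stated and matches the admissibility condition built into the representation of $\ln R$ referred to in the paragraph following the lemma.

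The main obstacle will be the bookkeeping step rather than any deep analytic issue: one must carefully track how the various $\hbar$-independent phases and the $ia=i(a_++a_-)/2$ shifts in the Harish-Chandra factor~\eqref{HCell} rearrange under $\hbar\downarrow 0$, so that the two products $C(x)$ and $C(-x)$ combine to give a sum over $j\neq k$ rather than over ordered pairs, and so that no spurious boundary contribution from \eqref{Ghypcl} survives. Once this is handled exactly as in Lemma~3.1, with the obvious substitution $\ln R\to \ln(2\ch(w/2))$ and the rescaling $w=\mu z$, the asserted formulas follow.
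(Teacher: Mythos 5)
Your proposal is correct and follows essentially the same route as the paper, which obtains this lemma exactly as the hyperbolic counterpart of Lemma~3.1 by writing $\cS$ and $W$ in terms of $G=e^{ig}$, invoking the uniform zero step size limit \eqref{Ghypcl} under the reparametrization \eqref{aT}--\eqref{tau}, and rescaling $w\to\mu w$; your contour-admissibility check matches the paper's implicit use of the cut condition.
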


With the change $\Omega\to\Omega_{\rm hyp}$, where
\beq
\Omega_{\rm hyp}=\{ (x,p)\in\R^{2N}\mid x\in G_{\rm hyp}\},
\eeq
 the developments leading to \eqref{HS} now apply with straightforward adaptations. In particular, 
 the asymptotics~\eqref{PsiF} entails generating functions
 \begin{equation}\label{defFhyp}
	F(x,y) = \frac{1}{\beta\mu}\big(F_W(x) + F_W(y) + F_{\cS}(x,y)\big),
\end{equation}
where
\begin{equation}\label{FWhyp}
	F_W(x) = \frac{1}{2}\sum_{j\neq k}\int^{\mu(x_j-x_k)+i\pi}_{\mu(x_j-x_k)+i\pi -2i\tau}dw\ln (2\cosh (w/2)),
\end{equation}
\begin{equation}\label{FShyp}
	F_{\cS}(x,y) = \sum_{j,k=1}^N \int_{\mu(x_j-y_k)+i\tau}^{\mu(x_j-y_k)-i\tau}dw\ln (2\cosh (w/2)),
\end{equation}
and  the B\"acklund property \eqref{SS} is equivalent to the identities 
\begin{multline}\label{Bsc}
	\sum_{\substack{I\subset\lbrace 1,\ldots,N\rbrace\\ |I|=k}}\prod_{\substack{m\in I\\ n\notin I}}\frac{\sinh(\mu(x_m-x_n)/2-i\tau)}{\sinh(\mu(x_m-x_n)/2)}\prod_{\substack{m\in I\\ n=1,\ldots,N}}\frac{\cosh((\mu(x_m-y_n)+i\tau)/2)}{\cosh((\mu(x_m-y_n)-i\tau)/2)}\\ = \sum_{\substack{I\subset\lbrace 1,\ldots,N\rbrace\\ |I|=k}}\prod_{\substack{m\in I\\ n\notin I}}\frac{\sinh(\mu(y_m-y_n)/2+i\tau)}{\sinh(\mu(y_m-y_n)/2)}\prod_{\substack{m\in I\\ n=1,\ldots,N}}\frac{\cosh((\mu(x_n-y_m)+i\tau)/2)}{\cosh((\mu(x_n-y_m)-i\tau)/2)}.
\end{multline}

Next, starting again from a modified generating function \eqref{Fxi}, we get from the discrete Euler-Lagrange equations \eqref{ELEqs} as the counterpart of \eqref{RNewt}
 \begin{multline}\label{scNewt}
	P(\gamma)\prod_{m\neq j}\frac{\sinh(\mu(z_j-z_m)/2-i\tau)}{\sinh(\mu(z_j-z_m)/2+i\tau)} \\ = \prod_{m=1}^N\frac{\cosh(\mu(\utilde{z}_m-z_j+\xi)/2+i\tau/2)\cosh(\mu(z_j-\tilde{z}_m+\xi)/2-i\tau/2)}{\cosh(\mu(\utilde{z}_m-z_j+\xi)/2-i\tau/2)\cosh(\mu(z_j-\tilde{z}_m+\xi)/2+i\tau/2)}.
	\end{multline}
Choosing
\beq
\xi=i\tau/\mu +i\pi/\mu,
\eeq
this becomes
\beq
P(\gamma)\prod_{m\neq j}\frac{\sinh(\mu(z_j-z_m)/2-i\tau)}{\sinh(\mu(z_j-z_m)/2+i\tau)} =
\prod_{m=1}^N\frac{\sinh(\mu(z_j-\utilde{z}_m)/2-i\tau)\sinh(\mu(z_j-\tilde{z}_m)/2)}{\sinh(\mu(z_j-\utilde{z}_m)/2)\sinh(\mu(z_j-\tilde{z}_m)/2+i\tau)}.
 \eeq
This set of equations amounts to the hyperbolic specialization of Eq.~(2.14) in~\cite{NRK96}.

Just as in the elliptic case, it is unlikely that any solution sequences $z(n)$ exist that stay in $G_{\rm hyp}$ for all \lq discrete times\rq\ $n\in\Z$. To be sure, the equations to be solved do not involve $\beta$, so it might appear that the issue whether $\beta$ is real or imaginary is moot. In fact, however, the difference is decisive, since only in the latter case the equations have a chance to correspond to a canonical map on $\Omega_{\rm hyp}$. But since the commuting flows are not global for $\beta$ imaginary, there exists no well-defined action-angle map, by contrast to the case $\beta\in(0,\infty)$~\cite{Rui88}.

In fact, the existence of solution sequences in $\C^N$ has not been shown beyond doubt even in the hyperbolic case. In~\cite{NRK96} there are no reality conditions specified, and although the arguments for the hyperbolic case are formally convincing, they involve tacit assumptions that are not checked.

Finally, we discuss the dual hyperbolic case. This can be quite easily handled after our notation change \eqref{aT}--\eqref{tau}. Indeed, we need only substitute
\beq
x\to \hat{p},\ \ \ p\to \hat{x},\ \ \ y\to \hat{q},\ \ \ q\to \hat{y},
\eeq
replace $G_{\rm hyp}$ by
 \beq\label{dconfighyp}
 \hat{G}_{\rm hyp}= \{ \hat{p}\in \R^N\mid  \hat{p}_N<\cdots <\hat{p}_1\},
 \eeq
and interchange
\beq
\beta \leftrightarrow \mu,
\eeq
wherever these parameters occur~\cite{Rui88}. (See also Section~4, where we have occasion to say  more about the self-duality of the relativistic hyperbolic case.) In particular, note that the resulting generating function $F(\hat{p},\hat{q})$ now gives rise to purely imaginary positions $\hat{x}$ and $\hat{y}$ unless $\mu$ is required to be purely imaginary.

\subsection{The periodic Toda case}
We proceed to obtain the classical asymptotics~\eqref{PsiF} of the periodic Toda kernel functions 
\beq\label{Psipm}
\Psi^\pm(x,y)\equiv U(x)^{\mp 1/2}U(y)^{\mp 1/2}S^\pm(x,y).
\eeq
We recall that $U(x)$ and $S^\pm(x,y)$ are given in terms of the functions $G_R$ and $G_L$, cf.~\eqref{TU} and \eqref{Sp}-\eqref{Sm}, and that we have switched to the parameters \eqref{aT}. Therefore, taking $\hbar$ to $0$ amounts to taking $a_-$ to $0$, so we infer from \eqref{Ghypcl} and \eqref{GRDef}-\eqref{GLDef} that we have
\begin{equation}\label{GRLlnLim}
	\lim_{\hbar\downarrow 0}i\hbar\ln G_{\substack{R \\ L}}
	(2\pi/\mu,\hbar\beta;z) = \frac{1}{\beta}\int_0^z dw\ln\big(1+\exp(\mp \mu w)\big).
\end{equation}
The following lemma is now easy to verify.

\begin{lemma}
For $x$ and $y$ in $\mathbb{R}^N$ we have classical limits
\begin{multline}
	\lim_{\hbar\downarrow 0}i\hbar\ln S^{\pm}(2\pi/\mu,\hbar\beta;x,y) = \frac{\mp 1}{\beta}\sum_{m=1}^N\Bigg(\int_0^{x_{m+1}-y_m\pm i\pi/2\mu+\eta/2}dw\ln\big(1+\exp(\mu w)\big)\\ + \int_0^{y_m-x_m\pm i\pi/2\mu+\eta/2}dw\ln\big(1+\exp(\mu w)\big)\Bigg),
\end{multline}
\begin{equation}
	\lim_{\hbar\downarrow 0}i\hbar\ln U(2\pi/\mu,\hbar\beta;x) = -\frac{1}{\beta}\sum_{m=1}^N\int_0^{x_{m+1}-x_m+\eta}dw\ln\big(1+\exp(\mu w)\big),
\end{equation}
where the integration paths stay away from the cuts $\pm i\lbrack\pi/\mu,\infty)$.
\end{lemma}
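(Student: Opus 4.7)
My plan is to apply the basic classical-limit formula \eqref{GRLlnLim} termwise to the product representations of $U(x)$ \eqref{TU} and $S^{\pm}(x,y)$ \eqref{Sp}--\eqref{Sm}. For $U(x)$, since each factor is simply $1/G_L(x_{m+1}-x_m+\eta)$, taking $i\hbar\ln$ turns the product into a sum with an overall minus sign, and a direct invocation of \eqref{GRLlnLim} with the $L$-sign yields exactly the stated integral involving $\ln(1+\exp(\mu w))$. For $S^{\pm}(x,y)$, each of the $N$ factors is a ratio $G_R(\cdot)/G_L(\cdot)$, so $i\hbar\ln S^\pm$ splits into $2N$ terms, each of which is a single $i\hbar\ln G_{R/L}$, and \eqref{GRLlnLim} can be applied term by term.

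The only step that is not quite mechanical is the bookkeeping needed to exhibit the $G_R$--contributions in the common form $\int_0^{\bullet}dw\ln(1+\exp(\mu w))$ used in the claim. The relevant $G_R$--factors in $S^{+}$ and $S^{-}$ are, in the limit, replaced by integrals of $\ln(1+\exp(-\mu w))$; I substitute $w\mapsto -w$ to flip the sign of the exponent and simultaneously flip the sign of the upper endpoint, which fuses the $G_R$--part with the $G_L$--part into the symmetric pair of $\ln(1+\exp(\mu w))$--integrals displayed in the statement. The overall prefactor $\mp 1/\beta$ then comes from combining the minus sign picked up in this change of variable with the minus sign already attached to the $G_L$--denominator of $S^{+}$ (respectively the $G_R$--denominator of $S^{-}$), and the shift $\pm i\pi/(2\mu)$ in the endpoints arises because $a=(a_++a_-)/2\to \pi/\mu$ as $\hbar\downarrow 0$, so $ia/2\to i\pi/(2\mu)$.

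The main technical point to be pinned down is that one may legitimately apply \eqref{GRLlnLim} even though the argument $z$ of each $G_{R/L}$--factor is itself $\hbar$--dependent: it has the form $z_0\pm ia/2=z_0\pm i\pi/(2\mu)\pm i\hbar\beta/4$ with $z_0$ a fixed real-linear combination of $x$'s, $y$'s and $\eta$. For real $x,y,\eta$ the limiting point $z_0\pm i\pi/(2\mu)$ lies strictly between the two cuts $\pm i[\pi/\mu,\infty)$, so a straight-line contour from $0$ to this point stays in the strip $|\Im w|<\pi/\mu$; by continuity this remains true for all sufficiently small $\hbar>0$. Hence \eqref{GRLlnLim}, which is uniform on compacta in the cut plane, yields the claim. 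The hardest part is therefore a modest but careful uniformity-in-$z$ argument for \eqref{GRLlnLim}; once that is in hand, everything else reduces to the elementary substitution $w\mapsto-w$ described above.
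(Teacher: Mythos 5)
Your proposal is correct and follows exactly the route the paper intends: the paper gives no explicit proof beyond remarking that the lemma ``is now easy to verify'' from \eqref{GRLlnLim} applied to the definitions \eqref{TU} and \eqref{Sp}--\eqref{Sm}, and your termwise application of \eqref{GRLlnLim}, the substitution $w\mapsto -w$ to convert the $G_R$-contributions, the observation $ia/2\to i\pi/2\mu$, and the uniformity-on-compacta argument handling the $\hbar$-dependent arguments supply precisely the omitted bookkeeping. The signs and endpoints you obtain match the stated formulas, so no gap remains.
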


Taking \eqref{PsiF} as a guide, we should consider transformations $B^\pm$ from canonical coordinates $(x,p)\in\mathbb{R}^{2N}$ to new canonical coordinates $(y,q)$,  generated by
\begin{equation}
	F^\pm(x,y) = \frac{\pm 1}{\beta}\big(F_U(x)+F_U(y)-F_S^\pm(x,y)\big),
\end{equation}
where
\begin{equation}
	F_U(x) = \frac{1}{2}\sum_{m=1}^N\int_0^{x_{m+1}-x_m+\eta}dw\ln\big(1+\exp(\mu w)\big),
\end{equation}
\begin{multline}
	F_S^\pm(x,y) = \sum_{m=1}^N\Bigg(\int_0^{x_{m+1}-y_m\pm i\pi/2\mu+\eta/2}dw\ln\big(1+\exp(\mu w)\big)\\ + \int_0^{y_m-x_m\pm i\pi/2\mu+\eta/2}dw\ln\big(1+\exp(\mu w)\big)\Bigg).
\end{multline}
However, as announced in the Introduction, there is a problem with these generating functions in the Toda case as well. Indeed, we need $\eta$ to be real to ensure formal self-adjointness of the Toda A$\De$Os, but this requirement implies that the gradients of $F^{+}$ and $F^{-}$ are not real-valued on $\mathbb{R}^{2N}$. Hence we should not expect that they give rise to bijections on the Toda phase space $\mathbb{R}^{2N}$. 

To attempt to improve this state of affairs, let us first focus our attention on $F^+(x,y)$ and consider a suitable analytic continuation. Specifically, taking 
\beq\label{xac}
x_m\to x_m-i\pi/2\mu +\eta/2,\ \ \ \ m=1,\ldots,N,
\eeq
 the function $F_U(x)$ is left invariant, while $F_S^+(x,y)$ turns into
\begin{multline}\label{FS+}
	F_S^+(x,y) = \sum_{m=1}^N\Bigg(\int_0^{x_{m+1}-y_m+\eta}dw\ln\big(1+\exp(\mu w)\big)\\ + \int_0^{y_m-x_m}dw\ln\big(1-\exp(\mu w)\big)\Bigg)+NI/\mu,
\end{multline}
where $I$ is the integral
\beq
I=\int_{0}^{i\pi}dz\ln (1+\exp(z)).
\eeq

 Since additive constants are irrelevant, we have now been led to a more well-behaved generating function. Clearly, if we start from $F^-(x,y)$ rather than $F^+(x,y)$,  we would obtain the same function up to an overall sign (which is a matter of convention anyway). The key question remains, however, whether this function does give rise to  a symplectomorphism  $(x,p)\mapsto(y,q)$  on $\mathbb{R}^{2N}$ via~\eqref{Fgen}. Accordingly, reverting from now on to the dimensionless coupling $\gamma$ defined by (recall \eqref{fT} and \eqref{gameta})
 \beq\label{gamma}
 \gamma =\exp(\mu \eta/2),
 \eeq
we should study  the equations
\begin{equation}
\begin{split}\label{pm}
	p_m &=- \frac{\partial F^+}{\partial x_m}\\ &= \frac{1}{2\beta}\ln\left(\frac{1+\gamma^2\exp\big(\mu(x_{m+1}-x_m)\big)}{1+\gamma^2\exp\big(\mu(x_m-x_{m-1})\big)}\right)\\ &\quad + \frac{1}{\beta}\ln\left(\frac{1+\gamma^2\exp\big(\mu(x_m-y_{m-1})\big)}{1-\exp(\mu(y_m-x_m)\big)}\right),\ \quad m=1,\ldots,N,
\end{split}
\end{equation}
\begin{equation}\label{qm}
\begin{split}
	q_m &= \frac{\partial F^+}{\partial y_m}\\ &= \frac{1}{2\beta}\ln\left(\frac{1+\gamma^2\exp\big(\mu(y_m-y_{m-1})\big)}{1+\gamma^2\exp\big(\mu(y_{m+1}-y_m)\big)}\right)\\ &\quad + \frac{1}{\beta}\ln\left(\frac{1+\gamma^2\exp\big(\mu(x_{m+1}-y_m)\big)}{1-\exp\big(\mu(y_m-x_m)\big)}\right),\ \quad m=1,\ldots,N.
\end{split}
\end{equation}

There is still a troubling aspect about these equations: If we start from an arbitrary phase space point $(x,p)\in\R^{2N}$, then  we can only accept a solution $y(x,p)$ to the implicit equations \eqref{pm} when it satisfies
\beq\label{puz}
y_m<x_m,\ \ \ \ m=1,\ldots,N.
\eeq
Now the existence of a solution with this property cannot easily be ruled out, but also seems hard to prove. Even when we assume that such solutions exist and give rise to a well-defined symplectomorphism, there is still a problem with reinterpreting the generating function as a Lagrangian for a discrete map on $\R^N$. Indeed, in that case we would expect to have the freedom to choose arbitrary $x, y$ in $\R^N$ to obtain  a unique sequence $z(n)\in\R^N$, $n\in\Z$, with initial values $z(0)=x, z(1)=y$, whereas this freedom is at variance with the constraint \eqref{puz}.

Ignoring these dilemmas from now on, we can easily show the B\"acklund property \eqref{SS} by proceeding in the same way as in Subsection~3.1. Specifically, in this case \eqref{SS} can be written
\begin{multline}\label{BT}
\sum_{\substack{I\subset\lbrace 1,\ldots,N\rbrace\\ |I|=k}}\prod_{m\in I}\exp(\beta p_m)
\prod_{\substack{m\in I\\ m+1\notin I}}\Big(1+\gamma^2\exp\big(\mu(x_{m+1}-x_m)\big)\Big)^{1/2} \\
\times \prod_{\substack{m\in I\\ m-1\notin I}}\Big(1+\gamma^2\exp\big(\mu(x_m-x_{m-1})\big)\Big)^{1/2}
 =(x,p\to y,q).
\end{multline}
Also, from \eqref{pm}-\eqref{qm}
we have
\begin{equation}
	\exp(\beta p_m) = \left(\frac{1+\gamma^2\exp\big(\mu(x_{m+1}-x_m)\big)}{1+\gamma^2\exp\big(\mu(x_m-x_{m-1})\big)}\right)^{1/2}\frac{1+\gamma^2\exp\big(\mu(x_m-y_{m-1})\big)}{1-\exp\big(\mu(y_m-x_m)\big)},
\end{equation}
\begin{equation}
	\exp(\beta q_m) = \left(\frac{1+\gamma^2\exp\big(\mu(y_m-y_{m-1})\big)}{1+\gamma^2\exp\big(\mu(y_{m+1}-y_m)\big)}\right)^{1/2}\frac{1+\gamma^2\exp\big(\mu(x_{m+1}-y_m)\big)}{1-\exp\big(\mu(y_m-x_m)\big)}.
\end{equation}
Now fix an index set $I$ and consider the product of quantities $\exp(\beta p_m)$ with $m$ in $I$. For $m_1,m_2\in I$ such that $m_2=m_1+1~({\rm mod}~N)$, the denominator of the radicand corresponding to $m_2$ cancels the numerator of the radicand corresponding to $m_1$, and vice versa for $m_2=m_1-1~({\rm mod}~N)$. The product of quantities $\exp(\beta q_m)$ can be simplified similarly. From this it readily follows that the B\"acklund property \eqref{BT} amounts to the functional equations
\begin{multline}
	\sum_{\substack{I\subset\lbrace 1,\ldots,N\rbrace\\ |I|=k}}\prod_{\substack{m\in I\\ m+1\notin I}}\Big(1+\gamma^2\exp\big(\mu(x_{m+1}-x_m)\big)\Big)\prod_{m\in I}\frac{1+\gamma^2\exp\big(\mu(x_m-y_{m-1})\big)}{1-\exp\big(\mu(y_m-x_m)\big)}\\ = \sum_{\substack{I\subset\lbrace 1,\ldots,N\rbrace\\ |I|=k}}\prod_{\substack{m\in I\\ m-1\notin I}}\Big(1+\gamma^2\exp\big(\mu(y_m-y_{m-1})\big)\Big)\prod_{m\in I}\frac{1+\gamma^2\exp\big(\mu(x_{m+1}-y_m)\big)}{1-\exp\big(\mu(y_m-x_m)\big)}.
\end{multline}
These equations can be reduced to the  identities \eqref{TodaFunctionalEqs} by taking
\beq
\delta=+, a_{+}=2\pi/\mu, \rho=(2\ln\gamma+i\pi)/\mu.
\eeq
Thus we have established \eqref{SS} in the periodic Toda case.

Finally, we tie in the above generating function with a \lq discrete-time\rq\ map introduced by Suris \cite{Sur96} as a time-discretization of the defining Hamiltonian of the relativistic periodic Toda system.  To this end we observe that the discrete Euler-Lagrange equations
\begin{equation}
	\frac{\partial F^{+}}{\partial y_m}(\utilde{z},z) + \frac{\partial F^{+}}{\partial x_m}(z,\tilde{z}) = 0,\quad m=1,\ldots,N,
\end{equation}
for a sequence of vectors $z(n)\in\mathbb{C}^N$, $n\in\mathbb{Z}$, are equivalent to
\begin{multline}
	\frac{1-\exp\big(\mu(\tilde{z}_m-z_m)\big)}{1-\exp\big(\mu(z_m-\utilde{z}_m))} = \frac{1+\gamma^2\exp\big(\mu(z_{m+1}-z_m)\big)}{1+\gamma^2\exp\big(\mu(\utilde{z}_{m+1}-z_m)\big)}\\ \times\frac{1+\gamma^2\exp\big(\mu(z_m-\tilde{z}_{m-1})\big)}{1+\gamma^2\exp\big(\mu(z_m-z_{m-1})\big)},\quad m=1,\ldots,N.
\end{multline}
These  equations coincide with Eq.~(5.4) in \cite{Sur96} if we substitute $\mu =1,\gamma=g$.

\subsection{The nonperiodic Toda case and its dual}
 The classical asymptotics \eqref{PsiF} of kernel functions \eqref{Psipm}  with $U(x)$ given by \eqref{TUnp} and $S^\pm(x,y)$ by \eqref{Spl}--\eqref{Sml} can be studied in virtually the same way as in the periodic case. The reader who has followed us this far will have no difficulty to make the pertinent changes. In particular, with the nonperiodic Toda convention \eqref{npconv} in force, the B\"acklund property \eqref{SS} continues to hold, and the nonperiodic analog of $F^+(x,y)$ can be readily tied in with Suris' time-discretization of the defining Hamiltonian of the relativistic nonperiodic Toda system~\cite{Sur96} . For this case the action-angle map is known from~\cite{Rui90}, and so a further study might clarify the puzzling constraints~\eqref{puz}, which arise in the same way as in the periodic case. This is beyond our present scope, however.

Continuing with the dual system, we get from Subsection~2.5 a kernel function
\beq
\hat{\Psi}(\hat{p},\hat{q})=\hat{W}(\beta \hat{p}/\mu)^{1/2}\hat{W}(\beta \hat{q}/\mu)^{1/2}
\hat{\cS}(\beta\hat{p}/\mu,\beta\hat{q}/\mu),
\eeq
where $\hat{W}$ and $\hat{\cS}$ are given by \eqref{Wd} and \eqref{cSd}. Also, as explained earlier, we 
use the parameters \eqref{aT}. The classical asymptotics \eqref{PsiF} is now easily inferred from \eqref{Ghypcl}.

\begin{lemma}
For $\hat{p}$ and $\hat{q}$ in $\hat{G}$~\eqref{dTconf}, we have classical limits
\beq
\lim_{\hbar\downarrow 0}i\hbar \ln \hat{\cS}(2\pi/\mu,\hbar\beta;\beta\hat{p}/\mu,\beta\hat{q}/\mu)=\frac{1}{\beta\mu}\sum_{m,n=1}^N
\int_0^{\beta(\hat{p}_m-\hat{q}_n)}dw\ln (2\cosh(w/2)),
\eeq
\beq\label{hWlim}
\lim_{\hbar\downarrow 0}i\hbar \ln 
\hat{W}(2\pi/\mu,\hbar\beta;\beta \hat{p}/\mu)=\frac{i\pi}{\mu}\sum_{1\le m<n\le N}
(\hat{p}_m-\hat{p}_n),
\eeq
where the integration paths stay away from the cuts $\pm i[\pi,\infty)$.
\end{lemma}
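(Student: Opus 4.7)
The plan is to reduce both limits to the known classical asymptotics of the hyperbolic gamma function (equation~\eqref{Ghypcl}) together with an elementary $\sinh$-asymptotic computation for the factors making up $\hat{W}$. The first limit is a direct factor-by-factor application of \eqref{Ghypcl}; the second requires separating the two $\sinh$-factors in $\hat{W}$ and extracting the exponentially dominant contribution from the one whose step size $a_{-}=\hbar\beta$ tends to zero.

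For $\hat{\cS}$, I would start from \eqref{cSd}, specialise to $v=\beta\hat{p}/\mu$, $w=\beta\hat{q}/\mu$, take $i\hbar\ln$, and interchange with the finite sum. Equation~\eqref{Ghypcl} then yields, factor by factor,
\[
\lim_{\hbar\downarrow 0}i\hbar\ln G\bigl(2\pi/\mu,\hbar\beta;\beta(\hat{p}_m-\hat{q}_n)/\mu\bigr)=\frac{1}{\beta\mu}\int_0^{\beta(\hat{p}_m-\hat{q}_n)}\ln(2\cosh(w/2))\,dw,
\]
and summing over $m,n=1,\ldots,N$ gives the first identity. Since $\hat{p}_m-\hat{q}_n\in\R$, the natural path along the real $w$-axis trivially stays away from the cuts $\pm i[\pi,\infty)$, so no contour deformation is needed.

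For $\hat{W}$ I would use the explicit product representation \eqref{Wd}. Because $a_+=2\pi/\mu$ is independent of $\hbar$, the factor $s_+(\beta(\hat{p}_j-\hat{p}_k)/\mu)=\sinh(\beta(\hat{p}_j-\hat{p}_k)/2)$ is an $\hbar$-free positive quantity on $\hat{G}$, so its logarithm disappears after multiplication by $i\hbar$. By contrast,
\[
s_-(\beta(\hat{p}_j-\hat{p}_k)/\mu)=\sinh\bigl(\pi(\hat{p}_j-\hat{p}_k)/(\hbar\mu)\bigr),
\]
and since $\hat{p}\in\hat{G}$ forces $\hat{p}_j-\hat{p}_k>0$ for $j<k$, this factor grows like $\tfrac12\exp(\pi(\hat{p}_j-\hat{p}_k)/(\hbar\mu))$. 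Hence $i\hbar\ln(4s_+s_-)=i\pi(\hat{p}_j-\hat{p}_k)/\mu+O(\hbar)$, and summing over $1\le j<k\le N$ produces the second identity.

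The main point to watch is the sign: positivity of $\hat{p}_j-\hat{p}_k$ on $\hat{G}$ for $j<k$ fixes the sign of the exponential dominating $s_-$, and hence the sign of $i\pi/\mu$ in the final answer. No genuine obstacle arises; both limits are straightforward consequences of \eqref{Ghypcl} and elementary $\sinh$-asymptotics, and neither contour avoidance nor uniform-convergence issues cause any trouble in the relevant region $\hat{p},\hat{q}\in\hat{G}$.
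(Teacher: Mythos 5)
Your argument is correct and is essentially the route the paper intends: the paper states the lemma as an easy consequence of \eqref{Ghypcl}, which is exactly your factor-by-factor application (with the rescaling $a_-=\hbar\beta$, $a_+=2\pi/\mu$ producing the $1/\beta\mu$ prefactor and the substitution $w\to\mu w$), together with the elementary $\sinh$-asymptotics of the representation \eqref{Wd} for $\hat{W}$, where positivity of $\hat{p}_j-\hat{p}_k$ on $\hat{G}$ fixes the dominant exponential just as you say. Nothing is missing.
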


Hence we should study whether the generating function
\begin{equation}
	F(\hat{p},\hat{q}) = \frac{1}{\beta\mu}\big(F_{\hat{W}}(\hat{p})+F_{\hat{W}}(\hat{q})+F_{\hat{\mathcal{S}}}(\hat{p},\hat{q})\big),
\end{equation}
where
\begin{equation}
	F_{\hat{W}}(\hat{p}) = \frac{i\pi}{2}\sum_{m=1}^N(N-2m+1)\beta\hat{p}_m,
\end{equation}
\begin{equation}
	F_{\hat{\mathcal{S}}}(\hat{p},\hat{q}) = \sum_{m,n=1}^N\int_0^{\beta(\hat{p}_m-\hat{q}_n)}dw\ln\big(2\cosh(w/2)\big),
\end{equation}
might lead to a canonical map $(\hat{x},\hat{p})\mapsto (\hat{y},\hat{q})$ on the dual Toda phase space
\beq
\hat{\Omega}=\{ (\hat{x},\hat{p})\in\R^{2N} \mid \hat{p}\in\hat{G}\}.
\eeq
This entails that $\hat{q}(\hat{x},\hat{p})$ is to be determined from the equations
\begin{equation}\label{xdT}
	\hat{x}_m = -\frac{i\pi}{2\mu}(N-2m+1)-\frac{1}{\mu}\sum_{n=1}^N\ln\big(2\cosh(\beta (\hat{p}_m-\hat{q}_n)/2)\big),
\end{equation}
and then $\hat{y}(\hat{x},\hat{p})$ is given by
\begin{equation}\label{ydT}
	\hat{y}_m = \frac{i\pi}{2\mu}(N-2m+1)-\frac{1}{\mu}\sum_{n=1}^N\ln\big(2\cosh(\beta (\hat{p}_n-\hat{q}_m)/2)\big).
\end{equation}

Clearly, for $\hat{p},\hat{q}\in\hat{G}$ the numbers $\hat{x}_m$ and $\hat{y}_m$ are not real as they stand, and this cannot be cured by switching to a parameter $\mu$ that is purely imaginary. On the other hand, we can also start from a modified generating function 
\beq\label{Fmod}
\tilde{F}(\hat{p},\hat{q})= \frac{i\pi}{2\mu}\sum_{m=1}^N (\hat{p}_m-\hat{q}_m)+F((\hat{p}_1-i\pi/\beta,\ldots,\hat{p}_N-i\pi/\beta),\hat{q}),
\eeq
which can be obtained from a modified kernel function, as explained earlier (cf.~the paragraph containing \eqref{ambig}). The point of this is that a solution $\hat{q}\in\hat{G}$ to the modified equations 
\begin{equation}\label{xdTm}
	\hat{x}_m=-\frac{\partial\tilde{F}}{\partial \hat{p}_m} = -\frac{i\pi}{2\mu}(N-2m+2)-\frac{1}{\mu}\sum_{n=1}^N\ln\big(-2i\sinh(\beta (\hat{p}_m-\hat{q}_n)/2)\big),
\end{equation}
might  then exist, provided its coordinates interlace with those of the given vector $\hat{p}\in\hat{G}$:
\beq\label{interl}
\hat{q}_N<\hat{p}_N<\hat{q}_{N-1}<\cdots <\hat{q}_1<\hat{p}_1.
\eeq
 In that case, the vector $\hat{y}$ given by the modified equations 
 \begin{equation}\label{ydTm}
	\hat{y}_m =\frac{\partial\tilde{F}}{\partial \hat{q}_m}= \frac{i\pi}{2\mu}(N-2m)-\frac{1}{\mu}\sum_{n=1}^N\ln\big(-2i\sinh(\beta (\hat{p}_n-\hat{q}_m)/2)\big),
\end{equation}
would be real. 

Again, a further study of this global analysis question is beyond our scope. Rather, we proceed to show the validity of the B\"acklund property
\begin{equation}\label{dSS}
	\hat{H}_k(\hat{x},\hat{p}) =\hat{ H}_k(\hat{y},\hat{q}),\ \quad k=1,\ldots,N,
\end{equation}
for the dual Hamiltonians, which we rewrite as
\beq
\hat{H}_k(\hat{x},\hat{p})= \sum_{\substack{I\subset\lbrace 1,\ldots,N\rbrace\\ |I|=k}}
 \prod_{\substack{m\in I\\ n\notin I}}\frac{\beta/2}{|\sinh(\beta(\hat{p}_m-\hat{p}_n)/2)|}
\prod_{l\in I}\exp(\mu \hat{x}_l),\quad k=1,\ldots,N.
\eeq

Substituting the above expressions \eqref{xdT}--\eqref{ydT} for $\hat{x}$ and $\hat{y}$, we infer that \eqref{dSS} holds if and only if the following identities are valid for $k=1,\ldots,N$:
\begin{multline}\label{dTId}
	\sum_{\substack{I\subset\lbrace 1,\ldots,N\rbrace\\ |I|=k}}\prod_{m\in I}\exp\left(-\frac{i\pi}{2}(N-2m+1)\right)\prod_{n\notin I}\left|\frac{1}{\sinh(\beta(\hat{p}_m-\hat{p}_n)/2)}\right|\\ \times\prod_{\substack{m\in I\\ n=1,\ldots,N}}\frac{1}{\cosh(\beta(\hat{p}_m-\hat{q}_n)/2)}\\ = \sum_{\substack{I\subset\lbrace 1,\ldots,N\rbrace\\ |I|=k}}\prod_{m\in I}\exp\left(\frac{i\pi}{2}(N-2m+1)\right)\prod_{n\notin I}\left|\frac{1}{\sinh(\beta(\hat{q}_m-\hat{q}_n)/2)}\right|\\ \times\prod_{\substack{m\in I\\ n=1,\ldots,N}}\frac{1}{\cosh(\beta(\hat{p}_n-\hat{q}_m)/2)}.
\end{multline}
Fix an index set $I$ and consider the corresponding term on the left-hand side. Since $\hat{p}$ belongs to $\hat{G}$~\eqref{dTconf}, we can rewrite this term as
\begin{equation}
	\prod_{m\in I}i^{-N+2m-1}\prod_{\substack{n\notin I\\ n>m}}(-1)\prod_{n\notin I}\frac{1}{\sinh(\beta(\hat{p}_n-\hat{p}_m)/2)}\prod_{n=1}^N\frac{1}{\cosh(\beta(\hat{p}_m-\hat{q}_n)/2)}.
\end{equation}
The numerical factor can be simplified using 
\begin{equation}
	\prod_{\substack{m\in I\\ n\notin I\\ n>m}}(-) = \prod_{\substack{m\in I\\ n=1,\ldots,N\\ n>m}}(-)\Bigg/\prod_{\substack{m,n\in I\\ n>m}}(-) = \prod_{m\in I}(-)^{N-m}\Bigg/ (-)^{k(k-1)/2}=i^{-k(k-1)}\prod_{m\in I}i^{2N-2m}.
\end{equation}
Doing so, we obtain
\begin{equation}
	i^{k(N-k)}\prod_{n\notin I}\frac{1}{\sinh(\beta(\hat{p}_n-\hat{p}_m)/2)}\prod_{n=1}^N\frac{1}{\cosh(\beta(\hat{p}_m-\hat{q}_n)/2)}.
\end{equation}
Likewise, since $\hat{q}\in\hat{G}$, the corresponding term on the right-hand side can be simplified to yield
\begin{equation}
	i^{k(N-k)}\prod_{n\notin I}\frac{1}{\sinh(\beta(\hat{q}_m-\hat{q}_n)/2)}\prod_{n=1}^N\frac{1}{\cosh(\beta(\hat{p}_n-\hat{q}_m)/2)}.
\end{equation}
We can thus reduce \eqref{dTId} to \eqref{sId} by substituting $\hat{p}_j\to \hat{p}_j-i\pi/\beta$, $j=1,\ldots,N$, and cancelling the $I$-independent numerical factor. This proves the B\"acklund property \eqref{dSS}.

\section{Nonrelativistic limits}

\subsection{Kernel functions}

\subsubsection{The elliptic case}

To take the nonrelativistic limit $c\to\infty$ of the various quantities in Subsection~2.1, we first need to reparametrize the two positive step sizes $a_{+}$ and $a_{-}$ in the elliptic gamma function as $\alpha$ and $\hbar\beta$, cf.~\eqref{aa}. Taking $\beta=1/mc$ to 0 then amounts to taking $a_{-}$ to 0. (We keep $\hbar$ fixed, since we wish to stay in the quantum setting.) To obtain nontrivial limits, we should also set
\beq\label{rhog}
\rho =i\beta g.
\eeq
Since we let $\beta$ go to 0, the coupling constant $g$ is allowed to vary over $(0,\infty)$. 

From \eqref{Gellnr} it now follows  that the kernel function $\cS(x,y)$ \eqref{defcS} has $\beta\to 0$ limit
\beq\label{cSnr}
\cK(x,y)=\prod_{j,k=1}^NR(x_j-y_k)^{-g/\hbar}.
\eeq
Moreover, 
(\ref{Gellnr}) yields the nonrelativistic limit
of the weight function  given by \eqref{W}--\eqref{HCell}:
\beq\label{Wnr}
W_{\rm nr}(x)=\left(\prod_{1\le j<k\le N}R(x_j-x_k+i\alpha/2)R(x_j-x_k-i\alpha/2)\right)^{g/\hbar}.
\eeq

The  nonrelativistic counterparts $H_{m,{\rm nr}}$, $m=1,\ldots,N$, of the commuting  Hamiltonians $H_{k,+}$ \eqref{Hkp} arise as $\beta\to 0$ limits of suitable linear combinations of the Hamiltonians $H_{1,+},\ldots,H_{m,+}$. This limit transition hinges on the use of the classical elliptic relativistic and nonrelativistic Lax matrices. Later on, we use these matrices in our study of classical B\"acklund transformations, cf.~\eqref{Lell} and \eqref{Kric}. On the quantum level, however, the pertinent limit is fraught with ordering problems. Although these can be resolved, the details are quite substantial and will be skipped. They can be found in Subsection~4.3 of~\cite{Rui94}. (We employ a similar method in the quantum periodic Toda case below, and make use of a corresponding formula for the Hamiltonians in the hyperbolic case discussed in the next section.)

An important ingredient of the reasoning in~\cite{Rui94} is a uniqueness result obtained by Oshima and H.~Sekiguchi~\cite{OS95}; this paper also contains explicit expressions for the commuting elliptic PDOs. The relevant limits entail the kernel identities
\beq\label{HnrP}
(H_{k,{\rm nr}}(x)-H_{k,{\rm nr}}(-y))\Psi_{\rm nr}(x,y)=0,\ \ \ \ k=1,\ldots,N,
\eeq
where 
\beq\label{Psinr}
\Psi_{\rm nr}(x,y)=W_{\rm nr}(x)^{1/2}W_{\rm nr}(y)^{1/2}\cK(x,y).
\eeq
 More explicitly, the Hamiltonians are  $N$ commuting PDOs of the form
\beq\label{H1nr}
H_{1,{\rm nr}}(x)=-i\hbar\sum_{j=1}^N\partial_{x_j},
\eeq
\beq\label{H2}
H_{2,{\rm nr}}(x)=-\hbar^2\sum_{1\le j_1< j_2\le N}\partial_{{x_j}_1}\partial_{{x_j}_2}-g(g-\hbar)\sum_{1\le j<l\le N}\wp(x_j-x_l;\pi/2r,i\alpha/2),
\eeq
\beq
H_{k,{\rm nr}}(x)=(-i\hbar)^k
\sum_{1\le j_1<\cdots < j_k\le N}\partial_{{x_j}_1}\cdots \partial_{{x_j}_k} +{\rm l.\ o.},\ \ \ \ k=3,\ldots,N,
\eeq
where l.~o.~denotes terms that are of lower order in the $x_j$-partials~\cite{OS95}. (In~\eqref{H2} we have omitted an additive constant that depends on the spectral parameter in the Lax matrix.)
In particular, for the defining Hamiltonian
\beq\label{Hnr}
H_{{\rm nr}}(x)= \frac{1}{2}(H_{1,{\rm nr}}(x))^2-H_{2,{\rm nr}}(x)=-\frac{\hbar^2}{2}\sum_{j=1}^N\partial_{x_j}^2+g(g-\hbar)\sum_{1\le j<l\le N}\wp(x_j-x_l;\pi/2r,i\alpha/2)
\eeq
of the elliptic nonrelativistic Calogero-Moser system,
this implies the kernel identity
\beq\label{HP}
(H_{{\rm nr}}(x)-H_{{\rm nr}}(y))\Psi_{\rm nr}(x,y)=0.
\eeq
(Here and below, we choose $m=1$.) This identity was first obtained by Langmann \cite{Lan00}, cf.~also~\cite{Rui04} for further details.

\subsubsection{The hyperbolic case and its dual}

Using the parameters \eqref{aT} and \eqref{rhog}, we obtain from \eqref{Ghypnr} the nonrelativistic limits
\beq\label{cSnrh} 
\cK(x,y)=\prod_{j,k=1}^N[2\cosh(\mu(x_j-y_k)/2)]^{-g/\hbar},
\eeq
\beq\label{Wnrh}
W_{\rm nr}(x)=\left(\prod_{1\le j<k\le N}4\sinh^2(\mu(x_j-x_k)/2)\right)^{g/\hbar},
\eeq
as the analogs of \eqref{cSnr}--\eqref{Wnr}. The kernel function $\Psi_{\rm nr}$ is given by \eqref{Psinr} with \eqref{cSnrh}--\eqref{Wnrh} in force. Then we obtain \eqref{HnrP}--\eqref{HP} with the replacement
\beq\label{replacement}
\wp(x_j-x_l;\pi/2r,i\alpha/2)\to \mu^2/4\sinh^2(\mu(x_j-x_l)/2).
\eeq

In contrast to the elliptic case, we are also able to obtain kernel identities relating the PDOs $H_{k,{\rm nr}}$ in $N$ variables $x=(x_1,\ldots,x_N)$ to a sum of the PDOs $H_{j,{\rm nr}}$ in $N-\ell$ variables $y=(y_1,\ldots,y_{N-\ell})$, $\ell=0,\ldots,N$. The key for doing so consists of explicit formulae for the PDOs, which involve not only the nonrelativistic Lax matrix
\beq\label{Lnrhyp}
L_{\rm nr}(x,p)_{jk} \equiv \delta_{jk}p_j + (1-\delta_{jk})\frac{i\mu g}{2\sinh\big(\mu(x_j-x_k)/2\big)},\quad j,k=1,\ldots,N,
\eeq
but also the diagonal $N\times N$ matrix
\beq
E(x)\equiv\text{diag}\big(z_1(x),\ldots,z_N(x)\big),
\eeq
with
\beq\label{zjx}
z_j(x) \equiv -\frac{i\mu g}{2}\sum_{k\neq j}\coth\big(\mu(x_j-x_k)/2\big),\ \ \ \ j=1,\ldots,N.
\eeq
We note that the substitution $x_j\to ix_j$ in $L_{\rm nr}(x,p)$ yields a trigonometric Lax matrix that is slightly different from the Lax matrix due to Moser~\cite{Mos75}. (Specifically, in Moser's Lax matrix the function $1/\sin$ is replaced by~$\cot$.) 

The canonical quantization substitution
\beq
p_j\to -i\hbar \partial_{x_j},\ \ \ \ j=1,\ldots,N,
\eeq
in $L_{\rm nr}(x,p)$ yields an operator-valued matrix whose symmetric functions $\hat{\Sigma}_k(L_{\rm nr})(x)$ are well defined, since no ordering ambiguities occur. (Indeed, a term in the expansion of a principal minor of~\eqref{Lnrhyp} that depends on $p_j$ does not depend on $x_j$.) However, $\hat{\Sigma}_2(L_{\rm nr})(x)$ is not equal to
\beq
H_{2,{\rm nr}}(x)=-\hbar^2\sum_{1\le j_1< j_2\le N}\partial_{{x_j}_1}\partial_{{x_j}_2}-g(g-\hbar)\sum_{1\le j<l\le N}\mu^2/4\sinh^2(\mu(x_j-x_l)/2),
\eeq
since the term proportional to $\hbar$ in the potential energy is missing. (In fact, there seems to be no complete proof in the literature that the $N$ PDOs $\hat{\Sigma}_k(L_{\rm nr})(x)$ commute for arbitrary $N$, cf.~\cite{Rui94}, Subsection~4.2.)

By contrast, the symmetric functions $\Sigma_k(L_{\rm nr}(x,p)+E(x))$ with $k>1$  contain products of terms that depend on $p_m$ and $x_m$, so their canonical quantization is ambiguous. As shown in~Section~4.3 of~\cite{Rui94}, the ordering choice ensuring commutativity is normal ordering: the procedure of putting $x$-dependent coefficients to the left of monomials in  the momentum operators $-i\hbar\partial_{x_1},\ldots,-i\hbar\partial_{x_N}$. We shall write $:\hat{\Sigma}_k(L_{\text{nr}}+E)(x):$ for the normal-ordered PDOs obtained from $\Sigma_k(L_{{\rm nr}}(x,p)+E(x))$ by substituting $-i\hbar\partial_{x_m}$ for $p_m$. The nonrelativistic commuting Hamiltonians $H_{k,{\rm nr}}$ are then given by the formula
\beq\label{Hhnr}
H_{k,{\rm nr}}(x) = W_{\rm nr}(x)^{1/2}:\hat{\Sigma}_k\big(L_{\rm nr}+E\big)(x):W_{\rm nr}(x)^{-1/2},\ \ \ \ k=1,\ldots,N.
\eeq
(To get a feel for what is going on here, the reader may wish to check the case $N=k=2$.)

The corresponding kernel identities have the same structure as the identities \eqref{hyperbolicKernelIds} in Theorem \ref{HypKernelIdProp}. However, in this case they involve coefficients $c_{\ell,j}$ with $\ell\in\mathbb{N}$, $j\in\mathbb{Z}$, given by
\beq\label{nrside}
c_{0,0}=1,\ \ \ c_{\ell,j}=0,\ \ \ j>\ell,\ \ \ j<0,
\eeq
and
\beq\label{nrcS}
c_{\ell,j} = \left(\frac{i\mu g}{2}\right)^jS_j(1,3,\ldots,2\ell-1),\ \ \ \ j=0,\ldots,\ell,
\eeq
where $S_j(a_1,\ldots,a_\ell)$ is the $j$th elementary symmetric function of $a_1,\ldots,a_\ell$. We note that the coefficients are uniquely determined by the recurrence relation
\beq\label{cCoeffsRel}
c_{\ell+1,j} = c_{\ell,j}+\frac{i\mu g}{2}(2\ell+1) c_{\ell,j-1}
\eeq
together with the side conditions \eqref{nrside}.

We are now ready to state and prove the pertinent identities.

\begin{theorem}
For $\ell=0,1,\ldots,N$, let
\beq\label{cKLim}
\cK_\ell(x,y)\equiv \frac{\exp\left(\frac{\mu g\ell}{2\hbar}\left(\sum_{n=1}^{N-\ell}y_n-\sum_{m=1}^Nx_m\right)\right)}{\prod_{m=1}^N\prod_{n=1}^{N-\ell}[2\cosh(\mu(x_m-y_n)/2)]^{g/\hbar}},\ \ \ \ \cK_N(x)\equiv\exp\left(-\frac{\mu gN}{2\hbar}\sum_{m=1}^Nx_m\right).
\eeq
For any $k\in\lbrace 1,\ldots,N\rbrace$, we have
\begin{multline}\label{nrhyperbolicKernelIds}
:\hat{\Sigma}_k\big(L_{\rm nr}+E\big)(x_1,\ldots,x_N):\cK_\ell(x,y)\\ = \sum_{j=0}^{\min(k,\ell)}c_{\ell,j}:\hat{\Sigma}_{k-j}\big(L_{\rm nr}+E\big)(-y_1,\ldots,-y_{N-\ell}):\cK_\ell(x,y),
\end{multline}
where
\beq
:\hat{\Sigma}_m\big(L_{\rm nr}+E\big)(-y_1,\ldots,-y_{N-\ell}):\ \equiv 0,\quad m>N-\ell,\quad :\hat{\Sigma}_0\big(L_{\rm nr}+E\big):\ \equiv 1.
\eeq
\end{theorem}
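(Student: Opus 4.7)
The proof proceeds by induction on $\ell$, paralleling the argument for Theorem~\ref{HypKernelIdProp} but adapted to the PDO setting. The base case $\ell=0$ is equivalent to the standard kernel identity $(H_{k,{\rm nr}}(x)-H_{k,{\rm nr}}(-y))\Psi_{\rm nr}(x,y)=0$: using the similarity \eqref{Hhnr} and the invariance $W_{\rm nr}(-y)=W_{\rm nr}(y)$, the factors $W_{\rm nr}(x)^{1/2}W_{\rm nr}(y)^{1/2}$ strip off from both sides, leaving \eqref{nrhyperbolicKernelIds} at $\ell=0$ with $c_{0,0}=1$.

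For the inductive step $\ell\to\ell+1$, the plan is to take the limit $y_{N-\ell}\to\infty$ in the identity at level $\ell$. A direct computation using $2\cosh(\mu z/2)\sim e^{\mu|z|/2}$ determines an exponential prefactor $G(x,y)$, linear in the $x_m$, $y_n$ ($n<N-\ell$), and $y_{N-\ell}$, such that $G(x,y)\mathcal{K}_\ell(x,y)\to\mathcal{K}_{\ell+1}(x,y_1,\ldots,y_{N-\ell-1})$. Multiplying the inductive identity by $G$ and commuting $G$ through the PDOs on each side merely shifts the momenta $p_m$ (resp.~$p_j$) appearing on the diagonals of $L_{\rm nr}$ by constants.

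In the limit, the $(N-\ell)\times(N-\ell)$ matrix $L_{\rm nr}(-y)+E(-y)$ becomes block-diagonal: the off-diagonal entries in the $(N-\ell)$-th row and column vanish like $1/\sinh(\pm\infty)$, while the $\coth(\pm\infty)=\pm 1$ asymptotics generate finite constant contributions to the diagonal entries $z_j(-y)$ for $j\neq N-\ell$. The momentum shifts from $G$-conjugation are designed precisely to cancel these, leaving the standard $(N-\ell-1)\times(N-\ell-1)$ matrix $L_{\rm nr}+E$ as the larger block. The remaining $1\times 1$ block assembles the limit of $z_{N-\ell}(-y)=(i\mu g/2)(N-\ell-1)$, the conjugation-induced shift in $p_{N-\ell}$, and uses that $p_{N-\ell}$ annihilates the $y_{N-\ell}$-independent function $\mathcal{K}_{\ell+1}$. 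Applying the elementary identity $\Sigma_{k-j}(\diag(a,M'))=\Sigma_{k-j}(M')+a\,\Sigma_{k-j-1}(M')$ and regrouping by the order of the resulting smaller symmetric functions yields the relation $c_{\ell+1,j}=c_{\ell,j}+a\,c_{\ell,j-1}$ with $a=(i\mu g/2)(2\ell+1)$, matching the recurrence \eqref{cCoeffsRel} and, by the uniqueness statement following it, identifying the emergent coefficients as $c_{\ell+1,j}$.

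The main obstacle lies in the parallel analysis of the $x$-operator: since the natural renormalization $G$ also depends on $x$, conjugation shifts its momenta as well, producing $:\hat{\Sigma}_k(L_{\rm nr}+E-c\,I)(x):$ for an explicit constant $c$ proportional to $\mu g$. The algebraic identity $\Sigma_k(M-cI)=\sum_{j=0}^k\binom{N-j}{N-k}(-c)^{k-j}\Sigma_j(M)$ rewrites this as a linear combination of the unshifted $:\hat{\Sigma}_j(L_{\rm nr}+E)(x):$, and the delicate combinatorial task is to check that these additional contributions combine correctly with the terms from the block-decomposition of the $y$-operators to yield the clean identity \eqref{nrhyperbolicKernelIds} at level $\ell+1$. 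One must also verify that normal-ordering is preserved throughout the conjugation and limiting procedures, which is straightforward because the momentum $p_{N-\ell}$ decouples from all other positions in the limit.
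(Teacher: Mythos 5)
Your overall strategy---induction on $\ell$, a renormalized limit $y_{N-\ell}\to\infty$, block-diagonalization of the $y$-side Lax matrix, and identification of the emergent coefficients through the recurrence \eqref{cCoeffsRel}---is the same as the paper's, and your treatment of the $y$-side (vanishing off-diagonal entries, the $\coth$ limits, the $1\times1$ block of size $\tfrac{i\mu g}{2}(2\ell+1)$, the expansion $\Sigma_{k-j}$ of a block matrix) is essentially the paper's argument. The gap is precisely the step you flag as ``the main obstacle'': you assert that the natural renormalizing prefactor $G(x,y)$ must be linear in the $x_m$ as well, so that commuting it through $:\hat{\Sigma}_k(L_{\rm nr}+E)(x):$ shifts the $x$-momenta and forces a rearrangement via $\Sigma_k(M-c{\bf 1})=\sum_j\binom{N-j}{N-k}(-c)^{k-j}\Sigma_j(M)$, and you then leave the ``delicate combinatorial task'' of recombining these contributions unverified. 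As written, the induction step is therefore not proved.

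Moreover, the obstacle is spurious. Substituting $y_{N-\ell}\to y_{N-\ell}-\Lambda$, the $N$ factors $[2\cosh(\mu(x_m-y_{N-\ell}+\Lambda)/2)]^{-g/\hbar}$ behave like $\exp\bigl(-\tfrac{\mu g}{2\hbar}\sum_m(x_m-y_{N-\ell}+\Lambda)\bigr)$, and the $x$-part $\exp\bigl(-\tfrac{\mu g}{2\hbar}\sum_m x_m\bigr)$ is exactly the increment needed to upgrade the exponential prefactor in \eqref{cKLim} from level $\ell$ to level $\ell+1$; this is the very reason that exponential factor is built into the definition of $\cK_\ell$. Consequently the only compensation required is the purely $y$-dependent factor $\phi(y)=\exp\bigl(\tfrac{\mu g}{2\hbar}\bigl(\sum_{n=1}^{N-\ell-1}y_n-(N+\ell)y_{N-\ell}\bigr)\bigr)$, which commutes with the $x$-operator, so the left-hand side of \eqref{nrhyperbolicKernelIds} needs no rearrangement at all. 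Note also that your description is internally inconsistent: if $G$ had a nonvanishing $x$-linear part, then $G\,\cK_\ell$ would converge to $\cK_{\ell+1}$ multiplied by an $x$-dependent exponential rather than to $\cK_{\ell+1}$ itself, so your claimed limit and your claimed need for $x$-momentum shifts cannot both hold. Replacing your $G$ by the $y$-only factor $\phi$ and deleting the final paragraph turns your argument into a complete proof, and indeed into the paper's.
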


\begin{proof}
We shall prove the statement by induction on $\ell$. For $\ell=0$ the kernel identity \eqref{nrhyperbolicKernelIds} is equivalent to the hyperbolic counterpart of~\eqref{HnrP}. Hence, we now assume \eqref{nrhyperbolicKernelIds} for $\ell\geq 0$ and establish its validity for $\ell\to\ell+1$.

Setting
\beq
\phi(y)\equiv\exp\left(\frac{\mu g}{2\hbar}\left(\sum_{n=1}^{N-\ell-1}y_n-(N+\ell)y_{N-\ell}\right)\right),
\eeq
we start from the limit
\beq\label{cKhypLim}
\lim_{\Lambda\to\infty}\phi(y_1,\ldots,y_{N-\ell}-\Lambda)\cK_\ell(x,y_1,\ldots,y_{N-\ell}-\Lambda) = \cK_{\ell+1}(x,y),
\eeq
which is readily verified.
To make use of this, we note the commutation relation
\begin{multline}\label{commRel}
\phi(y):\hat{\Sigma}_k\big(L_{\rm nr} + E\big)(-y):\\ =\ :\hat{\Sigma}_k\left(L_{\rm nr} + E-\frac{i\mu g}{2}{\rm diag}(1,\ldots,1,-N-\ell)\right)(-y):\phi(y).
\end{multline}
Substituting $y_{N-\ell}\to y_{N-\ell}-\Lambda$ in the Lax matrix $L_{\rm nr}(-y)$, it is clear that the matrix elements $(L_{\rm nr})_{N-\ell,k}$ and $(L_{\rm nr})_{j,N-\ell}$, where $j,k=1,\ldots,N-\ell-1$, vanish in the limit $\Lambda\to\infty$, whereas $(L_{\rm nr})_{N-\ell,N-\ell}=i\hbar\partial_{y_{N-\ell}}$ remains the same. We also note the limits
\beq\label{zlim1}
\lim_{\Lambda\to\infty}z_{N-\ell}(-y_1,\ldots,-y_{N-\ell}+\Lambda) = -\frac{i\mu g}{2}(N-\ell-1),
\eeq
\beq\label{zlim2}
\lim_{\Lambda\to\infty}z_j(-y_1,\ldots,-y_{N-\ell}+\Lambda) = \frac{i\mu g}{2} + z_j(-y_1,\ldots,-y_{N-\ell-1}),\quad j=1,\ldots,N-\ell-1.
\eeq
We now multiply both sides of \eqref{nrhyperbolicKernelIds} by $\phi(y)$, substitute $y_{N-\ell}\to y_{N-\ell}-\Lambda$, and consider a term on the right-hand side corresponding to some $j=0,\ldots,\min(k,\ell)$. Using the commutation relation \eqref{commRel} and the limits \eqref{cKhypLim} and \eqref{zlim1}--\eqref{zlim2}, we find that the $\Lambda\to\infty$ limit of the term is given by
\begin{multline}
:\hat{\Sigma}_{k-j}\big(L_{\text{nr}}+E\big)(-y_1,\ldots,-y_{N-\ell-1}):\cK_{\ell+1}(x,y)\\ + \frac{i\mu g}{2}(2\ell+1) :\hat{\Sigma}_{k-j-1}\big(L_{\text{nr}}+E\big)(-y_1,\ldots,-y_{N-\ell-1}):\cK_{\ell+1}(x,y).
\end{multline}
If we now take $j\to j-1$ in the sum resulting from the second term, and compare the result with the recurrence relation \eqref{cCoeffsRel} that uniquely determines the coefficients $c_{\ell,j}$, then we arrive at \eqref{nrhyperbolicKernelIds} for $\ell\to\ell+1$.
\end{proof}

We continue by deducing nonrelativistic analogs of the three corollaries \eqref{cor1}--\eqref{cor3}. As a counterpart of \eqref{cor1}, we shall first consider the kernel identities involving the defining Hamiltonian
\beq
H_{{\rm nr}}(x) = W_{\rm nr}(x)^{1/2}\left(\frac{1}{2}\left(:\hat{\Sigma}_1\big(L_{\rm nr}+E\big)(x):\right)^2 - :\hat{\Sigma}_2\big(L_{\rm nr}+E\big)(x):\right)W_{\rm nr}(x)^{-1/2},
\eeq
cf. \eqref{Hnr} and \eqref{Hhnr}. From \eqref{nrcS} we infer
\beq
\begin{split}
\frac{1}{2}c_{\ell,1}^2 - c_{\ell,2} &=\left(\frac{i\mu g}{2}\right)^2 \left(\frac{1}{2}S_1(1,3,\ldots,2\ell-1)^2 - S_2(1,3,\ldots,2\ell-1)\right)\\ &= -\frac{\mu^2g^2}{8}P_2(1,3,\ldots,2\ell-1)\\ &= -\frac{\mu^2g^2}{24}\ell(4\ell^2-1),
\end{split}
\eeq
where $P_2(a_1,\ldots,a_\ell)$ denotes the second power sum symmetric function of $a_1,\ldots,a_\ell$. This yields the following corollary. 

\begin{corollary}
For $\ell=0,\ldots,N$, we have
\beq\label{defHnrP}
\big(H_{{\rm nr}}(x_1,\ldots,x_N) - H_{{\rm nr}}(y_1,\ldots,y_{N-\ell})\big)\Psi_{\ell,{\rm nr}}(x,y) = -\frac{\mu^2g^2}{24}\ell(4\ell^2-1)\Psi_{\ell,{\rm nr}}(x,y),
\eeq
where
\beq
\Psi_{\ell,{\rm nr}}(x,y)\equiv W_{{\rm nr}}(x)^{1/2}W_{{\rm nr}}(y)^{1/2}\cK_\ell(x,y).
\eeq
\end{corollary}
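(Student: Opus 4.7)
The plan is to reduce the statement to the $k=1,2$ cases of the preceding theorem via the defining formula $H_{\rm nr}=\tfrac12 H_{1,\rm nr}^2-H_{2,\rm nr}$ and an elementary parity observation. Writing $\sigma^{(x)}_k := {:}\hat{\Sigma}_k(L_{\rm nr}+E)(x){:}$ and $\sigma^{(y)}_k := {:}\hat{\Sigma}_k(L_{\rm nr}+E)(y_1,\ldots,y_{N-\ell}){:}$, formula~\eqref{Hhnr} gives $H_{\rm nr}=W_{\rm nr}^{1/2}\bigl[\tfrac12\sigma_1^2-\sigma_2\bigr]W_{\rm nr}^{-1/2}$ in both variable sets. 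Since $\sigma^{(x)}_k$ commutes with $W_{\rm nr}(y)^{1/2}$ and conversely, extracting the common factor $W_{\rm nr}(x)^{1/2}W_{\rm nr}(y)^{1/2}$ from $\Psi_{\ell,\rm nr}$ reduces the claim~\eqref{defHnrP} to
$$\Bigl[\tfrac12(\sigma^{(x)}_1)^2-\sigma^{(x)}_2-\tfrac12(\sigma^{(y)}_1)^2+\sigma^{(y)}_2\Bigr]\cK_\ell(x,y)=-\tfrac{\mu^2g^2}{24}\ell(4\ell^2-1)\cK_\ell(x,y).$$

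The first ingredient to record is the parity identity ${:}\hat{\Sigma}_k(L_{\rm nr}+E)(-y_1,\ldots,-y_{N-\ell}){:}=(-1)^k\sigma^{(y)}_k$. By inspection of~\eqref{Lnrhyp} and~\eqref{zjx}, the substitution $y_j\to -y_j$ together with the accompanying chain-rule sign flip $-i\hbar\partial_{y_j}\to i\hbar\partial_{y_j}$ of the quantized momentum multiplies every entry of $L_{\rm nr}+E$ by $-1$. Since the $k$-th principal minor is homogeneous of degree $k$ in the entries and normal ordering commutes with an overall sign change, the identity follows. Applying the theorem at $k=1,2$ and invoking this parity then yields
\begin{align*}
\sigma^{(x)}_1\cK_\ell &=(-\sigma^{(y)}_1+c_{\ell,1})\cK_\ell,\\
\sigma^{(x)}_2\cK_\ell &=(\sigma^{(y)}_2-c_{\ell,1}\sigma^{(y)}_1+c_{\ell,2})\cK_\ell.
\end{align*}

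Since $\sigma^{(x)}_1=-i\hbar\sum_{m=1}^N\partial_{x_m}$ commutes with every $y$-dependent quantity, the first identity squares directly to $(\sigma^{(x)}_1)^2\cK_\ell=[(\sigma^{(y)}_1)^2-2c_{\ell,1}\sigma^{(y)}_1+c_{\ell,1}^2]\cK_\ell$. Forming $\tfrac12(\sigma^{(x)}_1)^2-\sigma^{(x)}_2$ and inserting both identities, the terms linear in $\sigma^{(y)}_1$ cancel (the coefficients being $-c_{\ell,1}$ from the squared piece and $+c_{\ell,1}$ from $-\sigma^{(x)}_2$), leaving the operator $\tfrac12(\sigma^{(y)}_1)^2-\sigma^{(y)}_2+\tfrac12 c_{\ell,1}^2-c_{\ell,2}$ acting on $\cK_\ell$. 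The scalar $\tfrac12c_{\ell,1}^2-c_{\ell,2}$ has already been evaluated in the paragraph preceding the statement as $-\tfrac{\mu^2g^2}{24}\ell(4\ell^2-1)$ via \eqref{nrcS} and the power-sum identity $P_2(1,3,\ldots,2\ell-1)=\tfrac13\ell(4\ell^2-1)$, which completes the reduction. The parity identity is the only step with any conceptual content, but it is immediate from the explicit form of the Lax matrix, so no serious obstacle is anticipated.
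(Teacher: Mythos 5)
Your proposal is correct and follows essentially the same route as the paper: the corollary is obtained by applying the $k=1,2$ cases of the kernel identities for $:\hat{\Sigma}_k(L_{\rm nr}+E):$, combining them through $H_{\rm nr}=W_{\rm nr}^{1/2}\bigl[\tfrac12(:\hat{\Sigma}_1:)^2-:\hat{\Sigma}_2:\bigr]W_{\rm nr}^{-1/2}$ so that the terms linear in $\sigma_1^{(y)}$ cancel, and identifying the remaining scalar with $\tfrac12 c_{\ell,1}^2-c_{\ell,2}=-\tfrac{\mu^2g^2}{24}\ell(4\ell^2-1)$ as computed just before the statement. Your explicit parity identity $:\hat{\Sigma}_k(L_{\rm nr}+E)(-y):=(-1)^k:\hat{\Sigma}_k(L_{\rm nr}+E)(y):$ is a correct filling-in of a detail the paper leaves implicit (equivalently handled by the evenness of $H_{\rm nr}$ and $W_{\rm nr}$ under $y\to-y$).
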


We note that, apart from the exponential factor in the kernel function \eqref{cKLim}, the kernel identity \eqref{defHnrP} coincides with the hyperbolic limit of an elliptic identity obtained by Langmann \cite{Lan06}. The identity can also be obtained as a special case of Corollary~2.3 in \cite{HL10}. More specifically, the latter corollary depends on two polynomials $\alpha$ and $\beta$, and one function $z(x)$, which should be fixed according to
\beq
\alpha(z)=z^2,\ \ \ \beta(z)=z,\quad z(x)=e^{x}.
\eeq
In addition, one should set $\kappa=g$, $M=N-\ell$ and $\tilde{N}=\tilde{M}=0$. This results in a kernel identity that is equivalent to \eqref{defHnrP} for $\hbar=\mu=1$. To be precise, the corresponding kernel function has a slightly different exponential factor, and the kernel identity contains an additional overall factor of $2$.

We proceed to detail a nonrelativistic analog of \eqref{cor2}. From \eqref{nrcS} we have $c_{1,0}=1$ and $c_{1,1}=i\mu g/2$. This yields the following corollary.

\begin{corollary}
For $k=1,\ldots,N$, we have
\begin{multline}
:\hat{\Sigma}_k\big(L_{\rm nr}+E\big)(x_1,\ldots,x_N): \cK_1(x,y) = \Bigg(:\hat{\Sigma}_k\big(L_{\rm nr}+E\big)(-y_1,\ldots,-y_{N-1}): \\ + \frac{i\mu g}{2}:\hat{\Sigma}_{k-1}\big(L_{\rm nr}+E\big)(-y_1,\ldots,-y_{N-1}):\Bigg) \cK_1(x,y).
\end{multline}
\end{corollary}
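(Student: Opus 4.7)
The plan is to invoke the preceding theorem (the identity \eqref{nrhyperbolicKernelIds}) with $\ell=1$ and simply compute the two relevant coefficients $c_{1,0}$ and $c_{1,1}$ from their closed form \eqref{nrcS}. Since the theorem has already been established by induction on $\ell$, the corollary is a direct specialization and requires no new analytic input.

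Concretely, I would proceed as follows. First, I would observe that for $\ell=1$ the upper summation bound $\min(k,\ell)$ equals $1$ for every $k\in\{1,\ldots,N\}$, so the sum on the right-hand side of \eqref{nrhyperbolicKernelIds} reduces to two terms, indexed by $j=0$ and $j=1$. Second, I would read off from \eqref{nrcS} that
\[
c_{1,0}=\Big(\tfrac{i\mu g}{2}\Big)^{0}S_{0}(1)=1,\qquad c_{1,1}=\Big(\tfrac{i\mu g}{2}\Big)^{1}S_{1}(1)=\tfrac{i\mu g}{2},
\]
where $S_0\equiv 1$ and $S_1(1)=1$ are the elementary symmetric functions in the single argument $2\ell-1=1$. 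Third, substituting $\ell=1$ together with these two coefficient values directly into \eqref{nrhyperbolicKernelIds} yields the stated identity, with $\cK_1(x,y)$ given by \eqref{cKLim} for $\ell=1$.

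There is essentially no obstacle here: all the work has been done in the proof of the theorem (in particular, the delicate induction step that controls the $\Lambda\to\infty$ limits of $\phi(y)\cK_\ell$ and of the shifted diagonal matrix $E$). The only thing to verify explicitly is that the side conditions \eqref{nrside} and recurrence \eqref{cCoeffsRel} give the two values $c_{1,0}=1$ and $c_{1,1}=i\mu g/2$, which is immediate: from $c_{0,0}=1$ and $c_{0,j}=0$ for $j\ne 0$, the recurrence $c_{1,j}=c_{0,j}+(i\mu g/2)\cdot 1\cdot c_{0,j-1}$ gives exactly these values. Hence the corollary follows with no further calculation.
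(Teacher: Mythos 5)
Your proposal is correct and is exactly the paper's argument: the corollary is obtained by specializing the preceding theorem's identity \eqref{nrhyperbolicKernelIds} to $\ell=1$ and reading off $c_{1,0}=1$, $c_{1,1}=i\mu g/2$ from \eqref{nrcS}. Your additional check via the recurrence \eqref{cCoeffsRel} is consistent but not needed beyond what the paper does.
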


To obtain a counterpart of~Corollary~2.4, we take $\ell =N$ and note that~\eqref{nrhyperbolicKernelIds} implies
\begin{multline}
\Sigma_k\Big(L_{\rm nr}(x_1,\ldots,x_N,i\mu gN/2,\ldots,i\mu gN/2)+E(x_1,\ldots,x_N)\Big)\cK_N(x)\\ =\left(\frac{i\mu g}{2}\right)^kS_k(1,3,\ldots,2N-1)\cK_N(x).
\end{multline}
In this identity we can cancel~$\cK_N(x)$ and the factor~$(i\mu g/2)^k$, cf.~\eqref{Lnrhyp}--\eqref{zjx}. Then it says that the matrix
\beq\label{MQ}
M=N{\bf 1}_N+Q,\ \ \ Q_{jk}\equiv -\de_{jk}\sum_{n\ne j}\coth (z_j-z_n)+(1-\de_{jk})\sinh(z_j-z_k)^{-1},
\eeq
has the same symmetric functions as the matrix $\diag (1,3,\ldots, 2N-1)$. Shifting the latter matrix and $M$ by $N{\bf 1}_N$, this leads to the following remarkable result.

\begin{corollary}
The matrix $Q$ given by~\eqref{MQ} has spectrum $\{ -N+1,-N+3,\ldots,N-1\}$.
\end{corollary}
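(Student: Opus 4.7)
The plan is to extract Corollary~4.5 as an immediate numerical consequence of the $\ell=N$ specialization of Theorem~4.2, which is essentially unpacked in the paragraph preceding the statement. First I would observe that $\cK_N(x) = \exp(-\mu g N\sum_m x_m/(2\hbar))$ is a joint eigenfunction of all momentum operators, $-i\hbar\partial_{x_j}\cK_N = (i\mu g N/2)\cK_N$. Since the momentum operators appear only on the diagonal of $L_{\rm nr}(x,p)$, no ordering subtlety arises: the normal-ordered PDO $:\hat{\Sigma}_k(L_{\rm nr}+E)(x):$ acts on $\cK_N$ as multiplication by $\Sigma_k$ of the numerical matrix obtained from $L_{\rm nr}+E$ by substituting $p_j \to i\mu g N/2$. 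Hence Theorem~4.2 with $\ell=N$ reads
\begin{equation*}
\Sigma_k\bigl(L_{\rm nr}(x,i\mu gN/2,\ldots,i\mu gN/2)+E(x)\bigr)\,\cK_N(x) = (i\mu g/2)^k\,S_k(1,3,\ldots,2N-1)\,\cK_N(x).
\end{equation*}

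Next I would pull out the common scalar $(i\mu g/2)^k$ and substitute $z_j := \mu x_j/2$: the diagonal of the rescaled matrix becomes $N - \sum_{k\ne j}\coth(z_j - z_k)$ and the off-diagonal entries become $\sinh(z_j-z_k)^{-1}$, so the rescaled matrix is precisely $M = N\mathbf{1}_N + Q$ with $Q$ as in \eqref{MQ}, while the right-hand side is $(i\mu g/2)^k\Sigma_k(D)$ for $D = \diag(1,3,\ldots,2N-1)$. Cancelling $(i\mu g/2)^k$ and $\cK_N$ therefore yields $\Sigma_k(M)=\Sigma_k(D)$ for all $k=1,\ldots,N$.

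Finally I would invoke the elementary fact that the coefficients of the characteristic polynomial of an $N\times N$ matrix $A$ are (up to signs) the elementary symmetric functions $\Sigma_k(A)$: the equalities $\Sigma_k(M)=\Sigma_k(D)$ force $M$ and $D$ to have identical characteristic polynomials and hence identical spectra, namely $\{1,3,\ldots,2N-1\}$. Subtracting $N\mathbf{1}_N$ shifts every eigenvalue by $-N$, yielding $\{-N+1,-N+3,\ldots,N-1\}$ as the spectrum of $Q$.

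No genuine obstacle remains at this stage: the substantive analytic input has been performed in Theorem~4.2, and the identification of $\cK_N$ as a joint momentum eigenfunction is the only ingredient needed to convert that theorem into a matrix identity. The rest is one line of linear algebra.
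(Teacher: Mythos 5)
Your proposal is correct and takes essentially the same route as the paper: specialize Theorem~4.2 to $\ell=N$ (only the $j=k$ term survives), use that $\cK_N$ is a joint eigenfunction of the momenta so the normal-ordered PDOs evaluate to the classical symmetric functions at $p_j=i\mu gN/2$, divide out $(i\mu g/2)^k$ and $\cK_N$, and compare characteristic polynomials of $M=N{\bf 1}_N+Q$ and $\diag(1,3,\ldots,2N-1)$. The only small quibble is your stated reason for the absence of ordering problems — it is the normal-ordering convention together with the exponential eigenfunction property (not the momenta sitting only on the diagonal of $L_{\rm nr}+E$, whose diagonal also carries $x$-dependence) that makes the evaluation legitimate — but the evaluation you perform is exactly the correct one.
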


We note that, by exploiting the commutation relation
\beq
\varphi(x,y):\hat{\Sigma}_k\big(L_{\rm nr} + E\big)(x):\\ =\ :\hat{\Sigma}_k\left(L_{\rm nr} + E+\frac{i\mu g\ell}{2}{\bf 1}_N\right)(x):\varphi(x,y),
\eeq
where
\beq
\varphi(x,y) \equiv \exp\left(\frac{\mu g\ell}{2\hbar}\left(\sum_{m=1}^Nx_m-\sum_{n=1}^{N-\ell}y_n\right)\right),
\eeq
and the expansion
\begin{multline}
:\hat{\Sigma}_k\left(L_{\rm nr} + E+\frac{i\mu g\ell}{2}{\bf 1}_N\right)(x): \\= \sum_{l=0}^k\left(\frac{i\mu g\ell}{2}\right)^l\binom{N-k+l}{l}:\hat{\Sigma}_{k-l}\left(L_{\rm nr} + E\right)(x):,
\end{multline}
the exponential factor can be removed from \eqref{cKLim}, yielding corresponding identities.

Alternatively, these identities can be deduced from the relativistic kernel identities in Theorem \ref{HypKernelIdProp}, by using that each PDO $:\hat{\Sigma}_k(L_{\rm nr}+E):$ can be obtained as a limit of a linear combination of the identity operator and the commuting \adiffops~$A_1,\ldots,A_k$ that result from $A_{1,+},\ldots,A_{k,+}$ upon substituting \eqref{aT} and
\beq
\rho=i\beta g.
\eeq
Indeed, from Section~4.3 in \cite{Rui94} one can infer 
\beq
:\hat{\Sigma}_k\big(L_{\rm nr}+E\big): = \lim_{\beta\to 0}\beta^{-k}\sum_{j=0}^k(-1)^{k+j}\binom{N-j}{N-k}A_j,\ \ \ \ \ A_0\equiv 1.
\eeq

To illustrate this second method we continue by deducing the pertinent identities for $\ell=1$.  Then we can use Corollary~2.3 to obtain
\begin{multline}\label{sumKIds}
\sum_{j=0}^k(-1)^{k+j}\binom{N-j}{N-k}A_j(x_1,\ldots,x_N)\cS_1(x,y)\\ = \sum_{j=0}^k(-1)^{k+j}\binom{N-j}{N-k}(A_j +A_{j-1})(-y_1,\ldots,-y_{N-1})\cS_1(x,y)\\ = \sum_{n=0}^k(-1)^{k+n}A_{n}(-y_1,\ldots,-y_{N-1})\sum_{l=0}^1(-1)^{l}\binom{N-n-l}{N-k}\cS_1(x,y).
\end{multline}
Hence, for $k=N$ the coefficient of each \adiffop~$A_{n}(-y)$ vanishes. Moreover, for $k<N$ we have (`Pascal's triangle')
\beq
\binom{N-n}{N-k}-\binom{N-n-1}{N-k}= \binom{N-1-n}{N-1-k},\quad k=1,\ldots,N-1,\ \ \ n=0,\ldots,k.
\eeq
Substituting this in \eqref{sumKIds}, multiplying by $\beta^{-k}$, and taking the nonrelativistic limit $\beta\to 0$, we arrive at the following result.

\begin{proposition}
Setting
\beq
\tilde{\cK}_1(x,y) = \prod_{m=1}^N\prod_{n=1}^{N-1}[2\cosh(\mu(x_m-y_n)/2)]^{-g/\hbar},
\eeq
we have an eigenfunction identity
\beq
:\hat{\Sigma}_N\big(L_{\rm nr}+E\big)(x):\tilde{\cK}_1(x,y)=0,
\eeq
and kernel identities
\beq
\left(:\hat{\Sigma}_k\big(L_{\rm nr}+E\big)(x):-:\hat{\Sigma}_k\big(L_{\rm nr}+E\big)(-y):\right)\tilde{\cK}_1(x,y)=0,\quad k=1,\ldots,N-1.
\eeq
\end{proposition}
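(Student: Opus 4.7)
The plan is to complete the ``second method'' sketched in the paragraph immediately preceding the statement, applied at $\ell=1$. One begins with Corollary~2.3 with $\delta=+$ and rewrites it as
\[
A_{k,+}(x_1,\ldots,x_N)\cS_1(x,y) = \bigl(A_{k,+}(-y_1,\ldots,-y_{N-1})+A_{k-1,+}(-y_1,\ldots,-y_{N-1})\bigr)\cS_1(x,y),
\]
using the identity $A_{-j,+}(y)=A_{j,+}(-y)$, which follows from the oddness of $s_{+}$ and the explicit formula~\eqref{MeromorphEllipticOps}. One then forms the linear combination $\sum_{j=0}^{k}(-1)^{k+j}\binom{N-j}{N-k}(\cdot)$ of these identities, following the chain of equalities already displayed just above the proposition; the coefficient of $A_n(-y_1,\ldots,-y_{N-1})$ on the right collapses to $(-1)^{k+n}\bigl[\binom{N-n}{N-k}-\binom{N-n-1}{N-k}\bigr]$.

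Two cases of $k$ are then handled separately. For $k<N$, Pascal's rule yields $\binom{N-n}{N-k}-\binom{N-n-1}{N-k}=\binom{N-1-n}{N-1-k}$, so the right-hand side is precisely the $(N{-}1)$-variable analogue of the linear combination appearing on the left. For $k=N$ the bracket equals $\binom{N-n}{0}-\binom{N-n-1}{0}=1-1=0$ for every $0\le n\le N-1$, while $A_N(-y_1,\ldots,-y_{N-1})\equiv 0$ for trivial reasons, so the right-hand side vanishes altogether.

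Finally, multiply both sides by $\beta^{-k}$, set $\rho=i\beta g$, and let $\beta\downarrow 0$. Applying the limit formula
\[
:\hat{\Sigma}_k\bigl(L_{\rm nr}+E\bigr):\ =\ \lim_{\beta\to 0}\beta^{-k}\sum_{j=0}^{k}(-1)^{k+j}\binom{N-j}{N-k}A_j
\]
from Section~4.3 of~\cite{Rui94} separately to the $N$-variable expression on the left and to the $(N{-}1)$-variable expression on the right, together with the factor-wise nonrelativistic limit $\lim_{\beta\downarrow 0}\cS_1(x,y)=\tilde{\cK}_1(x,y)$ coming from~\eqref{Ghypnr} (exactly as in the passage $\cS\mapsto \cK$ that produced~\eqref{cSnrh}), one recovers the kernel identities for $k<N$ and the eigenfunction identity for $k=N$. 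The principal technical point is justifying the interchange of the $\beta\downarrow 0$ limit with the action of the $\beta$-dependent \adiffops~on $\cS_1(x,y)$: each summand $\beta^{-k}A_j\cS_1$ is singular, and only the specific combination converges. This is already built into the limit formula of~\cite{Rui94}, and combined with the real-analytic dependence of the coefficients of $A_j$ and of $\cS_1$ on $\beta$ away from their singular loci, it ultimately amounts to a routine power-series check rather than any deep analytic estimate.
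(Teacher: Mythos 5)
Your proposal reproduces the paper's own argument: it starts from Corollary~2.3 (rewritten via $A_{-j,+}(y)=A_{j,+}(-y)$), forms the alternating binomial combination as in~\eqref{sumKIds}, uses Pascal's rule for $k<N$ and the vanishing of the bracket (together with $A_N$ in $N-1$ variables being zero) for $k=N$, and then multiplies by $\beta^{-k}$ and takes $\beta\downarrow 0$ with $\rho=i\beta g$, invoking the limit formula from Section~4.3 of~\cite{Rui94} and the factor-wise limit $\cS_1\to\tilde{\cK}_1$ via~\eqref{Ghypnr}. This is essentially identical to the paper's proof, with your closing remark on interchanging the limit with the difference-operator action being a mild (and harmless) elaboration beyond what the paper states.
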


For $\ell>1$, the identities in question become quite unwieldy, and we have not obtained them explicitly by either of the above methods.
  
The self-duality of the relativistic case is not preserved by the nonrelativistic limit, so we proceed to study kernel functions involving the dual variables. To begin with, we need to revert to the spectral variables $\hat{p}$ via \eqref{phsub}. To ease the notation we omit the hats, so that we wind up with A$\De$Os (cf.~\eqref{Adhyp})
\beq\label{Akr}
A_{\pm k,+}=\sum_{ |I|=k}\prod_{\substack{m\in I\\ n\notin I}}\frac{\sinh(\beta(p_m-p_n\mp i\mu g)/2)}{\sinh(\beta(p_m-p_n)/2)}
\prod_{m\in I}\exp(\mp i\hbar\mu\partial_{p_m}).
\eeq
Their $\beta \to 0$ limits yield the dual nonrelativistic A$\De$Os
\beq
\hat{A}_{\pm k,{\rm nr}}(p)=\sum_{ |I|=k}\prod_{\substack{m\in I\\ n\notin I}}\frac{p_m-p_n\mp i\mu g}{p_m-p_n}
\prod_{m\in I}\exp(\mp i\hbar\mu\partial_{p_m}),\ \ \ \ k=1,\ldots,N.
\eeq
(To be sure, they can also be viewed as the A$\De$Os associated with the relativistic rational Calogero-Moser systems.)

We proceed to obtain the nonrelativistic limit of the dual version of the kernel identities in Theorem~2.1. The reparametrized kernel function \eqref{cSl} reads
\beq\label{Grepar}
\prod_{m=1}^N\prod_{n=1}^{N-\ell}\frac{G(2\pi/\mu,\hbar\beta;\beta(p_m-q_n-i\mu g/2)/\mu)}{G(2\pi/\mu,\hbar\beta;\beta(p_m-q_n+i\mu g/2)/\mu)}.
\eeq
As a preparation for the limit \eqref{GhGr}, we first use the scale invariance~\eqref{hypsc} to write
\beq\label{Gkap}
G(2\pi/\mu,\hbar\beta;\beta z/\mu)=G(1,\kappa; \kappa z/\hbar\mu),\ \ \ \ \kappa =\beta\hbar\mu/2\pi.
\eeq
Next, we note that we may multiply by a constant and shift the coordinates $p_j$ in \eqref{Grepar} without losing the kernel property. A moment's thought then shows that we can invoke \eqref{GhGr} to obtain from \eqref{Grepar} a nonrelativistic dual kernel function
\beq\label{cKd}
\hat{\cK}_{\ell}(p,q)=\prod_{m=1}^N\prod_{n=1}^{N-\ell}\frac{\Gamma(i(p_m-q_n)/\hbar\mu-g/2\hbar)}{\Gamma(i(p_m-q_n)/\hbar\mu+g/2\hbar)},\ \ \ \ \ell=0,1,\ldots,N.
\eeq
Likewise, Theorem~2.1 has the following counterpart.

\begin{theorem}
For any $k\in \{ 1,\ldots,N\}$ and $\tau\in\{ +,-\}$ we have
\beq
	\hat{A}_{\tau k,{\rm nr}}(p_1,\ldots,p_N)\hat{\cK}_\ell(p,q) = \sum_{j=0}^{\min (k,\ell)} \binom{\ell}{j}\hat{A}_{-\tau (k-j),{\rm nr}}(q_1,\ldots,q_{N-\ell})\hat{\cK}_\ell(p,q),
\eeq
where 
\beq
\hat{A}_{\pm m,{\rm nr}}(q_1,\ldots,q_{N-\ell})\equiv 0,\ \ \ m>N-\ell,\ \ \ \hat{A}_{0,{\rm nr}}\equiv 1.
\eeq
\end{theorem}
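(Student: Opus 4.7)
The plan is to derive this statement as the nonrelativistic limit $\beta \to 0$ of the dual version of Theorem~\ref{HypKernelIdProp}. I would first apply Theorem~\ref{HypKernelIdProp} to the dual hyperbolic A$\Delta$Os of Subsection~\ref{Sec25}, rewritten in the spectral variables $p, q$ via the substitution $v = \beta p/\mu$, $w = \beta q/\mu$, and specialize the coupling to $\rho = i\beta g$. This yields a $\beta$-dependent identity of exactly the same shape as the one to be proved, with $A_{\tau k, +}(p)$ given by~\eqref{Akr} on the left, $A_{-\tau(k-j), +}(q)$ on the right, and coefficients $c^+_{\ell, j}$ from~\eqref{cS}, acting on the kernel function~\eqref{Grepar}. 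The remainder of the argument is to take $\beta \to 0$ and identify each piece with its claimed nonrelativistic counterpart.

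Three convergences have to be checked. First, a direct Taylor expansion gives
$$\frac{\sinh(\beta(p_m - p_n \mp i\mu g)/2)}{\sinh(\beta(p_m - p_n)/2)} \longrightarrow \frac{p_m - p_n \mp i\mu g}{p_m - p_n},$$
so $A_{\pm k, +}(p) \to \hat{A}_{\pm k, {\rm nr}}(p)$. Second, with $\rho = i\beta g$ each argument $e_+(s\rho) = \exp(i\mu\beta g s/2)$ in the elementary symmetric function~\eqref{cS} tends to $1$, whence
$$c^+_{\ell, j} \longrightarrow S_j(\underbrace{1, \ldots, 1}_{\ell \text{ ones}}) = \binom{\ell}{j},$$
which matches the binomial coefficient in the statement. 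Third, the kernel function~\eqref{Grepar} converges to $\hat{\cK}_\ell(p, q)$~\eqref{cKd}: one uses the scale invariance~\eqref{hypsc} to rewrite each factor via~\eqref{Gkap} in the form $G(1,\kappa;\cdot)$ with $\kappa = \beta\hbar\mu/(2\pi) \to 0$, and then applies the hyperbolic-to-classical gamma limit~\eqref{GhGr}, exactly as already sketched in the paragraph preceding~\eqref{cKd}.

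The main obstacle will be justifying the interchange of the $\beta \to 0$ limit with the A$\Delta$O action on the kernel function. Since the shift operators $\exp(\mp i\hbar\mu\partial_{p_m})$ are themselves $\beta$-independent, this reduces to checking that the $\beta$-dependent coefficients and the kernel factors extend analytically in $\beta$ to a neighborhood of $0$ on the common domain obtained after shifting the arguments $p_m \to p_m \mp i\hbar\mu$. This is the standard situation for the hyperbolic-to-classical gamma degeneration, and once it is in place the three convergences above combine termwise in the finite sum indexed by $I$ and $j$ to yield the desired identity.
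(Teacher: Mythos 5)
Your proposal is correct and coincides with the paper's own (largely implicit) derivation: the theorem is obtained there precisely as the $\beta\to 0$ limit of the dual version of Theorem~\ref{HypKernelIdProp} written in the spectral variables with $\rho=i\beta g$, the kernel function being handled via \eqref{Gkap} and \eqref{GhGr} exactly as you describe. Your explicit checks---the coefficient limits turning \eqref{Akr} into $\hat{A}_{\pm k,{\rm nr}}$, the limit $c^{+}_{\ell,j}\to\binom{\ell}{j}$, and the convergence of \eqref{Grepar} to \eqref{cKd} after constant renormalization and coordinate shifts---are just the details the paper leaves to the reader.
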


From this, the nonrelativistic versions of the three corollaries \eqref{cor1}--\eqref{cor3} of Theorem~2.1 will be clear. 

For later use, we add that the nonrelativistic limit of the dual weight function reads
\beq\label{Whypd}
\hat{W}_{\rm nr}(p)=\prod_{1\le m<n\le N}\frac{\Gamma(i (p_m-p_n)/\hbar\mu
+g/\hbar)\Gamma(-i (p_m-p_n)/\hbar\mu
+g/\hbar)}{\Gamma(i (p_m-p_n)/\hbar\mu)\Gamma(-i (p_m-p_n)/\hbar\mu)}.
\eeq
This follows from \eqref{GhGr} in the same way as for the kernel functions. As a check, note that the dual nonrelativistic Hamiltonians,
\bea\label{Hknr1}
\hat{H}_{\pm k,{\rm nr}}(p)  &  =  &  \sum_{ |I|=k}\prod_{\substack{m\in I\\ n\notin I}}\left(\frac{p_m-p_n\mp i\mu g}{p_m-p_n}\right)^{1/2}
\prod_{m\in I}\exp(\mp i\hbar\mu\partial_{p_m})
\nonumber \\
&  &  \times \prod_{\substack{m\in I\\ n\notin I}}
\left(\frac{p_m-p_n\pm i\mu g}{p_m-p_n}\right)^{1/2},
\eea
are then related to the A$\De$Os $\hat{A}_{\pm k,{\rm nr}}(p)$ via
\beq\label{AHW}
\hat{H}_{l,{\rm nr}}(p) =\hat{W}_{\rm nr}(p)^{1/2} \hat{A}_{l,{\rm nr}}(p)\hat{W}_{\rm nr}(p)^{-1/2},\ \ \ \pm l=1,\ldots,N,
\eeq
as should be the case.

\subsubsection{The periodic Toda case}\label{SecNonRelKernelPT}
In this section we obtain a kernel identity for the nonrelativistic periodic Toda system via its relativistic counterpart~\eqref{periodicTidp}. For the same reason as for the elliptic and hyperbolic limit transitions, this involves the classical relativistic and nonrelativistic Lax matrices. Like in the hyperbolic case, we trade the parameters $a_{+}$ and $a_{-}$ for $2\pi/\mu$ and $\hbar\beta$, resp.
Choosing also 
\beq\label{etag}
\eta=\frac{2}{\mu}\ln (\beta\mu g),\ \ \ \ \ g>0,
\eeq
 it follows from \eqref{GLRLimit} with $\lambda =2$ that the relativistic $U$-function \eqref{TU} satisfies
\beq\label{Ulim1}
\lim_{\beta\to 0}U(x)=1.
\eeq
Moreover, \eqref{GLRLimit} with $\lambda =1$ implies that both kernel functions \eqref{Sp} and \eqref{Sm} satisfy
\beq\label{ScK}
\lim_{\beta\to 0}S^{\pm}(x,y)=\cK(x,y),
\eeq
where
\beq\label{cK}
\cK(x,y)= \exp \left(-\frac{g}{\hbar}\sum_{m=1}^N \left(e^{\mu(x_{m+1}-y_{m})}+e^{\mu(y_m-x_m)}\right)\right).
\eeq

After the substitutions \eqref{aT} and \eqref{etag}, we obtain from \eqref{cApp}, \eqref{cAmp} and \eqref{HTp}
\beq\label{cAp1}
	\cA_{1,+}^{+} = \sum_{m=1}^N
	\big(1+\gamma^2e^{\mu(x_{m+1}-x_m+i\hbar\beta/2)}\big)\exp(-i\hbar\beta\partial_{x_m}),
	\eeq
	\beq
	\cA_{1,+}^{-} = \sum_{m=1}^N
	\big(1+\gamma^2e^{\mu(x_m-x_{m-1}-i\hbar\beta/2)}\big)\exp(-i\hbar\beta\partial_{x_m}),	
\eeq
\bea\label{H1}
H_{1,+}  &  =  &  \sum_{m=1}^N
\big(1+\gamma^2e^{\mu(x_{m+1}-x_m+i\hbar\beta/2)}\big)^{1/2}
 \nonumber \\
   &  & \times 
  \big(1+\gamma^2e^{\mu(x_m-x_{m-1}-i\hbar\beta/2)}\big)^{1/2}
  \exp(-i\hbar\beta\partial_{x_m}),
   \eea
   where we have set
   \beq\label{defgam}
   \gamma =\beta\mu g.
   \eeq
Expanding these \adiffops~in a power series in $\beta$, we get in each of the three cases
\beq\label{bexp}
N +\beta \Big(-i\hbar\sum_{j=1}^N\partial_{x_j}\Big)+\beta^2 H_{{\rm nr}}+O(\beta^3),
\eeq
where
\beq\label{HTnr}
H_{{\rm nr}}(x)=-\frac{\hbar^2}{2}\sum_{m=1}^N\partial_{x_m}^2+a^2\sum_{m=1}^N e^{\mu(x_{m+1}-x_m)},\ \ \ \ a=\mu g,
\eeq
is  the defining Hamiltonian of the nonrelativistic periodic Toda system. This implies that $\cK(x,y)$ yields a kernel function for the defining Hamiltonian. Now this can easily be checked directly, but the kernel property for the higher order commuting Hamiltonians is harder to show.

To obtain this more general property, we first need more information on the relation between the commuting relativistic A$\De$Os and the commuting nonrelativistic PDOs.
To this end we begin by recalling a Lax matrix for the classical nonrelativistic periodic Toda system.  This $N\times N$ matrix  depends on a spectral parameter $w\in\mathbb{C}^*$ and is given by
\begin{multline}\label{periodicTodaLax}
	(L_{\text{nr}})_{mn} = \delta_{mn}p_m + \delta_{m,n-1} + a^2\delta_{m,n+1}e^{\mu(x_m-x_{m-1})}\\ - (ia)^Nw\delta_{mN}\delta_{n1} - a^2(ia)^{-N}w^{-1}\delta_{m1}\delta_{nN}e^{\mu(x_1-x_N)}.
\end{multline}
Our subsequent considerations involve the symmetric functions of $L_{\text{nr}}$, given as the coefficients $\Sigma_{k,\text{nr}}$ in the expansion
\begin{equation}\label{Tdet}
	\det({\bf 1}_N+\lambda L_{\text{nr}}) = \sum_{k=0}^N \lambda^k\Sigma_{k,\text{nr}}.
\end{equation}
In particular, this yields
\begin{equation}
	\Sigma_{1,\text{nr}} = \sum_{m=1}^Np_m,\quad \Sigma_{2,\text{nr}} = \sum_{1\leq m<n\leq N}p_mp_n - a^2\sum_{m=1}^Ne^{\mu(x_m-x_{m-1})}.
\end{equation}
(For $N=2$ one should add the constant $a^2(w+1/w)$ to $\Sigma_{2,\text{nr}}$.)
 The formula~\eqref{Tdet} has an unambiguous quantum analog. Indeed, 
a term in the expansion of the determinant that depends on a given $p_m$ does not depend on $x_m$, since the only elements of $L_{\text{nr}}$ that depend on $x_m$ occur in the $m$th column and row. Denoting by $\hat{L}_{\text{nr}}$ the matrix obtained from $L_{\text{nr}}$ by substituting $-i\hbar\partial_{x_m}$ for $p_m$, $m=1,\ldots,N$, the determinant of $\hat{L}_{\text{nr}}$ can now be defined by the usual expansion, since no ordering problems arise. Specifically, we have
\begin{equation}
	\det\left({\bf 1}_N+\lambda \hat{L}_{\text{nr}}\right) = \sum_{k=0}^N \lambda^k\hat{\Sigma}_{k,\text{nr}},
\end{equation}
where $\hat{\Sigma}_{k,\text{nr}}$ is obtained from $\Sigma_{k,\text{nr}}$ via the above substitution. Next, we show that the PDO $\hat{\Sigma}_{k,\text{nr}}$ can be obtained as a limit of a certain linear combination of the identity operator and commuting \adiffops~$A^{(\tau)}_1,\ldots,A^{(\tau)}_k$, $\tau 
\in\{+,-\}$. The latter are obtained from the A$\De$Os $\cA^{\pm}_{1,+},\ldots, \cA^{\pm}_{N,+}$ by substituting \eqref{aT} and 
\beq\label{etapm}
\eta =\frac{2}{\mu}\ln \gamma \mp i\hbar\beta/2,\ \ \ \ \gamma=\beta\mu g,
\eeq
respectively. Thus we have (cf.~\eqref{cApp}, \eqref{cAmp} and \eqref{Tfn})
\beq\label{Apmk}
A^{(\pm)}_k= \sum_{|I|=k}\prod_{\substack{m\in I\\ m\pm 1\notin I}}\big(1+\gamma^2 e^{\mu (x_{m+1}-x_{m})}\big)\prod_{m\in I}\exp(-i\hbar\beta\partial_{x_m}),\ \ \ \ k=1,\ldots,N.
\eeq

\begin{lemma}
Fix $w\in\mathbb{C}^*$, and let
\begin{equation}\label{Dkt}
	D^{(\tau)}_k=(-1)^k\binom{N}{N-k}+ \sum_{j=1}^k (-1)^{k+j}\binom{N-j}{N-k}c_jA^{(\tau)}_j,
\end{equation}
where $k=1,\ldots,N$, $ \tau=+,-$, and
\begin{equation}\label{cj}
	c_j= \left\lbrace\begin{array}{ll}
		\big(1+(i\gamma)^Nw\big)^{j-1}, & j=1,\ldots,N-1,\\
		\big(1+(i\gamma)^Nw\big)^{N-1}\big(1+(i\gamma)^Nw^{-1}\big), & j=N.
	\end{array}\right.
\end{equation}
Then for all $k\in\{ 1,\ldots,N\}$ and $\tau\in \{ +,-\}$ we have
\begin{equation}\label{DSig}
	\lim_{\beta\to 0}\beta^{-k}D^{(\tau)}_k = \hat{\Sigma}_{k,{\rm nr}}.
\end{equation}
\end{lemma}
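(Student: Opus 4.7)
The plan is to recognize $D^{(\tau)}_k$ as the $k$th elementary symmetric function of a shifted relativistic periodic Toda Lax matrix, and then to take its nonrelativistic leading behavior.

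Setting $c_0:=1$ and $A^{(\tau)}_0:=1$, I begin with the elementary identity
\[
\det\!\big((1-\lambda){\bf 1}_N + \lambda L\big) \;=\; \sum_{j=0}^N\lambda^j(1-\lambda)^{N-j}\Sigma_j(L) \;=\; \sum_{k=0}^N\lambda^k\,\Sigma_k(L-{\bf 1}_N),
\]
valid for any $N\times N$ matrix $L$ (the second equality following from $(1-\lambda){\bf 1}_N+\lambda L={\bf 1}_N+\lambda(L-{\bf 1}_N)$ together with $\det({\bf 1}_N+\mu M)=\sum_k\mu^k\Sigma_k(M)$). Binomially expanding $(1-\lambda)^{N-j}$ in the middle expression and comparing the $\lambda^k$ coefficient with \eqref{Dkt} shows that, provided an operator-valued matrix $L^{(\tau)}$ exists whose elementary symmetric functions satisfy $\Sigma_j(L^{(\tau)})=c_jA^{(\tau)}_j$ for $j=0,\ldots,N$, one has the clean identity
\[
D^{(\tau)}_k \;=\; \Sigma_k\!\big(L^{(\tau)}-{\bf 1}_N\big).
\]
Such an $L^{(\tau)}$ is supplied by the quantum relativistic periodic Toda Lax matrix of~\cite{Rui90}: cyclic tridiagonal with $\tau$-dependent diagonal shift operators $\exp(\mp i\hbar\beta\partial_{x_m})$, off-diagonal potential factors built from $1+\gamma^2 e^{\mu(x_{m+1}-x_m)}$, and corner entries dressed by powers of $\gamma$ that encode the $w^{\pm 1}$-dependence in \eqref{cj} via the periodic wrap-around contributions to each principal minor. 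Its minors are free of operator-ordering ambiguities by the argument given after~\eqref{Tdet}.

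With $\gamma=\beta\mu g$ and $\eta$ as in~\eqref{etapm}, a direct Taylor expansion gives
\[
L^{(\tau)}-{\bf 1}_N \;=\; \beta\,\hat L_{\rm nr} + O(\beta^2),
\]
where $\hat L_{\rm nr}$ is obtained from \eqref{periodicTodaLax} by the ordering-free substitution $p_m\to -i\hbar\partial_{x_m}$: the diagonal shifts yield the $\beta p_m$ diagonal, the off-diagonal potential factors yield the $\beta\cdot 1$ and $\beta a^2 e^{\mu(x_m-x_{m-1})}$ entries, and the cyclic corner entries (of order $(i\gamma)^N=\beta^N(ia)^N$) produce the $w^{\pm 1}$ corner entries of $\hat L_{\rm nr}$ -- these contribute only at order $\beta^N$, consistent with $\hat\Sigma_{k,{\rm nr}}$ being $w$-independent for $k<N$. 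Every $k\times k$ principal minor of $L^{(\tau)}-{\bf 1}_N$ is therefore $\beta^k$ times the corresponding minor of $\hat L_{\rm nr}$ up to $O(\beta^{k+1})$, giving
\[
D^{(\tau)}_k \;=\; \beta^k\,\hat\Sigma_{k,{\rm nr}} + O(\beta^{k+1}),
\]
from which \eqref{DSig} follows on dividing by $\beta^k$ and letting $\beta\to 0$. As a sanity check, for $k=N=2,3$ one can verify by hand that the $(i\gamma)^N$ correction to each $c_j$, combined with the leading $\binom{N}{j}$ coefficient in the $\beta$-expansion of $A^{(\tau)}_j$, reproduces exactly the $w$-dependent terms $(ia)^N\big((-1)^Nw+w^{-1}\big)$ appearing in $\det(\hat L_{\rm nr})$.

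The main obstacle will be the precise identification $\Sigma_j(L^{(\tau)})=c_jA^{(\tau)}_j$ with the exact $w$-dependent coefficients~\eqref{cj}. Although the classical relativistic periodic Toda Lax matrix is constructed in~\cite{Rui90}, verifying that its quantum version (with the correct normal ordering consistent with \eqref{cApp}--\eqref{cAmp} for both signs $\tau=\pm$) has characteristic polynomial producing exactly the factors $(1+(i\gamma)^N w)^{j-1}$ for $1\le j<N$ together with the modified form for $j=N$ requires a delicate analysis of how the two cyclic corner entries contribute to \emph{every} principal minor. Once this identification is secured, the generating-function step and the subsequent $\beta$-expansion are mechanical, and the argument parallels the hyperbolic computation of~\cite{Rui94} Subsection 4.3 that was used implicitly in Subsubsection~\ref{SecNonRelKernelPT} above.
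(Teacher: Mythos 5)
Your reduction of \eqref{Dkt} to $D^{(\tau)}_k=\Sigma_k(L^{(\tau)}-{\bf 1}_N)$ via the binomial identity is exactly the mechanism the paper uses (it appears there as the expansion of $\det({\bf 1}_N+\lambda\beta^{-1}(L-{\bf 1}_N))$, cf.~\eqref{SymFuncsPeridodicToda}), and the subsequent $\beta$-expansion step is fine. But the proof has a genuine gap precisely at the point you flag as the ``main obstacle'': the identification $\Sigma_j(L^{(\tau)})=c_jA^{(\tau)}_j$ for an \emph{operator-valued} Lax matrix is asserted, not proved, and the route you indicate for securing it does not work. The relativistic periodic Toda Lax matrix actually used here, \eqref{LTp}, is not cyclic tridiagonal but a full matrix $L_{mn}=\beta^{n-m}b_mE_{mn}$, and the ordering-free argument given after \eqref{Tdet} does not transfer to it: each $b_m$ in \eqref{bm} contains the shift $e^{\beta p_m}$ \emph{together with} square-root factors depending on $x_{m\pm1},x_m$, and the $E$-entries depend on the same coordinates, so the principal minors of a naively quantized $L^{(\tau)}$ are riddled with ordering ambiguities. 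Moreover, the classical identity available from \cite{Rui90}/\cite{Rui94} gives $\Sigma_k(L)=c_kS_k$ with the symmetric Hamiltonians \eqref{SkT}, not $c_kA^{(\tau)}_k$, and your proposal offers no mechanism producing the two asymmetric variants $\tau=\pm$ of \eqref{Apmk}.

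The paper closes both holes with two moves you are missing. First, it observes that in $D^{(\tau)}_k$ all shifts stand to the right of the coefficients and that $\hat\Sigma_{k,\rm nr}$ is ordering-free, so it suffices to prove the \emph{classical} identity obtained by replacing $-i\hbar\partial_{x_m}$ by $p_m$; this removes the need for any quantum Lax matrix and with it the ordering problem. Second, on the classical level it passes from $S_j$ to $A^{(\pm)}_j$ by dressing with the diagonal matrix $D$ of \eqref{Dd}: $\sum_{|I|=k}D(I)^{\pm1}L(I)=c_kA^{(\pm)}_k$, while $D^{\pm1}L$ still satisfies $D^{\pm1}L={\bf 1}_N+\beta L_{\rm nr}+O(\beta^2)$, so the limit \eqref{SymFuncsPeridodicToda} holds with $S_j$ replaced by $A^{(\pm)}_j$. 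Without the normal-ordering reduction and the $D^{\pm1}$-dressing (or a genuine proof of your quantum identification with the exact coefficients \eqref{cj}), the argument is incomplete. A further small inaccuracy: the corner entries of $L$ contribute at order $\beta$ to $\beta L_{\rm nr}$ (e.g.\ $L_{N1}=-\beta(ia)^Nwb_N+O(\beta^2)$), not only at order $\beta^N$, and they do affect $\Sigma_{k,\rm nr}$ for $k<N$ through $w$-independent combinations such as $-a^2e^{\mu(x_1-x_N)}$.
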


\begin{proof}
The following reasoning involves a modification of the arguments employed in~Subsection~4.3 of~\cite{Rui94} to handle the  nonrelativistic limit of the elliptic A$\De$Os. We first note that in the \adiffops~$D_k^{(\tau)}$ all partial derivatives occur to the right of the $x$-dependent coefficients. Since no ordering problems arise in the definition of the differential operators $\hat{\Sigma}_{k,\text{nr}}$, it suffices to prove the classical version of the statement, obtained by replacing $-i\hbar\nabla_{x}$ by $p$.

To this end we use the Lax matrix $L_{\text{nr}}$ introduced above, as well as a Lax matrix for the classical relativistic periodic Toda system. Specifically, let $L$ be the $N\times N$ matrix given by
\begin{equation}\label{LTp}
	L_{mn} = \beta^{n-m}b_mE_{mn},
\end{equation}
where
\begin{equation}\label{bm}
	b_m = \big(1+\gamma^2e^{\mu(x_{m+1}-x_m)}\big)^{1/2}\big(1+\gamma^2e^{\mu(x_m-x_{m-1})}\big)^{1/2}e^{\beta p_m},
\end{equation}
and
\begin{gather}
	E_{1N} = \frac{1-(i\gamma)^{-N}w^{-1}\gamma^2e^{\mu(x_1-x_N)}}{1+\gamma^2e^{\mu(x_1-x_N)}},\\
	E_{mn} = 1,\quad n-m=N-2,\ldots,1,0,\\
	E_{m,m-1} = \frac{-(i\gamma)^Nw+\gamma^2e^{\mu(x_m-x_{m-1})}}{1+\gamma^2e^{\mu(x_m-x_{m-1})}},\quad m=2,\ldots,N,\\
	E_{mn} = -(i\gamma)^Nw,\quad n-m=-2,\ldots,-N+1.
\end{gather}
By exploiting a limit from a Lax matrix for the elliptic system it can be shown that the symmetric functions of $L$ read
\begin{equation}
	\Sigma_k (x,p)= c_kS_k(x,p),\ \ \ \ k=1,\ldots,N.
\end{equation}
Here, the coefficients $c_k$ are given by~\eqref{cj} and
\beq\label{SkT}
S_k(x,p)= \sum_{|I|=k}\prod_{\substack{m\in I\\ m+1\notin I}}\big(1+\gamma^2e^{\mu(x_{m+1}-x_{m})}\big)^{1/2}
\prod_{\substack{m\in I\\ m-1\notin I}}\big(1+\gamma^2e^{\mu(x_{m}-x_{m-1})}\big)^{1/2}
\prod_{m\in I}\exp(\beta p_m).
\eeq
(The details can be found in~Subsection 3.2 of \cite{Rui94}.) Moreover, recalling
\beq
\gamma =\beta \mu g=\beta a,
\eeq
one readily checks 
\begin{equation}\label{Lexp}
	L ={\bf 1}_N + \beta L_{\text{nr}} + O(\beta^2),\quad \beta\to 0.
\end{equation}
Expanding the determinant of the matrix ${\bf 1}_N+\lambda\beta^{-1}(L-{\bf 1}_N)$ in powers of $\lambda$, we deduce the limit
\begin{multline}\label{SymFuncsPeridodicToda}
	\Sigma_{k,\text{nr}}(x,p) = \lim_{\beta\to 0}\beta^{-k}\left( (-1)^k\binom{N}{N-k}+\sum_{j=1}^k (-1)^{k+j}\binom{N-j}{N-k}c_jS_j(x,p)\right),\\ k=1,\ldots,N.
\end{multline}

To relate this limit to $A^{(\tau)}_k(x,p)$, we introduce the diagonal matrix
\begin{equation}\label{Dd}
	D = \text{diag}(d_1,\ldots, d_N),\quad d_m = \frac{\big(1+\gamma^2e^{\mu(x_{m+1}-x_m)}\big)^{1/2}}{\big(1+\gamma^2e^{\mu(x_m-x_{m-1})}\big)^{1/2}}.
\end{equation}
The principal minor of $D$ with indices $\lbrace i_1,\ldots, i_k\rbrace\equiv I$ reads
\begin{equation}
	D(I) = \prod_{m\in I}\frac{\prod_{m+1\notin I}\big(1+\gamma^2e^{\mu(x_{m+1}-x_m)}\big)^{1/2}}{\prod_{m-1\notin I}\big(1+\gamma^2e^{\mu(x_m-x_{m-1})}\big)^{1/2}},
\end{equation}
so the symmetric functions of the matrices $D^{\pm 1} L$ are given by
\begin{equation}
	\sum_{ |I|=k}\big(D^{\pm 1} L\big)(I) = \sum_{ |I|=k}D(I)^{\pm 1} L(I) = c_k A_k^{(\pm )}(x,p).
\end{equation}
Now from \eqref{Lexp} and \eqref{Dd} we clearly have
\begin{equation}
	D^{\pm 1}L={\bf 1}_N+\beta L_{\rm nr} + O(\beta^2),\quad \beta\to 0.
\end{equation}
Therefore, the limit \eqref{SymFuncsPeridodicToda} is still valid when we replace $S_j$ by $A_j^{(+)}$
or $A_j^{(-)}$. Hence the classical version of the statement follows, thus completing the proof.
\end{proof}

In particular, this lemma implies the commutativity of the $N$ PDOs $\hat{\Sigma}_{1,{\rm nr}},\ldots,\hat{\Sigma}_{N,{\rm nr}}$ (a property left open in~\cite{Rui94}). It is easy to verify that for $k=1$ and $k=2$ one still gets the limit \eqref{DSig} when $A_j^{(\tau)}$ in \eqref{Dkt} is replaced by $\cA_{j,+}^{+}$, $\cA_{j,+}^{-}$ or $H_{j,+}$ with the substitutions \eqref{aT} and \eqref{etag} in force. Already for $k=3$, however, this is no longer clear, since the $\beta$-dependence in the exponentials yields unwieldy extra terms in the expansion, cf.~e.g.~\eqref{cAp1}--\eqref{H1}. (In the elliptic case such extra terms in the A$\De$Os $H_{1,+},H_{2,+}$ do give nontrivial $\hbar$-dependent deviations from the classical expansion already for $k=2$, as exhibited in \eqref{H2} and \eqref{Hnr}.)

We are now prepared for the nonrelativistic counterpart of Theorem~2.7. 

\begin{theorem}
Setting
\begin{equation}
	D(x) = \det\left({\bf 1}_N+\lambda\hat{L}_{\rm nr}(x)\right),
\end{equation}
where $\hat{L}_{\rm nr}(x)$ is defined by \eqref{periodicTodaLax} with the 
substitution $-i\hbar\partial_{x_m}$ for $p_m$,
 we have a kernel identity
\begin{equation}\label{DK}
	(D(x)-D(-y))\mathcal{K}(x,y) =0,
\end{equation}
with the kernel function $\cK(x,y)$  given by \eqref{cK}. This identity also holds true when $\cK(x,y)$ is replaced by $\cK(\sigma(x),y)$, with $\sigma$ any cyclic permutation.
\end{theorem}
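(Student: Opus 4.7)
The plan is to deduce \eqref{DK} as the $\beta \to 0$ limit of the relativistic kernel identity \eqref{periodicTidp} from Theorem~\ref{ToTheor}, using the preceding lemma together with an analogous Lax-matrix identification for the $y$-side A$\De$Os. Under the parameter choices \eqref{aT} and \eqref{etag} in force, $\cA^+_{k,+}(x)$ coincides with $A^{(+)}_k(x)$ of \eqref{Apmk}, $U(x) \to 1$ by \eqref{Ulim1}, and $S^+(x,y) \to \cK(x,y)$ by \eqref{ScK}. For each $k \in \{1, \ldots, N\}$, form the linear combination $D^{(+)}_k(x)$ given by \eqref{Dkt}; the preceding lemma then yields $\beta^{-k} D^{(+)}_k(x) \to \hat{\Sigma}_{k,\text{nr}}(x)$ as $\beta \to 0$.

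Define the $y$-side operator $\tilde{D}_k(y)$ by replacing each occurrence of $A^{(+)}_j(x)$ in $D^{(+)}_k(x)$ with $\cA^+_{-j,+}(y)$. Applying \eqref{periodicTidp} termwise gives $D^{(+)}_k(x)\, S^+(x,y) = \tilde{D}_k(y)\, S^+(x,y)$. The crux of the argument is the convergence
\[
\beta^{-k}\,\tilde{D}_k(y)\, S^+(x,y) \;\longrightarrow\; \hat{\Sigma}_{k,\text{nr}}(-y)\,\cK(x,y), \qquad \beta \to 0.
\]
This is established by mimicking the Lax-matrix construction in the preceding lemma's proof: consider the matrix $\tilde{L}(y) = D(y)^{-1} L(y,-p)$, with $L$ from \eqref{LTp} and $D$ from \eqref{Dd}. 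One checks that its symmetric functions equal $c_j$ times the classical symbol of $\cA^+_{-j,+}(y)$, and that it admits the expansion $\tilde{L}(y) = {\bf 1}_N + \beta\, L_{\rm nr}(y,-p) + O(\beta^2)$ as $\beta \to 0$. The determinant-expansion argument from the lemma's proof then shows that $\beta^{-k}\tilde{D}_k(y)$ converges to the PDO obtained by quantizing the symmetric functions of $L_{\rm nr}(y,-p)$, which is $\hat{\Sigma}_{k,\text{nr}}(-y)$ in the sense intended by the theorem.

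Combining these two limits in $\beta^{-k}[D^{(+)}_k(x) - \tilde{D}_k(y)]\, S^+(x,y) = 0$ produces $(\hat{\Sigma}_{k,\text{nr}}(x) - \hat{\Sigma}_{k,\text{nr}}(-y))\cK(x,y) = 0$ for $k = 1, \ldots, N$. Multiplying by $\lambda^k$ and summing over $k = 0, 1, \ldots, N$ (the $k = 0$ case being trivial) recovers the determinantal identity \eqref{DK}. The cyclic-permutation statement follows from the cyclic invariance of $\hat{L}_{\rm nr}(x)$ up to conjugation by the associated permutation matrix (with a compensating redefinition of the spectral parameter $w$), which implies that $D(x)$ itself is invariant under cyclic permutations of its arguments.

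The main obstacle is the Lax-matrix identification for the $y$-side in the crucial convergence step above: although the construction $\tilde{L}(y) = D(y)^{-1} L(y,-p)$ is the natural counterpart of the $x$-side matrix $DL$ appearing in the lemma's proof, verifying that $\Sigma_j(\tilde{L}) = c_j \cdot (\text{classical symbol of }\cA^+_{-j,+}(y))$ and that the resulting limiting PDO is indeed $\hat{\Sigma}_{k,\text{nr}}(-y)$ requires careful bookkeeping with the noncommutative matrix entries, the spectral-parameter contributions, and the sign conventions built into the passage from the relativistic to the nonrelativistic regime.
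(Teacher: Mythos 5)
Your overall strategy is the paper's own: take the $\beta\to 0$ limit of the relativistic kernel identity \eqref{periodicTidp}, using the linear combinations \eqref{Dkt} and the limit \eqref{DSig} of the preceding lemma, and your explicit $y$-side Lax-matrix argument (the matrix $D(y)^{-1}L(y,-p)$ with expansion ${\bf 1}_N+\beta L_{\rm nr}(y,-p)+O(\beta^2)$) is a reasonable way of fleshing out what the paper leaves implicit in the phrase ``in view of the above lemma and \eqref{periodicTidp}--\eqref{periodicTidm} we can now deduce \eqref{DK}''; your hedge about the precise meaning of $\hat\Sigma_{k,\rm nr}(-y)$ (momentum reversal versus literal substitution $x\to -y$) is also well placed.

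However, there is a genuine gap at the very first step, and it is precisely the one point the paper's proof is devoted to. You assert that with the parameter choice \eqref{etag} the A$\De$Os $\cA^{+}_{k,+}(x)$ coincide with the $A^{(+)}_k(x)$ of \eqref{Apmk}. They do not: with $\eta$ as in \eqref{etag} the coefficients carry an extra shift $i\hbar\beta/2$ in the exponent, cf.~\eqref{cAp1}, whereas the $A^{(\pm)}_k$ are by definition built with $\eta$ given by \eqref{etapm}, which absorbs that shift. This is not a cosmetic issue: the remark following the lemma explicitly warns that replacing $A^{(\tau)}_j$ by the \eqref{etag}-operators in the limit \eqref{DSig} is only clear for $k=1,2$ and breaks down (or at least is not controlled) for $k\ge 3$, because the $\beta$-dependence in the exponentials produces extra terms that spoil the expansion. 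So as written, you cannot simultaneously invoke the kernel identity \eqref{periodicTidp} with $S^{+}$ as in \eqref{Sp} under \eqref{etag} and the lemma for the operators $A^{(+)}_j$. The repair is to run the whole argument with $\eta$ given by \eqref{etapm}, so that the $x$-side A$\De$Os are literally the $A^{(\pm)}_j$, and then to verify that the limit \eqref{ScK}, $S^{\pm}(x,y)\to\cK(x,y)$, still holds for this modified $\eta$: the change amounts to shifting the arguments of $G_R$ and $G_L$ in \eqref{Sp}--\eqref{Sm} by $\pm ia_-/4$, which leaves the limit \eqref{GLRLimit} unchanged because the convergence there is uniform on compact subsets. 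That verification is the substance of the paper's proof, and your proposal skips it because of the incorrect identification; once it is inserted, the rest of your argument (including the cyclic-permutation statement, which the paper likewise obtains from cyclic invariance of the A$\De$Os, hence of the limits $\hat\Sigma_{k,\rm nr}$, or from the cyclic covariance of $S^{\pm}$ in Theorem \ref{ToTheor}) goes through.
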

\begin{proof}
The limit \eqref{ScK} is still valid when $\eta$ in the kernel functions $S^{\pm}(x,y)$ is replaced by \eqref{etapm} instead of \eqref{etag}. Indeed, this change gives rise to shifts of the arguments of the functions $G_R$ and $G_L$ in \eqref{Sp}--\eqref{Sm} by $\pm i\hbar \beta/4$. This amounts to a shift of $z$ by $\pm ia_{-}/4$ in the limit formula \eqref{GLRLimit}. Since the convergence in this formula is uniform on compact subsets of the plane, the limit is unchanged. In view of the above lemma and \eqref{periodicTidp}--\eqref{periodicTidm} we can now deduce \eqref{DK}. The kernel identities in Theorem~2.7 also hold for cyclic transforms of $S^{\pm}(x,y)$, so the same is true for \eqref{DK}. Alternatively, one can invoke the cyclic invariance of the PDOs $\hat{\Sigma}_{1,{\rm nr}},\ldots,\hat{\Sigma}_{N,{\rm nr}}$, which follows from that of the A$\De$Os $A_j^{(\pm)}$ by the lemma.
\end{proof}

\subsubsection{The nonperiodic Toda case}

As in the periodic case, we first reparametrize $a_{+},a_{-}$ and $\eta$ via \eqref{aT} and \eqref{etag}. Then it follows as before (cf.~\eqref{Ulim1}--\eqref{ScK}) that the $U$-function \eqref{TUnp} has $\beta \to 0$ limit 1, whereas the nonrelativistic  limit of the nonperiodic kernel functions $\cS^{\pm}(x,y)$(cf.~\eqref{Spl}--\eqref{Sml})  is given by 
\beq\label{Knr}
\cK(x,y)= \exp \left(-\frac{g}{\hbar}\left( e^{\mu(y_N-x_N)}+\sum_{m=1}^{N-1} \left(e^{\mu(x_{m+1}-y_{m})}+e^{\mu(y_m-x_m)}\right)\right)\right).
\eeq
Proceeding as in the periodic case, we get again \eqref{cAp1}--\eqref{defgam} with the nonperiodic Toda convention \eqref{npconv}. Hence we deduce the $\beta$-expansion \eqref{bexp}, now with the defining Hamiltonian
\beq\label{HTnp}
H_{{\rm nr}}(x)=-\frac{\hbar^2}{2}\sum_{m=1}^N\partial_{x_m}^2+a^2\sum_{m=1}^{N-1} e^{\mu(x_{m+1}-x_m)},\ \ \ \ a=\mu g,
\eeq
of the nonrelativistic nonperiodic Toda system.  

By contrast to the relativistic case, there seems to be no limit transition leading from the periodic kernel function given by \eqref{cK} to the nonperiodic one \eqref{Knr}. On the other hand, it is again easy to obtain the nonperiodic Hamiltonian  \eqref{HTnp} from the periodic one given by \eqref{HTnr}. It is also straightforward to verify directly the kernel function property
\beq
(H_{\rm nr}(x)-H_{\rm nr}(y))\cK(x,y)=0.
\eeq
To obtain the kernel property for the higher order commuting PDOs, we proceed as in the periodic case. Thus, we use a nonrelativistic Lax matrix
\beq\label{Lnrnp}
(L_{\text{nr}})_{mn} = \delta_{mn}p_m + \delta_{m,n-1} + a^2\delta_{m,n+1}e^{\mu(x_m-x_{m-1})},
\eeq
and a relativistic one $L$~\eqref{LTp}, with $b_m$ given by
\beq
b_1=\left( 1+\gamma^2 e^{\mu(x_2-x_1)}\right)^{1/2} e^{\beta p_1},\ \ \ \ 
b_N=\left( 1+\gamma^2 e^{\mu(x_N-x_{N-1})}\right)^{1/2} e^{\beta p_N},
\eeq
and  by \eqref{bm} for $m=2,\ldots,N-1$, and with the matrix $E$ given by
\beq
	E_{mn} = 1,\quad n-m=N-1,\ldots,1,0,
	\eeq
	\beq
	E_{m,m-1} = \frac{\gamma^2e^{\mu(x_m-x_{m-1})}}{1+\gamma^2e^{\mu(x_m-x_{m-1})}},\ \ \ m=2,\ldots,N,
	\eeq
	\beq\label{E3}
	E_{mn} = 0,\quad n-m=-2,\ldots,-N+1.
\eeq

It is not difficult to check that $L_{\text{nr}}$ and $L$ are again related by~\eqref{Lexp}, and that  the symmetric functions of $L$ are given by~\eqref{SkT} (with the convention~\eqref{npconv} in effect). Defining the auxiliary A$\De$Os $A^{(\pm)}_k$ via \eqref{etapm} and \eqref{Apmk}, and then $D^{(\pm)}_k$ by \eqref{Dkt} with the coefficients $c_j$ set equal to 1, we obtain \eqref{DSig} by an easy adaptation of the proof of Lemma~4.2. The remark below the proof applies here, too, and now the following analog of Theorem~4.3 readily follows.

\begin{theorem}
Setting
\begin{equation}
	D(x) = \det\left({\bf 1}_N+\lambda\hat{L}_{\rm nr}(x)\right),
\end{equation}
where $\hat{L}_{\rm nr}(x)$ is defined by \eqref{Lnrnp} with the 
substitution $-i\hbar\partial_{x_m}$ for $p_m$,
 we have a kernel identity
\begin{equation}\label{DKnp}
	(D(x)-D(-y))\mathcal{K}(x,y) =0,
\end{equation}
with the kernel function $\cK(x,y)$  given by \eqref{Knr}. 
\end{theorem}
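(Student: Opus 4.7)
The plan is to mimic the proof of Theorem~4.3 for the periodic case, adapted to the nonperiodic convention~\eqref{npconv} and the simpler coefficients $c_j\equiv 1$ pointed out just before the statement. The starting point will be the relativistic nonperiodic kernel identities
\beq
	\left(\cA^{\pm}_{l,\delta}(x) - \cA^{\pm}_{-l,\delta}(y)\right)S^{\pm}(x,y) = 0, \qquad \pm l=1,\ldots,N,\ \ \delta=+,-,
\eeq
from Theorem~\ref{openTodaKernelProp}, specialized to $\delta=+$.

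First I would take the nonrelativistic limit $\beta\downarrow 0$ of the kernel functions $S^{\pm}(x,y)$ given by~\eqref{Spl}--\eqref{Sml}, with the substitutions~\eqref{aT} and~\eqref{etapm} in force. The same argument as in the proof of Theorem~\ref{DK} applies verbatim: replacing $\eta$ in~\eqref{etag} by~\eqref{etapm} amounts to shifting the arguments of $G_R$ and $G_L$ by $\pm i\hbar\beta/4$, which is immaterial by the uniform convergence in~\eqref{GLRLimit}. The resulting limit is the nonperiodic kernel function~$\cK(x,y)$ given by~\eqref{Knr}. The $U$-function~\eqref{TUnp} still tends to $1$ by~\eqref{GLRLimit} with $\lambda=2$.

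Second, I would invoke the nonperiodic analog of Lemma~4.2, obtained by the easy adaptation already described in the paragraph preceding the theorem: the relativistic Lax matrix $L$ is given by~\eqref{LTp} with the new $b_m$'s and the matrix $E$ specified by~\eqref{E3} and the equations above it. One checks, exactly as in the proof of Lemma~4.2, that $\Sigma_k(L)=S_k(x,p)$ (with $c_j=1$, because there is no spectral-parameter coupling across the chain) where $S_k$ is given by~\eqref{SkT} under the convention~\eqref{npconv}, and that $L={\bf 1}_N+\beta L_{\rm nr}+O(\beta^2)$ with $L_{\rm nr}$ from~\eqref{Lnrnp}. Expanding $\det({\bf 1}_N+\lambda\beta^{-1}(L-{\bf 1}_N))$ in powers of $\lambda$ yields the classical version of
\beq
	\lim_{\beta\downarrow 0}\beta^{-k}D^{(\tau)}_k = \hat{\Sigma}_{k,{\rm nr}},\qquad k=1,\ldots,N,\ \ \tau=+,-,
\eeq
where now
\beq
D^{(\tau)}_k=(-1)^k\binom{N}{N-k}+\sum_{j=1}^k(-1)^{k+j}\binom{N-j}{N-k}A^{(\tau)}_j,
\eeq
with $A^{(\tau)}_j$ the A$\De$Os obtained from $\cA^{\pm}_{j,+}$ via~\eqref{aT} and~\eqref{etapm} (now with the convention~\eqref{npconv}). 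The passage from the classical identity to the quantum identity is unambiguous because in each $\hat{\Sigma}_{k,{\rm nr}}$ arising from~\eqref{Lnrnp} the $p_m$'s and $x_m$'s never meet in a common minor.

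Finally, I would combine the two ingredients. Fixing $k\in\{1,\ldots,N\}$, form the linear combination $D^{(\tau)}_k(x)-D^{(\tau)}_k(-y)$ of the kernel identities from Theorem~\ref{openTodaKernelProp}, so that
\beq
	\left(D^{(\tau)}_k(x)-D^{(\tau)}_k(-y)\right)S^{\tau}(x,y)=0,
\eeq
multiply by $\beta^{-k}$, and send $\beta\downarrow 0$. The limit yields
\beq
	\left(\hat{\Sigma}_{k,{\rm nr}}(x)-\hat{\Sigma}_{k,{\rm nr}}(-y)\right)\cK(x,y)=0,\qquad k=1,\ldots,N,
\eeq
and summing $\lambda^k$ times these identities from $k=0$ to $N$ gives $\bigl(D(x)-D(-y)\bigr)\cK(x,y)=0$, as required. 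The main obstacle is to carry out the adaptation of Lemma~4.2, i.e.\ to check that with the new $b_m$'s at the endpoints and the new matrix $E$ the identity $\Sigma_k(L)=S_k(x,p)$ indeed holds and that the coefficients of the $c_j$-type correction all collapse to $1$; once that is in place, the rest of the argument is a direct transcription of the proof of Theorem~\ref{DK}.
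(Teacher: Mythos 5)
Your proposal is correct and follows essentially the same route as the paper: take the $\beta\downarrow 0$ limit of the nonperiodic kernel functions $S^{\pm}$ with the substitutions \eqref{aT} and \eqref{etapm}, adapt Lemma~4.2 to the nonperiodic Lax matrices \eqref{Lnrnp} and \eqref{LTp}--\eqref{E3} with $c_j=1$, and then combine the relativistic kernel identities of Theorem~\ref{openTodaKernelProp} via the $D^{(\tau)}_k$-combinations, exactly as in the proof of Theorem~4.3. The only point glossed over (as the paper itself does) is that the $\beta^{-k}$-normalized combination of the negative-index A$\De$Os acting in $y$ indeed converges to $\hat{\Sigma}_{k,{\rm nr}}(-y)$, but this is the same ``easy adaptation'' the paper invokes.
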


Finally, we obtain a counterpart of Corollary~2.9, namely, kernel identities that relate the PDOs $\hat{\Sigma}_{k,{\rm nr}}$ in $N$ variables $x=(x_1,\ldots,x_N)$ to the PDOs $\hat{\Sigma}_{k,{\rm nr}}$ in $N-1$ variables $y=(y_1,\ldots,y_{N-1})$ for $k<N$.

\begin{corollary} Setting
\beq\label{K1nr}
\cK_1(x,y)= \exp \left(-\frac{g}{\hbar} \sum_{m=1}^{N-1} \left(e^{\mu(x_{m+1}-y_{m})}+e^{\mu(y_m-x_m)}\right)\right),
\eeq
we have an eigenfunction identity
\beq
\hat{\Sigma}_{N,{\rm nr}}(x)\cK_1(x,y)=0,
\eeq
and kernel identities
\beq
(\hat{\Sigma}_{k,{\rm nr}}(x) -\hat{\Sigma}_{k,{\rm nr}}(-y))\cK_1(x,y)=0,\ \ \ \ k=1,\ldots, N-1.
\eeq
\end{corollary}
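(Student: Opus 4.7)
The plan is to follow the strategy of Proposition~4.1 and Theorem~4.4 by taking the nonrelativistic limit $\beta\to 0$ of Corollary 2.9. With the reparametrization $a_+=2\pi/\mu$, $a_-=\hbar\beta$ and $\eta$ as in \eqref{etapm}, the limit $S_1^+(x,y)\to\cK_1(x,y)$ follows from \eqref{GLRLimit} with $\lambda=1$, exactly as in the derivation of \eqref{ScK}--\eqref{cK} for $S^+\to\cK$. I would then take the relativistic identity \eqref{AS1} with $\sigma=\tau=\delta=+$ and form the linear combination $\sum_{j=0}^{k}(-1)^{k+j}\binom{N-j}{N-k}$ of the equations
\[
\cA^{+}_{j,+}(x_1,\ldots,x_N)S_1^+(x,y)=\bigl(\cA^{+}_{-j,+}(y_1,\ldots,y_{N-1})+\cA^{+}_{-(j-1),+}(y_1,\ldots,y_{N-1})\bigr)S_1^+(x,y).
\]
On the left this builds precisely the operator $D^{(+)}_k(x)$ appearing in the nonperiodic variant of Lemma~4.2 used in the proof of Theorem~4.4, so multiplying by $\beta^{-k}$ and letting $\beta\to 0$ yields $\hat{\Sigma}_{k,{\rm nr}}(x)\cK_1(x,y)$. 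On the right, Pascal's rule $\binom{N-n}{N-k}-\binom{N-n-1}{N-k}=\binom{N-1-n}{N-1-k}$ collapses the double sum to $\sum_{n=0}^{k}(-1)^{k+n}\binom{N-1-n}{N-1-k}\cA^{+}_{-n,+}(y_1,\ldots,y_{N-1})S_1^+(x,y)$.

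For $k=N$ every Pascal coefficient $\binom{N-1-n}{-1}$ vanishes, so after the $\beta\to 0$ limit the right-hand side is zero, which gives the eigenfunction identity $\hat{\Sigma}_{N,{\rm nr}}(x)\cK_1(x,y)=0$. For $1\le k<N$ the remaining task is to identify the $\beta^{-k}$ limit of the Pascal combination on the right as $\hat{\Sigma}_{k,{\rm nr}}(-y_1,\ldots,-y_{N-1})\cK_1$. To this end I would use the involution $\sigma(y)_m=-y_{N-m}$, which preserves the nonperiodic boundary conventions $y_0=\infty$ and $y_N=-\infty$; a direct manipulation of the definitions \eqref{cApp} and \eqref{cApm} (substituting $m'=N-m$ in the sum) yields
\[
\cA^{+}_{-n,+}(y_1,\ldots,y_{N-1})=\cA^{+}_{n,+}(\sigma(y)_1,\ldots,\sigma(y)_{N-1}).
\]
Hence the Pascal combination on the right is precisely $D^{(+)}_k$ of the nonperiodic variant of Lemma~4.2 in the variables $\sigma(y)$, and its $\beta^{-k}$ limit is $\hat{\Sigma}_{k,{\rm nr}}(\sigma(y))=\hat{\Sigma}_{k,{\rm nr}}(-y_{N-1},\ldots,-y_1)$. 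The symmetric functions of the Lax matrix \eqref{Lnrnp} are invariant under the index reversal $m\mapsto N-m$ (via conjugation with the flip matrix together with transposition), so this equals $\hat{\Sigma}_{k,{\rm nr}}(-y_1,\ldots,-y_{N-1})$, completing the argument.

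The main obstacle is the identification in the second paragraph: the positive-shift operators $\cA^{+}_{-n,+}(y)$ are not directly in the form to which Lemma~4.2 applies, and the symmetry $\sigma$, which combines index reversal with a sign flip, must be tracked carefully through both the A$\Delta$O definitions and the boundary conventions at $y_0,y_N$. Once the identification $\cA^{+}_{-n,+}(y)=\cA^{+}_{n,+}(\sigma(y))$ is in place, the remainder is a routine assembly of machinery from Sections~2.4 and~4.1.3, and the index-reversal invariance of $\hat{\Sigma}_{k,{\rm nr}}$ closes the proof.
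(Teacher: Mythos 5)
Your route is genuinely different from the paper's: the paper proves this corollary by substituting $y_N\to y_N-\Lambda$ and letting $\Lambda\to\infty$ in the already-nonrelativistic Theorem~4.4, whereas you descend directly from the relativistic Corollary~2.9 via the $\beta\to 0$ machinery of the nonperiodic variant of Lemma~4.2, in the spirit of the hyperbolic computation around \eqref{sumKIds}. Most of your argument is sound: the limit $S_1^{+}\to\cK_1$ with $\eta$ as in \eqref{etapm}, the Pascal collapse of the $y$-side, the vanishing of all collapsed coefficients for $k=N$ (which gives the eigenfunction identity), and the nontrivial identification $\cA^{+}_{-n,+}(y_1,\ldots,y_{N-1})=\cA^{+}_{n,+}(-y_{N-1},\ldots,-y_1)$, which indeed checks out against \eqref{cApp}--\eqref{cApm} with the conventions $y_0=\infty$, $y_N=-\infty$. (The interchange of the operator limit \eqref{DSig} with its application to the $\beta$-dependent kernel is at the same level of rigor as the paper's own proof of Theorem~4.3, so I do not count that against you.)

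The gap is your final step. The symmetric functions $\hat{\Sigma}_{k,{\rm nr}}$ of the Lax matrix \eqref{Lnrnp} are \emph{not} invariant under plain index reversal of their arguments: already for two particles, $\hat{\Sigma}_{2,{\rm nr}}(w_1,w_2)=(-i\hbar\partial_{w_1})(-i\hbar\partial_{w_2})-a^2e^{\mu(w_2-w_1)}$, and swapping $w_1\leftrightarrow w_2$ inverts the bond factor. Conjugation by the flip matrix plus transposition reverses the \emph{sequences} of diagonal entries and bond products, but it does not implement the substitution $w_m\to w_{M+1-m}$, since under that substitution the bonds $e^{\mu(w_{m+1}-w_m)}$ are inverted rather than permuted; the true symmetry of this type is $w_m\to -w_{M+1-m}$, under which $\hat{\Sigma}_{k,{\rm nr}}$ picks up $(-1)^k$. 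Consequently $\hat{\Sigma}_{k,{\rm nr}}(-y_{N-1},\ldots,-y_1)\neq\hat{\Sigma}_{k,{\rm nr}}(-y_1,\ldots,-y_{N-1})$ in general (for $N=3$, $k=2$ the two operators differ by the multiplication operator $a^2\bigl(e^{\mu(y_2-y_1)}-e^{\mu(y_1-y_2)}\bigr)$, which does not annihilate $\cK_1$), so the equality you invoke to close the proof fails. The good news is that this step is also unnecessary: the operator your limit actually produces, namely $\hat{\Sigma}_{k,{\rm nr}}$ in the variables $(-y_{N-1},\ldots,-y_1)$ --- equivalently the operator with shifts $i\hbar\partial_{y_m}$ and bond factors $a^2e^{\mu(y_{m+1}-y_m)}$ --- is precisely the nonrelativistic limit of the $\cA^{+}_{-j,+}(y)$ side, and it is this operator that the paper works with in proving Theorem~4.4 and the corollary (this is why the proof there refers to the Lax matrix $\hat{L}_{\rm nr}(y)$, whose element $(\hat{L}_{\rm nr})_{N,N-1}=a^2e^{\mu(y_N-y_{N-1})}$ vanishes in the limit). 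Since the nonperiodic Toda operators are not permutation invariant (unlike the Calogero--Moser case, where $H_{-k,\de}(y)=H_{k,\de}(-y)$ holds), the ordering of the $y$-arguments genuinely matters here; you should stop at $\hat{\Sigma}_{k,{\rm nr}}(-y_{N-1},\ldots,-y_1)$ and identify it with what the corollary denotes by $\hat{\Sigma}_{k,{\rm nr}}(-y)$, rather than appeal to a reversal invariance that does not hold.
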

\begin{proof}
The kernel function $\cK_1((x_1,\ldots,x_N),(y_1,\ldots,y_{N-1}))$ is obtained from the kernel function $\cK(x,y)$ given by~\eqref{Knr} upon substituting $y_N\to y_N-\Lambda$ and then taking $\Lambda$ to infinity. Doing so in the Lax matrix $\hat{L}_{\rm nr}(y)$, the matrix element 
$(\hat{L}_{\rm nr})_{N,N-1}$ vanishes in the limit. Noting $\partial_{y_N}$ annihilates $\cK_1(x,y)$, the assertions now follow from Theorem~4.4 upon expansion of the determinants.
\end{proof}

\subsubsection{The dual nonperiodic Toda case}

To obtain the nonrelativistic limits of the quantities in Subsection~2.5, we proceed in the same way as for the dual hyperbolic case. Thus we
reparametrize $a_{+},a_{-}$ via \eqref{aT} and revert to the spectral variables $\hat{p}$ via \eqref{phsub}. Omitting the hats on $p$, we wind up with A$\De$Os (cf.~\eqref{defDl} and \eqref{Hd2})
\beq\label{Dkr}
\hat{A}_{\pm k,+}=(\mp i)^{k(N-k)}\sum_{ |I|=k}\prod_{\substack{m\in I\\ n\notin I}}\frac{1}{2\sinh(\beta(p_m-p_n)/2)}
\prod_{m\in I}\exp(\mp i\hbar\mu\partial_{p_m}),
\eeq
\bea\label{Hkr}
\hat{H}_{\pm k,+}  &  =  &  \sum_{ |I|=k}\prod_{\substack{m\in I\\ n\notin I}}\left|\frac{1}{2\sinh(\beta(p_m-p_n)/2)}\right|^{1/2}
\prod_{m\in I}\exp(\mp i\hbar\mu\partial_{p_m})
\nonumber \\
&  &  \times \prod_{\substack{m\in I\\ n\notin I}}\left|\frac{1}{2\sinh(\beta(p_m-p_n)/2)}\right|^{1/2},
\eea
related by the weight function
\beq
\hat{W}=\prod_{1\le j<k\le N}4\sinh(\beta(p_j-p_k)/2)\sinh(\pi(p_j-p_k)/\hbar\mu),
\eeq
cf.~\eqref{Wd}--\eqref{DWH}.

Clearly, we need only multiply \eqref{Dkr}--\eqref{Hkr}  by $(\hbar\mu\beta)^{k(N-k)}$ and take $\beta$ to 0 to get the commuting A$\De$Os
\beq\label{Dknr}
\hat{A}_{\pm k,{\rm nr}}(p)=(\mp i)^{k(N-k)}\sum_{ |I|=k}\prod_{\substack{m\in I\\ n\notin I}}\frac{\hbar\mu}{p_m-p_n}
\prod_{m\in I}\exp(\mp i\hbar\mu\partial_{p_m}),
\eeq
\beq\label{Hknr2}
\hat{H}_{\pm k,{\rm nr}}(p)    =    \sum_{ |I|=k}\prod_{\substack{m\in I\\ n\notin I}}\left|\frac{\hbar\mu}{p_m-p_n}\right|^{1/2}
\prod_{m\in I}\exp(\mp i\hbar\mu\partial_{p_m})
 \prod_{\substack{m\in I\\ n\notin I}}\left|\frac{\hbar\mu}{p_m-p_n}\right|^{1/2}.
\eeq
They are related by~\eqref{AHW},
where
\beq\label{WhT}
\hat{W}_{\rm nr}(p)=\prod_{1\le j<k\le N}2((p_j-p_k)/\hbar\mu)\sinh(\pi(p_j-p_k)/\hbar\mu)
\eeq
is the nonrelativistic dual weight function.

Next, we obtain the nonrelativistic limit of the kernel identities in Subsection~2.5 by arguing as in the dual hyperbolic case (cf.~the paragraph containing~\eqref{Gkap}), using also that we may multiply the kernel function by $c_1\exp(c_2\sum(p_j-q_j))$ without losing the kernel property.

\begin{theorem}
Letting $l\in\{ \pm 1,\ldots,\pm N\}$ and $\sigma\in\{ \pm 1\}$, we have the dual kernel function identities
\begin{equation}\label{DcK}
	\big(\hat{A}_{l,{\rm nr}}(p) - \hat{A}_{-l,{\rm nr}}(q)\big)\hat{\cK}(p,q)^{\sigma} = 0,
\end{equation}
\beq\label{HdK}
\big(\hat{H}_{l,{\rm nr}}(p) - \hat{H}_{-l,{\rm nr}}(q)\big)\hat{W}_{\rm nr}(p)^{1/2}\hat{W}_{\rm nr}(q)^{1/2}\hat{\cK}(p,q)^{\sigma} = 0,
\eeq
where
\beq\label{cKhT}
\hat{\cK}(p,q)=\prod_{j,k=1}^N \Gamma (i(p_j-q_k)/\hbar\mu ).
\eeq
\end{theorem}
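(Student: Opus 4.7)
The plan is to obtain Theorem 4.5 as a controlled nonrelativistic ($\beta\to 0$) limit of the dual Toda kernel identities in Theorem~2.11 and its corollary, in close parallel with the treatment of the dual hyperbolic case in Subsection~4.1.2 (the paragraph containing \eqref{Gkap}).

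First, I would reparametrize via \eqref{aT} and revert to the spectral variables by substituting $v_j=\beta p_j/\mu$, $w_j=\beta q_j/\mu$ (dropping the hats, as done in the paragraph containing \eqref{Dkr}). Under this substitution, the shift operators $\exp(\mp ia_{-\delta}\partial_{v_j})$ with $\delta=+$ become $\exp(\mp i\hbar\mu\partial_{p_j})$, while the coefficients $1/(2s_+(v_m-v_n))$ in $\hat A_{\pm k,+}$ turn into $1/(2\sinh(\beta(p_m-p_n)/2))$. Multiplying the kernel identity from Theorem~2.11 by the scalar $(\hbar\mu\beta)^{k(N-k)}$ and letting $\beta\to 0$, these coefficients converge to those appearing in $\hat A_{\pm k,{\rm nr}}$ as defined in \eqref{Dknr}, the prefactors $(\mp i)^{k(N-k)}$ matching correctly.

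Next, I would analyze the kernel function. Using the scale invariance $G(2\pi/\mu,\hbar\beta;z)=G(1,\kappa;\kappa\mu z/(2\pi\hbar))$ with $\kappa=\beta\hbar\mu/(2\pi)$ together with the limit \eqref{GhGr} from Appendix~A, I would show that, uniformly on compact subsets away from singularities and up to a factor of the form $c_1(\beta)\exp(c_2(\beta)\sum_j(p_j-q_j))$,
\[
\hat{\cS}(\beta p/\mu,\beta q/\mu)=\prod_{j,k=1}^N G(2\pi/\mu,\hbar\beta;\beta(p_j-q_k)/\mu)\;\longrightarrow\;\hat{\cK}(p,q),\qquad \beta\to 0.
\]
The crucial bookkeeping is that multiplication of a kernel by such a factor preserves the kernel property: the constant generated when $\exp(\mp i\hbar\mu\partial_{p_m})$ crosses $\exp(c_2 p_m)$ is a common $\exp(\mp c_2 i\hbar\mu)$, which pulls out of every $k$-term sum on the $p$-side and matches the corresponding factor on the $q$-side of $\hat A_{l,{\rm nr}}(p)-\hat A_{-l,{\rm nr}}(q)$. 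Combining the limit of the rescaled A$\Delta$Os with the limit of $\hat{\cS}$ then yields \eqref{DcK} for $\sigma=+1$. The $\sigma=-1$ case follows by the same device used in the proof of Theorem~2.11: from \eqref{Dknr} one reads off $\hat A_{l,{\rm nr}}(-p)=\hat A_{-l,{\rm nr}}(p)$, so substituting $(p,q)\to(-p,-q)$ in the $\sigma=+1$ identity together with the fact (derived in the same limit from the hyperbolic reflection equation and \eqref{GhGr}) that $\hat{\cK}(p,q)^{-1}$ is also a kernel function, reduces matters to the already established case.

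Finally, the Hamiltonian identity \eqref{HdK} follows from \eqref{DcK} by the conjugation relation $\hat H_{l,{\rm nr}}(p)=\hat W_{\rm nr}(p)^{1/2}\hat A_{l,{\rm nr}}(p)\hat W_{\rm nr}(p)^{-1/2}$, which is readily verified by a direct computation from \eqref{Dknr}, \eqref{Hknr2}, and the explicit form \eqref{WhT} of $\hat W_{\rm nr}$ (it is the dual Toda counterpart of \eqref{AHW}, and can alternatively be obtained as the $\beta\to 0$ limit of \eqref{DWH}). Applying this conjugation symmetrically in $p$ and $q$ converts \eqref{DcK} into \eqref{HdK}. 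The main technical obstacle lies in the preceding paragraph, namely tracking the Gaussian and linear exponential prefactors produced by \eqref{GhGr} in each of the $N^2$ $G$-factors of $\hat{\cS}$: the quadratic pieces must be shown to combine into a total factor of the separable form $c_1\exp(c_2\sum_j(p_j-q_j))$ (possibly after also using the freedom to translate all coordinates by a common constant), so that they can be absorbed before taking the limit; the quadratic cross-terms $\exp(\text{const}\cdot (p_j-q_k)^2)$ do cancel in the product, but verifying this cleanly is the most delicate part of the argument.
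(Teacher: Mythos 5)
Your proposal follows essentially the same route as the paper: Theorem 4.5 is obtained by taking the $\beta\to 0$ limit of the dual kernel identities of Subsection~2.5 in the spectral variables, using the scale invariance \eqref{hypsc} together with \eqref{GhGr} exactly as in the dual hyperbolic paragraph containing \eqref{Gkap}, the freedom to multiply by $c_1\exp(c_2\sum_j(p_j-q_j))$ and to shift all coordinates, a parity/reflection argument for $\sigma=-1$, and conjugation by $\hat{W}_{\rm nr}^{1/2}$ to pass from \eqref{DcK} to \eqref{HdK}. Two small corrections: the scale identity should read $G(2\pi/\mu,\hbar\beta;\beta z/\mu)=G(1,\kappa;\kappa z/\hbar\mu)$ with $\kappa=\beta\hbar\mu/2\pi$, and \eqref{GhGr} produces renormalizing factors that are only \emph{linear} in $z$ (no Gaussian cross-terms arise at all), so the ``delicate'' cancellation you anticipate reduces to the observation that $\sum_{j,k}(p_j-q_k)=N\sum_j(p_j-q_j)$ — precisely a multiplier of the allowed form — together with a common imaginary shift of the $p_j$ to absorb the $i/2$ in \eqref{GhGr} and the residual $1/2$ in the resulting Euler gamma arguments.
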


Each term of the A$\De$Os occurring in~\eqref{DcK} and~\eqref{HdK} shifts $|l|$ coordinates by $\pm i \hbar\mu$, so the kernel property is preserved upon multiplication by products of functions that are $i\hbar\mu$-antiperiodic. Hence we can derive the kernel property of $1/\hat{\cK}(p,q)$ from that of $\hat{\cK}(-p,-q)$ by using the reflection equation of the gamma function, in the form
\beq
\Gamma(iz+1/2)\Gamma(-iz+1/2)=\pi/\cosh(\pi z).
\eeq

We proceed with the following counterpart of Theorem~2.12.

\begin{theorem}
Define kernel functions
\begin{equation}\label{NRDTKernelFunc}
	\hat{\cK}_{\ell}(p,q) = \prod_{m=1}^N\prod_{n=1}^{N-\ell}\Gamma\big(1-i(p_m-q_n)/\hbar\mu\big),
\end{equation}
where $\ell=0,1,\ldots,N-1$. For any $k\in\{1,\ldots,N-\ell\}$ we have
\beq\label{Dlid}
	\hat{A}_{k,\rm {nr}}(p_1,\ldots,p_N)\hat{\cK}_\ell (p,q)  = \hat{A}_{-k,{\rm nr}}(q_1,\ldots,q_{N-\ell})\hat{\cK}_\ell (p,q).
	\eeq
\end{theorem}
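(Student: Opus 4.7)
The plan is to follow the template of Theorem~2.12: first reduce the kernel identity~\eqref{Dlid} to a purely rational functional identity by means of the gamma-function difference equation $\Gamma(z+1)=z\Gamma(z)$, and then derive that functional identity as a direct nonrelativistic limit of~\eqref{sId2}.

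For the first step I would compute
\begin{align*}
\hat{\cK}_\ell(p,q)^{-1}\exp(-i\hbar\mu\partial_{p_m})\hat{\cK}_\ell(p,q) &= \prod_{n=1}^{N-\ell}\frac{i\hbar\mu}{p_m-q_n},\\
\hat{\cK}_\ell(p,q)^{-1}\exp(i\hbar\mu\partial_{q_m})\hat{\cK}_\ell(p,q) &= \prod_{n=1}^{N}\frac{i\hbar\mu}{p_n-q_m},
\end{align*}
which both follow from $\Gamma(z+1)=z\Gamma(z)$ together with the trivial identity $1/(-iz)=i/z$. Substituting these into the defining formulas~\eqref{Dknr}, and combining the prefactors $(\mp i)^{k(N-k)}$ with the factors $\hbar\mu/(p_m-p_n)$ (resp.\ $\hbar\mu/(q_m-q_n)$) and with the signs that appear when one rewrites $1/(p_m-p_n)=-1/(p_n-p_m)$, one finds that $\hat{\cK}_\ell^{-1}\hat{A}_{k,{\rm nr}}(p)\hat{\cK}_\ell$ and $\hat{\cK}_\ell^{-1}\hat{A}_{-k,{\rm nr}}(q)\hat{\cK}_\ell$ both carry a common scalar factor $(i\hbar\mu)^{k(2N-k-\ell)}$. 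After cancelling this factor the kernel identity~\eqref{Dlid} becomes equivalent to the rational functional identity
\begin{multline}\label{nrFuncId}
\sum_{\substack{I\subset\{1,\ldots,N\}\\|I|=k}}\prod_{\substack{m\in I\\n\notin I}}\frac{1}{p_n-p_m}\prod_{\substack{m\in I\\n\in\{1,\ldots,N-\ell\}}}\frac{1}{p_m-q_n}\\=\sum_{\substack{J\subset\{1,\ldots,N-\ell\}\\|J|=k}}\prod_{\substack{m\in J\\n\notin J}}\frac{1}{q_m-q_n}\prod_{\substack{m\in J\\n\in\{1,\ldots,N\}}}\frac{1}{p_n-q_m}.
\end{multline}

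For the second step I would derive~\eqref{nrFuncId} as the $\beta\downarrow 0$ limit of~\eqref{sId2}. Choosing $\delta=+$ there and substituting $v_m=\beta p_m/\mu$, $w_n=\beta q_n/\mu$, one has $s_+(\beta(p_m-p_n)/\mu)=\sinh(\beta(p_m-p_n)/2)\sim\beta(p_m-p_n)/2$ while $e_+(\ell v_m)=\exp(\ell\beta p_m/2)\to 1$. Multiplying both sides of~\eqref{sId2} by $(\beta/2)^{k(2N-k-\ell)}$ and sending $\beta\downarrow 0$ yields~\eqref{nrFuncId} at once.

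The main obstacle, such as it is, is purely bookkeeping: one must carefully reconcile the phases $(\mp i)^{k(N-k)}$ built into~\eqref{Dknr}, the signs produced by reordering the rational denominators, and the factors of $i$ arising from $\Gamma(z-1)/\Gamma(z)=1/(z-1)$, so that a single common scalar prefactor $(i\hbar\mu)^{k(2N-k-\ell)}$ emerges on both sides and the theorem genuinely collapses to the symmetric identity~\eqref{nrFuncId}.
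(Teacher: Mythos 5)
Your proposal is correct and follows essentially the same route as the paper: the paper likewise reduces \eqref{Dlid} via the gamma-function difference equation to the conjugation formulas \eqref{Kp}--\eqref{Kq}, and obtains exactly your rational functional identity (the paper's \eqref{DTIds}) by the same scaling degeneration of \eqref{sId2} with $\delta=+$. The only cosmetic difference is that the paper uses an abstract scaling parameter $t\to 0$ rather than tying the limit to $\beta\downarrow 0$, which changes nothing in substance.
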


\begin{proof} We bypass a laborious derivation from Theorem~2.12 by adapting its proof, as follows. First, the analogs of the equations \eqref{Ssm}--\eqref{Ssp} read
\beq\label{Kp}
\hat{\cK}_{\ell}(p,q)^{-1}\exp(-i\hbar\mu\partial_{p_m}) \hat{\cK}_{\ell}(p,q)=\left(i\hbar\mu\right)^{N-\ell}\prod_{n=1}^{N-\ell}\frac{1}{p_m-q_n},
\eeq
\beq\label{Kq}
\hat{\cK}_{\ell}(p,q)^{-1} \exp(i\hbar\mu\partial_{q_m}) \hat{\cK}_{\ell}(p,q)=
\left(i\hbar\mu\right)^{N}\prod_{n=1}^{N}\frac{1}{p_n-q_m}.
\eeq
Second, we take $\de =+$ in the identities \eqref{sId2} and  substitute (recall $s_{+}(z)=\sinh(\pi z/a_{+})$)
\beq
v=ta_{+}p/\pi,\ \ \ w=ta_{+}q/\pi.
\eeq
If we now multiply both sides of \eqref{sId2} by $t^{k(2N-k-\ell)}$ and  take  $t\to 0$, then we obtain the identities
\begin{multline}\label{DTIds}
	\sum_{\substack{I\subset\lbrace 1,\ldots,N\rbrace\\ |I|=k}}\prod_{\substack{m\in I\\ n\notin I}}\frac{1}{p_n-p_m}\prod_{\substack{m\in I\\ n\in\lbrace 1,\ldots,N-\ell\rbrace}}\frac{1}{p_m-q_n}\\ =\sum_{\substack{I\subset\lbrace 1,\ldots,N-\ell\rbrace\\ |I|=k}}\prod_{\substack{m\in I\\ n\notin I}}\frac{1}{q_m-q_n}\prod_{\substack{m\in I\\ n\in\lbrace 1,\ldots,N\rbrace}}\frac{1}{p_n-q_m}.
\end{multline}
Finally, using \eqref{Dknr} and \eqref{Kp}--\eqref{Kq}, it easily follows that the kernel identities \eqref{Dlid} amount to \eqref{DTIds}.
\end{proof}

Introducing 
\beq
\tilde{\cK}_{\ell}(p,q)\equiv \hat{\cK}_{\ell}(-p,-q),
\eeq
we can mimic the reasoning leading to Corollary~2.13 to obtain the following nonrelativistic analog, which concludes this subsection.

\begin{corollary}
We have eigenfunction identities
\beq
\hat{A}_{\ell,{\rm nr}}(v_1,\ldots,v_N)\tilde{\cK}_\ell(v,w) = \tilde{\cK}_\ell(v,w),
\eeq
and kernel identities
\beq
\hat{A}_{k,{\rm nr}}(v_1,\ldots,v_N)\tilde{\cK}_\ell(v,w) = \hat{A}_{-(k-\ell),{\rm nr}}(w_1,\ldots,w_{N-\ell})\tilde{\cK}_\ell(v,w),\ \ \ k=\ell+1,\ldots,N.
\eeq
\end{corollary}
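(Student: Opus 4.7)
The plan is to transcribe the proof of Corollary~2.13 (the relativistic dual nonperiodic Toda case) into the nonrelativistic setting, with one simplification: since the nonrelativistic kernel function~\eqref{NRDTKernelFunc} carries no exponential prefactor, the function $\tilde{\cK}_{\ell}(v,w)=\hat{\cK}_{\ell}(-v,-w)$ is a pure product of Gamma functions, so no auxiliary exponential factor needs to be introduced as in $\tilde{\cS}^{\tau}_{\ell}$. The strategy is first to establish the key full-shift identity (which is precisely the $k=N$ case of the corollary) by a direct Gamma-function calculation, and then to combine it with the identity~\eqref{Dlid} of the preceding theorem by acting with $\hat{A}_{N,\rm nr}(v)$, exactly as in Corollary~2.13.

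Three ingredients are needed. First, from~\eqref{Dknr} one reads off the parity relation $\hat{A}_{l,\rm nr}(-v)=\hat{A}_{-l,\rm nr}(v)$ (the prefactor $(\mp i)^{k(N-k)}$ combines with the sign $(-1)^{k(N-k)}$ produced by negating the rational coefficients), so substituting $p\to -p,\ q\to -q$ in~\eqref{Dlid} yields
\begin{equation*}
\hat{A}_{-k,\rm nr}(v)\tilde{\cK}_{\ell}(v,w)=\hat{A}_{k,\rm nr}(w)\tilde{\cK}_{\ell}(v,w),\qquad k=1,\ldots,N-\ell.
\end{equation*}
Second, the nonrelativistic analog of~\eqref{minktoNmink}, namely $\hat{A}_{-k,\rm nr}(v)=\hat{A}_{N-k,\rm nr}(v)\hat{A}_{-N,\rm nr}(v)$ for $k=1,\ldots,N$, together with $\hat{A}_{N,\rm nr}(v)\hat{A}_{-N,\rm nr}(v)=1$, follows directly from~\eqref{Dknr} via the $I\leftrightarrow I^{c}$ symmetry of the sum; the same holds in the $(N-\ell)$-variable version in $w$. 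Third, the analog of~\eqref{AtS} reads
\begin{equation*}
\hat{A}_{N,\rm nr}(v)\tilde{\cK}_{\ell}(v,w)=\hat{A}_{-(N-\ell),\rm nr}(w)\tilde{\cK}_{\ell}(v,w),
\end{equation*}
and is the one substantive computation: both sides of the display equal $\prod_{m=1}^{N}\prod_{n=1}^{N-\ell}\bigl(1+i(v_m-w_n)/\hbar\mu\bigr)\,\tilde{\cK}_{\ell}(v,w)$, since $\hat{A}_{N,\rm nr}(v)$ and $\hat{A}_{-(N-\ell),\rm nr}(w)$ are pure shift operators that each increase every argument $1+i(v_m-w_n)/\hbar\mu$ of the Gamma factors by one, and then $\Gamma(z+1)=z\Gamma(z)$ produces the claimed common factor.

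To conclude, I act with $\hat{A}_{N,\rm nr}(v)$ on the identity from the first ingredient. The left-hand side simplifies via $\hat{A}_{N,\rm nr}(v)\hat{A}_{-k,\rm nr}(v)=\hat{A}_{N-k,\rm nr}(v)\hat{A}_{N,\rm nr}(v)\hat{A}_{-N,\rm nr}(v)=\hat{A}_{N-k,\rm nr}(v)$ by commutativity and the second ingredient. On the right, $\hat{A}_{N,\rm nr}(v)$ passes freely through $\hat{A}_{k,\rm nr}(w)$, and the third ingredient combined with the $(N-\ell)$-variable form of the second ingredient converts $\hat{A}_{k,\rm nr}(w)\hat{A}_{-(N-\ell),\rm nr}(w)$ into $\hat{A}_{-(N-\ell-k),\rm nr}(w)$. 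Reindexing $k'=N-k\in\{\ell,\ldots,N-1\}$, the case $k'=\ell$ yields the eigenfunction identity (using $\hat{A}_{0,\rm nr}\equiv 1$), and $k'=\ell+1,\ldots,N-1$ yields the remaining kernel identities; the endpoint $k'=N$ is exactly the third ingredient. I do not foresee a serious obstacle; the only delicate point is to keep track of the two distinct versions of~\eqref{minktoNmink} (in $N$ and in $N-\ell$ variables) when simplifying the two sides.
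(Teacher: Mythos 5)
Your proposal is correct and is essentially the proof the paper intends: the text says to ``mimic the reasoning leading to Corollary~2.13'', and your three ingredients (the parity relation for $\hat{A}_{\pm k,{\rm nr}}$, the nonrelativistic analog of \eqref{minktoNmink}, and the direct $\Gamma$-function verification of the analog of \eqref{AtS}) followed by acting with $\hat{A}_{N,{\rm nr}}(v)$ and reindexing is exactly that argument, with the details filled in. Your observation that no auxiliary exponential factor is needed (unlike $\tilde{\cS}^{\tau}_\ell$) is accurate and consistent with the paper's definition $\tilde{\cK}_\ell(p,q)=\hat{\cK}_\ell(-p,-q)$.
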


\subsection{B\"acklund transformations}

\subsubsection{The elliptic case}

To obtain the nonrelativistic versions of the results in Subsection~3.1, we can proceed in two distinct ways. First, we can adapt the reasoning based on the expected relation \eqref{PsiF} to the nonrelativistic kernel function $\Psi_{\rm nr}(x,y)$ given by~\eqref{Psinr}. The second way is to take $\beta$ to 0 in the relevant formulas in Subsection~3.1. This yields the same results, provided we replace $\rho$ again by $i\beta g$, as we did in Subsection~4.1, cf.~\eqref{rhog}. Since we want to view $g$ as a real coupling constant, it follows from the limits of \eqref{pj} and \eqref{qj} that we wind up with purely imaginary momenta $p_j$ and $q_j$. Thus, we run into the same problem as we had in Subsection~3.1. We improved the situation in the relativistic elliptic setting by requiring $\beta$ to be purely imaginary, while keeping $\rho\in i(0,\alpha)$. From this it is clear that we can emulate this improvement by replacing $\rho$ by $\beta g$ before taking $\beta$ to 0, keeping $g$ real.

Doing so, we wind up with a generating function
\begin{equation}\label{defFnr}
	F_{\rm nr}(x,y) =F_{W_{\rm nr}}(x) + F_{W_{\rm nr}}(y) + F_{\cK}(x,y),
\end{equation}
where
\begin{equation}\label{FWnr}
	F_{W_{\rm nr}}(x) = \frac{g}{2}\sum_{1\le j< k\le N}\ln (R(x_j-x_k+i\alpha/2)R(x_j-x_k-i\alpha/2)),
\end{equation}
\begin{equation}\label{FcK}
	F_{\cK}(x,y) =-g \sum_{j,k=1}^N \ln (R(x_j-y_k)).
\end{equation}
Obviously, this can also be obtained via \eqref{PsiF} and \eqref{Psinr}, provided we replace $g$ by $-ig$ in \eqref{cSnr} and \eqref{Wnr}.

In order to show the B\"acklund property, we use the relation between the elliptic relativistic and nonrelativistic Lax matrices from~\cite{Rui94}. The relativistic one is defined by 
\beq\label{Lell}
L_{jk}=\exp(\beta p_j)\prod_{l\ne j}f(x_j-x_l)\cdot \frac{s(x_j-x_k+\lambda)s(\rho)}{s(\lambda)s(x_j-x_k+\rho)},
\eeq
where the function $f(z)$ is given by \eqref{f}, and $\lambda\in \C$ is a spectral parameter. Its symmetric functions $\Sigma_k$ are proportional to the Hamiltonians $S_k$~\eqref{Hamiltonians}. Specifically, 
\beq\label{SigS}
\Sigma_k(x,p)=s(\lambda)^{-k}s(\lambda -\rho)^{k-1}s(\lambda +(k-1)\rho)S_k(x,p),\ \ \ k=1,\ldots,N.
\eeq
The nonrelativistic Lax matrix is defined by
\begin{equation}\label{Kric}
	(L_{\rm nr})_{jk} = \delta_{jk}p_j + ig(1-\delta_{jk})\frac{s(x_j-x_k+\lambda)}{s(\lambda)s(x_j-x_k)}.
\end{equation}
Up to a similarity transformation, it coincides with the elliptic Lax matrix introduced by Krichever \cite{Kri80}. (At this point a physicist reader might worry about non-matching dimensions, inasmuch as $p$ has dimension [momentum], whereas the coupling constant $g$ has dimension [position]$\times$[momentum]. But $\lambda$ has dimension [position], and so does $s(\lambda)$. Thus the dimensions work out.)

Clearly, the Lax matrices $L$ with $\rho=i\beta g $ and $L_{\rm nr}$ are related in the same way as in the periodic Toda case, cf.~\eqref{Lexp}. Hence their symmetric functions are related by
\begin{equation}\label{SSnr}
	\Sigma_{k,\text{nr}}(x,p) = \lim_{\beta\to 0}\beta^{-k}\sum_{l=0}^k (-1)^{k+l}\binom{N-l}{N-k}\Sigma_k(x,p),\ \ \ \ k=1,\ldots,N.
\end{equation}
The same relation holds for the symmetric functions of the matrices $L$ with $\rho=\beta g$ and $L_{\rm nr}$ with $g$ replaced by $-ig$. Therefore, the nonrelativistic B\"acklund property
\beq\label{Bnr}
\Sigma_{\rm nr}(x,p)=\Sigma_{\rm nr}(y,q),
\eeq
follows from its relativistic counterpart \eqref{SS} by using  \eqref{SigS} and \eqref{SSnr}.

Although this reasoning is formally impeccable, it skirts the existence problem already discussed in Subsection~3.1. Moreover, the noncompleteness of the flows is now even more conspicuous due to the \lq wrong sign\rq\ of the coupling. Indeed, for the defining Hamiltonian associated with $L_{\rm nr}$~\eqref{Kric} we get
\beq
H_{\rm nr}=\frac{1}{2}{\rm Tr}\,  L_{\rm nr}^2=\frac{1}{2}\sum_{j=1}^Np_j^2 +g^2\sum_{1\le j<k\le N}\wp(x_j-x_k) -\frac{1}{2}g^2N(N-1)\wp(\lambda).
\eeq
Thus, taking $g\to -ig$ leads to a negative coupling, so that the singularities at coinciding positions cannot be avoided. (Recall we require that  $g$  be real to avoid imaginary momenta resulting from the nonrelativistic versions of \eqref{pj} and \eqref{qj}, cf.~\eqref{defFnr}--\eqref{FcK}.)

\subsubsection{The hyperbolic case and its dual}

Proceeding as in the elliptic case, we obtain from~\eqref{cSnrh}--\eqref{Wnrh} once more \eqref{defFnr}--\eqref{FcK}, with $R(z)$ replaced by $2\cosh (\mu z/2)$. In the hyperbolic versions of the Lax matrices \eqref{Lell} and \eqref{Kric} we can take $\lambda$ to infinity. Omitting a similarity factor and reparametrizing via~\eqref{aT}--\eqref{tau} and \eqref{rhog}, we obtain
\beq\label{Ltau}
L(\tau)_{jk}=\exp(\beta p_j)\prod_{l\ne j}\left( 1+\frac{\sin^2(\tau)}{\sinh^2(\mu(x_j-x_l)/2)}\right)^{1/2}
\cdot \frac{i\sin(\tau)}{\sinh(i\tau+\mu(x_j-x_k)/2)},
\eeq
\beq
L_{\rm nr}(g)_{jk} = \delta_{jk}p_j + (1-\delta_{jk})\frac{i\mu g}{2\sinh(\mu(x_j-x_k)/2)},
\eeq
in accord with~\cite{Rui88} (where a parameter $z$ is used instead of $i\tau$).
Then \eqref{Lexp} holds true again, so it still holds for the Lax matrices $L(-i\beta\mu g/2)$ and $L_{\rm nr}(-ig)$, where we now think of $\beta$ being purely imaginary and $\mu,g$ real.
Hence the symmetric functions of the latter Lax matrices are again related by \eqref{SSnr}, and accordingly the negative coupling B\"acklund property \eqref{Bnr} results.  As in the relativistic case, its precise interpretation within the confines of global analysis/symplectic geometry remains to be determined.

Turning to the dual case, the relevant dual Lax matrices are~\cite{Rui88}
\beq\label{Ldtau}
\hat{L}(\tau)_{jk}=\exp(\mu \hat{x}_j)\prod_{l\ne j}\left( 1+\frac{\sin^2(\tau)}{\sinh^2(\beta(\hat{p}_j-\hat{p}_l)/2)}\right)^{1/2}
\cdot \frac{i\sin(\tau)}{\sinh(i\tau-\beta(\hat{p}_j-\hat{p}_k)/2)},
\eeq
\beq
\hat{L}_{\rm nr}(g)_{jk}=\exp(\mu \hat{x}_j)\prod_{l\ne j}\left( 1+\frac{(\mu g)^2}{(\hat{p}_j-\hat{p}_l)^2}\right)^{1/2}
\cdot \frac{i\mu g}{i\mu g-(\hat{p}_j-\hat{p}_k)},
\eeq
with $\hat{L}_{\rm nr}(g)$ being the $\beta \to 0$ limit of $\hat{L}(\beta\mu g/2)$. Thus we can expect the nonrelativistic B\"acklund property
\beq\label{nrB}
S_{k,{\rm nr}}(\hat{p},\hat{x})
=S_{k,{\rm nr}}(\hat{q},\hat{y}),
\eeq 
as a limit of the relativistic one. However, it is not straightforward to obtain the nonrelativistic generating function $F_{\rm nr}(\hat{p},\hat{q})$ as a limit of $F(\hat{p},\hat{q})$ (given by \eqref{defFhyp}--\eqref{FShyp} with $\beta$ and $\mu$ interchanged), since the renormalizations mentioned below 
\eqref{Gkap} must be taken into account.

Instead, we start directly from the dual kernel function
\beq
\Psi_{\rm nr}(\hat{p},\hat{q})=\hat{W}_{\rm nr}(\hat{p})^{1/2}\hat{W}_{\rm nr}(\hat{q})^{1/2}\hat{\cK}(\hat{p},\hat{q}),
\eeq
where $\hat{\cK}$ is defined by \eqref{cKd} with $\ell=0$, and $\hat{W}_{\rm nr}$ by \eqref{Whypd}. As before, one might expect \eqref{PsiF} to yield the desired generating function $F_{\rm nr}(\hat{p},\hat{q})$ as a limit
\beq\label{nol}
\lim_{\hbar\downarrow 0}i\hbar \ln \Psi_{\rm nr}(\hat{p},\hat{q}).
\eeq
In fact, however, this limit does not exist, as is obvious from the following lemma.

\begin{lemma}
For $\hat{p}$ and $\hat{q}$ in $\hat{G}_{\rm hyp}$~\eqref{dconfighyp}, we have classical limits
\beq
\lim_{\hbar\downarrow 0}i\hbar \ln (\exp[N^2(g/\hbar)\ln (1/\hbar)] \hat{\cK}(\hat{p},\hat{q}))=i\sum_{j,k=1}^N
\int_{i(\hat{p}_j-\hat{q}_k)/\mu +g/2}^{i(\hat{p}_j-\hat{q}_k)/\mu -g/2}dw\ln w,
\eeq
\beq
\lim_{\hbar\downarrow 0}i\hbar \ln (\exp[-N(N-1)(g/\hbar)\ln (1/\hbar)]
\hat{W}_{\rm nr}(\hat{p}))=i\sum_{j<k}
\int_{i(\hat{p}_j-\hat{p}_k)/\mu -g}^{i(\hat{p}_j-\hat{p}_k)/\mu +g }dw\ln w,
\eeq
where the integration paths stay away from the cut $(-\infty, 0]$.
\end{lemma}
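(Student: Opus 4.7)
The plan is to prove both limits by applying Stirling's asymptotic expansion
\[
\ln\Gamma(z) = z\ln z - z - \tfrac{1}{2}\ln z + \tfrac{1}{2}\ln(2\pi) + O(1/z), \qquad |z|\to\infty,\ |\arg z|<\pi-\delta,
\]
to each Gamma factor appearing in $\hat{\cK}(\hat p,\hat q)$ and $\hat{W}_{\rm nr}(\hat p)$. All arguments are of the form $u/\hbar$ for a fixed nonzero complex number $u$ bounded away from $(-\infty,0]$ (this holds because $\hat p,\hat q\in\hat G_{\rm hyp}$ force $\Re u = 0$ for the relevant $u$'s, so $u\ne 0$ sits on the imaginary axis, and small real shifts by $\pm g/2$ or $\pm g$ keep it off the cut). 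Thus Stirling applies uniformly, and the leading $1/\hbar$ contribution matches an integral $\int\ln w\,dw$ via the antiderivative $w\ln w-w$.

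For the kernel function, I would set $z_{mn}=i(\hat p_m-\hat q_n)/\mu$ and compute
\[
\ln\frac{\Gamma((z_{mn}-g/2)/\hbar)}{\Gamma((z_{mn}+g/2)/\hbar)}
=\frac{1}{\hbar}\int_{z_{mn}+g/2}^{z_{mn}-g/2}\ln w\,dw+\frac{g}{\hbar}\ln\hbar+O(\ln(1/\hbar)),
\]
where the straight-line integration path stays off $(-\infty,0]$. Summing over $m,n=1,\ldots,N$ gives a total logarithmic divergence $N^2(g/\hbar)\ln\hbar$, which is exactly cancelled by the prefactor $\exp[N^2(g/\hbar)\ln(1/\hbar)]$. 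Multiplying by $i\hbar$ and letting $\hbar\downarrow 0$ then yields the first stated limit, since the $O(\ln(1/\hbar))$ Stirling remainder contributes $O(\hbar\ln(1/\hbar))\to 0$.

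For the weight function, the same machinery applied to all four Gamma factors of each pair $(j,k)$ with $w_{jk}=i(\hat p_j-\hat p_k)/\mu$ gives
\[
i\hbar\ln\frac{\Gamma((w_{jk}+g)/\hbar)\Gamma((-w_{jk}+g)/\hbar)}{\Gamma(w_{jk}/\hbar)\Gamma(-w_{jk}/\hbar)}
= i\!\int_{w_{jk}}^{w_{jk}+g}\!\!\ln u\,du+i\!\int_{-w_{jk}}^{-w_{jk}+g}\!\!\ln u\,du-2ig\ln\hbar+o(1),
\]
with the two line integrals taken in the upper and lower half-planes respectively. Summing over the $N(N-1)/2$ pairs $j<k$ produces a logarithmic divergence $-iN(N-1)g\ln\hbar$ exactly cancelled by the renormalization $\exp[-N(N-1)(g/\hbar)\ln(1/\hbar)]$. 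The remaining step is to combine the two half-plane integrals, via the substitution $u\to -u$ and contour deformation in $\mathbb{C}\setminus(-\infty,0]$, into the single integral $\int_{w_{jk}-g}^{w_{jk}+g}\ln w\,dw$ stated in the lemma.

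The delicate point, and the main obstacle, is the branch-tracking in this final combination: the principal value of $\ln u$ jumps by $\pm i\pi$ when $u$ crosses the imaginary axis, so the substitution $u\mapsto-u$ exchanges $\ln u$ with $\ln(-u)=\ln u\mp i\pi$. One must verify that the branch phases arising from Stirling in the upper versus lower half-plane are precisely those that make the two representations of the integral agree (with the path chosen as in the lemma to avoid $(-\infty,0]$), so that any finite $\hbar$-independent residue either vanishes or is absorbed by the convention for the integration path. Once this branch bookkeeping is settled, the stated limit follows directly by summation over pairs.
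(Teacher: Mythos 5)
Your strategy coincides with the paper's: the paper merely records Stirling's formula in the integrated form \eqref{Sto} and asserts that both limits "readily follow" from the definitions \eqref{cKd} and \eqref{Whypd}. Your treatment of the first display (the $\hat{\cK}$ limit) is exactly this argument, carried out correctly: each factor is a single ratio $\Gamma(\Lambda u)/\Gamma(\Lambda d)$ with $\Lambda=1/\hbar$, $u=i(\hat{p}_j-\hat{q}_k)/\mu-g/2$, $d=i(\hat{p}_j-\hat{q}_k)/\mu+g/2$, and \eqref{Sto} gives the stated integral with the $N^2(g/\hbar)\ln(1/\hbar)$ divergence cancelled; no recombination of branches is needed there.

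For $\hat{W}_{\rm nr}$, however, the step you defer to the end---combining $\int_{w}^{w+g}\ln s\,ds$ and $\int_{-w}^{-w+g}\ln s\,ds$ (with $w=i(\hat{p}_j-\hat{p}_k)/\mu$ in the upper half-plane) into the single integral $\int_{w-g}^{w+g}\ln s\,ds$---is not bookkeeping that can be "settled" in favour of equality: it fails as an exact identity. Since the prescribed paths avoid $(-\infty,0]$ and the cut plane is simply connected, every integral here is path-independent and computed with the principal branch via the antiderivative $s\ln s-s$; using $\ln(-s)=\ln s-i\pi$ for $\Im s>0$ one finds, for $\Im w>0$ and $g>0$,
\beq
\int_{w}^{w+g}\ln s\,ds+\int_{-w}^{-w+g}\ln s\,ds=\int_{w-g}^{w+g}\ln s\,ds-i\pi g .
\eeq
A cross-check: per pair $j<k$ the renormalized quantity $e^{-2(g/\hbar)\ln(1/\hbar)}\Gamma(\Lambda(w+g))\Gamma(\Lambda(-w+g))/\Gamma(\Lambda w)\Gamma(\Lambda(-w))$ is positive (numerator and denominator are squared moduli of gamma values at conjugate points), so $i\hbar$ times its logarithm, and hence the limit, is purely imaginary, whereas $i\int_{w-g}^{w+g}\ln s\,ds$ has real part $-\pi g$. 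Thus what your Stirling computation (and the paper's appeal to \eqref{Sto}) actually yields is the two-integral expression; it equals the lemma's second display only after adding the constant $\pi gN(N-1)/2$. This constant is immaterial for the purpose the lemma serves, since only the $\hat{p}$- and $\hat{q}$-gradients of the generating function enter (cf.~\eqref{hatq} and the following equation), and the paper's one-line proof never performs the recombination either; but your closing paragraph should replace the hoped-for verification of exact agreement by this observation, because as written that final step would fail.
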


\begin{proof}
We recall Stokes' formula, in the form
\beq\label{Sto}
\lim_{\Lambda\to\infty}\frac{1}{\Lambda}\ln \left(\exp[(d-u)\Lambda\ln \Lambda]\frac{\Gamma(\Lambda u)}{\Gamma(\Lambda d)}\right)=\int_d^udw \ln w,\ \ \ \ u,d\notin (-\infty, 0].
\eeq
Inspecting the definitions \eqref{cKd} (with $\ell=0$) and \eqref{Whypd}, the limits readily follow from this.
\end{proof}

Instead of using \eqref{nol}, it is now clear that we need to define the generating function by
\beq
F_{\rm nr}(\hat{p},\hat{q})=\lim_{\hbar\downarrow 0}i\hbar \ln (\exp[N(g/\hbar)\ln(1/\hbar)] \Psi_{\rm nr}(\hat{p},\hat{q})).
\eeq
Hence $\hat{q}(\hat{x},\hat{p})$ is to be determined from the equations
\bea\label{hatq}
\hat{x}_j  & = & -\frac{\partial F_{\rm nr}}{\partial
 \hat{p}_j}
\nonumber \\ 
  &  =  &  \frac{1}{2\mu}\sum_{k\ne j} \ln \left( \frac{\hat{p}_j-\hat{p}_k-i\mu g}{\hat{p}_j-\hat{p}_k+i\mu g}\right)+
  \frac{1}{\mu}\sum_{k=1}^N  \ln \left( \frac{\hat{p}_j-\hat{q}_k+i\mu g/2}{\hat{p}_j-\hat{q}_k-i\mu g/2}\right),
\eea
and then $\hat{y}(\hat{x},\hat{p})$ is given by
\bea
\hat{y}_j  &  =   &  \frac{\partial F_{\rm nr}}{\partial \hat{q}_j}
\nonumber \\ 
  &  =  &  \frac{1}{2\mu}\sum_{k\ne j} \ln \left( \frac{\hat{q}_j-\hat{q}_k+i\mu g}{\hat{q}_j-\hat{q}_k-i\mu g}\right)+
  \frac{1}{\mu}\sum_{k=1}^N  \ln \left( \frac{\hat{p}_k-\hat{q}_j+i\mu g/2}{\hat{p}_k-\hat{q}_j-i\mu g/2}\right).
\eea
Clearly, when we retain the physical choice $\mu, g>0$, then we get a contradiction from assuming that for given $(\hat{x},\hat{p})\in\hat{\Omega}_{\rm hyp}$ the implicit equations \eqref{hatq} yield a solution $\hat{q}\in\hat{G}_{\rm hyp}$. (Indeed, it would follow that $\hat{x}$ is purely imaginary.) We can only avoid this snag by keeping $\mu$ positive, while requiring that $g$ be purely imaginary.

Accepting this and assuming (possibly complex) solutions, we deduce as before (cf.~Subsection~3.1) that the B\"acklund property is equivalent to the functional identities
\begin{multline}
	\sum_{\substack{I\subset\lbrace 1,\ldots,N\rbrace\\ |I|=k}}\prod_{\substack{m\in I\\ n\notin I}}\frac{\hat{p}_m-\hat{p}_n-i\mu g}{\hat{p}_m-\hat{p}_n}\prod_{\substack{m\in I\\ n=1,\ldots,N}}\frac{\hat{p}_m-\hat{q}_n+i\mu g/2}{\hat{p}_m-\hat{q}_n-i\mu g/2}\\ = \sum_{\substack{I\subset\lbrace 1,\ldots,N\rbrace\\ |I|=k}}\prod_{\substack{m\in I\\ n\notin I}}\frac{\hat{q}_m-\hat{q}_n+i\mu g}{\hat{q}_m-\hat{q}_n}\prod_{\substack{m\in I\\ n=1,\ldots,N}}\frac{\hat{p}_n-\hat{q}_m+i\mu g/2}{\hat{p}_n-\hat{q}_m-i\mu g/2}.
\end{multline}
These are easily deduced from \eqref{Bsc}, and so \eqref{nrB}  follows.

\subsubsection{The periodic Toda case}

Comparing the expected asymptotic relation \eqref{PsiF} to the nonrelativistic kernel function $\cK(x,y)$ given by \eqref{cK}, it becomes clear that in this case no limit is needed. The resulting generating function,
\beq\label{Tgen}
-ig\sum_{m=1}^N \left(e^{\mu(x_{m+1}-y_{m})}+e^{\mu(y_m-x_m)}\right),
\eeq
would yield purely imaginary momenta, but like in Subsection~3.3 this disease can be cured by an analytic continuation
\beq
x_m\to x_m-i\pi/2\mu,\ \ \ \ m=1,\ldots,N.
\eeq
Then we obtain the generating function
\beq
F_{\rm nr}(x,y)=g\sum_{m=1}^N \left(-e^{\mu(x_{m+1}-y_{m})}+e^{\mu(y_m-x_m)}\right),
\eeq
which gives rise to
\beq\label{pmnr}
p_m  = \mu g \left( e^{\mu(y_m-x_m)}+e^{\mu (x_m-y_{m-1}}\right),
   \eeq
   \beq\label{qmnr}
q_m  =  \mu g \left( e^{\mu(y_m-x_m)}+e^{\mu (x_{m+1}-y_{m}}\right).
   \eeq
These equations can be regarded as the nonrelativistic limit of the equations \eqref{pm}--\eqref{qm}. Indeed, replacing $\gamma$ by $\beta \mu g$ and shifting $x_m\to x_m-\ln(\beta\mu g)/\mu$ in the latter equations (in accord with \eqref{defgam} and \eqref{xac}), it is obvious that their $\beta \to 0$ limit yields \eqref{pmnr}--\eqref{qmnr}. 

Next, we recall that we already detailed a Lax matrix $L_{\rm nr}$ \eqref{periodicTodaLax} for the nonrelativistic periodic Toda system, and obtained its symmetric functions $\Sigma_{k,\text{nr}}$ as  limits of appropriate linear combinations of the relativistic Hamiltonians $S_k$, cf.~\eqref{SymFuncsPeridodicToda}. Therefore the B\"acklund property
\beq
\Sigma_{k,\rm nr}(x,p)=\Sigma_{k,\rm nr}(y,q),\ \ \ \ k=1,\ldots,N,
\eeq
follows from its relativistic counterpart. For the defining Hamiltonian
\beq\label{Hcl}
H_{\rm nr}(x,p)=\frac{1}{2}\sum_{m=1}^N p_m^2 +a^2\sum_{m=1}^N e^{\mu(x_{m+1}-x_m)},\ \ \ a=\mu g,
\eeq
it is of course easily checked directly from \eqref{pmnr}--\eqref{qmnr}.

\subsubsection{The nonperiodic Toda case and its dual}

The nonperiodic Toda counterparts of the formulas \eqref{Tgen}--\eqref{Hcl} will be obvious by now: we need only insist on the convention \eqref{npconv} for $x$ and $y$. Also, the B\"acklund property follows by using the Lax matrices given by  \eqref{Lnrnp}--\eqref{E3}.

A study of the dual system is less straightforward. Of course, from \eqref{WhT} we get the same result
\beq
\lim_{\hbar\downarrow 0}i\hbar \ln 
\hat{W}_{\rm nr}( \hat{p})=\frac{i\pi}{\mu}\sum_{1\le m<n\le N}
(\hat{p}_m-\hat{p}_n),
\eeq
as in the relativistic case, cf.~\eqref{hWlim}. For the kernel functions $\hat{\cK}(\hat{p},\hat{q})^{\sigma}$ given by \eqref{cKhT}, however, the relevant limit vanishes for $\sigma =-1$ and does not exist for $\sigma =1$. On the other hand, in this case we have even more freedom to modify kernel functions than indicated in the paragraph containing \eqref{ambig}, since we can also multiply by products of arbitrary $i\hbar\mu$-antiperiodic functions. Exploiting this, one can probably obtain the same generating function as we now shall arrive at by starting from the modified relativistic generating function $\tilde{F}(\hat{p},\hat{q})$~\eqref{Fmod}.

The crux is that when we add the function 
\beq
-\frac{N}{\mu}\ln(\beta) \sum_{m=1}^N (\hat{p}_m-\hat{q}_m),
\eeq
 to $\tilde{F}$, discard a constant, and then take $\beta$ to 0, we get the limit function
\beq
\tilde{F}_{\rm nr}(\hat{p},\hat{q})=  \frac{i\pi}{2\mu}\sum_{m=1}^N\Big( (N-2m+2)\hat{p}_m+(N-2m)\hat{q}_m\Big)+ \frac{1}{\mu}\sum_{m,n=1}^N\int_0^{\hat{p}_m-\hat{q}_n}dw\ln (-iw)  .
\eeq
As in Subsection~3.4, the corresponding equations
\begin{equation}\label{xdTnr}
	\hat{x}_m=-\frac{\partial\tilde{F}_{\rm nr}}{\partial \hat{p}_m} = -\frac{i\pi}{2\mu}(N-2m+2)-\frac{1}{\mu}\sum_{n=1}^N\ln\big(-i(\hat{p}_m-\hat{q}_n)\big),
\end{equation}
\begin{equation}\label{ydTnr}
	\hat{y}_m=\frac{\partial\tilde{F}_{\rm nr}}{\partial \hat{q}_m} = \frac{i\pi}{2\mu}(N-2m)-\frac{1}{\mu}\sum_{n=1}^N\ln\big(-i(\hat{p}_n-\hat{q}_m)\big),
\end{equation}
might well yield a solution $\hat{q}\in\hat{G}$, $\hat{y}\in\R^N$, with the interlacing property \eqref{interl}. In any case, the B\"acklund property
\beq
\hat{H}_{k,{\rm nr}}(\hat{x},\hat{p}) =\hat{ H}_{k,{\rm nr}}(\hat{y},\hat{q}),\ \quad k=1,\ldots,N,
\end{equation}
for the dual Hamiltonians
\beq
\hat{H}_{k,{\rm nr}}(\hat{x},\hat{p})= \sum_{\substack{I\subset\lbrace 1,\ldots,N\rbrace\\ |I|=k}}
 \prod_{\substack{m\in I\\ n\notin I}}\frac{1}{|\hat{p}_m-\hat{p}_n|}
\prod_{l\in I}\exp(\mu \hat{x}_l),\quad k=1,\ldots,N,
\eeq
can now be shown in the same way as in Subsection~3.4.

\begin{appendix}

\section{Elliptic and hyperbolic gamma functions}\label{gammaFuncsAppendix}
The elliptic and hyperbolic gamma functions were introduced and studied in~\cite{Rui97} as so-called minimal solutions of certain first order analytic difference equations. (The hyperbolic gamma function has various differently-named cousins, as detailed in Appendix~A of~\cite{Rui05}.) In this appendix we review features of these gamma functions that are relevant for the present paper.

The following material concerning the elliptic gamma function $G(r,a_+,a_-;z)$ can all be found in Subsection III~B of~\cite{Rui97}. To begin with,
the elliptic gamma function
can be defined by the product representation
\beq\label{Gell}
	G(r,a_+,a_-;z) = \prod_{m,n=0}^\infty \frac{1-\exp \big(-(2m+1)ra_+-(2n+1)ra_--2irz\big)}{1-\exp \big(-(2m+1)ra_+-(2n+1)ra_-+2irz\big)}.
\eeq
Here and below, we require that the parameters satisfy
\beq
r,a_{+},a_{-}>0.
\eeq
It is obvious from~\eqref{Gell} that the elliptic gamma function is meromorphic in $z$, with poles and zeros that can be read off. In particular, for $z$ in the strip 
\beq\label{strip}
|\Im (z)|<a,\ \ \ \ a=(a_{+}+a_{-})/2,
\eeq
no poles and zeros occur, so that we have
\beq\label{Gg}
G(z)=\exp(ig(z)),\ \ \ \ |\Im (z)|<a,
\eeq
with the function $g(z)$ being analytic in the strip. (We often suppress parameters when no ambiguity arises.) In fact, 
it is explicitly given by
\beq\label{grep}
g(r,a_{+},a_{-};z)=\sum_{n=1}^{\infty}\frac{\sin(2nrz)}{2n\sinh(nra_{+})\sinh(nra_{-})},\ \ \ \ |\Im (z)|<a.
\eeq
Both from this series representation and from~\eqref{Gell}, the following properties are clear:
\beq\label{refl}	
G(-z) = 1/G(z),\ \ \ ({\rm reflection\ equation}),
\eeq
\beq\label{modinv}
	G(a_-,a_+;z) = G(a_+,a_-;z),\ \ \  ({\rm modular\ invariance}),
	\eeq
	\beq\label{scale}
	G(\lambda^{-1}r,\lambda a_+,\lambda a_-;\lambda z) = G(r,a_+,a_-;z),\quad \lambda\in(0,\infty),\ \ \ ({\rm scale\ invariance}).
\eeq

The elliptic gamma function arises as a minimal solution of analytic difference equations that involve a right-hand side function defined by
\beq\label{defR}
R(r,\alpha;z) =\prod_{k=1}^{\infty}[1-\exp(2irz-(2k-1)\alpha r)][1-\exp(-2irz-(2k-1)\alpha r)].
\eeq
Specifically, setting
\beq\label{Rdel}
R_\delta(z)= R(r,a_\delta;z),\ \ \ \ \de=+,-,
\eeq
it satisfies
\begin{equation}\label{ellipticGDiffEq}
	\frac{G(z+ia_\delta/2)}{G(z-ia_\delta/2)} = R_{-\delta}(z),\quad \delta=+,-.
\end{equation}
It is clear from~\eqref{defR} and~\eqref{Rdel} that the functions $R_{\pm}$ are entire, even and $\pi/r$-periodic, and satisfy
\begin{equation}\label{RDiffEq}
	\frac{R_\delta(z+ia_\delta/2)}{R_\delta(z-ia_\delta/2)} = -\exp(-2irz),\ \ \ \de=+,-.
\end{equation}

For the classical and nonrelativistic limits in the main text we need to invoke two related zero step size limits of the elliptic gamma function. First, for $z$ and $w$ staying away from cuts given by
\beq\label{cuts}
 \pm i[\alpha/2,\infty)+k\pi/r,\ \ \ k\in\Z,
\eeq
we have
\beq\label{Gellcl}
\lim_{a_{-}\downarrow 0}a_{-} g(r,\alpha,a_{-};z) = -\int_0^z\, dw\ln R(r,\alpha;w),
\eeq
where the logarithm takes real values for $w$ real; moreover, this limit is uniform on compact subsets of the cut plane. In particular, when we have an upper limit $z$ with $|\Im(z)|<\alpha/2$, we can choose a path along which $|\Im (w)|<\alpha/2$ and use the representation
\beq\label{Rrep}
\ln R(r,\alpha;z)=-\sum_{n=1}^{\infty}\frac{\cos (2nrz)}{n\sinh(nr\alpha)},\ \ \ |\Im (z)|<\alpha/2,
\eeq
which follows from \eqref{ellipticGDiffEq} and \eqref{grep}. The second limit reads
\beq\label{Gellnr}
\lim_{a_{-}\downarrow 0}\frac{G(r,\alpha, a_{-};z+iua_{-})}{G(r,\alpha, a_{-};z+ida_{-})}
=\exp((u-d)\ln R(r,\alpha;z)),\ \ \ \ u,d\in\R,
\eeq
uniformly on compact subsets of the cut plane. Note that for $u-d$ integer this limit readily follows from~\eqref{ellipticGDiffEq}.

In the main text we make extensive use of the functions
\beq\label{sdel}
s_{\de}(z)=s(r,a_{\de};z),\ \ \ \de=+,-,
\eeq
defined by
\beq\label{ssigrel}
s(r,\alpha;z)=\exp(-\eta rz^2/\pi)\sigma(z;\pi/2r,i\alpha/2),
\eeq
where $\sigma$ denotes the Weierstrass sigma function. 
Hence these functions are entire, odd, $\pi/r$-antiperiodic and satisfy
\begin{equation}
	\frac{s_\delta(z+ia_\delta/2)}{s_\delta(z-ia_\delta/2)} = -\exp(-2irz),\ \ \ \ \de=+,-.
\end{equation}
They are related to $R_{\pm}$ via the formula
\beq\label{sR}
s_{\de}(z)=\frac{1}{2ir}\prod_{k=1}^\infty\frac{1}{\big(1-\exp(-2ka_{\de}r)\big)^2}\cdot \exp(irz) R_{\de}(z+ia_{\de}/2).
\eeq
Also, the well-known product representation for the Weierstrass sigma function entails
\begin{equation}\label{sProductRep}
	s_\delta(z) = \frac{a_\delta}{\pi}e_\delta (-rz^2/\pi)\sinh(\pi z/a_\delta)\prod_{l=1}^\infty\frac{\big(1-e_\delta(2z-2\pi l/r)\big)\big(z\to -z\big)}{\big(1-e_\delta(-2\pi l/r)\big)^2},
\end{equation}
where we have introduced the notation
\beq
e_{\de}(z)=\exp(\pi z/a_{\de}),\ \ \ \ \de=+,-.
\eeq
Clearly, this implies
\begin{equation}\label{sToSinhLim}
	\lim_{r\downarrow 0}s_\delta(z) = \frac{a_\delta}{\pi}\sinh(\pi z/a_{\de}),
\end{equation}
the limit being uniform on compact subsets of $\C$. 

We proceed to discuss the hyperbolic gamma function $G(a_+,a_-;z)$, cf.~Subsection~III~A of~\cite{Rui97}. It can be defined as the unique minimal solution of the analytic difference equations
\begin{equation}\label{hyperbolicgammaDiffEq}
	\frac{G(z+ia_\delta/2)}{G(z-ia_\delta/2)} = 2\cosh(\pi z/a_{-\delta}),\quad \delta=+,-,
\end{equation}
satisfying $G(0)=1$.
It arises from the elliptic gamma function by the following limit:
\begin{equation}\label{ellTohypgammaLim}
	\lim_{r\downarrow 0}G(r,a_+,a_-;z)\exp \left(\frac{\pi^2z}{6ira_+a_-}\right) = G(a_+,a_-;z).
\end{equation}
It is meromorphic in $z$, and for $z$ in the strip~\eqref{strip} it has neither poles nor zeros. Thus it can be written as in \eqref{Gg}, with $g(z)$ analytic in the strip. Explicitly, $g(z)$ has the integral representation
\beq\label{ghyp}
g(a_{+},a_{-};z) =\int_0^\infty\frac{dy}{y}\left(\frac{\sin 2yz}{2\sinh(a_{+}y)\sinh(a_{-}y)} - \frac{z}{a_{+}a_{-} y}\right),\ \ \ \ |\Im (z)|<a.
\eeq
From this it is clear that the hyperbolic gamma function also satisfies the reflection equation~\eqref{refl} and has the modular invariance property~\eqref{modinv}, whereas the counterpart of~\eqref{scale} reads
	\beq\label{hypsc}
	G(\lambda a_+,\lambda a_-;\lambda z) = G(a_+,a_-;z),\quad \lambda\in(0,\infty),\ \ \ ({\rm scale\ invariance}).
\eeq

In the main text we need several zero step size limits of the hyperbolic gamma function. The first one yields the relation to the Euler gamma function:
\beq\label{GhGr}
\lim_{\kappa\downarrow 0}G(1,\kappa;
 \kappa z+i/2)\exp \big(iz\ln(2\pi\kappa)-\ln(2\pi)/2\big) = 1/\Gamma(iz+1/2).
\eeq
The second and third one are needed for the classical and nonrelativistic limits, resp.: For $z$ and $w$ staying away from  cuts given by $\pm i[\alpha/2,\infty)$, we have 
\beq\label{Ghypcl}
\lim_{a_{-}\downarrow 0}a_{-} g(\alpha,a_{-};z) = -\int_0^z\, dw\ln(2\cosh (\pi w/\alpha)),
\eeq
\beq\label{Ghypnr}
\lim_{a_{-}\downarrow 0}\frac{G(\alpha, a_{-};z+iua_{-})}{G(\alpha, a_{-};z+ida_{-})}
=\exp((u-d)\ln (2\cosh(\pi z/\alpha))),\ \ \ \ u,d\in\R,
\eeq
uniformly on compact subsets of the cut plane.  

In the relativistic Toda setting it is expedient to switch to two slightly different hyperbolic gamma functions given by 
\beq\label{GRDef}
	G_R(a_+,a_-;z) = G(a_+,a_-;z)\exp \left(i\chi + \frac{i\pi z^2}{2a_+a_-}\right),
	\eeq
\beq\label{GLDef}
	G_L(a_+,a_-;z) = G(a_+,a_-;z)\exp \left(-i\chi - \frac{i\pi z^2}{2a_+a_-}\right),
\eeq
where
\begin{equation}
 \chi = \frac{\pi}{24}\left(\frac{a_+}{a_-} + \frac{a_-}{a_+}\right).
\end{equation}
These functions are the unique minimal solutions of the analytic difference equations
\beq\label{GRDE}
	\frac{G_R(z+ia_{-\delta}/2)}{G_R(z-ia_{-\delta}/2)} = 1+e_\delta(-2z),
	\eeq
\beq\label{GLDE}
	\frac{G_L(z+ia_{-\delta}/2)}{G_L(z-ia_{-\delta}/2)} = 1+e_\delta(2z),
\eeq
with asymptotic behavior
\beq\label{GRLAsA}
	G_{\substack{R \\ L}}(z) = 1 + \mathcal{O}\left(\exp(-r |\Re(z)|)\right),\quad \Re(z)\to\pm\infty,
	\eeq
\beq\label{GRLAsB}
	G_{\substack{R \\ L}}(z) = \exp \big(\pm i\left(2\chi+\pi z^2/a_+a_-\right)\big)\big(1 + \mathcal{O}(\exp(-r |\Re(z)|))\big),\quad \Re(z)\to\mp\infty,
\eeq
where the decay rate $r$ can be any positive number satisfying 
\beq
r <2\pi\min(a_+,a_-)/a_+a_-.
\eeq
Furthermore, they are related by
\beq\label{GRGL}
G_R(z)G_L(-z)=1.
\eeq
 The properties of the functions $G_R$ and $G_L$ just stated are easy to infer from the corresponding properties of the hyperbolic gamma function. See also Appendix A in~\cite{Rui05}, where functions $S_R$ and $S_L$ were introduced that differ from $G_R$ and $G_L$ by the shift $z\to z-ia$. 
 
 Finally, we have occasion to use  the limits
\beq\label{GLRLimit}
	\lim_{a_-\to 0}G_{\substack{R\\L}}\left(a_{+},a_{-};z\pm \lambda\frac{a_+}{2\pi}\ln\frac{1}{a_-}\right) = 
	\left\lbrace \begin{array}{ll}	\exp \left(\pm\frac{ia_+}{2\pi}e_+(\mp 2z)\right), & \lambda=1,\\  1 , & \lambda>1,\end{array}\right.
	\eeq
which hold uniformly on compact subsets of $\C$. These limits are proved in Appendix~B of~\cite{Rui11}.

\section{A family of Hilbert-Schmidt operators}

In this appendix we reconsider the periodic Toda kernel functions $S^{\pm}$ given by \eqref{Sp} and \eqref{Sm}. More precisely, shifting $y_n$ by $\xi$ in $S^{+}(x,y)$,
we focus on the resulting function
\beq\label{Sxi2}
S_{\xi}(x,y)= \prod_{n=1}^N\frac{G_R(y_{n}-x_{n+1}-ia/2-\eta/2+\xi)}{G_L(y_n-x_n+ia/2+\eta/2+\xi)},\ \ \eta\in\R,\ \ \ \xi\in\C,
\eeq
noting that basically the same results apply to $S^{-}(x,y)$.
Letting $\xi$ vary over the strip
\begin{equation}\label{xir}
	| \Im(\xi)| < a/2,
\end{equation}
we shall associate a family of Hilbert-Schmidt operators to the kernel functions~\eqref{Sxi2}. To this end we need some preliminaries.

We denote by $E$ the Euclidean space
\begin{equation}
	E = \lbrace x\in\mathbb{R}^N\mid x_1+\cdots+x_N=0\rbrace,
\end{equation}
with inner product given by the restriction of the standard inner product on $\mathbb{R}^N$. Furthermore, we denote the Lebesgue measure on $E$ by $\lambda_E$ and the corresponding Hilbert space by $L^2(E)$. Next, we introduce a change of coordinates
\begin{equation}\label{coords}
	s = \frac{1}{N}(x_1 +\cdots + x_N),\quad r_n = x_n - x_{n+1},\quad n=1,\ldots,N-1,
\end{equation}
with inverse given by
\begin{equation}\label{xsr}
	x_m = s - \frac{1}{N}\sum_{n=1}^{N-1}nr_n + \sum_{n=m}^{N-1}r_n,\quad m=1,\ldots,N.
\end{equation}
The associated Jacobian determinant equals 1, as is readily verified. Viewing $r_1,\ldots,r_{N-1}$ as coordinates on $E$ by taking $s=0$ in~\eqref{xsr}, we deduce
\beq\label{lamr}
d\lambda_E=N^{-1/2}dr_1\cdots dr_{N-1}.
\eeq

To explain the relevance of these coordinates for the Toda
 \adiffops~$\cA^{+}_{ l,\delta}(x)$ (cf.~\eqref{cApp}-\eqref{cApm}),  we point out that they factorise as a product of a center-of-mass operator and a reduced operator:
\begin{equation}
	\cA^{+}_{ l,\delta} = \cA^{+,\text{cm}}_{ l,\delta}A^{+,\text{r}}_{ l,\delta},\quad \cA^{+,\text{cm}}_{ l,\delta} = \exp(- ila_{-\delta}\partial_s/N).
\end{equation}
Here, the reduced \adiffops~$\cA^{+,\text{r}}_{ l,\delta}$ depend only on the variables $r_1,\ldots,r_{N-1}$, so they commute with the center-of-mass operators. It is easy to verify that the kernel identity \eqref{periodicTidp} remains valid if we substitute for $\cA^+_{\pm l,\delta}$ either $\cA^{+,\text{r}}_{\pm l,\delta}$ or $\cA^{+,\text{cm}}_{\pm l,\delta}$. Hence, viewing $L^2(\mathbb{R}^N)$ as a tensor product
\begin{equation}
	L^2(\mathbb{R}^N)\simeq L^2(\mathbb{R})\otimes L^2(E),
\end{equation}
the findings of this appendix can be used to study Hilbert space aspects of the A$\De$Os, but this is beyond our present scope.

We are now prepared to state the main result of this appendix. 

\begin{theorem}\label{HilbertSchmidtPeriodicThm}
The operator on $L^2(E)$ defined by
\begin{equation}
	(\hat{S_{\xi}}f)(x)\equiv \int_E S_\xi(x,y)f(y)d\lambda_E(y),\quad f\in L^2(E),
\end{equation}
with $S_{\xi}(x,y)$ given by~\eqref{Sxi2}, is a Hilbert-Schmidt operator, provided $\xi$ satisfies~\eqref{xir}.
\end{theorem}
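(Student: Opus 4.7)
The plan is to bound $|S_\xi(x,y)|^2$ pointwise by a product that factorises in well-chosen coordinates on $E\times E$, then verify each factor integrates.

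First I would establish uniform pointwise bounds on $G_R$ and $1/G_L$. For $|\im \xi|<a/2$, the arguments of $G_R$ in the numerator of $S_\xi$ lie on the horizontal line with $\im\zeta = -a/2+\im\xi =: \beta_R\in(-a,0)$, while those of $G_L$ in the denominator lie on $\im w = a/2+\im\xi =: \beta_L\in(0,a)$. Both lines stay strictly inside the analyticity strip $|\im(\cdot)|<a$ of the hyperbolic gamma function, so $G_R$ and $G_L$ have no poles or zeros there and are bounded on compact subsets. Combined with the asymptotics \eqref{GRLAsA}--\eqref{GRLAsB}, which yield exponential rates $\kappa_R = -2\pi\beta_R/(a_+a_-)>0$ and $\kappa_L = 2\pi\beta_L/(a_+a_-)>0$, this gives uniform bounds $|G_R(\zeta)|\leq C\exp(\kappa_R\min(\re\zeta,0))$ and $1/|G_L(w)|\leq C\exp(-\kappa_L\max(\re w,0))$. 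Hence $|S_\xi(x,y)|^2\leq C'\prod_{n=1}^N \phi(u_n)\psi(t_n)$ with $u_n = y_n-x_{n+1}$, $t_n = y_n-x_n$, $\phi(u) = \exp(2\kappa_R\min(u,0))$ and $\psi(t) = \exp(-2\kappa_L\max(t,0))$.

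Next I would change coordinates on $E\times E$. Using $x_{n+1}-x_n = u_n-t_n$ (for $n=1,\ldots,N-1$) together with $\sum x_n = 0$ to recover $x\in E$, and then $y_n = x_n+t_n$ to recover $y\in E$, the map $(x,y)\mapsto (t_1,\ldots,t_{N-1},u_1,\ldots,u_{N-1})$ is a linear bijection onto $\R^{2N-2}$ with nonzero constant Jacobian $J$. The requirement $x,y\in E$ forces $t_N := y_N-x_N = -\sum_{n<N}t_n$ and $u_N := y_N-x_1 = -\sum_{n<N}u_n$, so both $(t_1,\ldots,t_N)$ and $(u_1,\ldots,u_N)$ lie in the zero-sum hyperplane.

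Since the majorising kernel factorises into a function of the $u$'s times a function of the $t$'s, the Hilbert--Schmidt norm squared is bounded by $C'J\cdot I_\phi\cdot I_\psi$, where $I_\phi = \int_{\R^{N-1}}\prod_{n=1}^N \phi(u_n)\,du_1\cdots du_{N-1}$ (with $u_N = -\sum_{n<N}u_n$) and $I_\psi$ is defined analogously. Writing $\phi(u) = \exp(-2\kappa_R u^-)$ with $u^- = \max(-u,0)\geq 0$, the zero-sum constraint $\sum_n u_n = 0$ forces $\sum_n u_n^+ = \sum_n u_n^- = \|u\|_1/2$, so $\prod_n\phi(u_n)\leq \exp(-\kappa_R\|u\|_1)$. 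Bounding $\|u\|_1\geq \sum_{n<N}|u_n|$ then gives $I_\phi\leq (2/\kappa_R)^{N-1}<\infty$, and similarly for $I_\psi$, which completes the argument.

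The main technical obstacle is the uniform pointwise bound on $G_R$ and $1/G_L$: the formulas \eqref{GRLAsA}--\eqref{GRLAsB} describe behaviour only for $|\re\zeta|$ large, so one must paste these asymptotics with interior bounds supplied by continuity in the pole-free strip. The key conceptual observation is that in the coordinates $(t,u)$ the majorising kernel factorises into independent blocks, while the cyclic constraints $\sum t_n = \sum u_n = 0$ inherited from $x,y\in E$ are precisely what makes each resulting integral over $\R^{N-1}$ exponentially convergent.
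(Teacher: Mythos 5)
Your argument is correct, and it reaches the conclusion by a genuinely different route in the decisive estimation step. Your coordinates are essentially the ones used in the paper's proof (your $-t_n,u_n$ are the paper's $u_{2n-1},u_{2n}$ of \eqref{udef}, with the two summed arguments $y_N-x_N$ and $y_N-x_1$ handled through the zero-sum constraints rather than as distinguished factors), and the pointwise input is the same: on the fixed horizontal lines $\Im\zeta=\Im\xi-a/2$ and $\Im w=\Im\xi+a/2$, which lie in the pole- and zero-free strip precisely because of \eqref{xir}, the asymptotics \eqref{GRLAsA}--\eqref{GRLAsB} pasted with a compactness bound in the middle region give $|G_R(\zeta)|\le C\exp\big(\kappa_R\min(\Re\zeta,0)\big)$ and $|G_L(w)|^{-1}\le C\exp\big(-\kappa_L\max(\Re w,0)\big)$, the constant shift $-\eta/2+\Re\xi$ in the arguments being absorbed into $C$. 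Where you diverge from the paper is in converting the one-sided decay of the individual factors into integrability: the paper establishes the radial pointwise bound $|S_\xi(u)|\le B\exp(-b\|u\|)$ of Lemma~\ref{SEstimateLemma}, and for this it needs the combinatorial Lemma~\ref{TuplesLemma}, applied to the even and odd components of the unit vector $u/\|u\|$, to guarantee that along every ray either the individual factors or the factor with the summed argument supplies decay. You instead majorize $|S_\xi|^2$ by a kernel that factorizes over the $u$-block and the $t$-block and exploit the identity $\sum_n u_n^-=\|u\|_1/2$ forced by $\sum_n u_n=0$ (and likewise for $t$), so the product of one-sidedly decaying factors decays in $\|u\|_1$ and the Hilbert--Schmidt integral collapses into one-dimensional integrals, giving the bound $C'J(2/\kappa_R)^{N-1}(2/\kappa_L)^{N-1}$. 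This is more elementary --- no analog of Lemma~\ref{TuplesLemma} is required --- at the price of not producing an explicit pointwise exponential decay estimate for $S_\xi$ itself, which the paper's Lemma~\ref{SEstimateLemma} provides and which is of independent use. One immaterial sign slip: $u_n-t_n$ equals $x_n-x_{n+1}$, not $x_{n+1}-x_n$; this does not affect the bijectivity of your coordinate change or anything downstream.
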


To prove this theorem we need to show that the $\xi$-restriction entails
\begin{equation}\label{HilbertSchmidtInEquality}
	I_{\xi}\equiv \int_{E\times E}|S_\xi(x,y)|^2d\lambda_E(x)d\lambda_E(y) < \infty.
\end{equation}
As it stands, this integral is hard to estimate. Using the above coordinates $r$ on $E(x)$ and the same coordinates $q$ on $E(y)$ we have from~\eqref{lamr}
\beq\label{Ixi2}
I_{\xi}=N^{-1} \int_{E\times E}|S_\xi(x(r),y(q))|^2drdq,
\eeq
but at face value this seems no improvement.

We shall therefore introduce new coordinates on $E\times E$, in which the function $S_\xi(x,y)$ takes a particularly simple form. First, we set
\begin{equation}\label{udef}
	u_{2n-1} = x_n-y_n,\quad u_{2n} = y_n-x_{n+1},\quad n=1,\ldots,N-1.
\end{equation}
Now on $E\times E$ we have $x_1+\cdots+x_N=y_1+\cdots+y_N=0$, so that
\begin{align}
	y_N-x_1 &= -u_2-u_4-\cdots-u_{2N-2},\\
	y_N-x_N &= u_1+u_3+\cdots+u_{2N-3}.
\end{align}
For $(x,y)\in E\times E$, we thus have 
\begin{equation}\label{SInTermsOfU}
\begin{split}
	S_\xi(x,y) &= \frac{G_R(-u_2-u_4-\cdots-u_{2N-2}-ia/2-\eta/2+\xi)}{G_L(u_1+u_3+\cdots+u_{2N-3}+ia/2+\eta/2+\xi)}\\ &\quad\times\prod_{n=1}^{N-1}\frac{G_R(u_{2n}-ia/2-\eta/2+\xi)}{G_L(-u_{2n-1}+ia/2+\eta/2+\xi)}.
\end{split}
\end{equation}
With a slight abuse of notation, we shall write $S_\xi(u)$ for the right-hand side of this expression. 

To check that $u_1,\ldots,u_{2N-2}$ yield well-defined coordinates on $E\times E$, we first note that~\eqref{udef} entails
\beq\label{ru}
	r_n = x_n - x_{n+1} = u_{2n-1} + u_{2n},\quad n=1,\ldots,N-1,
	\eeq
	\beq
	q_n = y_n - y_{n+1} = u_{2n} + u_{2n+1},\quad n=1,\ldots,N-2,
\eeq
\begin{equation}\label{qu}
	q_{N-1} = y_{N-1} - y_N = u_{2N-2} - (u_1+u_3+\cdots+u_{2N-3}).
\end{equation}
From this we readily deduce
\beq
\frac{\partial(r_1,q_1,\ldots,r_{N-1},q_{N-1})}{\partial(u_1,\ldots,u_{2N-2})}=N,
\eeq
so the linear transformation given by~\eqref{ru}--\eqref{qu} maps $\R^{2N-2}$ onto $\R^{2N-2}$.
Furthermore, from~\eqref{Ixi2} we get
\beq
I_{\xi}=\int_{\mathbb{R}^{N-2}}|S_\xi(u)|^2du.
\eeq
Therefore, the inequality \eqref{HilbertSchmidtInEquality}, and hence the validity of Theorem \ref{HilbertSchmidtPeriodicThm}, are clear from the following lemma.

\begin{lemma}\label{SEstimateLemma}
Assuming $\xi$ satisfies~\eqref{xir}, there exist constants $B,b>0$ such that
\begin{equation}
	|S_\xi(u)|\leq B\exp(-b||u||),\quad \forall u\in\mathbb{R}^{2N-2}.
\end{equation}
\end{lemma}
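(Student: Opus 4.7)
The plan is to combine the asymptotic formulas \eqref{GRLAsA}--\eqref{GRLAsB} for $G_R$ and $G_L$ with a convexity/homogeneity argument on $\R^{N-1}$. First, I would observe from~\eqref{SInTermsOfU} that $S_\xi(u)$ factorises as
$$S_\xi(u) = F_{\rm odd}(u_{\rm odd})\,F_{\rm even}(u_{\rm even}),$$
where $u_{\rm odd}=(u_1,u_3,\ldots,u_{2N-3})\in\R^{N-1}$ carries all $N$ of the $G_L$-factors and $u_{\rm even}=(u_2,u_4,\ldots,u_{2N-2})\in\R^{N-1}$ carries all $N$ of the $G_R$-factors, the two variable sets being disjoint. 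Since $\|u_{\rm odd}\|+\|u_{\rm even}\|\ge\|u\|$, it suffices to bound $|F_{\rm odd}|$ and $|F_{\rm even}|$ separately by exponentials in the respective norms.

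Second, the hypothesis $|\im\xi|<a/2$ places every $G_R$-argument in~\eqref{SInTermsOfU} on the line $\im z=Y_R:=\im\xi-a/2\in(-a,0)$ and every $G_L$-argument on $\im z=Y_L:=\im\xi+a/2\in(0,a)$; both lines lie in the analyticity strip $|\im z|<a$. Using~\eqref{GRLAsA}--\eqref{GRLAsB} together with continuity, I would show that the auxiliary real-variable functions
$$\alpha(t):=|G_R(t-ia/2-\eta/2+\xi)|,\qquad \beta(t):=|G_L(t+ia/2+\eta/2+\xi)|^{-1}$$
are continuous, positive, and bounded, and satisfy
$$\alpha(t)\le A\exp(-c\,t^-),\qquad \beta(t)\le A\exp(-c\,t^+),\qquad t\in\R,$$
with $t^\pm:=\max(\pm t,0)$, for some $A>0$ and any $c$ strictly below $2\pi\min(|Y_R|,Y_L)/a_+a_-$. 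This is because $|G_R(z)|\sim\exp(-2\pi\re(z)Y_R/a_+a_-)$ as $\re(z)\to-\infty$ (yielding decay since $Y_R<0$) while $|G_R(z)|\to 1$ as $\re(z)\to+\infty$, with an analogous dichotomy for $\beta$.

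Third, plugging these bounds into~\eqref{SInTermsOfU} yields
$$|F_{\rm odd}(v)|\le A^{N}\exp(-c\,\Phi(v)),\qquad |F_{\rm even}(w)|\le A^{N}\exp(-c\,\Phi(w)),$$
where the same function
$$\Phi:\R^{N-1}\to[0,\infty),\qquad \Phi(v):=\sum_{n=1}^{N-1}v_n^-+\Bigl(\sum_{n=1}^{N-1}v_n\Bigr)^+,$$
controls both exponents (note $(-\sigma)^-=\sigma^+$ and $(-u)^+=u^-$, which explains why the $G_R$-sum and $G_L$-sum contribute through $\Sigma^+_{\rm even}$ and $\Sigma^+_{\rm odd}$ respectively). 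The main step is to prove $\Phi(v)\ge c'\|v\|$ for some $c'>0$: $\Phi$ is continuous, convex, and positively homogeneous of degree one, and $\Phi(v)=0$ forces $v_n\ge 0$ for all $n$ together with $\sum_n v_n\le 0$, hence $v=0$; compactness of the unit sphere then gives a strictly positive minimum, and homogeneity extends the bound to all of $\R^{N-1}$. Combining everything, $|S_\xi(u)|\le A^{2N}\exp(-cc'(\|u_{\rm odd}\|+\|u_{\rm even}\|))\le A^{2N}\exp(-cc'\|u\|)$, which is the desired estimate.

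The main technical point is bookkeeping the signs of $Y_R$ and $Y_L$ and verifying that the $N$ \emph{a priori} different decay directions provided by the gamma factors of $F_{\rm odd}$ (resp.\ $F_{\rm even}$) jointly exhaust every direction of $\R^{N-1}$; the restriction $|\im\xi|<a/2$ is precisely what makes each factor decay on the appropriate half-line, and without it some factor would grow and defeat the convexity argument.
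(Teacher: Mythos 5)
Your proposal is correct and follows essentially the same route as the paper's proof: the same observation that in \eqref{SInTermsOfU} the $G_R$-factors and $G_L$-factors depend on disjoint (even/odd) variable sets, the same use of the asymptotics \eqref{GRLAsA}--\eqref{GRLAsB} along the horizontal lines $\Im z=\Im\xi\mp a/2$ (which lie in the pole- and zero-free strip precisely because $|\Im(\xi)|<a/2$), and the same underlying geometric fact that in every direction of $\mathbb{R}^{N-1}$ at least one factor decays at a rate proportional to the norm. The only difference is in how that last fact is established: you get $\Phi(v)\ge c'\|v\|$ by homogeneity plus compactness of the unit sphere, whereas the paper proves an explicit quantitative version (its auxiliary lemma on unit $n$-tuples, with constant $1/(2\sqrt{n})$), which yields the explicit admissible decay rate $b=\pi\min\big(a/2-\Im(\xi),\,a/2+\Im(\xi)\big)/\big(a_+a_-\sqrt{N-1}\big)$; your softer argument suffices for the statement as formulated, which only asserts existence of $B,b>0$.
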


In order to prove this lemma we shall make use of the following observation.

\begin{lemma}\label{TuplesLemma}
Let $\alpha = (\alpha_1,\ldots,\alpha_n)\in\mathbb{R}^n$ be such that $||\alpha||=1$. Let $P$ and $N$ be the sum of the positive and negative $\alpha_j$, respectively. Then, at least one of the numbers $P+N$ and $-N$ is greater than or equal to $1/(2\sqrt{n})$.
\end{lemma}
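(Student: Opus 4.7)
The plan is to reduce the assertion to a clean two-sided bound on the pair $(P,-N)$ relating them to the unit length constraint $\|\alpha\|=1$. Write $J_+=\{j:\alpha_j>0\}$ and $J_-=\{j:\alpha_j<0\}$, so $P=\sum_{j\in J_+}\alpha_j\ge 0$ and $-N=\sum_{j\in J_-}|\alpha_j|\ge 0$. The key observation is the lower bound
\begin{equation*}
P^2+N^2\;\ge\;1,
\end{equation*}
which follows at once from the elementary identity $\bigl(\sum_{j\in J_\pm}|\alpha_j|\bigr)^2=\sum_{j\in J_\pm}\alpha_j^2+2\sum_{\substack{i,j\in J_\pm\\i<j}}|\alpha_i\alpha_j|\ge\sum_{j\in J_\pm}\alpha_j^2$, applied separately to the positive and negative parts, together with $\|\alpha\|^2=1$.

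Next I would argue by contradiction: suppose both $P+N<1/(2\sqrt n)$ and $-N<1/(2\sqrt n)$. If $P+N\ge 0$, then $P=(P+N)+(-N)<1/\sqrt n$; if $P+N<0$, then $P<-N<1/(2\sqrt n)<1/\sqrt n$. In either case $P<1/\sqrt n$ and $|N|<1/(2\sqrt n)$, hence
\begin{equation*}
P^2+N^2\;<\;\frac{1}{n}+\frac{1}{4n}\;=\;\frac{5}{4n}.
\end{equation*}
Combined with the inequality of the previous paragraph, this forces $5/(4n)>1$, i.e.\ $n\le 1$, so we obtain a contradiction whenever $n\ge 2$.

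The remaining case $n=1$ is handled directly: the unit vectors in $\mathbb{R}$ are $\pm 1$, and in both cases one of $P+N$, $-N$ equals $1\ge 1/2$. Combining this with the contradiction argument above completes the proof.

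The only mild subtlety, which I would consider the \emph{main obstacle}, is identifying the correct key inequality $P^2+N^2\ge 1$: the more naive estimate $P-N=\|\alpha\|_1\ge\|\alpha\|_2=1$ (which gives $\max(P,-N)\ge 1/2$) is too weak to close the gap for small $n$, whereas $P^2+N^2\ge 1$ exploits that $P$ and $-N$ each dominate the $\ell^2$-norm of their respective parts and is therefore exactly the right tool.
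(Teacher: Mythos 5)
Your proof is correct, but it takes a genuinely different route from the paper's. The paper argues directly: if $-N<1/(2\sqrt{n})$, note that $\|\alpha\|=1$ forces $|\alpha_j|\ge 1/\sqrt{n}$ for at least one $j$; by the assumption on $N$ every such coordinate must be positive, hence $P\ge 1/\sqrt{n}$ and $P+N>1/\sqrt{n}-1/(2\sqrt{n})=1/(2\sqrt{n})$ --- a short pigeonhole argument with no case distinction in $n$. Your argument instead rests on the inequality $P^2+N^2\ge 1$ (valid because the cross terms within each sign class are nonnegative), a contradiction computation, and a separate check for $n=1$; all three steps are sound, including the branch $P+N<0$ where you fall back on $P<-N$, and the strict inequality $5/(4n)>1$ that rules out $n\ge 2$. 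It is worth noting that your key inequality in fact yields more than the lemma asks: if both $P+N<t$ and $-N<t$, then $P<2t$, so $1\le P^2+N^2<5t^2$, i.e. $\max(P+N,-N)\ge 1/\sqrt{5}$, a dimension-independent bound that is stronger than $1/(2\sqrt{n})$ for every $n\ge 2$ (the case $n=1$ being trivial, as you observe). So your route buys a sharper, uniform constant at the cost of a base case, while the paper's buys brevity and uniformity in $n$; for the use made of the lemma in Appendix B (the exponential decay estimate for $S_\xi(u)$), either bound is amply sufficient.
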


\begin{proof}
If $N\leq -1/(2\sqrt{n})$, then the statement is obviously true. Next, assume $N$ is greater than $-1/(2\sqrt{n})$. Clearly, $|\alpha_j|\geq 1/\sqrt{n}$ for at least one $j$. By the assumption on $N$, all such $\alpha_j$ must be positive. It follows that $P\geq 1/\sqrt{n}$, so that
\begin{equation}
	P + N > \frac{1}{\sqrt{n}} - \frac{1}{2\sqrt{n}} = \frac{1}{2\sqrt{n}}.
\end{equation}
\end{proof}

\begin{proof}[Proof of Lemma \ref{SEstimateLemma}]
We observe that $S_\xi(u)$ is a meromorphic function of $u$ whose poles are located at 
\begin{equation}
	u_{2n} =\eta/2-\xi - ia/2  - i(ka_+ + la_-) ,\quad	u_{2n-1} =\eta/2 + \xi - ia/2 - i(ka_+ + la_-),
\end{equation}
with $n=1,\ldots,N-1$, and at
\begin{gather}
	u_2 + u_4 +\cdots + u_{2N-2} = -\eta/2 +\xi+ia/2 + i(ka_+ + la_-),\\
	u_1 + u_3 +\cdots +u_{2N-3} =-\eta/2  - \xi +ia/2 +i(ka_+ + la_-),
\end{gather}
where $k,l\in\mathbb{N}$. Since we have $-a/2<\Im(\xi)<a/2$, it follows that $S_\xi(u)$ is real-analytic for all $u\in\mathbb{R}^{2N-2}$, and hence bounded on any compact subset of $\mathbb{R}^{2N-2}$. 

We proceed to study the asymptotic behavior of $S_\xi(u)$. To this end we assume $u\neq 0$, and let $\beta = (\beta_1,\ldots,\beta_{2N-2})$ be the corresponding unit vector $u/||u||$. We observe that the numerator and denominator in \eqref{SInTermsOfU} depend only on $\beta_n$ with $n$ even and odd, respectively. It is therefore convenient to introduce
\begin{align}
	\beta^o &= (\beta_1,\beta_3,\ldots,\beta_{2N-3}),\\
	\beta^e &= (\beta_2,\beta_4,\ldots,\beta_{2N-2}).
\end{align}
It follows from Lemma \ref{TuplesLemma} and \eqref{GRLAsA}--\eqref{GRLAsB} that the numerator of \eqref{SInTermsOfU} is bounded above by
\begin{equation}
	C\exp \left(\frac{\pi||\beta^e||\big(\Im(\xi)-a/2\big)}{a_+a_-\sqrt{N-1}}||u||\right)
\end{equation}
for some constant $C>0$. Similarly, the denominator is bounded below by
\begin{equation}
	D\exp \left(\frac{\pi||\beta^o||\big(\Im(\xi)+a/2\big)}{a_+a_-\sqrt{N-1}}||u||\right)
\end{equation}
for some constant $D>0$. This clearly implies the statement. In fact, these bounds imply that we can choose
\beq
b= \frac{\pi}{a_+a_-\sqrt{N-1}}\min(a/2-\Im(\xi),\Im(\xi)+a/2).
\eeq
\end{proof}

\end{appendix}

\bibliographystyle{amsalpha}

\end{document}